\newcommand{\rev}[1]{{{\color{blue} #1}}} 
\newcommand{\rev}[1]{#1}
\begin{document}
\makeatletter
\let\@copyrightspace\relax
\makeatother
\title{Online Energy Generation Scheduling for Microgrids with Intermittent Energy Sources and Co-Generation}

\numberofauthors{4} 
%
\author{
%
%
\alignauthor
Lian Lu\titlenote{The first two authors are in alphabetical order.}\\
       \affaddr{The Department of Information Engineering}\\
       \affaddr{The Chinese University of Hong Kong}\\
\alignauthor
Jinlong Tu\raisebox{10pt}{$\ast$}\\
      \affaddr{The Department of Information Engineering }\\
       \affaddr{The Chinese University of Hong Kong}\\
\alignauthor
Chi-Kin Chau\\
         \affaddr{Masdar Institute of Science and Technology}\\
\and  
\alignauthor
Minghua Chen\titlenote{Corresponding author.}\\
       \affaddr{The Department of Information Engineering}\\
       \affaddr{The Chinese University of Hong Kong}\\
\alignauthor
Xiaojun Lin\\
       \affaddr{School of Electrical and Computer Engineering}\\
       \affaddr{Purdue University}\\
}


\maketitle

\begin{abstract}
Microgrids represent an emerging paradigm of future electric power
systems that can utilize both distributed and centralized generations.  Two
recent trends in microgrids are the integration of local renewable
energy sources (such as wind farms) and the use of co-generation ({\em i.e.,}
to supply both electricity and heat).
%
However, these trends also bring
%
%
unprecedented challenges to the design of intelligent control
strategies for microgrids.
Traditional generation scheduling paradigms \rev{rely on} perfect prediction of
future electricity supply and demand. \rev{They} are no longer applicable to microgrids
with unpredictable renewable energy supply and \rev{with} co-generation (that
needs to consider both electricity and heat demand).
In this paper, we study online
algorithms for the microgrid generation scheduling problem with
intermittent renewable energy sources and co-generation, \rev{with the goal of
maximizing} the cost-savings with local generation.
Based on the insights from the structure of the offline optimal solution, we
propose a class of competitive online algorithms, called \textsf{CHASE}
(Competitive Heuristic Algorithm for Scheduling Energy-generation),
that track the
offline optimal in an online fashion. Under typical settings, we show
that \textsf{CHASE} achieves the best competitive ratio among all deterministic
online algorithms, and the ratio is no larger than a small constant 3.
%
We also extend our algorithms to intelligently leverage on {\em limited
prediction} of the future, such as near-term demand or wind forecast. By
extensive empirical evaluations using real-world traces, we show that our
proposed algorithms can achieve near offline-optimal performance. In a
representative scenario, \textsf{CHASE} leads to around 20\% cost reduction with no
future look-ahead, and the cost reduction increases with
the future look-ahead window.


\end{abstract}

\category{C.4}{PERFORMANCE OF SYSTEMS}Modeling techniques; Design studies
\category{F.1.2}{Modes of Computation}Online computation
\category{I.2.8}{Problem Solving, Control Methods, and Search}Scheduling

\terms{Algorithms, Performance}

\keywords{Microgrids; Online Algorithm; Energy Generation Scheduling; Combined Heat and Power Generation}

\section{Introduction} \label{sec:intro}
Microgrid is a distributed \rev{electric power
system}
that can autonomously co-ordinate local generations and demands in a
dynamic manner
\cite{lasseter2004microgrid}.
Illustrated in Fig.~\ref{fig:microgrid}, modern microgrids often
consist of distributed renewable energy generations ({\em e.g.,} wind farms) and
co-generation technology ({\em e.g.,} supplying both electricity and heat
locally).
Microgrids can operate in either grid-connected mode or islanded mode. 
There have been worldwide deployments of
pilot microgrids, such as the US, Japan, Greece and Germany \cite{barnes2007real}.

\begin{figure}[htp!]
 	\begin{center}
	\includegraphics[scale=0.1]{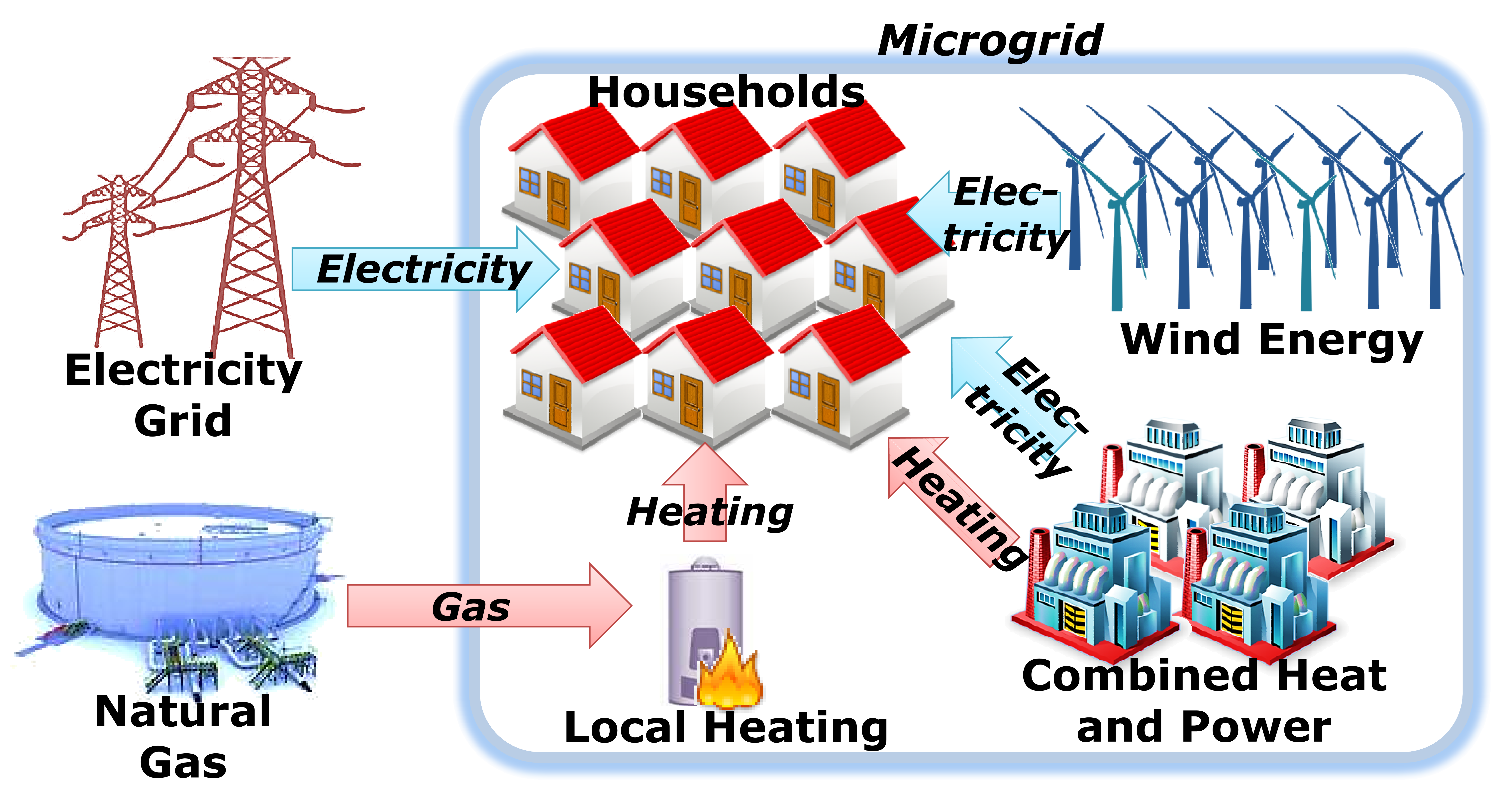}	
	\caption{An illustration of a typical microgrid.} \label{fig:microgrid} 
	\end{center}
\end{figure}

Microgrids are more robust and cost-effective than traditional
approach of centralized grids. They represent an emerging paradigm of future electric power
systems~\cite{marnay2007microgrids} that address the following two critical challenges.

\emph{Power Reliability}. Providing reliable and quality power is
critical both socially and economically. In the US alone, while the \rev{electric power
system} is 99.97\% reliable,
each year the economic loss due to power outages is at least \$150 billion~\cite{DoEReport}. However,  enhancing
power reliability across a large-scale power grid is very challenging
\cite{chowdhury2003reliability}. With 
local generation, microgrids can supply energy
\emph{locally} as needed, effectively alleviating the negative effects of
power outages.

\emph{Integration with Renewable Energy}. The growing environmental
awareness and government directives lead to the increasing
penetration of renewable energy. For example, the US aims at 20\% wind
energy penetration by 2030 to ``de-carbonize'' the power system.
Denmark targets at 50\% wind generation by 2025. However,
incorporating a significant portion of intermittent renewable energy
poses great challenges to grid stability, which requires
a new thinking of
how the grid should operate~\cite{varaiya2011smart}.
In traditional centralized grids, the actual locations of conventional energy generation,
renewable energy generation ({\em e.g.,} wind farms), and energy
consumption are usually distant from each other. Thus,
the need to coordinate conventional energy generation and
consumption based on the instantaneous variations of renewable energy
generation  leads to challenging stability problems. In
contrast,
in microgrids
renewable energy is generated and consumed in the \emph{local} distributed
network. Thus, the uncertainty of renewable energy is absorbed locally,
minimizing its negative impact on
the stability of the central transmission networks.


Furthermore,
microgrids bring significant economic benefits, especially with the
augmentation of combined heat and power
(CHP) generation technology. In traditional
grids, a substantial amount of residual energy after electricity
generation is often wasted. In contrast, in microgrids this residual
energy can be used to supply heat domestically.
By simultaneously satisfying electricity and heat demand using CHP generators, microgrids can often be much more economical
than
\rev{using external electricity supply and separate heat supply}
\cite{hawkes2009modelling}.

However, to realize the maximum benefits of microgrids, intelligent scheduling of both local generation and demand
must be established.
Dynamic demand scheduling in response to supply condition, also called
\emph{demand response} \cite{DoEReport,chiuelectric}, is one of the
useful approaches. But, demand response alone may be
insufficient to compensate the highly volatile fluctuations of wind
generation.
Hence, intelligent generation scheduling, \rev{which orchestrates both local and external generations to satisfy the time-varying energy demand,} is indispensable for the
viability of microgrids. Such generation-side scheduling must
simultaneously meet two goals. (1) To maintain grid stability,
the aggregate supply from CHP
generation, renewable energy generation, the centralized grid, and a separate heating system must meet the
aggregate electricity and heat demand. {(We do not consider the option of using energy storage in the paper, {\em e.g.,} to charge at low-price periods and to discharge at high-price periods. This is because for the typical size of microgrids, {\em e.g.,} a college campus,  energy storage systems with comparable sizes are very expensive and not widely available.)} 
(2) It is highly desirable that the microgrid can coordinate local generation and external energy procurement to minimize the overall cost of meeting the energy demand.

We note that a related generation scheduling problem has been
extensively studied for the traditional grids, involving both Unit
Commitment \cite{padhy2004unit} and Economic
Dispatch \cite{gaing2003particle}, which we will review in Sec.~\ref{sec:related} as related work. In a typical
power plant, the generators are often subject to several
operational constraints. For example, steam turbines have a slow ramp-up speed.
In order to perform generation scheduling, the utility company usually needs to
forecast the demand first. Based on this forecast, the utility company then
solves an offline problem to schedule different types of generation
sources in order to minimize cost subject to the operational
constraints. 

Unfortunately, this classical strategy does not work well for the microgrids
due to the following unique challenges introduced by the renewal energy sources
and co-generation.
%
%
The first challenge is that microgrids powered by intermittent renewable energy generations will face a significant
uncertainty in energy supply. Because of its smaller scale, abrupt
changes in local weather condition may have a dramatic impact that
cannot be amortized as in the wider national scale. 
In Fig.~\ref{fig:intro_trace_electricity}, we examine one-week traces of electricity demand for a college in
San Francisco~\cite{CEUS} and power output of a nearby wind station~\cite{NREL}.
We observe that although the electricity demand has a relative regular pattern
for prediction, the net electricity demand inherits
a large degree of variability from the wind generation, casting a
challenge for accurate prediction.
\begin{figure}[htp!] %
{\centering
\subfloat[ Elec. demand, wind gen.]{
\includegraphics[width=0.5\columnwidth]{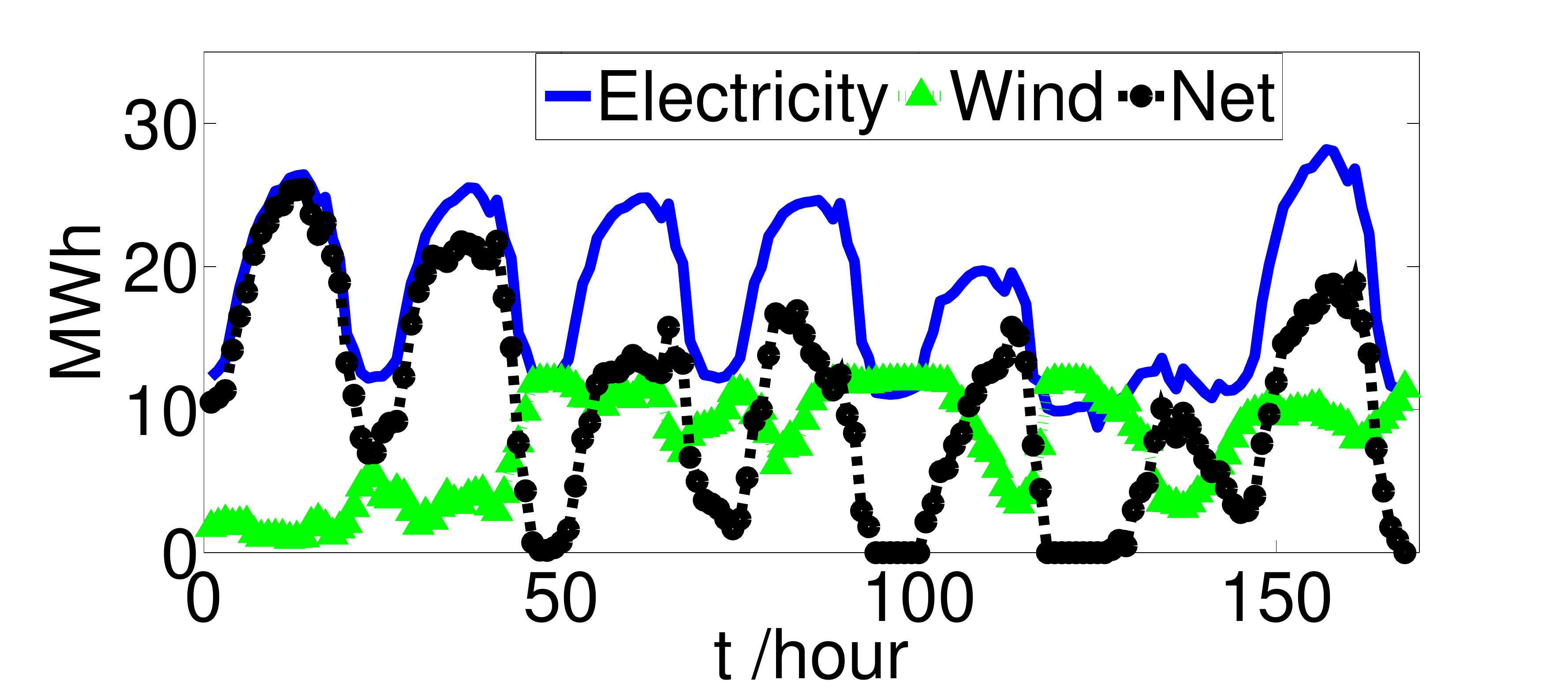}
\label{fig:intro_trace_electricity}}\subfloat[ Heat demand]{
\includegraphics[width=0.5\columnwidth]{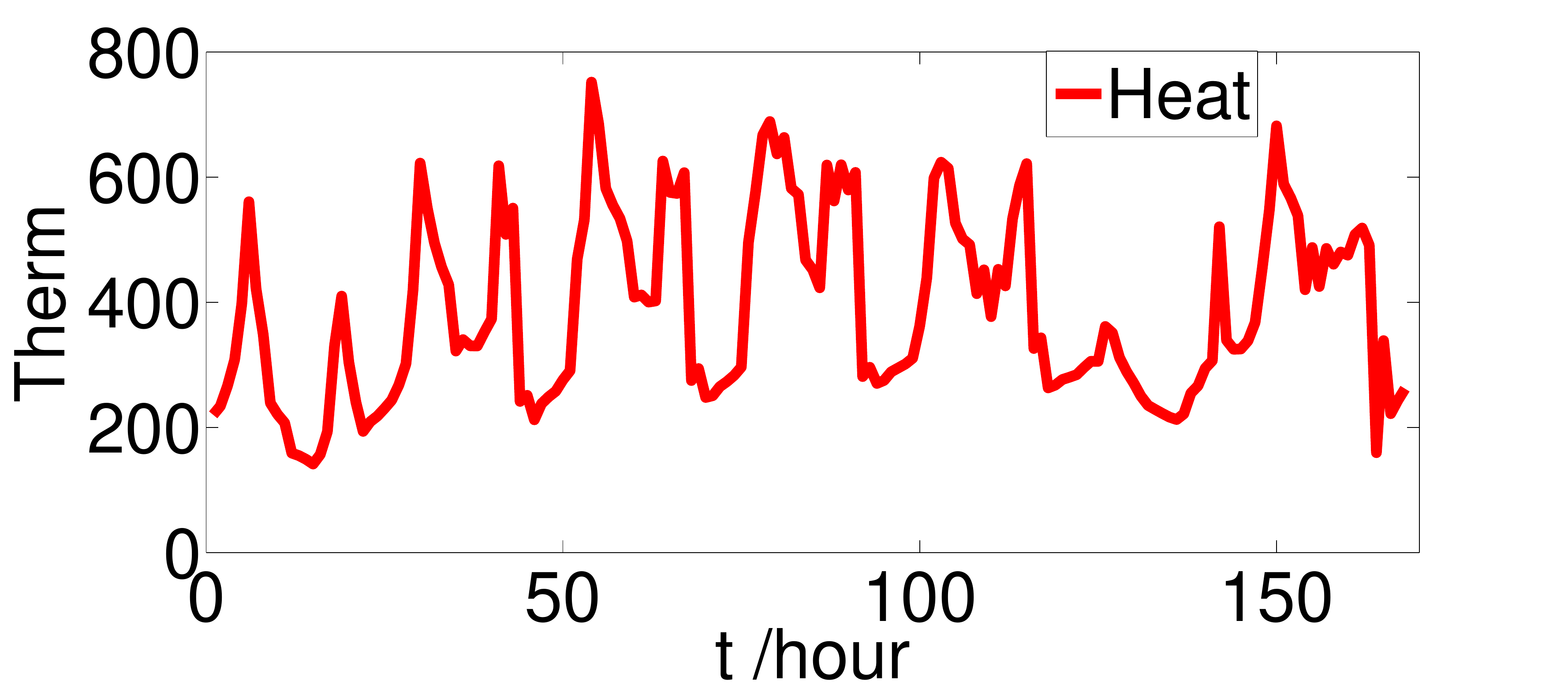} 
\label{fig:intro_trace_heat}}\caption{ Electricity demand, heat demand and wind generation in a week. In (a), the net demand is computed by subtracting the wind generation from the electricity demand.}
} 
\end{figure}

Secondly, co-generation brings a new dimension of uncertainty
in scheduling decisions. Observed from Fig.~\ref{fig:intro_trace_heat}, the heat demand exhibits a different stochastic
pattern that adds difficulty to the prediction of overall
energy demand.

Due to the above additional variability, traditional energy
generation scheduling
based on \emph{offline}
optimization assuming \emph{accurate} prediction
of future supplies and demands cannot be applied to the microgrid
scenarios. On the other hand,
%
%
there are also new opportunities. In microgrids there are usually only 1-2 types of
small reciprocate generators from tens of kilowatts to several megawatts.
These generators are typically gas or diesel powered and can be fired up
with large ramping-up/down level in the order of minutes. For example, a diesel-based engine can be powered up in 1-5 minutes and has a maximum ramp up/down rate of 40\% of its capacity per minute \cite{vuorinen2007planning}.
The ''fast responding'' nature of these local generators opens up opportunities to increase
the frequency of generator on/off scheduling that substantially changes the design space for energy generation scheduling.

Because of these unique challenges and opportunities, it remains an open
question of how to design effective strategies for scheduling energy
generation for microgrids.
\\
\\
\\
\subsection{Our Contributions}

In this paper, we formulate a general problem of energy generation
scheduling for microgrids.
Since both the future demands and future renewable energy
generation are difficult to predict, we use competitive analysis and
study online algorithms that can perform provably well under arbitrarily
time-varying (and even adversarial) future trajectories of demand and
renewable energy generation. Towards this end,
we design a class of simple and effective strategies for
energy generation scheduling named $\textsf{CHASE}$ (in short for Competitive Heuristic Algorithm
for Scheduling Energy-generation). Compared to traditional prediction-based and offline optimization
approaches, our online solution has the following salient benefits. \rev{First, $\textsf{CHASE}$ gives an absolute performance guarantee without the knowledge of supply and demand behaviors.} This minimizes the
impact of inaccurate modeling and the need for expensive data gathering, and
hence improves robustness in microgrid operations. Second, $\textsf{CHASE}$ works
without any assumption on gas/electricity prices and policy regulations.
This provides the grid operators flexibility for operations
and policy design without affecting the energy generation strategies for microgrids.

We summarize the key contributions as follows:
\begin{enumerate}
\item In Sec.~\ref{sec:offline}, we devise an
offline optimal algorithm for a generic formulation of \rev{the} energy generation
scheduling \rev{problem} that models most microgrid scenarios with intermittent energy
sources and fast-responding gas-/diesel-based CHP generators. \rev{Note that the offline problem is challenging by itself because it is a mixed integer problem and the objective function values across different slots are correlated via the startup cost.} We first reveal an elegant
structure of the single-generator problem and exploit it to construct
the optimal offline solution. The structural insights are further
generalized in
Sec.~\ref{sec:ngens} to the case with $N$ homogeneous generators.
The optimal offline solution employs a simple load-dispatching
strategy where each generator separately solves a partial scheduling
problem.

\item In Secs.~\ref{sec:online}-\ref{sec:ngens}, we build upon the
structural insights from the offline solution to design
$\textsf{CHASE}$, a deterministic online algorithm for scheduling energy
generations in microgrids. We name our algorithm $\textsf{CHASE}$
because it tracks the offline optimal solution
in an online fashion. We show that $\textsf{CHASE}$ achieves a competitive ratio
of $\min\left(3-2\alpha,1/\alpha\right)\leq 3$.
In other words, no matter how the demand, renewable energy
generation and grid price vary, the cost of $\textsf{CHASE}$ without
any future information is guaranteed to be no greater than
$\min\left(3-2\alpha,1/\alpha\right)$ times the offline optimal assuming
complete future information. Here the constant $\alpha=(c_o + c_m / L )/(P_{\max}+\eta\cdot c_g)\in (0,1]$
captures the maximum price discrepancy between using local
generation and external sources to supply energy.
We also prove that the above competitive ratio is the best possible for
any deterministic online algorithm.

\item
The above competitive ratio is attained without any future information
of demand and supply.
In Sec.~\ref{sec:lookahead}, we then extend $\textsf{CHASE}$ to
intelligently leverage limited look-ahead information, such as near-term
demand or wind forecast, to further improve its
performance. In particular, $\textsf{CHASE}$ achieves an improved competitive
ratio of $\min\left(3-2\cdot g(\alpha,\omega),1/\alpha\right)$ when
it can look into a future window of size $\omega$. Here,
the function $g(\alpha,\omega)\in [\alpha, 1]$ captures the benefit of looking-ahead and
monotonically increases from $\alpha$ to $1$ as $\omega$ increases. Hence, the larger the
look-ahead window, the better the performance.
In Sec.~\ref{sec:slowrep}, we also extend $\textsf{CHASE}$ to the case
where generators are governed by several additional operational
constraints ({\em e.g.,} ramping up/down rates and minimum on/off periods),
and derive an upper bound for the corresponding competitive ratio.

\item In Sec.~\ref{sec:empirical}, by extensive evaluations using real-world traces, we
show that our algorithm $\textsf{CHASE}$ can achieve satisfactory empirical performance and is robust to look-ahead error. In particular, a small look-ahead window is sufficient to achieve near offline-optimal performance. Our offline (resp., online) algorithm
achieves a cost reduction of 22\% (resp., 17\%) with CHP technology. 
The cost reduction is computed in comparison with the baseline cost achieved by using only the wind generation, the central
grid, and a separate heating system. The substantial cost reductions show the economic benefit of microgrids in addition to its potential in improving energy reliability. Furthermore, interestingly, deploying a partial local generation capacity that provides 50\% of the peak local demands can achieve 90\% of the cost reduction. This provides strong motivation for microgrids to deploy at least a partial local generation capability to save costs.
\end{enumerate}


\section{Problem Formulation}\label{sec:pf}
\begin{table}[htp!] 
{
\begin{tabular*}{1\columnwidth}{@{\extracolsep{\fill}}l|>{\raggedright}p{0.78\columnwidth}}
\hline
\textbf{Notation}{{} } & \textbf{Definition}
\tabularnewline
\hline
$T$  & The total number of intervals (unit: min)\tabularnewline
$N$  & The total number of local generators \tabularnewline
$\beta$  & The startup cost of local generator (\$)\tabularnewline
$c_{m}$  & The sunk cost per interval of running local generator (\$)\tabularnewline
$c_{o}$  & The incremental operational cost per interval of running local generator
to output an additional unit of power
(\$/Watt)\tabularnewline
$c_{g}$ & The price per unit of heat obtained externally using natural gas
(\$/Watt)\tabularnewline
${\sf T}_{{\rm on}}$  & The minimum on-time of generator, once it is turned on \tabularnewline
${\sf T}_{{\rm off}}$  & The minimum off-time of generator, once it is turned off \tabularnewline
${\sf R}_{{\rm up}}$  & The maximum ramping-up rate (Watt/min)\tabularnewline
${\sf R}_{{\rm dw}}$  & The maximum ramping-down rate (Watt/min)\tabularnewline
$L$  & The maximum power output of generator (Watt)\tabularnewline
$\eta$ & The heat recovery efficiency of co-generation \tabularnewline
$a(t)$  & The net power demand (Watt)\tabularnewline
$h(t)$ & The space heating demand (Watt)\tabularnewline
$p(t)$  & The spot price per unit of power obtained from the electricity grid
($P_{\min}\leq p(t)\leq P_{\max}$) (\$/Watt)\tabularnewline
$\sigma(t)$ & The joint input at time $t$: $\sigma(t) \triangleq (a(t), h(t), p(t))$ \tabularnewline
$y_n(t)$  & The on/off status of the $n$-th local generator (on as ``1'' and off
 as ``0''), $1\leq n\leq N$ \tabularnewline
$u_n(t)$  & The power output level when the $n$-th generator is on (Watt), $1\leq n\leq N$ \tabularnewline
$s(t)$ & The heat level obtained externally by natural gas (Watt)\tabularnewline
$v(t)$  & The power level obtained from electricity grid (Watt)\tabularnewline
\hline
\end{tabular*} 
Note: we use bold symbols to denote vectors, {\em e.g.,} $a\triangleq(a(t))_{t=1}^{T}$.} { Brackets indicate the units.}
\caption{\rev{Key notations. } }
\label{tab:notations}
\end{table}

We consider a typical scenario where a microgrid orchestrates different energy generation sources to minimize
cost for satisfying both local electricity and heat demands simultaneously, while meeting
operational constraints of \rev{electric power
system}.  We will formulate a microgrid cost minimization problem (\textbf{MCMP}) that incorporates intermittent energy demands, time-varying electricity prices, local generation capabilities and co-generation.


{We define the notations in Table~\ref{tab:notations}. We also define the acronyms for our problems and algorithms in Table~\ref{tab:acronyms}.}
\begin{table}[htp!] %
{
\begin{tabular*}{0.95\columnwidth}{@{\extracolsep{\fill}}l|>{\raggedright}p{0.75\columnwidth}}
\hline
\textbf{ Acronym}{{} } & \textbf{ Meaning}
\tabularnewline
\hline
$\textbf{MCMP}$  & Microgrid Cost Minimization Problem\tabularnewline
$\textbf{fMCMP}$ & $\textbf{MCMP}$ for \rev{fast-responding generators} \tabularnewline
$\textbf{fMCMP}_{\rm s}$ & $\textbf{fMCMP}$ \rev{with single fast-responding generator} \tabularnewline
$\textbf{SP}$ & A \rev{simplified version} of $\textbf{fMCMP}_{\rm s}$  \tabularnewline
$\textsf{CHASE}_{s}$ & The \rev{baseline version} of $\textsf{CHASE}$ for $\textbf{fMCMP}_{\rm s}$  \tabularnewline
$\textsf{CHASE}_{s+}$ & $\textsf{CHASE}$ for $\textbf{fMCMP}_{\rm s}$ \tabularnewline
$\textsf{CHASE}_s^{{\rm lk}(\omega)}$ & The \rev{baseline version} of $\textsf{CHASE}$ for $\textbf{fMCMP}_{\rm s}$ with look-ahead \tabularnewline
$\textsf{CHASE}_{s+}^{{\rm lk}(\omega)}$ & $\textsf{CHASE}$ for $\textbf{fMCMP}_{\rm s}$ with look-ahead  \tabularnewline
$\textsf{CHASE}^{\rm lk(\omega)}$ & $\textsf{CHASE}$ for $\textbf{fMCMP}$ with look-ahead  \tabularnewline
$\textsf{CHASE}^{\rm lk(\omega)}_{\sf gen}$ & $\textsf{CHASE}$ for $\textbf{MCMP}$ with look-ahead\tabularnewline
\hline
\end{tabular*} 
}
\caption{Acronyms for problems and algorithms.} 
\label{tab:acronyms}
\end{table}
\subsection{\label{sub:Model-Assumptions}Model}
\textbf{Intermittent Energy Demands}: We consider arbitrary renewable energy supply ({\em e.g.,} wind farms). Let the net demand ({\em i.e.,} the residual electricity demand not balanced by wind generation) at time $t$ be $a(t)$.
Note that we do not rely on any specific stochastic model of $a(t)$.

\textbf{External Power from Electricity Grid}: The microgrid can obtain external
 electricity supply from the central grid for unbalanced electricity demand in an on-demand manner. We let the spot
price at time $t$ from electricity grid be $p(t)$. We assume that $P_{\min}\leq p(t)\leq P_{\max}$.
Again, we do not rely on any specific stochastic model on $p(t)$.

\textbf{Local Generators}: The microgrid has $N$ units of homogeneous local generators, each having an maximum power output capacity $L$. Based on a common generator model~\cite{kazarlis1996genetic}, we denote $\beta$ as the startup cost of turning on a generator. Startup cost $\beta$ typically involves the heating up cost (in order to produce high pressure \rev{gas or} steam to drive the engine) and the \rev{time-amortized} additional maintenance costs resulted from each startup ({\em e.g.,} fatigue and possible permanent damage resulted by stresses during startups)\footnote{\rev{It is commonly understood that power generators incur startup costs and hence the generator on/off scheduling problem is inherently a dynamic programming problem. However, the detailed data of generator startup costs are
often not revealed to the public. According to \cite{cullen2011dynamic} and the references therein, startup costs of gas generators vary from several hundreds to thousands of US dollars. Startup costs at such level are comparable to running generators at their full capacities for several hours.}}. We denote $c_{m}$
as the sunk cost of maintaining a generator in its active state
per unit time, and $c_{o}$ as the operational cost per unit time
for an active generator to output an additional unit of energy.
Furthermore, a more realistic model of generators considers advanced {\em operational constraints}:
\begin{enumerate}

\item {\em Minimum On/Off Periods}: If one generator
has been committed (resp., uncommitted) at time $t$, it must remain committed
(resp., uncommitted) until time $t+{\sf T}_{{\rm on}}$ (resp., $t+{\sf T}_{{\rm off}}$).

\item {\em Ramping-up/down Rates}: The incremental power output in two consecutive time intervals is
limited by the ramping-up and ramping-down constraints.
\end{enumerate}
Most microgrids today employ generators powered by gas turbines or diesel engines. These generators are ``fast-responding'' in the sense that they can be powered up in several minutes, and have small minimum on/off periods as well as large ramping-up/down rates. Meanwhile, there are also generators based on steam engine, and are ``slow-responding'' with non-negligible ${\sf T}_{{\rm on}}$, ${\sf T}_{{\rm off}}$, and small ramping-up/down rates.

\textbf{Co-generation and Heat Demand}: The local CHP generators can
simultaneously generate electricity and useful heat. Let the heat
recovery efficiency for co-generation be $\eta$, {\em i.e.,} for each
unit of electricity generated, $\eta$ unit of useful heat can be
supplied for free.
Alternatively, without co-generation, heating can be generated separately
using external natural gas, which costs $c_{g}$ per unit time.
Thus, $\eta c_{g}$ is the saving due to using co-generation to supply
heat, provided that there is sufficient heat demand.
We assume $c_o \geq \eta \cdot c_g $.
In other words,
it is cheaper to generate heat by natural gas than purely by generators
(if not considering the benefit of co-generation).
Note that a system with no co-generation can be viewed as a special
case of our model by setting $\eta =0$. Let the heat demand at time $t$ be $h(t)$.

To keep the problem interesting, we assume that $c_{o}+\frac{c_{m}}{L}<P_{\max}+\eta\cdot c_{g}$.
This \rev{assumption} ensures that the minimum co-generation energy cost is cheaper
than the maximum external energy price. \rev{If this was not the case, it would have been optimal to always obtain power and heat externally and separately.}

\subsection{Problem Definition} \label{ssec:problem_definition}

We divide a finite time horizon into $T$ discrete time slots,
each is assumed to have a unit length without loss of generality.
The microgrid operational cost in $[1,T]$ is given by
\begin{align} 
&{\rm Cost}(y,u,v,s) \triangleq\textstyle{\sum}_{t=1}^{T}\Big\{{\displaystyle p(t)\cdot v(t)+c_g \cdot s(t) +} \Big. \label{eq:microgrid.cost}\\
&\;\;\;\;\;  \left. \textstyle{\sum}_{n=1}^{N}\left[c_{o}\cdot u_{n}(t)+c_{m}\cdot y_{n}(t)+\beta[y_n(t)-y_n(t-1)]^{+}\right]\right\}, \notag
\end{align}
which includes the cost of grid electricity, the cost of the external gas,
and the operating and switching cost of local CHP generators in the entire horizon $[1,T]$. Throughout this paper, we set the initial condition $y_n(0)=0$, $1\leq n\leq N$.

We formally define the \textbf{MCMP}
as a mixed integer programming problem, given electricity demand $a$, heat demand $h$, and grid electricity price $p$ as time-varying inputs:
\begin{subequations}
\begin{align}
 \underset{{y,u,v,s}}{\min}&{\rm Cost}(y,u,v,s) \\
 \mbox{s.t.}\;& 0\leq u_n(t)\leq L\cdot y_n(t), \label{C_max_output}\\
 & \textstyle{\sum}_{n=1}^N u_n(t)+v(t)\geq a(t), \label{C_e-demand}\\
 & \eta \cdot \textstyle{\sum}_{n=1}^N u_n(t)+s(t)\geq h(t), \label{C_h-demand}\\
& u_n(t)-u_n(t-1)\leq{\sf R}_{{\rm up}}, \label{C_ramp_up}\\
& u_n(t-1)-u_n(t)\leq{\sf R}_{{\rm dw}}, \label{C_ramp_down}\\
& y_n(\tau) \geq \mathbf{1}_{\{y_n(t)>y_n(t-1)\}}, t\mbox{+}1\leq \tau \leq t\mbox{+}{\sf T}_{{\rm on}}\mbox{-}1,\label{C_min_on}\\
 & y_n(\tau) \leq 1\mbox{-}\mathbf{1}_{\{y_n(t)<y_n(t-1)\}}, t\mbox{+}1\leq \tau \leq t\mbox{+}{\sf T}_{{\rm off}}\mbox{-}1, \label{C_min_off}\\
 \mbox{var}\;& y_n(t)\in \{0,1\},u_n(t),v(t),s(t)\in \Bbb{R}_0^{+}, n\in [1, N], t\in [1,T],\nonumber
\end{align}
\end{subequations}
where $\mathbf{1}_{\{\cdot\}}$ is the indicator function and
$\Bbb{R}_0^{+}$ represents the set of non-negative numbers. The
constraints are similar to those in the power system literature and
capture the operational constraints of generators. Specifically, constraint \eqref{C_max_output} captures the constraint of maximal output of the local generator. Constraints \eqref{C_e-demand}-\eqref{C_h-demand} ensure that the demands of electricity and heat can be satisfied, respectively. Constraints \eqref{C_ramp_up}-\eqref{C_ramp_down} capture the constraints of maximum ramping-up/down rates. Constraints \eqref{C_min_on}-\eqref{C_min_off} capture the minimum on/off period constraints (note that they can also be expressed in linear but hard-to-interpret forms).


\section{Fast-Responding Generator Case} \label{sec:fastrep}

\newtheorem{lem}{Lemma}
\newdef{defn}{Definition}
\newdef{thm}{Theorem}
\newdef{cor}{Corollary}
This section considers the fast-responding generator scenario. Most CHP generators employed
in microgrids are based on gas or diesel. These generators can be fired up in several minutes and
have high ramping-up/down rates. Thus at the timescale of energy
generation (usually tens of minutes), they can be considered as
having no minimum on/off periods and ramping-up/down rate constraints. That is,
${\sf T}_{{\rm on}}=0$, ${\sf T}_{{\rm off}}=0$, ${\sf R}_{{\rm up}}=\infty$,
${\sf R}_{{\rm dw}}=\infty$.
We remark that this model captures most microgrid scenarios today.
We will extend the algorithm developed for this responsive generator scenario to the general generator scenario in Sec.~\ref{sec:slowrep}.

To proceed, we first study a simple case where there is one unit of generator. We then extend the results to $N$ units of homogenous generators in Sec.~\ref{sec:ngens}.

\subsection{Single Generator Case} \label{sec:singlefastrep}
We first study a basic problem that considers a single generator. Thus, we can drop the subscript $n$ (the index of the
generator) when there is no source of confusion:
\begin{subequations}
\begin{align}
\mathbf{fMCMP}_{\rm s}: \underset{{y,u,v,s}}{\min}&{\rm Cost}(y,u,v,s) \label{basic_prob.obj}\\
 \mbox{s.t.}\;& 0\leq u(t)\leq L\cdot y(t), \label{C_max_output_single}\\
 & u(t)+v(t)\geq a(t), \label{C_e-demand_single}\\
 & \eta \cdot u(t)+s(t)\geq h(t), \label{C_h-demand_single}\\
 \mbox{var}\;& y(t)\in \{0,1\},u(t),v(t),s(t)\in \Bbb{R}_0^{+}, t\in [1,T].\nonumber
\end{align}
\end{subequations}
Note that even this simpler problem is challenging to solve. First,
even to obtain an offline solution (assuming complete knowledge of
future information), we must solve a mixed integer optimization
problem.  Further, the objective function values across different slots are
correlated via the startup cost $\beta[y(t)-y(t-1)]^+$, and thus cannot
be decomposed. {Finally, to obtain an online solution we do not even know
the future.}

\textbf{Remark}: \rev{Readers familiar with online
server scheduling in data centers
\cite{lin2011dynamic,lu2012simple} may see some similarity between our problem and
those in \cite{lin2011dynamic,lu2012simple}, {\em i.e.,} all are dealing with the scheduling difficulty introduced
by the switching cost. Despite such
similarity, however, the
inherent structures of these problems are
significantly different.
First, there is only one category of demand ({\em i.e.,} workload to be satisfied by the servers) in online server scheduling problems. \rev{In contrast,} there are two \rev{categories} of demands ({\em i.e.,} electricity and heat demands) in our problem. \rev{Further,} because of co-generation, they can not be considered separately. Second, there is only one category of supply  ({\em i.e.,} server service capability) in online server scheduling problem, \rev{and} thus the demand must be satisfied by this single supply. However, in our problem, there are three different supplies, including local generation, electricity grid power and external heat supply. Therefore, the design space in our problem \rev{is larger and it requires} us to orchestrate three different supplies, instead of single supply, to satisfy the demands.
}

Next, we introduce the following lemma to simplify the structure of the
problem. Note that if  $(y(t))_{t=1}^{T}$ are given, the startup cost is
determined. Thus, the problem in
\eqref{basic_prob.obj}-\eqref{C_h-demand_single} reduces to a linear
programming and can be solved \emph{independently in each time slot}.
\begin{lem}
\label{lem:fMCMP} Given $\left(y(t)\right)_{t=1}^{T}$ and the input $\left(\sigma(t)\right)_{t=1}^{T},$
the solutions $\left(u(t),v(t),s(t)\right)_{t=1}^{T}$ that minimize
$\mathrm{Cost}(y,u,v,s)$ are given by:
\begin{equation} 
u(t)\mbox{=}\begin{cases}
0, & \mbox{if } p(t)+\eta \cdot c_{g}\leq c_{o}\\
\min\Big\{\frac{h(t)}{\eta},a(t),L\cdot y(t)\Big\}, &  \mbox{if } p(t)<c_{o}<p(t)\mbox{+}\eta \cdot c_{g}\label{eq:optimal_u}\\
\min\Big\{a(t),L\cdot y(t)\Big\}, &  \mbox{if } c_{o}\leq p(t)
\end{cases}
\end{equation}
and
\begin{equation}
v(t)=\left[a(t)-u(t)\right]^{+},\;\;s(t)=\left[h(t)-\eta \cdot u(t)\right]^{+}. \label{eq:optimal_v_and_s}
\end{equation}
We note in each time slot $t$, the above $u(t),$ $v(t)$ and $s(t)$
are computed using only $y(t)$ and $\sigma(t)$ in the same time
slot.
\end{lem}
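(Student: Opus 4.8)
The plan is to exploit the fact that fixing the on/off schedule $\left(y(t)\right)_{t=1}^T$ freezes the two $y$-dependent terms of the objective. Once $y$ is given, $\sum_t c_m y(t)$ and $\sum_t \beta[y(t)-y(t-1)]^+$ become additive constants, so minimizing $\mathrm{Cost}(y,u,v,s)$ over the remaining variables is equivalent to minimizing $\sum_t \left(p(t)v(t)+c_g s(t)+c_o u(t)\right)$. Crucially, in the fast-responding case there are no inter-slot coupling constraints (no ramping or minimum on/off constraints), and the surviving constraints \eqref{C_max_output_single}--\eqref{C_h-demand_single} each involve only the variables of a single slot $t$. Hence the problem separates into $T$ independent per-slot minimizations, and it suffices to solve one generic slot, dropping the index $t$.

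First I would eliminate $v$ and $s$. Since their coefficients $p(t),c_g \ge 0$ and each appears in only one ``$\ge$'' constraint, at optimum we may take the smallest feasible values $v=[a-u]^+$ and $s=[h-\eta u]^+$; any larger choice weakly increases the objective, and this already establishes \eqref{eq:optimal_v_and_s}. Substituting these back, the per-slot objective becomes a univariate function $f(u)=p\,[a-u]^+ + c_g\,[h-\eta u]^+ + c_o u$ to be minimized over the interval $u\in[0,\,L\,y(t)]$ dictated by \eqref{C_max_output_single}. This $f$ is piecewise linear with breakpoints at $u=a$ and $u=h/\eta$, and I would verify it is convex: reading left to right, its slopes run $c_o-p-\eta c_g$, then either $c_o-\eta c_g$ or $c_o-p$ on the middle piece, and finally $c_o$, which is a non-decreasing sequence. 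Note that if $y(t)=0$ the feasible set collapses to $u=0$, consistent with every branch of the claimed formula.

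The heart of the argument is then a case analysis that reads off the minimizer of this convex piecewise-linear $f$ on $[0,Ly(t)]$, namely the point where its slope first turns non-negative, clipped to the feasible interval. I would split on the sign of the initial slope $c_o-p-\eta c_g$ and of the middle slope, which produces exactly the three regimes $p+\eta c_g\le c_o$, $p<c_o<p+\eta c_g$, and $c_o\le p$ appearing in \eqref{eq:optimal_u}. The model assumption $c_o\ge\eta c_g$ enters precisely here: it guarantees $c_o-\eta c_g\ge0$, so that on the piece where the electricity demand is already met (i.e. $[a-u]^+=0$) the objective no longer decreases, pinning the minimizer at the relevant breakpoint rather than letting it run further.

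The step I expect to require the most care is reconciling the two possible orderings of the breakpoints $a$ and $h/\eta$ with the three price regimes, and checking that after clipping to $[0,Ly(t)]$ all sub-cases collapse into the single unified expressions in \eqref{eq:optimal_u}. In particular, in the middle regime the unclipped minimizer is $a$ when $a\le h/\eta$ and $h/\eta$ when $h/\eta<a$, and both are captured by $\min\{h/\eta,\,a,\,Ly(t)\}$ after clipping; verifying this merge, together with the boundary ties where a slope is exactly zero, is the one genuinely fiddly part. Once that bookkeeping is complete, per-slot optimality assembles into global optimality via the separability established in the first paragraph, completing the proof.
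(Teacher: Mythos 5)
Your proof is correct and follows exactly the route the paper intends: the paper states this lemma \emph{without} a detailed proof, noting only (immediately before the lemma) that once $\left(y(t)\right)_{t=1}^{T}$ is fixed the startup cost is determined and the remaining problem ``reduces to a linear programming and can be solved independently in each time slot,'' which is precisely your separability argument. Your elimination of $v(t),s(t)$ followed by the convex piecewise-linear case analysis in $u(t)$ --- including the observation that the model assumption $c_o \ge \eta \cdot c_g$ is exactly what pins the minimizer at the breakpoint $a(t)$ (rather than letting it run on to $h(t)/\eta$) in the regime $c_o \le p(t)$ --- correctly supplies the details the paper leaves implicit.
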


The result of
Lemma~\ref{lem:fMCMP} can be interpreted as follows. If the grid price is
very high ({\em i.e.,} higher than $c_o$), then
it is always more economical to use local generation as much as
possible, without even considering heating. However, if the grid price
is between $c_o$ and $c_o-\eta\cdot c_g$, local electricity generation alone
is not economical. Rather, it is the benefit of supplying heat through
co-generation that makes local generation more economical. Hence, the amount of
local generation must consider the heat demand $h(t)$. Finally, when the
grid price is very low ({\em i.e.,} lower than $c_o-\eta\cdot c_g$), it is always
more cost-effective not to use local generation.

\rev{As a consequence of Lemma~\ref{lem:fMCMP}, the problem
$\mathbf{fMCMP_{s}}$ can be simplified to
the following problem $\mathbf{SP}$, where we only need to consider the decision of turning on ($y(t)=1$) or
off ($y(t)=0$) the generator.}
\begin{align*}
 \mathbf{SP}: & \min_y \; {\rm Cost}(y) \\
 & \mbox{var}\;\; y(t)\in \{0,1\}, t\in [1,T],
\end{align*}
where
\begin{align*}
{\rm Cost}(y)\triangleq & \textstyle{\sum}_{t=1}^{T}\Big( \psi\big(\sigma(t),y(t)\big)+\beta\cdot[y(t)-y(t-1)]^{+} \Big),
\end{align*}
$\psi\big(\sigma(t),y(t)\big)\triangleq c_{o}u(t)+p(t)v(t)+c_{g}s(t)+c_{m}y(t)$ and \\ $(u(t),v(t),s(t))$ are defined according to Lemma~\ref{lem:fMCMP}.


\subsubsection{Offline Optimal Solution}  \label{sec:offline}

We first study the offline setting, where the input
$(\sigma(t))_{t=1}^{T}$ is given ahead of time. We will reveal an elegant
structure of the optimal solution. Then, in Section~\ref{sec:online} we will
exploit this structure to design an efficient online algorithm.

\rev{The problem $\mathbf{SP}$ can be solved by the classical dynamic programming approach. We present it in Algorithm~\ref{alg:DP}. However, the solution provided by dynamic programming does not seem to bring significant insights for developing online algorithms. Therefore, in what follows we study the offline optimal solution from another angle, which directly reveals its structure.}

\begin{algorithm}[htb!]
{\caption{Dynamic Programming for \bf{SP} \label{alg:DP}}
\begin{algorithmic}[1]
\STATE We construct a graph $G=(V,E),$ where each vertex denoted by $\left\langle y,t\right\rangle $
presents the on $\left(y=1\right)$ or off $\left(y=0\right)$ state of the local generation at time $t.$

\STATE We draw a directed edge, from each vertex $\left\langle y(t-1),t-1\right\rangle $
to each possible vertex $\left\langle y(t),t\right\rangle $ to represent
the fact that the local generation can transit from the first state
to the second. Further, we associate the cost of that transition shown
below as the weight of the edge:
\[
\psi\left(y(t),\sigma(t)\right)+\beta\cdot\left(y(t)-y(t-1)\right)^{+}.
\]

\STATE We find the minimum weighted path from the initial state to the
final state by Dijkstra's shortest path algorithm on graph $G.$

\STATE The optimal solution is $\left(y(t)\right)_{t=1}^{T}$ along the
shortest path.
\end{algorithmic}
}
\end{algorithm}

Define
\begin{equation}
\delta(t)\triangleq\psi\Big(\sigma(t),0\Big)-\psi\Big(\sigma(t),1\Big).\label{eqn:delta-definition}
\end{equation}
$\delta(t)$ can be interpreted as the one-slot cost difference between
using or not using local generation.
Intuitively, if $\delta(t)>0$ (resp. $\delta(t)<0$), it will be
desirable to turn on (resp. off) the generator. However, due to the
startup cost, we should not turn on and off the generator too
frequently. Instead, we should evaluate whether the \emph{cumulative}
gain or loss in the future can offset the startup cost. This intuition
motivates us to define the following cumulative cost difference
$\Delta(t)$. We set the initial value as $\Delta(0)=-\beta$ and define
$\Delta(t)$ inductively:
\begin{equation}
\Delta(t)\triangleq\min\Big\{0,\max\{-\beta,\Delta(t-1)+\delta(t)\}\Big\}.\label{eqn:Delta-definition}
\end{equation}
Note that $\Delta(t)$ is only within the range $[-\beta,0]$. Otherwise, the
minimum cap ($-\beta$) and maximum cap (0) will apply to retain $\Delta(t)$
within $[-\beta,0]$. An important feature of $\Delta(t)$ useful later in online algorithm design is that it can be computed given the past and current input $\sigma(\tau)$, $1\leq \tau\leq t$.

Next, we construct critical segments according to $\Delta(t)$, and then classify segments by types. Each type
of segments captures similar episodes of demands. As shown later in Theorem~\ref{thm:OFA-optimal}, it suffices to solve the cost minimization problem over every segment and combine their solutions to obtain an offline optimal solution for the overall problem \textbf{SP}.
\begin{defn}
\label{def:critical-seg} We divide all time intervals in $[1,T]$
into disjoint parts called {\em critical segments}:
\[
[1,T_{1}^{c}],[T_{1}^{c}+1,T_{2}^{c}],[T_{2}^{c}+1,T_{3}^{c}],...,[T_{k}^{c}+1,T]
\]
The critical segments are characterized by a set of {\em critical
points}: $T_{1}^{c}<T_{2}^{c}<...<T_{k}^{c}$. We define each critical
point $T_{i}^{c}$ along with an auxiliary point $\tilde{T_{i}^{c}}$, such
that the pair $(T_{i}^{c},\tilde{T_{i}^{c}})$ satisfies the following
conditions:
\begin{itemize}
\item (Boundary): Either \big($\Delta(T_{i}^{c})=0$ and $\Delta(\tilde{T_{i}^{c}})=-\beta$\big)\\
 or \big($\Delta(T_{i}^{c})=-\beta$ and $\Delta(\tilde{T_{i}^{c}})=0$\big).
\item (Interior): $-\beta<\Delta(\tau)<0$ for all $T_{i}^{c}<\tau<\tilde{T_{i}^{c}}$.
\end{itemize}
In other words, each pair of $(T_{i}^{c},\tilde{T_{i}^{c}})$ corresponds to an interval where $\triangle(t)$ goes from -$\beta$ to $0$ or $0$ to -$\beta$, without reaching the two extreme values inside the interval. For example, $(T_{1}^{c},\tilde{T_{1}^{c}})$ and $(T_{2}^{c},\tilde{T_{2}^{c}})$
in Fig.~\ref{fig:example1} are two such pairs, while the corresponding
critical segments are $(T_{1}^{c},T_{2}^{c})$ and $(T_{2}^{c},T_{3}^{c})$.
It is straightforward to see that all $(T_{i}^{c},\tilde{T_{i}^{c}})$
are uniquely defined, thus critical segments are well-defined.
See Fig.~\ref{fig:example1} for an example.
\end{defn}
\begin{figure}[htp!]
 	\begin{center}
	\includegraphics[width=0.95\columnwidth]{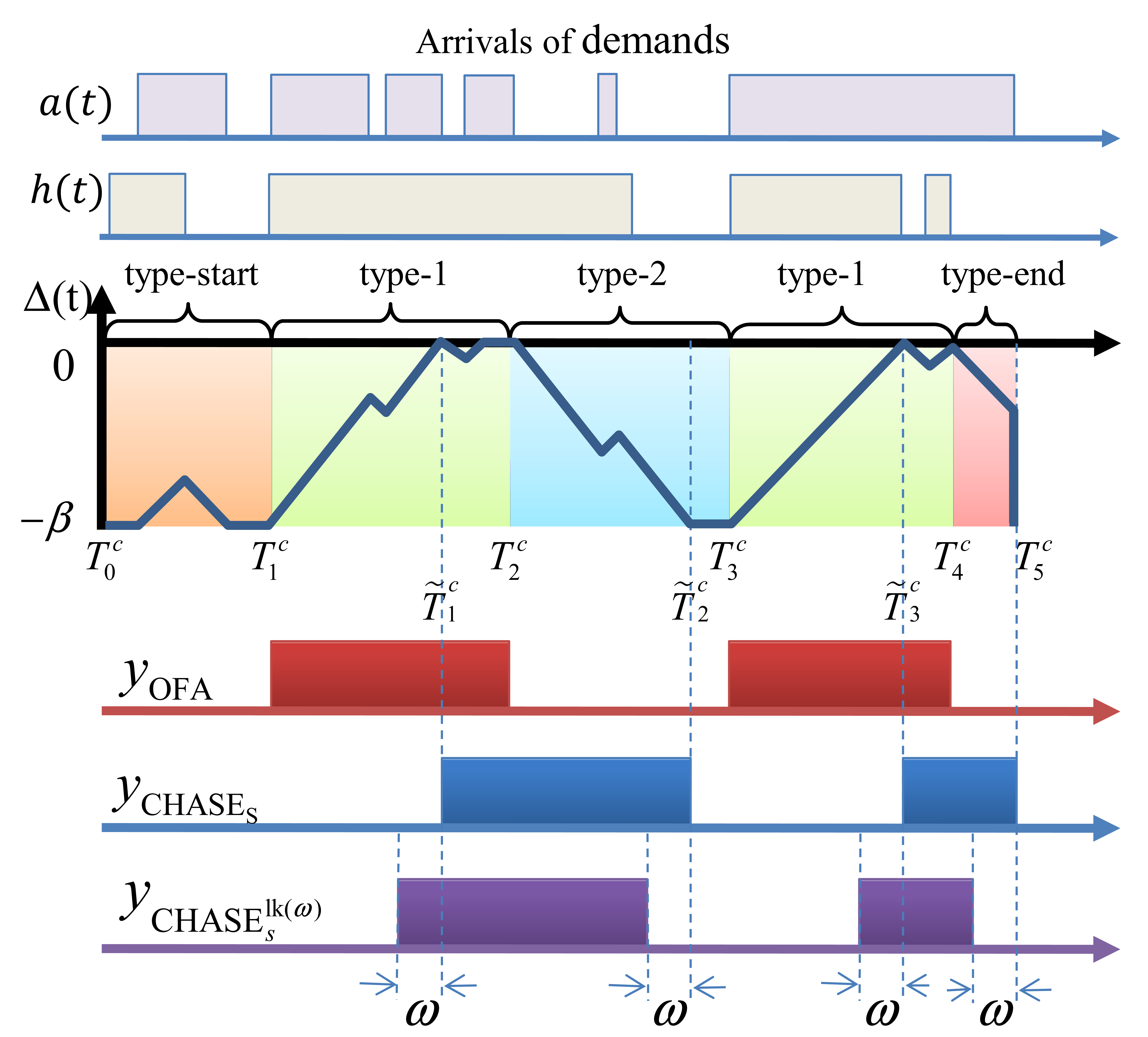}	
	\caption{ An example of $\Delta(t)$, $\mathrm{y_{OFA}}$, $\mathrm{y_{{\sf CHASE}_{s}}}$ and $\mathrm{y_{{\sf CHASE}_{s}^{lk(\omega)}}}$. \rev{In the top two rows, we have $a(t)\in\{0,1\}$, $h(t)\in\{0,\eta\}$. The price $p(t)$ is chosen as a constant in $(c_{o}-\eta\cdot c_{g},c_{o})$. In the next row, we compute $\Delta(t)$ according to $a(t)$ and $h(t)$. For ease of exposition, in this example we set the parameters so that $\Delta(t)$ increases if and only if $a(t)=1$ and $h(t)=\eta$.}
The solutions $\mathrm{y_{OFA}}$, $\mathrm{y_{{\sf CHASE}_{s}}}$ and $\mathrm{y_{{\sf CHASE}_{s}^{lk(\omega)}}}$ \rev{at the bottom rows} are obtained accordingly to \eqref{eq:y_OFA}, Algorithms~\ref{alg:CHASE0} and \ref{alg:CHASE-lk}, respectively.
} \label{fig:example1} 
	\end{center}
\end{figure}
Once the time horizon $[1,T]$ is divided into critical segments,  we can now characterize the optimal solution.
\begin{defn}
\label{def:critical-type}
 We classify the {\em type} of a critical segment by:
\begin{itemize}
\item {\em type-start} (also call {\em type-0}): $[1,T_{1}^{c}]$

\item {\em type-1}: $[T_{i}^{c}+1,T_{i+1}^{c}]$, if $\Delta(T_{i}^{c})=-\beta$
and $\Delta(T_{i+1}^{c})=0$

\item {\em type-2}: $[T_{i}^{c}+1,T_{i+1}^{c}]$, if $\Delta(T_{i}^{c})=0$
and $\Delta(T_{i+1}^{c})=-\beta$

\item {\em type-end} (also call {\em type-3}): $[T_{k}^{c}+1,T]$

\end{itemize}
\end{defn}

We define the cost with regard to a segment $i$ by:\[
\begin{array}{@{}r@{}l@{\ }l}
{\rm Cost}^{{\rm sg\mbox{-}}i}(y) & \triangleq & {\displaystyle \sum_{t=T_{i}^{c}+1}^{T_{i+1}^{c}}\psi\big(\sigma(t),y(t)\big)} +{\displaystyle \sum_{t=T_{i}^{c}+1}^{T_{i+1}^{c}+1}\beta\cdot[y(t)-y(t-1)]^{+}}
\end{array}
\]
and define a subproblem for critical segment $i$ by:
\begin{align*}\label{}
    \mathbf{SP}^{{\rm sg\mbox{-}}i}({\rm y}_{i}^{l},{\rm y}_{i}^{r}): \min\; & {\rm Cost}^{{\rm sg\mbox{-}}i}(y) \\
                                               \mbox{s.t.}\; & y(T_{i}^{c})={\rm y}_{i}^{l},\;y(T_{i+1}^{c}+1)={\rm y}_{i}^{r},\\
                                                \mbox{var}\; & y(t)\in\{0,1\}, t\in[T_{i}^{c}+1,T_{i+1}^{c}].
\end{align*}

Note that due to the startup cost across segment boundaries, in general ${\rm Cost}(y) \neq \sum {\rm Cost}^{{\rm sg\mbox{-}}i}(y)$. In other words, we should not expect that putting together the solutions to each segment will lead to an overall optimal offline solution. However, the following lemma shows an important structure property that \emph{one optimal solution} of $\mathbf{SP^{sg-i}}(y_{i}^{l},y_{i}^{r})$ is independent of boundary conditions $(y_{i}^{l},y_{i}^{r})$ although \emph{the optimal value} depends on boundary conditions.
\begin{lem}
\label{lem:-is-theindep.with.bound.con} $(y_{{\rm OFA}}(t))_{t=T_{i}^{c}+1}^{T_{i+1}^{c}}$
in \eqref{eq:y_OFA} is an optimal solution for ${\bf SP}^{{\rm sg\mbox{-}}i}({\rm y}_{i}^{l},{\rm y}_{i}^{r})$,
despite any boundary conditions $({\rm y}_{i}^{l},{\rm y}_{i}^{r})$.
\end{lem}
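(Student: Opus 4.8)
The plan is to reduce $\mathbf{SP}^{{\rm sg}\mbox{-}i}$ to a reward‑maximization problem driven by the quantities $\delta(t)$, and then to verify, for every one of the four boundary conditions $({\rm y}_i^l,{\rm y}_i^r)\in\{0,1\}^2$, that the interior trajectory $y_{\rm OFA}$ attains the optimum. Since $y_{\rm OFA}$ is specified in \eqref{eq:y_OFA} without any reference to $({\rm y}_i^l,{\rm y}_i^r)$, establishing optimality for each of the four cases is precisely the assertion ``despite any boundary conditions.'' First I would rewrite the objective: using $\psi(\sigma(t),y(t))=\psi(\sigma(t),0)-\delta(t)\,y(t)$ from \eqref{eqn:delta-definition}, the segment cost equals the constant $\sum_t\psi(\sigma(t),0)$ minus the reward
\[
R(y)\;=\;\sum_{t=T_i^c+1}^{T_{i+1}^c}\delta(t)\,y(t)\;-\;\beta\sum_{t=T_i^c+1}^{T_{i+1}^c+1}[y(t)-y(t-1)]^+ ,
\]
where the two endpoint switch terms carry the entire dependence on ${\rm y}_i^l=y(T_i^c)$ and ${\rm y}_i^r=y(T_{i+1}^c+1)$. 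Minimizing the cost is thus equivalent to maximizing $R$, and recall that $y_{\rm OFA}$ is constant on each complete critical segment (on throughout a type-1 segment, off throughout a type-0 or type-2 segment, switching only at critical points), following the incomplete excursion on the type-end segment.

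The heart of the argument is a bounded‑accumulation property of $\delta$ inside a critical segment. Writing $D(t)=\sum_{\tau=T_i^c+1}^{t}\delta(\tau)$ for the prefix sum relative to the segment start, the defining requirement that $\Delta$ stay in $[-\beta,0]$ and leave an extreme only at a critical point (Definition~\ref{def:critical-seg}) pins $D$ to the trajectory of $\Delta$. Concretely, on a type-1 segment one gets $D(t)\ge 0$ throughout, a total $D(T_{i+1}^c)\ge\beta$, and every drawdown $D(t_1)-D(t_2)$ with $t_1\le t_2$ at most $\beta$; on a type-0 or type-2 segment one gets $D(t)\le 0$ throughout and every run‑up at most $\beta$. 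I would prove this directly from \eqref{eqn:Delta-definition}, observing that the clipping is inactive on the transition part (so $D=\Delta-\Delta(T_i^c)$ there, whence the $\pm\beta$ bounds) and is active only on the plateau, where it fixes the sign of $\delta(t)$ and hence the monotonicity of $D$.

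With this property in hand I would run an exchange argument. Starting from any optimal $y^\ast$ for a fixed boundary, I flip its maximal off‑blocks to on (type-1) or its maximal on‑blocks to off (type-0, type-2), one block at a time, transforming $y^\ast$ into $y_{\rm OFA}$. Flipping an \emph{interior} block merges two neighbouring blocks and removes exactly one startup ($+\beta$ in reward) while changing the collected $\delta$ by $\pm\sum\delta$ over the block, so the bounded‑accumulation estimate makes each flip change $R$ by at least $\sum\delta+\beta\ge 0$ (type-1) or $-\sum\delta+\beta\ge 0$ (type-0, type-2). No flip decreases the reward, hence $R(y_{\rm OFA})\ge R(y^\ast)$ and $y_{\rm OFA}$ is optimal.

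I expect the main obstacle to be the bookkeeping for blocks that touch the segment endpoints, where the change in startup count depends on ${\rm y}_i^l$ and ${\rm y}_i^r$; this is exactly where the four boundary cases must be checked and where the independence from $({\rm y}_i^l,{\rm y}_i^r)$ could a priori fail. There the startup count need not drop by one (e.g.\ flipping a left‑boundary off‑block when ${\rm y}_i^l=0$ moves a startup rather than removing it), so the reward change is governed by the prefix‑sign bounds $D(t)\ge 0$ / $D(t)\le 0$ and by the total bound $D(T_{i+1}^c)\ge\beta$ rather than by the drawdown bound. Verifying these endpoint sub‑cases is routine but delicate, and it is the one place deserving full care, since it is what guarantees that the single trajectory $y_{\rm OFA}$ is simultaneously optimal for all four boundary conditions.
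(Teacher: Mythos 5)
Your plan is, in substance, the paper's own proof. The ``bounded-accumulation property'' you extract from \eqref{eqn:Delta-definition} is exactly the paper's Lemma~\ref{lem:Delta-function} (inside a type-1 segment the clipping at $-\beta$ is never active, so $\sum_{t=\tau_1+1}^{\tau_2}\delta(t)\ge\Delta(\tau_2)-\Delta(\tau_1)$, with the reversed inequality on type-2), and your block-by-block exchange is the paper's decomposition of ${\rm Cost}^{{\rm sg\mbox{-}}i}(\bar y)-{\rm Cost}^{{\rm sg\mbox{-}}i}(y_{{\rm OFA}})$ into per-block terms, each shown nonnegative with the same endpoint bookkeeping on $({\rm y}_i^l,{\rm y}_i^r)$; summing your flips one at a time is the same computation presented sequentially.

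There is, however, one false step: the claim that $D(t)\le 0$ throughout a type-0 (type-start) segment. The inequality $D(t)\le\Delta(t)-\Delta(T_i^c)$ yields $D(t)\le 0$ only when the segment's left reference value is $\Delta(T_i^c)=0$, i.e., for type-2; a type-start segment is referenced to $\Delta(0)=-\beta$, so one only gets $D(t)\le\Delta(t)+\beta$, which can be strictly positive. Concretely, take $\delta(1)=\beta/2$ followed by negative $\delta$'s: then $\Delta$ stays in $[-\beta,0)$ (so the prefix is type-start by Definition~\ref{def:critical-seg}) yet $D(1)=\beta/2>0$. This bites exactly in the endpoint sub-case you single out as delicate: flipping a left-boundary on-block to off when ${\rm y}_i^l=1$ saves no startup, the reward change is $-D(t_2)$, and the prefix-sign bound is the only tool you reserve for it --- and it fails. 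In fact the lemma as literally stated (``despite any boundary conditions'') fails there: in the example above, with ${\rm y}_0^l=1$, the schedule that is on in slot~$1$ and off afterwards beats $y_{{\rm OFA}}\equiv 0$ by $\beta/2$. (The paper's own appendix glosses over this by declaring type-start ``similar'' to type-2, where the argument does use $\Delta(T_i^c)=0$.) The saving grace is that the only instance ever invoked in Theorem~\ref{thm:OFA-optimal} has ${\rm y}_0^l=y(0)=0$ on the type-start segment, and for that case your remaining tools suffice: the run-up bound $\sum_{t=t_1}^{t_2}\delta(t)\le\beta$ does hold on type-start, and it covers left-boundary flips because the flipped block's startup is then genuinely removed, while right-boundary flips are covered by the endpoint value $\Delta(T_1^c)=-\beta$. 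So to complete your write-up you must replace the type-0 prefix bound by these two facts and either restrict the boundary claim for type-start to ${\rm y}^l=0$ or record the counterexample.
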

This lemma can be intuitively explained by Fig.~\ref{fig:example1}. In type-1 critical
segment, $\Delta(t)$ has an increment of $\beta$, which means
that setting $y(t)=1$ over the entire segment provides at least a benefit
of $\beta$, compared to keeping $y(t)=0$. Such benefit
compensates the possible startup cost $\beta$ if the boundary conditions are not aligned with $y(t)=1$. Therefore,
regardless of the boundary conditions, we should set $y(t)=1$ on type-1
critical segment. Other types of critical segments can be explained similarly.

We then use this lemma to show the following main result on the structure of the offline optimal solution.


\begin{thm}\label{thm:OFA-optimal}
An optimal solution for \textbf{SP} is given by
\begin{equation}\label{eq:y_OFA}
y_{{\rm OFA}}(t)\triangleq
\begin{cases}
0, & \mbox{if\ }t\in[T_{i}^{c}+1,T_{i+1}^{c}]\mbox{\ is type-start/-2/-end},\\
1, & \mbox{if\ }t\in[T_{i}^{c}+1,T_{i+1}^{c}]\mbox{\ is type-1}.
\end{cases}
\end{equation}
\end{thm}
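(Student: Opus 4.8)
The plan is to prove that $y_{{\rm OFA}}$ is optimal for $\mathbf{SP}$ by an exchange argument: starting from an arbitrary optimal solution, I will overwrite it one critical segment at a time with $y_{{\rm OFA}}$ and show that no single overwrite ever raises the total cost. The engine of the argument is Lemma~\ref{lem:-is-theindep.with.bound.con}, which guarantees that on each segment $y_{{\rm OFA}}$ is an optimal solution of $\mathbf{SP}^{{\rm sg\mbox{-}}i}({\rm y}_{i}^{l},{\rm y}_{i}^{r})$ \emph{for every} pair of boundary values. Because the per-segment optimizer is boundary-independent, I never need to coordinate choices across segments; I can commit to $y_{{\rm OFA}}$ on each segment in isolation.

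First I will establish a cost-localization fact: if I change $y$ only on the slots $[T_{i}^{c}+1,T_{i+1}^{c}]$ of segment $i$, while holding $y(T_{i}^{c})$ and $y(T_{i+1}^{c}+1)$ fixed, then the change in the global objective ${\rm Cost}(y)$ equals exactly the change in ${\rm Cost}^{{\rm sg\mbox{-}}i}(y)$. This is pure bookkeeping: the only terms of ${\rm Cost}(y)$ that contain a variable $y(t)$ with $t\in[T_{i}^{c}+1,T_{i+1}^{c}]$ are the operating terms $\psi(\sigma(t),y(t))$ for $t$ in the segment, together with the startup terms $\beta[y(t)-y(t-1)]^{+}$ for $t\in[T_{i}^{c}+1,T_{i+1}^{c}+1]$ --- the latter range including the switch-in term at $t=T_{i}^{c}+1$ and the switch-out term at $t=T_{i+1}^{c}+1$. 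That collection is precisely what ${\rm Cost}^{{\rm sg\mbox{-}}i}(y)$ sums. Hence, with the two boundary values frozen, minimizing ${\rm Cost}(y)$ over the segment is identical to minimizing ${\rm Cost}^{{\rm sg\mbox{-}}i}(y)$.

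The exchange then proceeds left to right. Let $y^{\star}$ be any optimal solution and process $i=0,1,\dots,k$ in order; before step $i$ the current $y$ agrees with $y_{{\rm OFA}}$ on segments $0,\dots,i-1$ and with $y^{\star}$ on segments $i,\dots,k$. At step $i$ I replace segment $i$ by $y_{{\rm OFA}}$, keeping its left boundary $y(T_{i}^{c})$ (already at its final value) and its right boundary $y(T_{i+1}^{c}+1)$ (still $y^{\star}$) unchanged, so both the pre- and post-swap assignments are feasible for $\mathbf{SP}^{{\rm sg\mbox{-}}i}$ with \emph{these} boundaries. By the localization fact the change in ${\rm Cost}(y)$ equals the change in ${\rm Cost}^{{\rm sg\mbox{-}}i}(y)$, and by Lemma~\ref{lem:-is-theindep.with.bound.con} the latter is non-positive. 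Thus every step is cost non-increasing, and after the last step $y=y_{{\rm OFA}}$ everywhere with ${\rm Cost}(y_{{\rm OFA}})\le{\rm Cost}(y^{\star})$, giving optimality. The endpoints need only minor care: the type-start segment uses the fixed $y(0)=0$ as its left boundary, and the type-end segment has no switch-out term (nothing follows slot $T$), so its localized cost is ${\rm Cost}^{{\rm sg\mbox{-}}k}$ with the right-boundary startup term dropped; the lemma applies verbatim.

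The step I expect to be the main obstacle --- and the point the text flags with the warning ${\rm Cost}(y)\neq\sum_{i}{\rm Cost}^{{\rm sg\mbox{-}}i}(y)$ --- is reconciling that each boundary startup term $\beta[y(T_{i+1}^{c}+1)-y(T_{i+1}^{c})]^{+}$ belongs to \emph{both} ${\rm Cost}^{{\rm sg\mbox{-}}i}$ (as its switch-out term) and ${\rm Cost}^{{\rm sg\mbox{-}}(i+1)}$ (as its switch-in term). This overlap does \emph{not} spoil the telescoping, precisely because the exchange is sequential: when segment $i$ is overwritten this term moves only through $y(T_{i+1}^{c})$, and when segment $i+1$ is later overwritten it moves only through $y(T_{i+1}^{c}+1)$. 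Each of the two perturbations is measured exactly once against the appropriate ${\rm Cost}^{{\rm sg\mbox{-}}i}$ with the other argument held fixed, so the running value of ${\rm Cost}(y)$ is updated consistently and the shared term is never double-charged. Making this bookkeeping airtight --- rather than summing the per-segment costs all at once --- is the crux of the argument.
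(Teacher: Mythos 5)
Your proposal is correct and takes essentially the same route as the paper's own proof: the paper constructs the interpolating sequence $y_{0}=y^{\ast},y_{1},\dots,y_{k+1}=y_{{\rm OFA}}$ that overwrites the critical segments from left to right, fixes the boundary conditions ${\rm y}_{i}^{l}=y_{{\rm OFA}}(T_{i}^{c})$ and ${\rm y}_{i}^{r}=y^{\ast}(T_{i+1}^{c}+1)$, and applies Lemma~\ref{lem:-is-theindep.with.bound.con} to conclude ${\rm Cost}(y_{i})\ge{\rm Cost}(y_{i+1})$ at every step, which is precisely your sequential exchange. Your explicit cost-localization bookkeeping (including the treatment of the shared boundary startup term) is left implicit in the paper's identity ${\rm Cost}(y_{i})-{\rm Cost}(y_{i+1})={\rm Cost}^{{\rm sg\mbox{-}}i}(y^{\ast})-{\rm Cost}^{{\rm sg\mbox{-}}i}(y_{{\rm OFA}})$, but the underlying argument is the same.
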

\begin{proof}
Refer to Appendix~\ref{subsec:OFA-optimal}.
\end{proof}
Theorem~\ref{thm:OFA-optimal} can be interpreted as follows. Consider for example a type-1 critical segment in Fig.~\ref{fig:example1} that starts from $T_1^c$. Since $\Delta(t)$ increases from $-\beta$ after $T_1^c$, it implies that $\delta(t)>0$, and thus we are interested in turning on the generator. The difficulty, however, is that immediately after $T_1^c$ we do not know whether the future gain by turning on the generator will offset the startup cost. On the other hand, once $\Delta(t)$ reaches $0$, it means that the cumulative gain in the interval $[T_1^c,\tilde{T}_1^c]$ will be no less than the startup cost. Hence, we can safely turn on the generator at $T_1^c$. Similarly, for each type-2 segment we can turn off the generator at the beginning of the segment. (We note that our offline solution turns on/off the generator at the beginning of each segment because all future information is assumed to be known.)

The optimal solution is easy to compute. More importantly, the insights help us design the online algorithms.

\subsubsection{Our Proposed Online Algorithm ${\sf CHASE}$} \label{sec:online}

Denote an online algorithm for ${\bf SP}$ by ${\cal A}$.
We define the competitive ratio
of ${\cal A}$ by:
\begin{equation}
{\sf CR}({\cal A})\triangleq\max_{\sigma}\frac{{\rm Cost}(y_{{\cal A}})}{{\rm Cost}(y_{{\rm OFA}})}
\end{equation}

Recall the structure of optimal solution $y_{{\rm OFA}}$:
once the process is entering type-1 (resp., type-2) critical segment, we should set
$y(t)=1$ (resp., $y(t)=0$). However, the difficulty
lies in determining the beginnings of type-1 and type-2 critical segments
without future information. Fortunately, as illustrated in Fig.~\ref{fig:example1}, it is certain that the process
is in a type-1 critical segment when $\Delta(t)$ reaches $0$ for the
first time after hitting $-\beta$. This observation motivates us to use the algorithm ${\sf CHASE}_s$, which is given in Algorithm~\ref{alg:CHASE0}. If $-\beta<\Delta(t)<0$,  ${\sf CHASE}_s$ maintains $y(t)=y(t-1)$ (since we do not know whether a new segment has started yet.) However, when $\Delta=0$ (resp. $\Delta(t)=-\beta$), we know for sure that we are inside a new type-1 (resp. type-2) segment. Hence, ${\sf CHASE}_s$ sets $y(t)=1$ (resp. $y(t)=0$). Intuitively, the behavior of ${\sf CHASE}_s$ is to track the offline optimal in an online manner: we change the decision only after we are certain that the offline optimal decision is changed.
\begin{algorithm}[htb!]
{
\caption{ ${\sf CHASE}_s [t, \sigma(t), y(t-1)]$} \label{alg:CHASE0}
\begin{algorithmic}[1]
\STATE find $\Delta(t)$
\IF{$\Delta(t)=-\beta$}
\STATE {$y(t) \leftarrow 0$}
\ELSIF {$\Delta(t)=0$}
\STATE {$y(t) \leftarrow 1$}
\ELSE
\STATE {$y(t) \leftarrow y(t-1)$}
\ENDIF
\STATE set $u(t)$, $v(t)$, and $s(t)$ according to \eqref{eq:optimal_u} and \eqref{eq:optimal_v_and_s}
\STATE return $(y(t), u(t), v(t), s(t))$
\end{algorithmic}
}

\end{algorithm}

Even though ${\sf CHASE}_s$ is a simple algorithm, it has a strong performance guarantee, as given by the following theorem.
\begin{thm} \label{thm:CHASE-competitive-ratio}
The competitive ratio of ${\sf CHASE}_s$ satisfies
\begin{equation}
{\sf CR}({\sf CHASE}_s)\le 3-2\alpha < 3,\label{eq:CHASE_s_Ratio}
\end{equation}
where
\begin{equation}
\alpha \triangleq (c_o + c_m / L )/(P_{\max}+\eta\cdot c_g)\in (0, 1] \label{eq:alpha_def}
\end{equation}
captures the maximum price discrepancy between using local generation and external sources to supply energy.
\end{thm}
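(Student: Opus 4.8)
The plan is to bound the cost of $\textsf{CHASE}_s$ against $y_{\rm OFA}$ by comparing the two solutions critical-segment by critical-segment, exploiting the fact that both algorithms are driven by the same $\Delta(t)$ trajectory. The key observation is that $\textsf{CHASE}_s$ and $y_{\rm OFA}$ eventually agree on each segment, but $\textsf{CHASE}_s$ lags: on a type-1 segment $[T_i^c+1, T_{i+1}^c]$, the offline solution sets $y=1$ immediately at $T_i^c$, whereas $\textsf{CHASE}_s$ only commits to $y=1$ at the auxiliary point $\tilde T_i^c$ where $\Delta$ first reaches $0$, keeping $y(t)=y(t-1)$ in the interior $(T_i^c, \tilde T_i^c)$. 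So the two solutions differ only on the ``lag interval'' at the start of each segment, and the extra cost of $\textsf{CHASE}_s$ must come entirely from (a) this lag and (b) possibly extra startups. First I would make this precise: decompose $\mathrm{Cost}(y_{\textsf{CHASE}_s})$ and $\mathrm{Cost}(y_{\rm OFA})$ per segment and show the only discrepancy lives in the interior of each segment before the auxiliary point.

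Next I would quantify the per-segment overhead. On a type-1 segment, during the lag interval $\textsf{CHASE}_s$ runs with the ``wrong'' state relative to OFA, and when $\Delta$ finally reaches $0$ it may incur a startup $\beta$. The crucial accounting tool is the definition of $\Delta(t)$ itself: over the interval from $T_i^c$ to $\tilde T_i^c$, by construction $\Delta$ moves by exactly $\beta$, i.e. $\sum_t \delta(t) = \beta$ on that stretch (modulo the clamping), and $\delta(t)$ is precisely the one-slot cost difference $\psi(\sigma(t),0)-\psi(\sigma(t),1)$. This lets me convert the cost gap between running and not running during the lag into a quantity controlled by $\beta$. The plan is to show that the extra cost $\textsf{CHASE}_s$ pays on each segment, relative to the offline cost \emph{attributable to that segment}, is at most $2\beta$ worth of overhead against a segment cost that is at least $\beta$ (since OFA only commits when the cumulative gain justifies the startup), while the price discrepancy factor $\alpha$ controls how the grid/gas alternative compares to local generation.

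The role of $\alpha$ enters through a worst-case slot comparison: $\alpha = (c_o + c_m/L)/(P_{\max}+\eta c_g)$ is the ratio of the cheapest per-unit local-generation cost to the most expensive external cost, so in any single slot the cost $\textsf{CHASE}_s$ incurs is at most $1/\alpha$ times what OFA incurs, and more sharply, the ``wasted'' startup and lag costs can be charged against OFA's own unavoidable expenditure scaled by $\alpha$. Concretely I would aim to establish a segment-wise inequality of the form $\mathrm{Cost}^{\mathrm{sg}\text{-}i}(y_{\textsf{CHASE}_s}) \le (3-2\alpha)\,\mathrm{Cost}^{\mathrm{sg}\text{-}i}(y_{\rm OFA})$, then sum over segments. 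Because the offline decomposition (Lemma~\ref{lem:-is-theindep.with.bound.con}) guarantees that the optimal per-segment solution is boundary-independent, the summation of segment bounds yields a valid global bound despite the cross-segment startup coupling; this is exactly where I would lean on the structural Theorem~\ref{thm:OFA-optimal}.

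The hard part will be the tight bookkeeping of the startup cost $\beta$ across the lag interval and making the factor come out to exactly $3-2\alpha$ rather than a looser constant. The naive argument gives a factor of $3$ (one $\beta$ for the extra startup $\textsf{CHASE}_s$ makes, one $\beta$ for the cumulative-gain deficit during the lag, against a denominator lower-bounded by $\beta$), and squeezing the improvement to $3-2\alpha$ requires carefully using that even during the lag interval $\textsf{CHASE}_s$ is not paying nothing — it pays the external-source cost, which is at least $\alpha$ times the committed-generator cost slot by slot. I would isolate the worst-case segment configuration (a short type-1 segment where $\Delta$ just barely reaches $0$) as the adversarial instance and verify the constant is attained there, which both proves the bound and prepares the matching lower-bound construction needed for the optimality claim elsewhere.
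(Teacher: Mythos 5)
Your high-level skeleton (critical-segment decomposition, lag analysis, ``$2\beta$ overhead versus a $\beta$-sized offline cost gives $3$'') matches the paper's outline, but the concrete plan has a genuine gap: the per-segment inequality ${\rm Cost}^{{\rm sg\mbox{-}}i}(y_{{\sf CHASE}_s}) \le (3-2\alpha)\,{\rm Cost}^{{\rm sg\mbox{-}}i}(y_{{\rm OFA}})$ that you propose to prove and then sum is false. On a type-2 (or type-end) segment the offline solution sets $y=0$, and its cost there can be arbitrarily small --- even zero, e.g.\ when $a(t)=h(t)=0$ throughout, so that $\delta(t)=-c_m<0$ and $\Delta(t)$ duly descends to $-\beta$ --- while ${\sf CHASE}_s$, lagging, keeps the generator on and pays up to $\beta$ extra before $\Delta(t)$ reaches $-\beta$. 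The ratio on such a segment is unbounded, so no segment-wise ratio bound can hold. The paper's proof avoids this precisely by \emph{not} bounding ratios segment by segment: it proves \emph{additive} overhead bounds per segment type (at most $m_1\beta$ total over type-1 segments, $m_j\beta$ over type-$j$ for $j=2,3$), invokes the counting identity $m_1=m_2+m_3$, and then charges the entire $2m_1\beta$ overhead against the offline cost of the \emph{type-1 segments alone}. This cross-segment amortization is the missing idea; the boundary-independence lemma you lean on (Lemma~\ref{lem:-is-theindep.with.bound.con}) is used to prove offline optimality in Theorem~\ref{thm:OFA-optimal}, and it does not repair the summation step.

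The second problem is the mechanism you give for the factor $\alpha$. Since the ratio is $1+\mbox{overhead}/{\rm Cost}(y_{\rm OFA})$ and the overhead genuinely is $2\beta$ per type-1/type-2 pair in the worst case (this is what the matching lower bound exploits), the improvement from $3$ to $3-2\alpha$ cannot come from what ${\sf CHASE}_s$ pays during the lag; it must come from a sharper lower bound on the \emph{denominator}. The paper shows that each type-1 segment forces the offline solution to pay not just the startup $\beta$ but also a running cost at least $\beta\,(Lc_o+c_m)/\bigl(L(P_{\max}+\eta c_g-c_o)-c_m\bigr)$. Establishing this needs two ingredients absent from your proposal: a pointwise lemma lower-bounding $\bigl(\psi(\sigma(t),1)-c_m\bigr)\big/\bigl(\psi(\sigma(t),0)-\psi(\sigma(t),1)+c_m\bigr)$ by $c_o/(P_{\max}+\eta c_g-c_o)$ uniformly over the three price regimes of Lemma~\ref{lem:fMCMP}, and a ``steepest ascent'' bound on the minimum length of a type-1 segment, $T_{i+1}^c-T_i^c \ge \beta/\bigl(L(P_{\max}+\eta c_g-c_o)-c_m\bigr)$, coming from the fact that $\delta(t)$ is bounded above by the maximal per-slot saving. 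Combining these gives ${\rm Cost}^{{\rm ty\mbox{-}}1}(y_{\rm OFA}) \ge m_1\beta\, L(P_{\max}+\eta c_g)/\bigl(L(P_{\max}+\eta c_g)-Lc_o-c_m\bigr)$, and hence the ratio bound $1+2(1-\alpha)=3-2\alpha$. Without these two lemmas and the amortization, your argument would only deliver the loose constant $3$, and the summation step would not be valid even for that.
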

\begin{proof}
Refer to Appendix~\ref{subsec:CHASE-competitive-ratio}.
\end{proof}
{\bf Remark}: (i) The intuition that ${\sf CHASE}_s$ is competitive can be explained by studying its worst case input shown in Fig.~\ref{fig:worstcase}. The demands and prices are chosen in a way such that in interval $[T_{0}^{c},T_{1}^{c}]$ $\Delta(t)$ increases from $-\beta$ to $0$, and in
interval $[T_{1}^{c},T_{2}^{c}]$ $\Delta(t)$ decreases from $0$ to $-\beta$. We see that in the worst case, $y_{{\sf CHASE}_s}$ never matches $y_{{\sf OFA}}$. But even in this worst case, ${{\sf CHASE}_s}$ pays only $2\beta$ more than the offline solution ${y_{\sf OFA}}$ on $[T_{0}^{c},T_{2}^{c}]$, while ${y_{\sf OFA}}$ pays at least a startup cost $\beta$ at time $T_{0}^{c}$. Hence, the ratio of the online cost over the offline cost cannot be too bad.
(ii)
Theorem~\ref{thm:CHASE-competitive-ratio} says that ${\sf CHASE}_s$ is more competitive when $\alpha$ is large than it is small. This can be explained intuitively as follows. Large $\alpha$ implies small economic advantage of using local generation over external sources to supply energy. Consequently, the offline solution tends to use local generation less. It turns out ${\sf CHASE}_s$ will also use less local generation\footnote{${\sf CHASE}_s$ will turn on the local generator when $\Delta(t)$ increases to 0. The larger the $\alpha$ is, the slower $\Delta(t)$ increases, and the less likely ${\sf CHASE}_s$ will use the local generator.}  and is competitive to offline solution. Meanwhile, when $\alpha$ is small, ${\sf CHASE}_s$  starts to use local generation. However, using local generation incurs high risk since we have to pay the startup cost to turn
on the generator without knowing whether there are sufficient demands to serve in the future. Lacking future knowledge
leads to a large performance discrepancy between ${\sf CHASE}_s$ and the offline optimal solution, making ${\sf CHASE}_s$ less competitive.


\begin{figure}[t!]

\centering
\includegraphics[width=0.75\columnwidth]{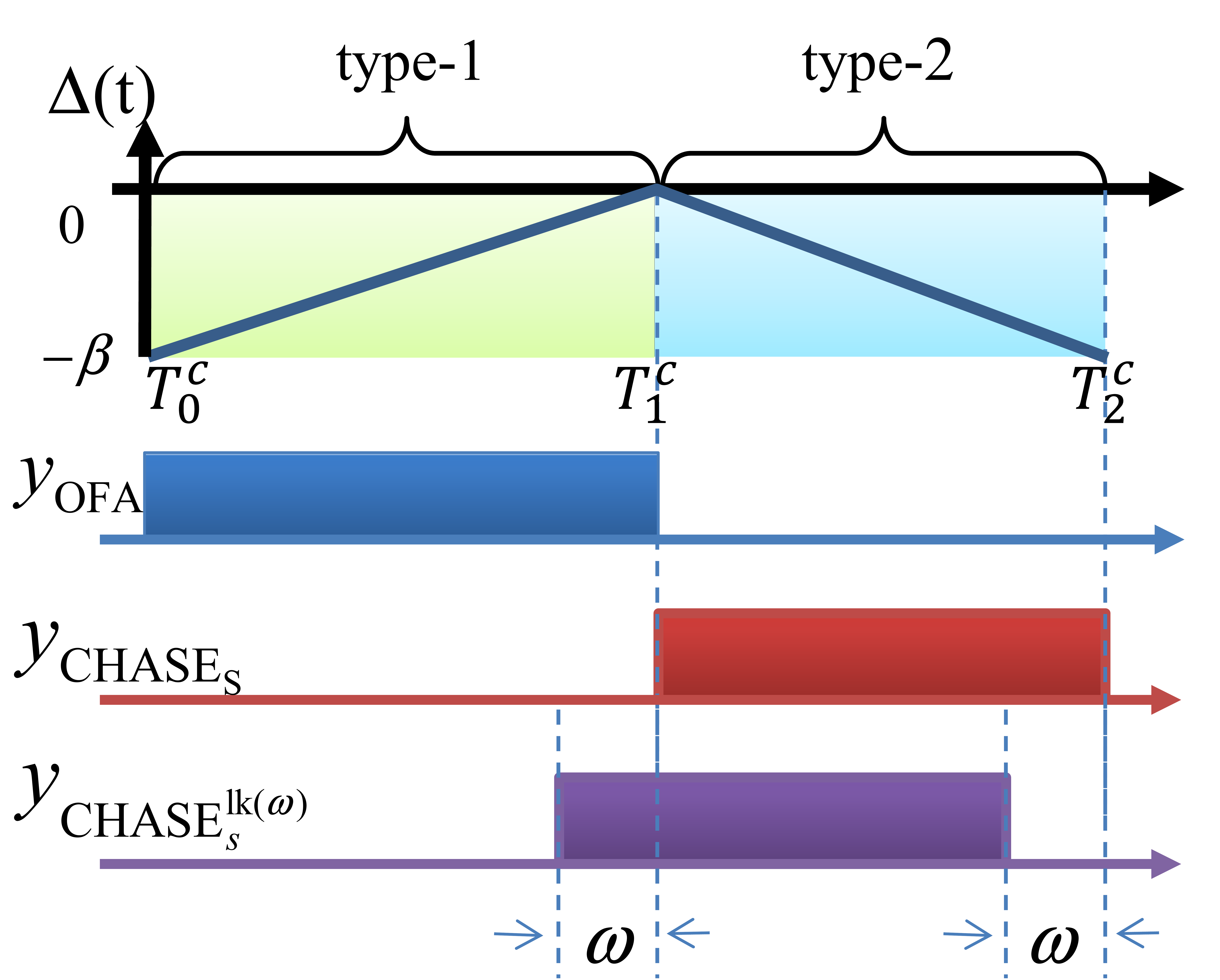}
\caption{\label{fig:worstcase} The worst case input of ${{\sf CHASE}_s}$, and the corresponding ${y_{\sf CHASE_s}}$, $y_{{\sf CHASE}_s^{{\rm lk}(\omega)}}$ and the offline optimal solution ${y_{\sf OFA}}$.}
\end{figure}

The result in Theorem~\ref{thm:CHASE-competitive-ratio} is strong in the sense that ${\sf CR}({\sf CHASE}_s)$ is always upper-bounded by a small constant 3, regardless of system parameters.
This is contrast to large parameter-dependent competitive ratios that one can achieve by using generic approach, {\em e.g.,} the metrical task system framework~\cite{borodin1998online}, to design online algorithms. Furthermore, we show that ${\sf CHASE}_s$ achieves close to the best possible competitive ratio for deterministic algorithms as follow.
\begin{thm}
\label{thm:lower-bound-online} Let $\epsilon>0$ be the slot length under the discrete-time setting we consider in this paper. The competitive ratio for any deterministic
online algorithm ${\cal A}$ for ${\bf SP}$ is lower bounded by
\begin{equation}
{\sf CR}({\cal A})\ge \min (3-2\alpha-o(\epsilon), 1/\alpha), \label{eq:RatioUpperBound}
\end{equation}
where 
$o(\epsilon)$ vanishes to zero as $\epsilon$ goes to zero and the discrete-time setting  approaches the continuous-time setting.
\end{thm}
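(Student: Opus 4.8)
The plan is to exhibit, for an \emph{arbitrary} deterministic online algorithm $\mathcal{A}$, a single (algorithm-dependent) adaptive adversarial input on which $\mathcal{A}$'s cost is at least $\min(3-2\alpha-o(\epsilon),1/\alpha)$ times the offline optimum; since $\mathsf{CR}(\mathcal{A})$ is a supremum over inputs, one such instance suffices. I would use only two slot types, both admissible under the bounds on $\sigma(t)$. A \emph{favorable} slot has $a(t)=L$, $h(t)=\eta L$, $p(t)=P_{\max}$, for which Lemma~\ref{lem:fMCMP} gives $\psi(\sigma,0)=\Phi$ and $\psi(\sigma,1)=\alpha\Phi$ with $\Phi\triangleq L(P_{\max}+\eta c_g)$, so $\delta=\Phi(1-\alpha)>0$. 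An \emph{unfavorable} slot has $a(t)=h(t)=0$, for which $\psi(\sigma,0)=0$, $\psi(\sigma,1)=c_m$, and $\delta=-c_m<0$. The first type makes $\Delta(t)$ rise, the second makes it fall, and the identity $\psi(\sigma,0)/\psi(\sigma,1)=1/\alpha$ on favorable slots is exactly the origin of the $1/\alpha$ term.

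The adversary follows one adaptive rule: \emph{play a favorable slot whenever $\mathcal{A}$'s generator is off, and an unfavorable slot whenever it is on}, starting from $\Delta(0)=-\beta$. This forces $\mathcal{A}$'s state to be permanently anti-aligned with what is profitable --- off during every rising stretch and on during every falling stretch --- i.e.\ it \emph{induces} the ``never matches $y_{\mathrm{OFA}}$'' situation of Fig.~\ref{fig:worstcase} for every algorithm. I then split on $\mathcal{A}$'s behavior. If $\mathcal{A}$ ever idles while $\Delta$ is pinned at its cap $0$, the adversary keeps feeding favorable slots, so $\mathcal{A}$ pays $\Phi$ per slot while the offline optimum (which runs the generator, by Theorem~\ref{thm:OFA-optimal}) pays only $\alpha\Phi$; growing this plateau drives the ratio to $1/\alpha$. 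Otherwise $\mathcal{A}$ toggles, and I analyze one on/off cycle of amplitude $d=\theta_{\mathrm{on}}-\theta_{\mathrm{off}}\le\beta$ in $\Delta$: $\mathcal{A}$ is off over the rise (overpaying by $\Phi(1-\alpha)$ on each of the $d/(\Phi(1-\alpha))$ favorable slots), pays a startup $\beta$ at the top, and is on over the fall (paying $c_m$ on each of the $d/c_m$ unfavorable slots), giving per-cycle cost $d/(1-\alpha)+\beta+d$. The offline optimum incurs only $d/(1-\alpha)$ per cycle --- and this value is the \emph{same} whether it chooses to keep the generator on or off over an interior oscillation. Hence the per-cycle ratio is $1+(1-\alpha)(1+\beta/d)$, minimized over $d\in(0,\beta]$ at $d=\beta$, which gives exactly $3-2\alpha$; turning off earlier (smaller $d$) only worsens the ratio, since it buys more startups per unit of offline cost.

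Amortizing over many cycles removes the one-time transient (the initial rise and the offline optimum's single startup), so the asymptotic ratio equals the per-cycle ratio, and the two regimes together yield $\mathsf{CR}(\mathcal{A})\ge\min(3-2\alpha,1/\alpha)$ in the continuous idealization. To obtain the stated bound I would then account for discretization: with slot length $\epsilon$ the per-slot increments of $\Delta$ are $\Theta(\epsilon)$ while $\beta$ stays fixed, so the adversary can steer $\Delta$ to $0$ and $-\beta$ with $O(\epsilon)$ overshoot and $\mathcal{A}$'s switching is quantized to slot boundaries; these perturb the $3-2\alpha$ term by $o(\epsilon)$ (vanishing as $\epsilon\to0$), whereas the $1/\alpha$ term is untouched because it only needs an arbitrarily long plateau.

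I expect the main obstacle to be upgrading the clean one-cycle (threshold) computation to \emph{arbitrary} deterministic algorithms, which may switch at irregular, history-dependent times rather than at fixed $\theta_{\mathrm{on}},\theta_{\mathrm{off}}$. The rigorous step is a charging argument: partition the horizon into $\mathcal{A}$'s maximal on- and off-intervals and, using that the adversary's rule forces $\Delta$ to rise precisely on off-intervals and fall precisely on on-intervals, charge each startup $\beta$ and each unit of sunk running cost against the $d/(1-\alpha)$ of offline cost produced on the same interval, with the extremal case being full-amplitude swings $d=\beta$. Care is also needed to rule out degenerate responses --- e.g.\ staying on through an unbounded unfavorable stretch, where $\Delta$ is pinned at $-\beta$ and the offline cost is zero so the ratio blows up (hence never chosen by a sensible $\mathcal{A}$) --- and to verify the offline optimum's indifference between running and idling over interior oscillations, which is what makes its per-cycle cost equal $d/(1-\alpha)$ independently of $d$.
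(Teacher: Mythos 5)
Your proposal is correct and follows essentially the same route as the paper's proof: the identical adaptive adversary ($p(t)=P_{\max}$, full demand $a(t)=L$, $h(t)=\eta L$ whenever $\mathcal{A}$'s generator is off, zero demand whenever it is on), the same two-regime analysis (idling at the $\Delta=0$ plateau yielding $1/\alpha$, toggle cycles of amplitude at most $\beta$ yielding $3-2\alpha$ with full swings $d=\beta$ being the algorithm's best case), and the same amortization of the initial transient and offline startup. The interval-charging step you flag as the main obstacle is exactly what the paper carries out, in global form, by partitioning time into the ``increasing'' and ``plateau'' parts of critical segments and bounding each part's ratio via deficit-function increment/decrement inequalities.
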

\begin{proof}
Refer to Appendix~\ref{subsec:lower-bound-online}.
\end{proof}

Note that there is still a gap between the competitive ratios
in (\ref{eq:CHASE_s_Ratio}) and (\ref{eq:RatioUpperBound}).
The difference is due to the term $1/\alpha =
(P_{\max}+\eta\cdot c_{g})/(c_{o}+c_{m}/L)$. This term can be
interpreted as the competitive ratio of a naive strategy that always
uses
external power supply and separate heat supply.
Intuitively, if this $1/\alpha$ term is smaller than $3-2\alpha$, we
should simply use this naive strategy. This observation motivates us to
develop
an improved version of $\textsf{CHASE}_{s}$, called $\textsf{CHASE}_{s+}$,
which is presented in Algorithm~\ref{alg:CHASE}.
Corollary~\ref{cor:chase} shows that $\textsf{CHASE}_{s+}$ closes the above gap
and achieves the asymptotic optimal competitive ratio. Note that whether
or not the $1/\alpha$ term is smaller can be completely determined by the
system parameters.
\begin{algorithm}[htb!]
{
\caption{${\sf CHASE}_{s+} [t, \sigma(t), y(t-1)]$} \label{alg:CHASE}
\begin{algorithmic}[1]
\IF{$1/\alpha \le 3-2\alpha$}
\STATE $y(t)\leftarrow 0,\ \  u(t)\leftarrow 0,\ \  v(t) \leftarrow a(t),\ \  s(t) \leftarrow h(t)$
\STATE return $(y(t), u(t), v(t), s(t))$
\ELSE
\STATE return ${\sf CHASE}_s [t, \sigma(t), y(t-1)]$
\ENDIF
\end{algorithmic}
}
\end{algorithm}
\begin{cor}
\label{cor:chase}
$\textsf{CHASE}_{s+}$ achieves the asymptotic optimal competitive ratio of any deterministic
online algorithm, as
\begin{equation}
{\sf CR}({\sf CHASE}_{s+})\le \min (3-2\alpha, 1/\alpha). \label{eq:CHASE_Ratio}
\end{equation}
\end{cor}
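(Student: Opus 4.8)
The plan is to prove the bound by a direct case split that mirrors the two branches in the definition of $\textsf{CHASE}_{s+}$ (Algorithm~\ref{alg:CHASE}). In the branch where $1/\alpha > 3-2\alpha$, the algorithm simply returns the output of $\textsf{CHASE}_s$, so Theorem~\ref{thm:CHASE-competitive-ratio} immediately yields ${\sf CR}(\textsf{CHASE}_{s+}) \le 3-2\alpha = \min(3-2\alpha, 1/\alpha)$ and nothing further is needed. The entire burden therefore falls on the complementary branch $1/\alpha \le 3-2\alpha$, in which $\textsf{CHASE}_{s+}$ degenerates into the naive strategy that keeps the generator off and procures everything externally, i.e. $y(t)=0$, $u(t)=0$, $v(t)=a(t)$, $s(t)=h(t)$. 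For this branch I must show that the naive strategy has competitive ratio at most $1/\alpha$, which then equals $\min(3-2\alpha,1/\alpha)$.

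To bound the naive strategy I would argue slot by slot against the offline optimum $y_{{\rm OFA}}$. Since startup costs are nonnegative, ${\rm Cost}(y_{{\rm OFA}}) \ge \sum_{t}\psi(\sigma(t),y_{{\rm OFA}}(t))$, whereas the naive cost is exactly $\sum_{t}\psi(\sigma(t),0)$. Hence it suffices to prove the per-slot inequality $\psi(\sigma(t),0) \le (1/\alpha)\,\psi(\sigma(t),y_{{\rm OFA}}(t))$ for every $t$. When $y_{{\rm OFA}}(t)=0$ this is trivial, since both sides share the same $\psi$ and $1/\alpha \ge 1$ (recall $\alpha\in(0,1]$). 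All the content lies in the slots with $y_{{\rm OFA}}(t)=1$.

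The key step, and the place I expect the real work to be, is the single-slot inequality $\psi(\sigma(t),0) \le (1/\alpha)\,\psi(\sigma(t),1)$. Writing $u=u(t)$ for the cost-minimizing local output from Lemma~\ref{lem:fMCMP}, I would decompose both costs around the common residual external spending $X \triangleq p(t)[a(t)-u]^{+} + c_g[h(t)-\eta u]^{+}$, which gives $\psi(\sigma(t),0) = X + p(t)\min\{a(t),u\} + c_g\min\{h(t),\eta u\}$ and $\psi(\sigma(t),1) = X + c_o u + c_m$. The ratio then has the form $(X+B)/(X+C)$ with $B,C\ge 0$; because $1/\alpha\ge 1$, this quotient is at most $\max\{1,\,B/C\}$, so it reduces to showing $B \le (1/\alpha)\,C$. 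I would bound the numerator by $B \le (P_{\max}+\eta c_g)\,u$, using $p(t)\le P_{\max}$ together with $\min\{a(t),u\}\le u$ and $\min\{h(t),\eta u\}\le \eta u$, and bound the denominator from below by $C = c_o u + c_m \ge (c_o + c_m/L)\,u$, using $u\le L$ so that $c_m \ge c_m u/L$. Their quotient is exactly $(P_{\max}+\eta c_g)/(c_o+c_m/L) = 1/\alpha$ by the definition of $\alpha$ in \eqref{eq:alpha_def}. The degenerate sub-case $u=0$ is immediate, since then $\psi(\sigma(t),1)=\psi(\sigma(t),0)+c_m \ge \psi(\sigma(t),0)$. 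Summing the per-slot bounds over all $t$ shows that the cost of the naive strategy is at most $(1/\alpha)\,{\rm Cost}(y_{{\rm OFA}})$, which closes the second branch and completes the corollary.
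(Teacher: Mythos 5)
Your proof is correct and follows essentially the same route the paper intends: the case split mirrors the two branches of Algorithm~\ref{alg:CHASE}, with Theorem~\ref{thm:CHASE-competitive-ratio} disposing of the $\textsf{CHASE}_s$ branch and the $1/\alpha$-competitiveness of the all-external naive strategy disposing of the other. The only difference is that the paper never proves the naive-strategy bound --- it is merely asserted in the remark preceding the corollary that $1/\alpha$ ``can be interpreted as'' that strategy's competitive ratio --- whereas you establish it rigorously via the per-slot decomposition $\psi(\sigma(t),0)=X+B$, $\psi(\sigma(t),1)=X+C$ with $B\le (P_{\max}+\eta c_g)\,u$ and $C\ge (c_o+c_m/L)\,u$, together with dropping the offline solution's nonnegative startup costs; this fills in precisely the step the paper leaves implicit.
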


\textbf{Remark}: \rev{At the beginning of Sec.~\ref{sec:singlefastrep}, we have discussed the \rev{structural differences} of online
server scheduling problems \cite{lin2011dynamic,lu2012simple} and ours.
In what follows, we summarize the solution differences among these problems. Note that we share similar intuitions with \cite{lu2012simple}, both make switching decisions when the \emph{penalty cost} equals the switching cost.  The significant difference, however, is when to reset the penalty counting. In \cite{lu2012simple}, the penalty counting is reset when the demand arrives. In contrast, in our solution, we need to reset the penalty counting only when $\Delta(t)$, given in the non-trivial form in (\ref{eqn:Delta-definition}), touches 0 or $-\beta$. This particular way of resetting penalty counting is critical for establishing the optimality of our proposed solution. Meanwhile, to compare with \cite{lin2011dynamic}, the approach in \cite{lin2011dynamic} does not explicitly count the penalty. Furthermore, the online server scheduling problem in \cite{lin2011dynamic} is formulated as a convex problem, while our problem is a mixed integer problem. Thus, there is no known method to apply the approach in \cite{lin2011dynamic} to our problem.}

\subsection{Look-ahead Setting} \label{sec:lookahead}
We consider the setting where the online algorithm can predict a
small window $\omega$ of the immediate future. Note that $\omega=0$ returns to the
case treated in Section~\ref{sec:online}, when there is no future
information at all. Consider again a type-1 segment $[T_1^c,T_2^c]$ in
Fig.~\ref{fig:example1}. Recall that, when there is no future
information, the ${\sf CHASE}_s$ algorithm will wait until $\tilde{T}_1^c$,
{\em i.e.,} when $\Delta(t)$ reaches $0$, to be certain that the offline
solution must turn on the generator. Hence, the ${\sf CHASE}_s$
algorithm will not turn on the generator until this time. Now assume
that the online algorithm has the information about the immediate
future in a time window of length $\omega$. By the time
$\tilde{T}_1^c-w$, the online algorithm has already known that
$\Delta(t)$ will reach $0$ at time
$\tilde{T}_1^c$. Hence, the online algorithm can safely turn on the
generator at time $\tilde{T}_1^c-w$. As a result, the corresponding loss of
performance compared to the offline optimal solution is also reduced.
Specifically, even for the worst-case input in Fig.~\ref{fig:worstcase},
there will be some overlap (of length $\omega$) between $y_{{\sf CHASE}_s}$ and
$y_{{\sf OFA}}$ in each segment. Hence, the competitive ratio should
also improve with future information. This idea leads to the
online algorithm ${\sf CHASE}_s^{{\rm lk}(\omega)}$, which is presented in
Algorithm~\ref{alg:CHASE-lk}.

\begin{algorithm}[htb!]
{\caption{ ${\sf CHASE}_s^{{\rm lk}(\omega)} [t, (\sigma(\tau))_{\tau = t}^{t+w}, y(t-1)]$} \label{alg:CHASE-lk}
\begin{algorithmic}[1]
\STATE find $(\Delta(\tau))_{\tau = t}^{t+w}$
\STATE set $\tau' \leftarrow \min\big\{\tau =t, ...,t+w \mid \Delta(\tau) = 0 \mbox{\ or\ } = -\beta  \big\}$
\IF{$\Delta(\tau')= -\beta$}
\STATE {$y(t) \leftarrow 0$}
\ELSIF {$\Delta(\tau')=0$}
\STATE {$y(t) \leftarrow 1$}
\ELSE
\STATE {$y(t) \leftarrow y(t-1)$}
\ENDIF
\STATE set $u(t)$, $v(t)$, and $s(t)$ according to \eqref{eq:optimal_u} and \eqref{eq:optimal_v_and_s}
\STATE return $(y(t), u(t), v(t), s(t))$
\end{algorithmic}
}
\end{algorithm}

We can show the following improved competitive ratio when limited future information is available.
\begin{thm}
\label{thm:CHASElk-competitive-ratio} The competitive ratio of ${\sf
CHASE}_s^{{\rm lk}(\omega)}$ satisfies
\begin{equation}
{\sf CR}\left({\sf CHASE}_s^{{\rm lk}(\omega)}\right) \le 3-2\cdot g\left(\alpha,\omega\right),\label{eq:LookAheadRatio}
\end{equation}
where $\omega\geq 0$ is the look-ahead window size, {$\alpha\in(0,1]$ is defined in \eqref{eq:alpha_def}}, and
\begin{equation}
g(\alpha,\omega)=\alpha+\frac{(1-\alpha)}{1+\beta\left(Lc_{o}+c_{m}/(1-\alpha)\right)/\left(\omega(Lc_{o}+c_{m})c_{m}\right)}. \label{eq:g_alpha_omega}
\end{equation}
captures the benefit of looking-ahead and monotonically increases from $\alpha$ to 1 as $\omega$ increases. In particular, \\ ${\sf CR}({\sf CHASE}_{s}^{{\rm lk}(0)})= {\sf CR}({\sf CHASE}_{s})$.
\end{thm}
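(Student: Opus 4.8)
The plan is to piggyback on the analysis behind Theorem~\ref{thm:CHASE-competitive-ratio}, viewing the look-ahead purely as a device that shifts every switch of the online schedule $\omega$ slots earlier. Recall that $y_{\rm OFA}$ flips exactly at the critical points, $y_{{\sf CHASE}_s}$ flips only when $\Delta$ actually touches $0$ or $-\beta$, and Algorithm~\ref{alg:CHASE-lk} flips as soon as it can foresee within the window that $\Delta$ will touch one of these values. Hence on a type-1 segment $y_{{\sf CHASE}_s^{{\rm lk}(\omega)}}$ turns on at $T_{i+1}^c-\omega$ instead of $T_{i+1}^c$, creating a length-$\omega$ overlap during which the online and offline decisions agree, and symmetrically on type-2 segments. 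First I would formalize this ``shift-by-$\omega$'' picture and argue, exactly as in the no-look-ahead proof, that it suffices to bound the ratio on the maximally misaligned input of Fig.~\ref{fig:worstcase}, in which $y_{{\sf CHASE}_s^{{\rm lk}(\omega)}}$ trails $y_{\rm OFA}$ by a full segment except for the overlaps at the segment ends.

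Next I would set up the cost bookkeeping over a type-1/type-2 segment pair. Writing $\psi_0(t)=\psi(\sigma(t),0)$ and $\psi_1(t)=\psi(\sigma(t),1)$ so that $\delta(t)=\psi_0(t)-\psi_1(t)$, I would check that $y_{{\sf CHASE}_s^{{\rm lk}(\omega)}}$ and $y_{\rm OFA}$ incur the same number of start-ups, so the only extra cost the online pays is a sum of $\delta(t)$ over the \emph{non-overlapping} portions. Using that $\Delta$ rises by $\beta$ across a type-1 segment and falls by $\beta$ across a type-2 segment, this extra cost equals $2\beta-S_1-S_2$, where $S_1$ (resp.\ $S_2$) is the amount by which $\Delta$ moves during the type-1 (resp.\ type-2) overlap. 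The two structural facts I would then invoke are the one-slot rate bounds $\delta(t)\le\delta_{\max}:=L(P_{\max}+\eta c_g)-(Lc_o+c_m)$ and $\delta(t)\ge-c_m$ (the latter because an idle but committed generator costs at most $c_m$), which cap $S_1\le\omega\delta_{\max}$ and $S_2\le\omega c_m$.

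The heart of the argument is a lower bound on ${\rm Cost}(y_{\rm OFA})$ that is coupled to $S_1,S_2,\omega$. Here I would prove the per-slot inequality $\psi_0(t)\ge\kappa\,(\delta(t)+c_m)$ with $\kappa=(P_{\max}+\eta c_g)/(P_{\max}+\eta c_g-c_o)$, which follows from $p(t)\le P_{\max}$ after substituting the optimal $u(t),v(t),s(t)$ from Lemma~\ref{lem:fMCMP}; this is the step I expect to be most delicate, because it must be checked across the three price regimes of Lemma~\ref{lem:fMCMP}, in particular when excess heat is wasted. Summing this inequality over each segment, together with the minimum slot counts forced by the rate caps, yields
\[
{\rm Cost}(y_{\rm OFA})\ \ge\ \kappa\big[\beta+c_m(\beta-S_1)/\delta_{\max}+2\omega c_m-S_2\big],
\]
and I would also record the identity $\kappa(1+c_m/\delta_{\max})=1/(1-\alpha)$ that converts $\kappa$ and $\delta_{\max}$ into the parameter $\alpha$.

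Finally I would combine the two estimates into
\[
\frac{{\rm Cost}(y_{{\sf CHASE}_s^{{\rm lk}(\omega)}})}{{\rm Cost}(y_{\rm OFA})}\ \le\ 1+\frac{2\beta-S_1-S_2}{\kappa\big[\beta+c_m(\beta-S_1)/\delta_{\max}+2\omega c_m-S_2\big]},
\]
and show this never exceeds $1+2\beta/\big(\beta/(1-\alpha)+\kappa\omega c_m\big)$. After clearing denominators the claim becomes an \emph{affine} inequality in $(S_1,S_2)$ over the box $[0,\omega\delta_{\max}]\times[0,\omega c_m]$, so it suffices to verify it at the four corners, a short calculation; note that the corner $S_2=\omega c_m$ captures precisely the awkward case where $y_{\rm OFA}$ pays nothing on the demand-free type-2 segment. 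Using the companion identity $(Lc_o+c_m)/(Lc_o+c_m/(1-\alpha))=\kappa(1-\alpha)$ the bound rewrites as $3-2g(\alpha,\omega)$. To close, I would verify $g(\alpha,0)=\alpha$ (so the bound collapses to $3-2\alpha$ and ${\sf CR}({\sf CHASE}_s^{{\rm lk}(0)})={\sf CR}({\sf CHASE}_s)$), that $g(\alpha,\omega)\to1$ monotonically as $\omega\to\infty$, and that when $\omega$ exceeds a segment length the overlap simply covers the whole segment, which only improves the ratio.
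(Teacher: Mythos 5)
Your proposal is correct and takes essentially the same route as the paper's proof: your overlap quantities $S_1,S_2$ are the paper's $-q_i^1$ and $\beta+q_i^2$, your per-slot inequality $\psi(\sigma(t),0)\ge\frac{P_{\max}+\eta\cdot c_{g}}{P_{\max}+\eta\cdot c_{g}-c_{o}}\big(\delta(t)+c_{m}\big)$ is exactly Lemma~\ref{lem:segment-1 minimum cost} rearranged, and your offline lower bound (increasing part via the steepest-ascent cap $\delta_{\max}$, plateau of length at least $\omega$ contributing the $\omega c_{m}$ terms) is Lemma~\ref{lem:w lower bound type-1}. The remaining differences are cosmetic rather than substantive: the opening "reduce to the worst-case input of Fig.~\ref{fig:worstcase}" step is unnecessary (and is not how the paper argues) since your $(S_1,S_2)$-parameterized bookkeeping already covers arbitrary inputs; the paper discards the type-2 look-ahead benefit (bounding that extra cost crudely by $\beta$) and so needs no type-2 term in its denominator; and the paper optimizes over $q_i^1$ by direct simplification where you check corners of an affine inequality --- all of which lead to the same final bound $3-2g(\alpha,\omega)$.
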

\begin{proof}
Refer to Appendix~\ref{subsec:CHASElk-competitive-ratio}.
\end{proof}
We replace ${\sf CHASE}_s$ by ${\sf CHASE}_s^{{\rm lk}(\omega)}$ in ${\sf CHASE}_{s+}$ and obtain an improved algorithm for the look-ahead setting, named ${\sf CHASE}_{s+}^{{\rm lk}(\omega)}$. Fig.~\ref{fig:CHASE_ratio_with_lookahead} shows the competitive ratio of ${\sf CHASE}_{s+}^{{\rm lk}(\omega)}$ as a function of $\alpha$ and $\omega$.
\begin{figure}[htb!]

\centering
\includegraphics[width=.66\columnwidth]{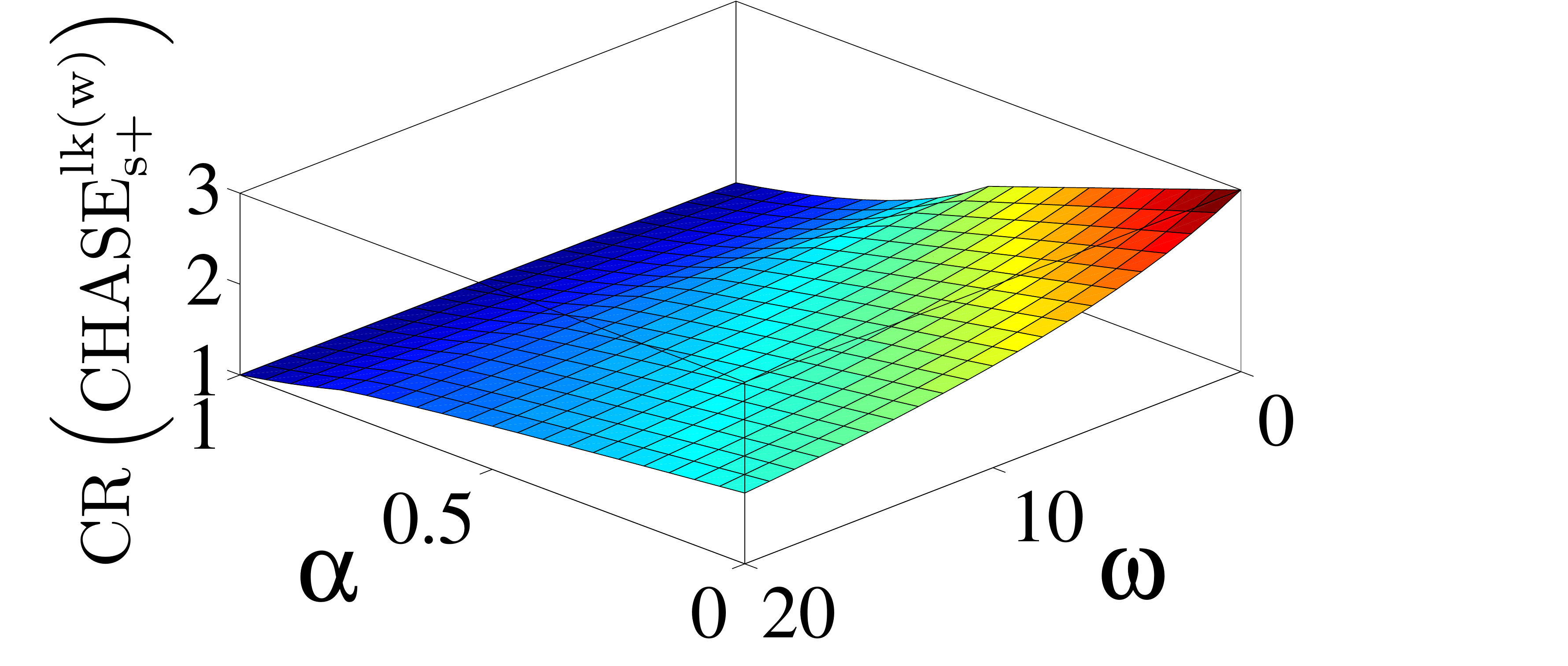}
\caption{\label{fig:CHASE_ratio_with_lookahead} The competitive ratio of ${\sf CHASE}_{s+}^{{\rm lk}(\omega)}$ as a function of $\alpha$ and $\omega$.}

\end{figure}
\subsection{Multiple Generator Case} \label{sec:ngens}
Now we consider the general case with $N$ units of homogeneous generators, each having an maximum power capacity $L$, startup cost $\beta$, sunk cost $c_{m}$ and per unit operational cost $c_{o}$.
We define a generalized version of problem:
\begin{align*}
 \mathbf{fMCMP}: \underset{{y,u,v,s}}{\min}&{\rm Cost}(y,u,v,s) \\
 \mbox{s.t.}\;& \mbox{Constraints }\eqref{C_max_output},\;\eqref{C_e-demand},\mbox{ and }\eqref{C_h-demand}\\
 \mbox{var}\;& y_n(t)\in \{0,1\},u_n(t),v(t),s(t)\in \Bbb{R}_0^{+},
\end{align*}
Next, we will construct both offline and online solutions to
$\mathbf{fMCMP}$ in a divide-and-conquer fashion. We will first partition the
demands into sub-demands for each generator, and then optimize the local
generation \emph{separately} for each sub-demand. Note that the key is
to correctly partition the demand so that the combined solution is
still optimal. Our strategy below essentially slices the demand (as a
function of $t$) into multiple layers from the bottom up (see
Fig.~\ref{fig:layers=example}).
Each layer has at most $L$ units of electricity demand and $\eta \cdot
L$ units of heat demand. The intuition here is that the layers at the
bottom exhibit the least frequent variations of demand. Hence, by assigning
each of the layers at the bottom to a dedicated generator, these generators
will incur the least amount of switching, which helps to reduce the
startup cost.

%

More specifically, given $(a(t), h(t))$, we slice them into $N+1$ layers:
\begin{subequations}
\begin{align} 
a^{{\rm ly\mbox{-}}1}(t) = & \min\{L, a(t)\}, \quad h^{{\rm ly\mbox{-}}1}(t) = \min\{\eta \cdot L, h(t)\} \label{eq:demand_slicing-begin}  \\
a^{{\rm ly\mbox{-}}n}(t) = & \min\{L, a(t) \mbox{-} \textstyle{\sum}_{r=1}^{n-1} a^{{\rm ly\mbox{-}}r}(t)\}, n \in [2,N] \\
h^{{\rm ly\mbox{-}}n}(t) = & \min\{\eta \cdot L, h(t) \mbox{-} \textstyle{\sum}_{r=1}^{n-1} h^{{\rm ly\mbox{-}}r}(t)\}, n \in [2,N] \\
a^{\rm top}(t) = & \min\{L, a(t) - \textstyle{\sum}_{r=1}^{N} a^{{\rm ly\mbox{-}}r}(t)\} \\
h^{\rm top}(t) = & \min\{\eta \cdot L, h(t) - \textstyle{\sum}_{r=1}^{N} h^{{\rm ly\mbox{-}}r}(t) \label{eq:demand_slicing-end}\}
\end{align}
\end{subequations}
\begin{figure}[t!]
\subfloat[ An example of $(a^{\mathrm{ly-n}})$.]{\includegraphics[width=0.5\columnwidth]{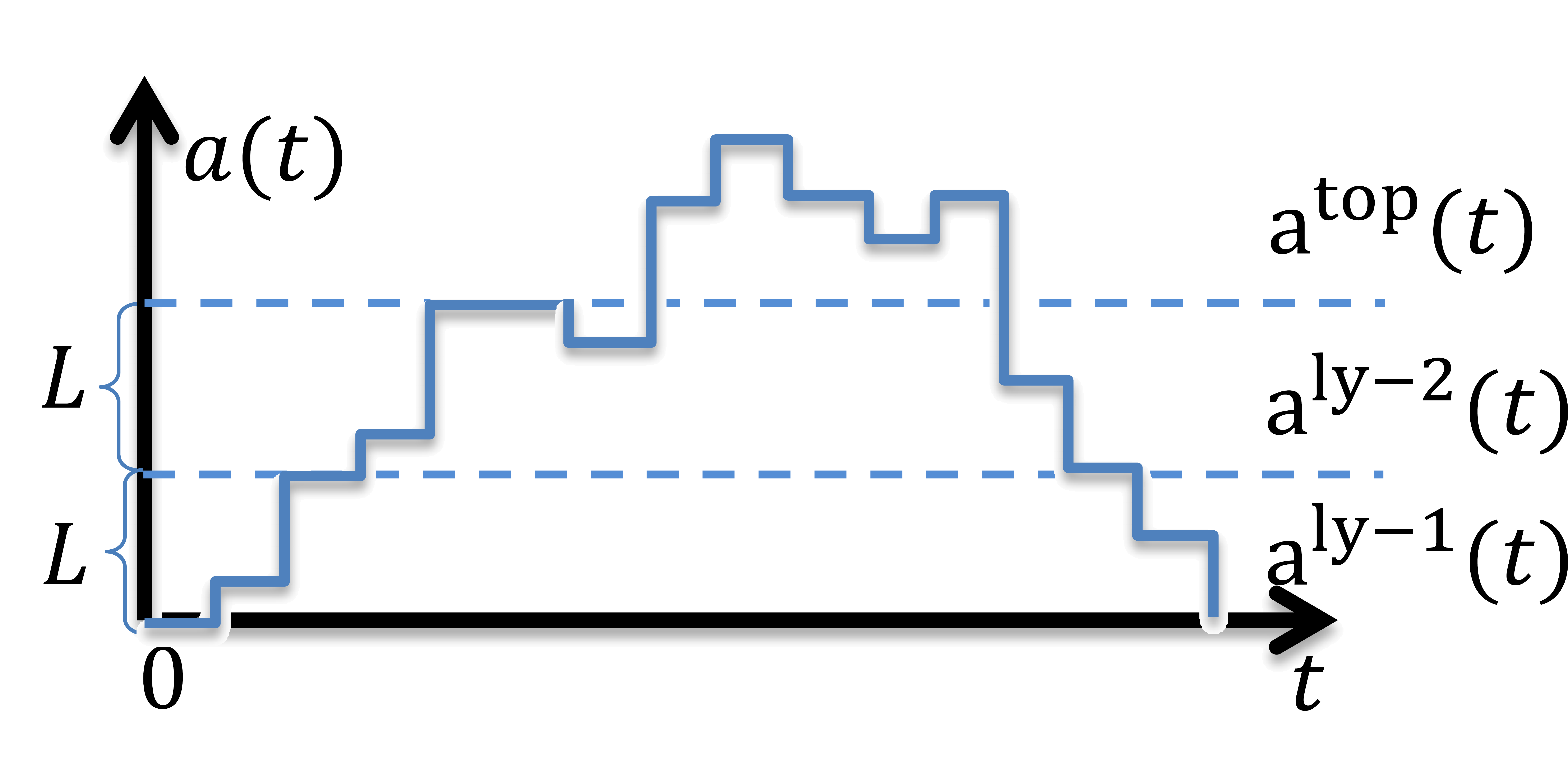}

}\subfloat[ An example of $(h^{\mathrm{ly-n}})$.]{\includegraphics[width=0.5\columnwidth]{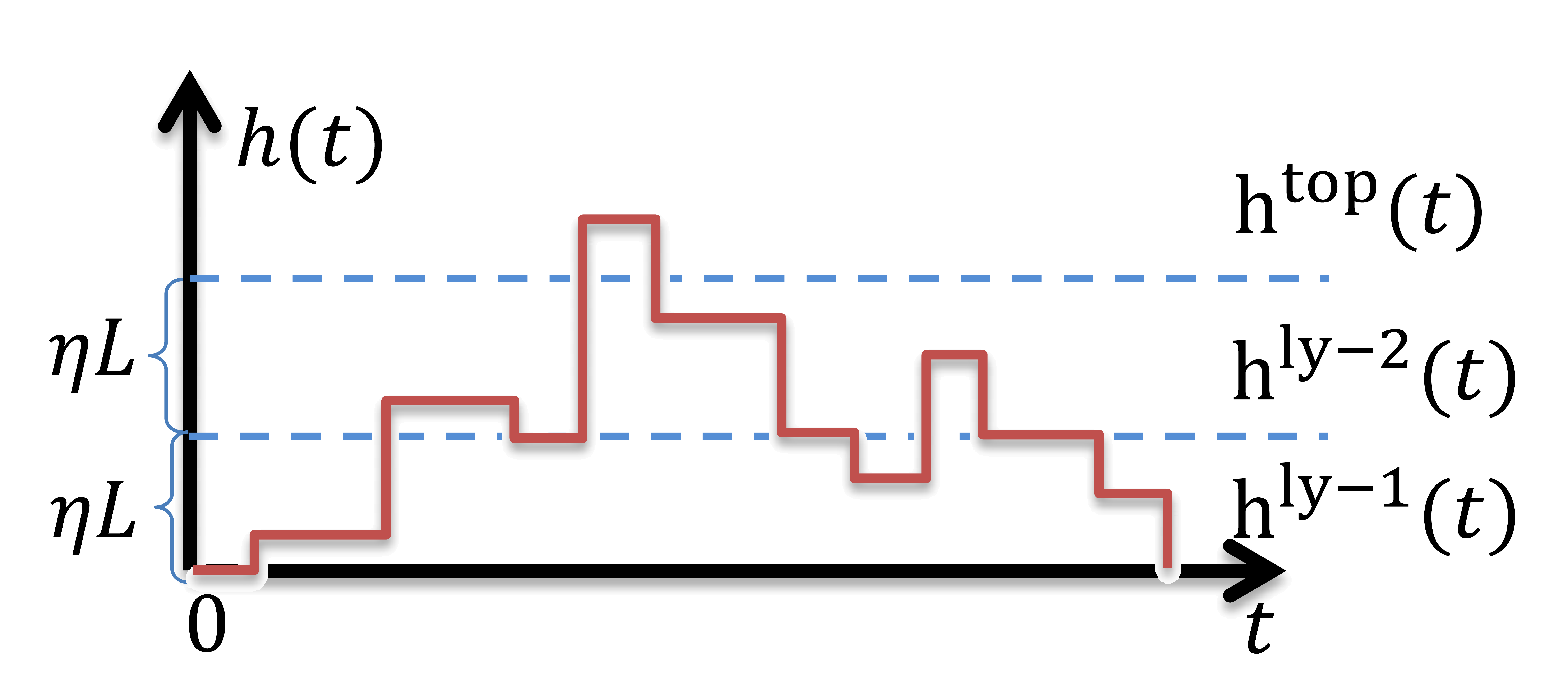}}

\caption{ An example of $(a^{\mathrm{ly-n}})$ and $(h^{\mathrm{ly-n}})$. In this example, $N=2.$ We obtain $3$ layers of
electricity and heat demands, respectively.} \label{fig:layers=example}
\end{figure}
 It is easy to see that electricity demand satisfies $a^{{\rm
ly\mbox{-}}n}(t)\leq L$ and heat demand satisfies $h^{{\rm ly\mbox{-}}n}(t)\leq
\eta \cdot L$. Thus, each layer of sub-demand can be served by a single
local generator if needed. Note that $(a^{\rm top}, h^{\rm top})$ can
only be satisfied from external supplies, because they exceed the
capacity of local generation.

Based on this decomposition of demand, we then decompose the
\textbf{fMCMP} problem into $N$ sub-problems $\textbf{fMCMP}_{\rm s}^{{\rm
ly\mbox{-}}n}$ ($1\leq n\leq N$), each of which is an
$\textbf{fMCMP}_{\rm s}$ problem with input $(a^{{\rm ly\mbox{-}}n}, h^{{\rm
ly\mbox{-}}n}, p)$. We then apply the offline and
online algorithms developed earlier to solve each sub-problem
$\textbf{fMCMP}_{\rm s}^{{\rm ly\mbox{-}}n}$ ($1\leq n\leq N$) {\em
separately}. By combining the solutions to these sub-problems, we obtain
offline and online solutions to \textbf{fMCMP}. For the offline
solution, the following theorem states that such a divide-and-conquer
approach results in no optimality loss.


\begin{thm} \label{thm:nOFA-optimal}
Suppose $(y_n, u_n, v_n, s_n)$ is an optimal offline solution for each
$\textbf{fMCMP}_{\rm s}^{{\rm ly\mbox{-}}n}$ ($1\leq n\leq N$). Then\\
$((y^\ast_{n}, u^\ast_{n})_{n=1}^{N}, v^\ast, s^\ast)$ defined as
follows is an optimal offline solution for $\textbf{fMCMP}$:
\begin{equation}
\begin{array}{@{}r@{\ }r@{\ }l@{}}
y^\ast_{n}(t) & = & y_n(t), \ \ v^\ast(t) = a^{\rm top}(t) + \sum_{n=1}^{N} v_n(t)  \\
u^\ast_{n}(t) & = & u_n(t), \ \ s^\ast(t) =  h^{\rm top}(t) + \sum_{n=1}^{N} s_n(t)
\end{array}
\end{equation}
\end{thm}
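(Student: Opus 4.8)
The plan is to prove the claim in two directions: achievability (the combined solution is feasible for $\mathbf{fMCMP}$ and has cost $C_{\rm top}+\sum_{n=1}^{N}{\rm OPT}_n$, where ${\rm OPT}_n$ is the optimum of $\mathbf{fMCMP}_{\rm s}^{{\rm ly\mbox{-}}n}$ and $C_{\rm top}\triangleq\sum_{t}\big(p(t)a^{\rm top}(t)+c_g h^{\rm top}(t)\big)$ is a scheduling-independent external term), and optimality (no feasible solution of $\mathbf{fMCMP}$ costs less than $C_{\rm top}+\sum_n{\rm OPT}_n$). The first direction is routine because the objective is separable over generators and linear in $(v,s)$; the coupling of the homogeneous generators through the startup cost makes the second direction the crux, and I would handle it with an aggregate relaxation.

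First I would check feasibility and the cost identity. Constraint \eqref{C_max_output} holds since $u^\ast_n(t)=u_n(t)\le L\,y_n(t)=L\,y^\ast_n(t)$. Summing the per-layer feasibility $u_n(t)+v_n(t)\ge a^{{\rm ly\mbox{-}}n}(t)$ over $n$ and adding $a^{\rm top}(t)$ gives $\sum_n u^\ast_n(t)+v^\ast(t)\ge\sum_n a^{{\rm ly\mbox{-}}n}(t)+a^{\rm top}(t)=a(t)$ by the exhaustive slicing \eqref{eq:demand_slicing-begin}--\eqref{eq:demand_slicing-end}, and \eqref{C_h-demand} follows identically with the factor $\eta$. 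Substituting the combination and using that $c_m k+c_o U$ together with the linear external terms is additive across layers yields ${\rm Cost}(y^\ast,u^\ast,v^\ast,s^\ast)=C_{\rm top}+\sum_{n=1}^N{\rm Cost}(y_n,u_n,v_n,s_n)=C_{\rm top}+\sum_n{\rm OPT}_n$, since each $(y_n,u_n,v_n,s_n)$ is optimal for $\mathbf{fMCMP}_{\rm s}^{{\rm ly\mbox{-}}n}$ (solved by Lemma~\ref{lem:fMCMP} and Theorem~\ref{thm:OFA-optimal}). Hence the optimal value of $\mathbf{fMCMP}$ is at most $C_{\rm top}+\sum_n{\rm OPT}_n$.

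For the matching lower bound, take any feasible $(\hat y,\hat u,\hat v,\hat s)$ of $\mathbf{fMCMP}$ and pass to its aggregate description $\hat k(t)\triangleq\sum_n \hat y_n(t)\in\{0,\dots,N\}$ and $\hat U(t)\triangleq\sum_n\hat u_n(t)\le \hat k(t)L$. The per-slot generator operating cost equals $c_o\hat U(t)+c_m\hat k(t)$, depending only on these aggregates, while the startup cost obeys $\sum_n[\hat y_n(t)-\hat y_n(t-1)]^+\ge[\hat k(t)-\hat k(t-1)]^+$. Thus ${\rm Cost}(\hat y,\hat u,\hat v,\hat s)$ is at least the value of the relaxed problem that minimizes $\sum_t\big(c_o U(t)+c_m k(t)+p(t)v(t)+c_g s(t)\big)+\beta\sum_t[k(t)-k(t-1)]^+$ over $k(t)\in\{0,\dots,N\}$, $0\le U(t)\le k(t)L$, $v,s\ge0$ subject to $U+v\ge a$ and $\eta U+s\ge h$. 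It remains to show this relaxed optimum equals $C_{\rm top}+\sum_n{\rm OPT}_n$.

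The crux is to decompose the relaxed problem along the bottom-up slicing. Given any relaxed solution I would realize its count as a stack, $y_n(t)\triangleq\mathbf{1}_{\{k(t)\ge n\}}$, which makes the count-switching telescope exactly into layer-switchings via the identity $[k(t)-k(t-1)]^+=\sum_{n=1}^{N}\big[\mathbf{1}_{\{k(t)\ge n\}}-\mathbf{1}_{\{k(t-1)\ge n\}}\big]^+$; then I would distribute the aggregate output $U(t)$ and external supplies $(v(t),s(t))$ across the layers so that layer $n$ covers exactly its sub-demand $(a^{{\rm ly\mbox{-}}n},h^{{\rm ly\mbox{-}}n})$ and the residual top demand $(a^{\rm top},h^{\rm top})$ is served externally. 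Because $U(t)\le k(t)L$ and the slices are nested and monotone, such a distribution is feasible and the per-slot cost splits additively; each resulting per-layer solution is feasible for $\mathbf{fMCMP}_{\rm s}^{{\rm ly\mbox{-}}n}$ and hence costs at least ${\rm OPT}_n$, so the relaxed objective is at least $C_{\rm top}+\sum_n{\rm OPT}_n$. I expect the main obstacle to be exactly this distribution step: one must route the aggregate local output and the shared external supply into per-layer feasible solutions while guaranteeing that the top demand lands on external supply at cost no less than $C_{\rm top}$, and that the $u_n(t)\le L$ cap is respected — this is where the monotone structure $a^{{\rm ly\mbox{-}}1}(t)\ge a^{{\rm ly\mbox{-}}2}(t)\ge\cdots$ of the slicing is essential. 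Combining both directions yields that the layered construction is optimal.
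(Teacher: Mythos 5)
Your proposal is correct and follows essentially the same route as the paper's proof: the paper likewise rearranges an arbitrary optimal $\textbf{fMCMP}$ solution into the stacked form $\hat{y}_n(t)=\mathbf{1}_{\{n\le\sum_{r}\tilde{y}_r(t)\}}$ (your realization of the aggregate count $k(t)$), proves your switching identity $\sum_n[\hat{y}_n(t)-\hat{y}_n(t-1)]^+=[k(t)-k(t-1)]^+\le\sum_n[\tilde{y}_n(t)-\tilde{y}_n(t-1)]^+$ as its Lemma~\ref{lem:ngen-beta}, and performs exactly your distribution step by slicing the aggregate output $\sum_r\tilde{u}_r(t)$ into layers in the same nested manner as the demand so that each layer is feasible for $\textbf{fMCMP}_{\rm s}^{{\rm ly\mbox{-}}n}$ and the per-layer external supplies sum to at most the original ones (its Lemma~\ref{lem:ngen-v}). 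The only cosmetic difference is that the paper assumes $a^{\rm top}(t)=h^{\rm top}(t)=0$ without loss of generality rather than carrying your scheduling-independent term $C_{\rm top}$ through the argument.
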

\begin{proof}
Refer to Appendix~\ref{subsec:nOFA-optimal}.
\end{proof}
For the online solution, we also apply such a divide-and-conquer approach
by using
(i) a central demand dispatching module that slices and dispatches
demands to individual generators according to
\eqref{eq:demand_slicing-begin}-\eqref{eq:demand_slicing-end}, and (ii)
an online generation scheduling module sitting on each generator $n$
($1\leq n\leq N$) \emph{independently} solving their own
$\textbf{fMCMP}_s^{{\rm ly\mbox{-}}n}$ sub-problem using the online
algorithm ${\sf CHASE}_{s+}^{\rm lk(\omega)}$.

%

The overall online algorithm, named ${\sf CHASE}^{\rm lk(\omega)}$, is simple to implement without the need to coordinate the control among multiple local generators. Since the offline (resp. online) cost of $\textbf{fMCMP}$ is the sum of the offline (resp. online) costs of $\textbf{fMCMP}_s^{{\rm ly\mbox{-}}n}$ ($1\leq n\leq N$), it is not difficult to establish the competitive ratio of ${\sf CHASE}^{\rm lk(\omega)}$ as follows.

\begin{thm} \label{thm:CHASE-NG}
The competitive ratio of ${\sf CHASE}^{\rm lk(\omega)}$ satisfies
\begin{equation}
{\sf CR}({\sf CHASE}^{\rm lk(\omega)})\le \min (3-2\cdot g(\alpha,\omega), 1/\alpha), \label{eq:CHASE-N_Ratio}
\end{equation}
where {$\alpha\in(0,1]$ is defined in \eqref{eq:alpha_def}} and $g(\alpha,\omega)\in [\alpha, 1]$ is defined in \eqref{eq:g_alpha_omega}.
\end{thm}
\begin{proof}
Refer to Appendix~\ref{subsec:MultipleRatio}.
\end{proof}

\section{Slow-responding Generator Case} \label{sec:slowrep}
We next consider the slow-responding generator case, with the generators
having non-negligible constraints on the minimum on/off
periods and the ramp-up/down speeds.
For this slow-responding version of {\bf MCMP}, its offline optimal solution is harder to characterize than {\bf fMCMP} due to the additional challenges introduced by the cross-slot constraints \eqref{C_ramp_up}-\eqref{C_min_off}.

In the slow-responding setting, local generators cannot be turned on and off
immediately when demand changes. Rather, if a generator is turned on (resp., off)
at time $t$, it must remain on for at least ${\sf T}_{{\rm on}}$ (resp., ${\sf T}_{{\rm off}}$) time
.
Further, the changes of $u_n(t)-u_n(t-1)$ must be bounded by ${\sf R}_{{\rm
up}}$ and $-{\sf R}_{{\rm down}}$.



\rev{A simple heuristic is to first compute solutions based on ${\sf CHASE}^{\rm lk(\omega)}$, and
then modify the solutions to respect the above constraints. We name this
heuristic ${\sf CHASE}^{\rm lk(\omega)}_{\rm gen}$ and present it in Algorithm ~\ref{alg:CHASE-general}. For simplicity, Algorithm ~\ref{alg:CHASE-general} is a single-generator version, which can be easily extended to the multiple-generator scenario by following the divide-and-conquer approach elaborated in Sec.~\ref{sec:ngens}.
\begin{algorithm}[htp!]
\caption{ ${\sf CHASE}^{\rm lk(\omega)}_{\rm gen} [t, (\sigma(\tau))_{\tau = 1}^{t+\omega}, y(t-1)]$}
\label{alg:CHASE-general}
{
\begin{algorithmic}[1]
\STATE \mbox{$\big(y_{s}(t),u_{s}(t),v_{s}(t),s_{s}(t)\big)\leftarrow{\sf CHASE}_{s}^{{\rm lk}(\omega)}\big[t,\big(\sigma\big(\tau\big)\big)_{\tau=1}^{t+w},y\big(t\mbox{-}1\big)\big]$}

\IF {$y(\tau_{1})\leq1-\mathbf{1}_{\{y_{s}(t)>y(t-1)\}},\ \forall\tau_{1}\in[\max(1,t-{\sf T}_{{\rm off}}),t-1]$
and $y(\tau_{2})\geq\mathbf{1}_{\{y_{s}(t)<y(t-1)\}},\ \forall\tau_{2}\in[\max(1,t-{\sf T}_{{\rm on}}),t-1]$ }
\STATE $y(t)\leftarrow y_{s}(t)$
\ELSE
\STATE $y(t)\leftarrow y(t-1)$
\ENDIF

\IF {$u_{s}(t)>u(t-1)$}
\STATE $u(t)\leftarrow u(t-1)+\min\big({\sf R}_{{\rm up}},u_{s}(t)-u(t-1)\big)$
\ELSE
\STATE $u(t)\leftarrow u(t-1)-\min\big({\sf R}_{{\rm dw}},u(t-1)-u_{s}(t)\big)$
\ENDIF
\STATE  ${v(t)\leftarrow\big[a(t)-u(t)\big]^{+}}$
\STATE  ${s(t)\leftarrow\big[h(t)-\eta\cdot u(t)\big]^{+}}$
\STATE return $\big(y(t),u(t),v(t),s(t)\big)$
\end{algorithmic}
}
\end{algorithm}

We now explain Algorithm~\ref{alg:CHASE-general} and its competitive ratio.
At each time slot $t$, we obtain the solution of ${\sf CHASE}_{s}^{{\rm lk}(\omega)}$, including $y_{s}(t),u_{s}(t),v_{s}(t),s_{s}(t)$, as a reference solution (Line 1). Then in Line 2-6, we modify the reference solution's $y_{s}(t)$ to our actual solution $y(t)$, to respect the constraints of minimum on/off periods. More specifically, we follow the reference solution's $y_{s}(t)$ ({\em i.e.,} $y(t)=y_{s}(t)$) \emph{if and only if} it respects the minimum on/off periods constraints (Line 2-3). Otherwise, we let our actual solution's $y(t)$ equal our previous slot's solution ($y(t)=y(t-1)$) (Line 4-5). Similarly, we modify the reference solution's $u_{s}(t)$ to our actual solution's $u(t)$, to respect the constraints on ramp-up/down speeds (Line 7-11). At last, in our actual solution, we use $(v(t),s(t))$ to compensate the supply and satisfy the demands (Line 12-13). In summary, our actual solution is designed to be aligned with the reference solution as much as possible.
}
We derive an upper bound on the competitive ratio of ${\sf CHASE}^{\rm lk(\omega)}_{\rm gen}$ as follows.
{\begin{thm}  \label{thm:slowratio}
The competitive ratio of $\mathrm{{\sf CHASE}_{\rm gen}^{{\rm lk}(\omega)}}$ is upper bounded
by $(3-2g(\alpha, \omega))\cdot\max\big(r_{1},r_{2}\big)$, where $g(\alpha, \omega)$ is defined in \eqref{eq:g_alpha_omega} and
\begin{align*}
r_{1}  = & 1+\max\left\{ \frac{\big(P_{\max}+c_{g}\cdot\eta-c_{0}\big)}{Lc_{0}+c_{m}}\max\left\{ 0,\big(L-{\sf R}_{{\rm up}}\big)\right\} \right.\\
  & \left.\frac{c_{o}}{c_{m}}\max\left\{ 0,\big(L-{\sf R}_{{\rm dw}}\big)\right\} \right\}, \\
\mbox{and }r_{2}  = & \frac{\beta+c_m \cdot {\sf T}_{{\rm on}}}{\beta} + \frac{L\big(P_{\max}+c_{g}\cdot\eta\big)}{\beta}\left({\sf T}_{{\rm on}} +{\sf T}_{{\rm off}}\right).
\end{align*}
\end{thm}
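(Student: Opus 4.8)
The plan is to bound the competitive ratio through a three-factor chain that separates the penalty introduced by the operational constraints from the competitiveness already established for the fast-responding case. Writing $\mathrm{OPT_{fast}}$ and $\mathrm{OPT_{slow}}$ for the offline optima of $\mathbf{fMCMP}$ and of the slow-responding $\mathbf{MCMP}$, I would decompose
\[
\frac{{\rm Cost}({\sf CHASE}^{\rm lk(\omega)}_{\rm gen})}{\mathrm{OPT_{slow}}} = \frac{{\rm Cost}({\sf CHASE}^{\rm lk(\omega)}_{\rm gen})}{{\rm Cost}({\sf CHASE}_s^{\rm lk(\omega)})}\cdot\frac{{\rm Cost}({\sf CHASE}_s^{\rm lk(\omega)})}{\mathrm{OPT_{fast}}}\cdot\frac{\mathrm{OPT_{fast}}}{\mathrm{OPT_{slow}}}.
\]
The middle factor is at most $3-2g(\alpha,\omega)$ by Theorem~\ref{thm:CHASElk-competitive-ratio}. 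The third factor is at most $1$ because $\mathbf{fMCMP}$ is precisely the slow-responding $\mathbf{MCMP}$ with the cross-slot constraints \eqref{C_ramp_up}--\eqref{C_min_off} dropped, so every feasible slow-responding schedule is feasible for $\mathbf{fMCMP}$ and hence $\mathrm{OPT_{fast}}\le\mathrm{OPT_{slow}}$. The whole argument therefore reduces to establishing the first factor, namely ${\rm Cost}({\sf CHASE}^{\rm lk(\omega)}_{\rm gen})\le \max(r_1,r_2)\cdot{\rm Cost}({\sf CHASE}_s^{\rm lk(\omega)})$, comparing the actual schedule against its own reference $(y_s,u_s,v_s,s_s)$ on the same input.

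For this first factor I would charge the extra cost created by the two families of constraints to disjoint portions of the reference cost. The ramping penalty (giving $r_1$) arises on slots where the on/off decision already agrees with the reference ($y(t)=y_s(t)$) but Lines~7--11 of Algorithm~\ref{alg:CHASE-general} prevent $u(t)$ from matching $u_s(t)$: on a ramp-up the electricity shortfall is at most $\max\{0,L-{\sf R}_{\rm up}\}$ units, which must be procured from the grid and the separate heater at an excess of at most $(P_{\max}+c_g\eta-c_o)$ per unit over local generation (the $c_g\eta$ term paying for heat that co-generation would have supplied for free); on a ramp-down the over-generation is at most $\max\{0,L-{\sf R}_{\rm dw}\}$ units wasted at $c_o$ each. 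Normalizing these by the reference/base cost that is genuinely incurred in the same slot ($Lc_o+c_m$ on the full-generation ramp-up slot, and the sunk cost $c_m$ of the slot in which the generator is still kept on during ramp-down) and adding the unit base term yields exactly the two branches inside $r_1$.

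The minimum on/off penalty (giving $r_2$) is charged instead to the startup cost $\beta$ that both schedules pay around each switching event. Whenever Lines~2--6 force the generator to stay on beyond the reference's turn-off, the actual on-block costs at least $\beta+c_m{\sf T}_{\rm on}$ against the reference's $\beta$, producing the first term of $r_2$; whenever a forced-off or delayed-on window (of length at most ${\sf T}_{\rm off}$ or ${\sf T}_{\rm on}$ respectively) forces external procurement of up to $L$ units of power and $\eta L$ units of heat per slot at cost $L(P_{\max}+c_g\eta)$, charging this to the same $\beta$ gives the second term. Because the ramping excess is attributed to running slots and the on/off excess to switching events, the two charges land on different parts of ${\rm Cost}({\sf CHASE}_s^{\rm lk(\omega)})$, so the local blow-up on any part is bounded by whichever of $r_1,r_2$ applies, and the overall factor is $\max(r_1,r_2)$ rather than a product.

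The main obstacle I anticipate is precisely this accounting step: making the charging airtight so that the two penalties do not compound multiplicatively and so that every forced deviation is paid for by a reference cost genuinely present in the same window. In particular I would need to verify that each reference switch endows enough $\beta$ to absorb the adjacent forced windows, that Algorithm~\ref{alg:CHASE-general}'s decoupled handling of $y$ (Lines~2--6) and $u$ (Lines~7--11) never produces a slot where a forced $u(t)>0$ coexists with $y(t)=0$ in a way that escapes the bound through Lemma~\ref{lem:fMCMP}, and that the worst-case shortfalls $L-{\sf R}_{\rm up}$, $L-{\sf R}_{\rm dw}$ and the window lengths ${\sf T}_{\rm on},{\sf T}_{\rm off}$ are simultaneously attainable rather than loose over-estimates. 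Once the charging is shown to be disjoint and exhaustive, multiplying the three factors delivers the claimed bound $(3-2g(\alpha,\omega))\cdot\max(r_1,r_2)$.
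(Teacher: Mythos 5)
Your proposal follows essentially the same route as the paper's own proof: the paper uses the identical decomposition (bounding the offline optimum of the slow-responding problem below by that of $\mathbf{fMCMP}$, invoking Theorem~\ref{thm:CHASElk-competitive-ratio} for the factor ${\rm Cost}({\sf CHASE}_s^{\rm lk(\omega)})/\mathrm{OPT_{fast}}$), and it establishes your first factor by splitting the reference cost over the agreement slots $\mathbf{T_e}$ (where a per-slot case analysis of ramp-up/ramp-down yields exactly your $r_1$ branches) and the disagreement slots $\mathbf{T_n}$ (where the forced on/off windows around each startup are charged against the reference's $\beta$ per startup, using $k_{\rm gen}\le k_s$, yielding $r_2$), combining the two sub-ratios by the mediant inequality into $\max(r_1,r_2)$. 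This is precisely the disjoint charging you outline, so your plan is correct and matches the paper's argument.
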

\begin{proof}
Refer to Appendix~\ref{subsec:slowratio}.
\end{proof}
We note that when ${\sf T}_{{\rm on}}={\sf T}_{{\rm off}}=0$, ${\sf R}_{{\rm up}}={\sf R}_{{\rm dw}}=\infty$, the above upper bound matches that of ${\sf CHASE}^{\rm lk(\omega)}$ in Theorem~\ref{thm:CHASE-NG} (specifically the first term inside the min function).}

\section{Empirical Evaluations} \label{sec:empirical}

We evaluate the performance of our algorithms based on evaluations using real-world traces. Our objectives are three-fold: (i) evaluating the potential benefits of CHP and the ability of our algorithms to unleash such potential, (ii) corroborating the empirical performance of our online algorithms under various realistic settings, and (iii) understanding how much local generation to invest to achieve substantial economic benefit.

\subsection{Parameters and Settings} \label{sec:setting}

{\bf Demand Trace}: We obtain the demand traces
from California Commercial End-Use Survey (CEUS) \cite{CEUS}. We focus
on a college in San Francisco, which consumes about 154 GWh
electricity and $5.1\times10^{6}$ therms gas per year. The traces contain hourly electricity and \rev{heat} demands
of the college for year 2002. The heat demands for a typical week in summer and spring are shown in Fig. \ref{fig:trace}.
They display regular daily patterns in peak and off-peak hours, and
typical weekday and weekend variations.

{\bf Wind Power Trace}: We obtain the wind power traces from \cite{NREL}.
{We employ power output data for the typical weeks in summer and spring with a resolution of 1 hour of an offshore
wind farm right outside San Francisco with an installed capacity of 12MW.} The net electricity demand, which is computed by subtracting the wind generation from electricity demand is shown in Fig. \ref{fig:trace}. The highly fluctuating and unpredictable
nature of wind generation makes it difficult for the conventional prediction-based energy generation scheduling solutions to work effectively.

{\bf Electricity and Natural Gas Prices}: The electricity and
natural gas price data are from PG\&E \cite{PG_E} and are shown in Table
\ref{tab:PG&E-tariffs}. Besides, the grid electricity prices for a typical week in summer and winter are shown in Fig. \ref{fig:trace}.
Both the electricity demand and the price
show strong diurnal properties: in the daytime, the demand and price
are relatively high; at nights, both are low. This suggests the
feasibility of reducing the microgrid operating cost by generating
cheaper energy locally to serve the demand during
the daytime when both the demand and electricity price are high.

{\bf Generator Model}: We adopt generators
with specifications the same as the one in \cite{tecogen}.
The full output of a single generator is $L=3MW$. The incremental
cost per unit time to generate an additional unit of energy $c_{o}$
is set to be $0.051\ensuremath{/KWh}$, which is calculated according
to the natural gas price and the generator efficiency. We set the heat recovery efficiency of co-generation $\eta$ to be $1.8$ according to \cite{tecogen}. We
also set the unit-time generator running cost to be
 $c_{m}=110\$/h$, which includes the amortized capital cost and maintenance cost according to
a similar setting from \cite{stadlerdistributed}. We set the
startup cost $\beta$ equivalent to running the generator at its
full capacity for about 5 hrs at its own operating cost
which gives
$\beta=1400\$$.  In addition, we assume for each generator ${\sf
T}_{{\rm on}}={\sf T}_{{\rm off}}=3h$ and ${\sf R}_{{\rm up}}={\sf
R}_{{\rm dw}}=1MW/h$, unless mentioned otherwise. For electricity demand trace we use, the peak demand is
30MW. Thus, we assume there are 10 such CHP generators so as to fully satisfy the demand.


{\bf Local Heating System}: We assume an on-demand heating system with capacity sufficiently
large to satisfy all the heat demand by itself and without
on-off cost or ramp limit. The efficiency of a heating system is set to
$0.8$ according to \cite{greenenergy}, and consequently we can compute
the unit heat generation cost to be  $c_{g}=0.0179\$/KWh$.

{\bf Cost Benchmark}: We use the cost incurred by using only
  external electricity, heating and wind energy (without
CHP generators) as a benchmark. We
evaluate the cost reduction due to our algorithms.

{\bf Comparisons of Algorithms}: We compare three algorithms in our simulations.
(1) our online algorithm {\sf CHASE}; (2) the Receding
Horizon Control ({\sf RHC}) algorithm; and (3) {the {\sf OFFLINE} optimal
algorithm we introduce in Sec.~\ref{sec:slowrep}}. {\sf RHC} is a heuristic algorithm
commonly used in the control literature \cite{RHC3}. In {\sf RHC}, an estimate
of the near future ({\em e.g.,} in a window of length $w$) is used to compute
a tentative control trajectory that minimizes the cost over this
time-window.  However, only the first step of this trajectory is
implemented. In the next time slot, the window of future estimates
shifts forward by $1$ slot.  Then, another
control trajectory is computed based on the new future information, and
again only the first step is implemented. This process then continues.
We note that because at each step {\sf RHC} does not consider any
adversarial future dynamics beyond the time-window $w$, there is no
guarantee that {\sf RHC} is competitive.
For the {\sf OFFLINE} algorithm, the inputs are system parameters (such as
$\beta$, $c_{m}$ and ${\sf T}_{{\rm on}}$), electricity demand, heat
demand, wind power output, gas price, and grid electricity price.
For online algorithms {\sf CHASE} and {\sf RHC}, the input is the same as the
{\sf OFFLINE} except that at time $t$, only the demands, wind power
output, and prices in the past and the look-ahead window ({\em i.e.,} $[1,
t+w]$) are available.  The output for all three algorithms is the total
cost incurred during the time horizon $[1,T]$.
%
%
\begin{figure}[htb!]
\subfloat[ \label{fig:dtrace_summer} Summer]{\includegraphics[width={0.53\columnwidth}]{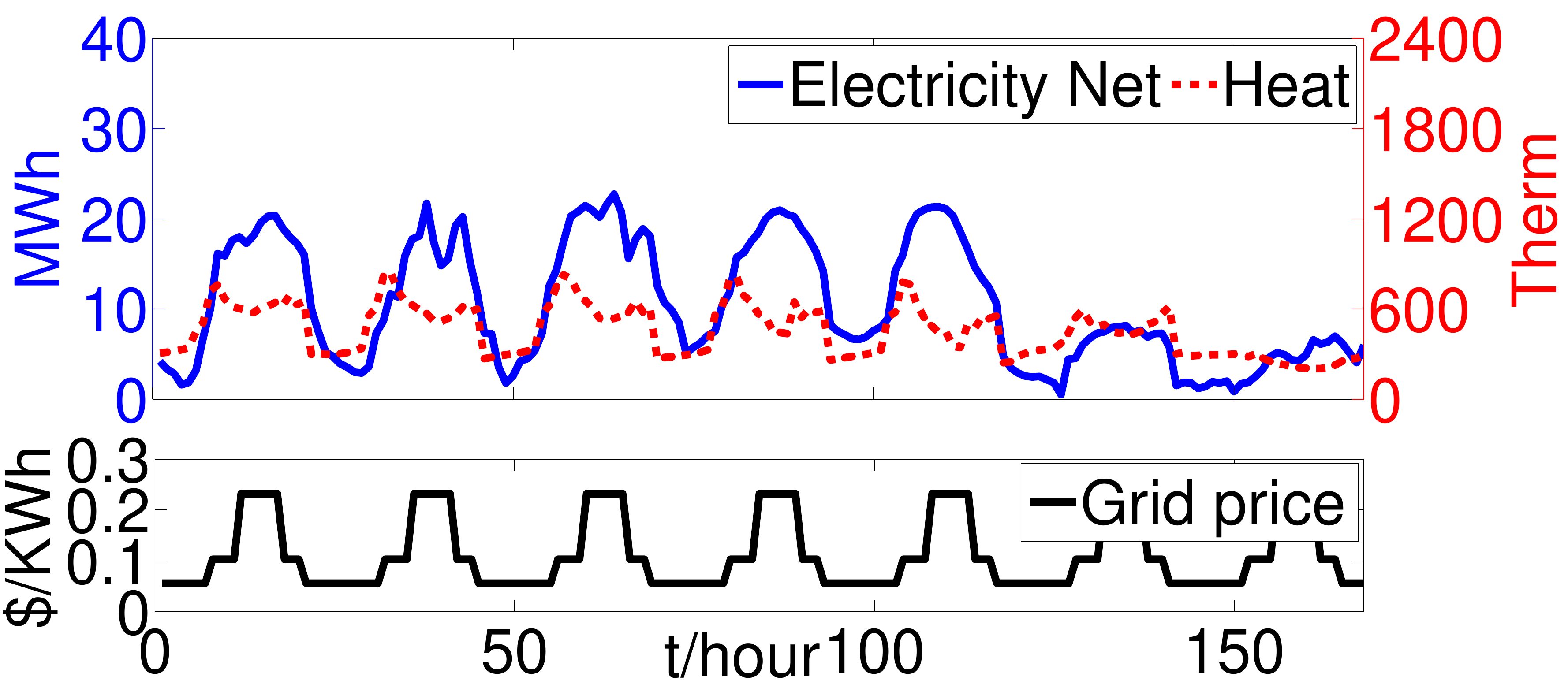}}
\subfloat[ \label{fig:dtrace_winter} Winter]{\includegraphics[width={0.53\columnwidth}]{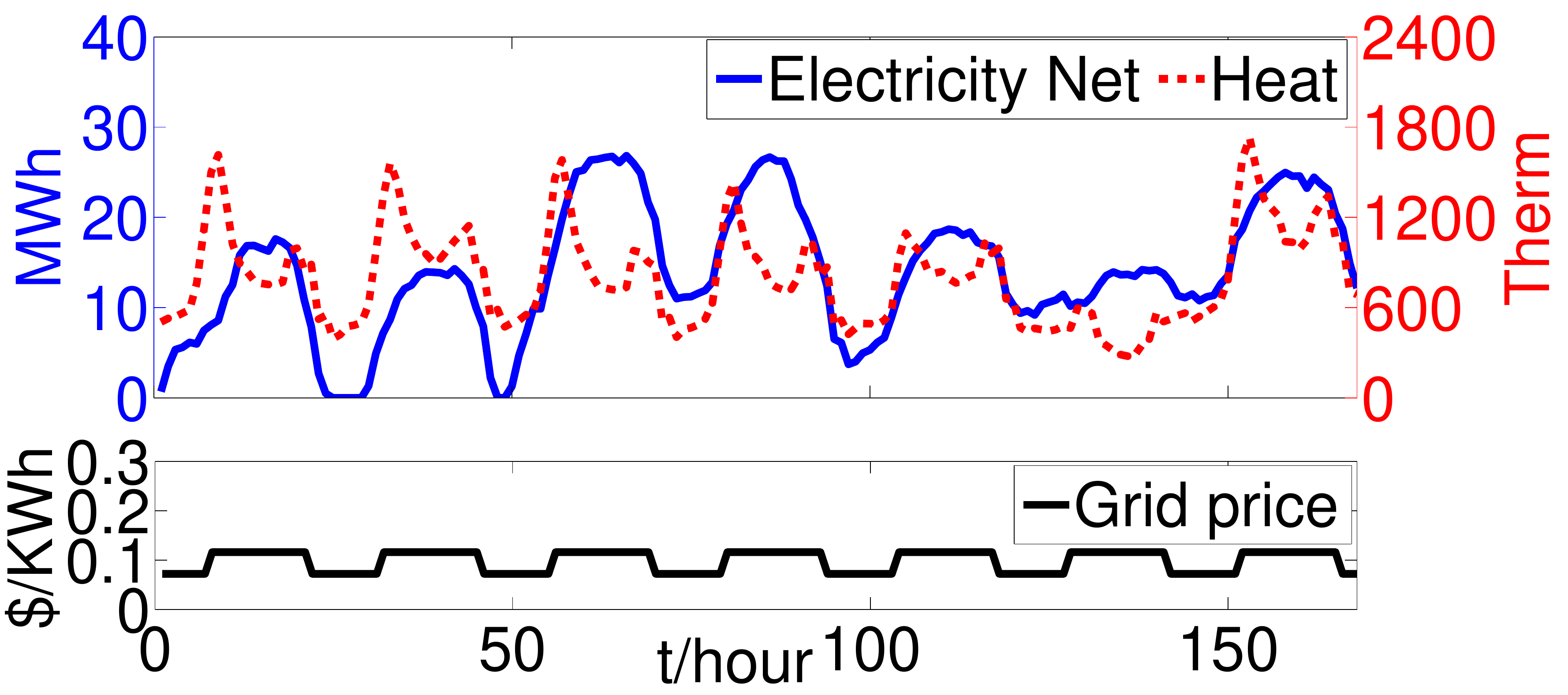}}
\caption{\label{fig:trace} Electricity net demand and heat demand for a typical week in summer and winter. The net demand is computed by subtracting the wind generation from the electricity demand. The net electricity demand and the heat demand need to be satisfied by using the local CHP generators, the electricity grid, and the heating system.}
\end{figure}

\begin{table}[htb!]
\centering
{
\begin{tabular}{c|c|c}
\hline \hline
Electricity & Summer (May-Oct.)  & Winter (Nov.-Apr.)\tabularnewline
 & \$/kWh & \$/kWh\tabularnewline
\hline
On-peak & 0.232 & N/A\tabularnewline
\hline
Mid-peak & 0.103 & 0.116\tabularnewline
\hline
Off-peak & 0.056 & 0.072\tabularnewline
\hline
\hline
Natural Gas & 0.419\$/therm & 0.486\$/therm\tabularnewline
\hline
\end{tabular}\caption{ \label{tab:PG&E-tariffs} PG\&E commercial tariffs and natural gas
tariffs. In the table, summer on-peak, mid-peak, and off-peak hours are weekday 12-18, weekday 8-12, and the remaining hours, respectively. Winter mid-peak and off-peak hours are weekday 8-22 and the remaining hours, respectively. The gas price is an average; monthly prices vary slightly according to PG\&E.}
}
\end{table}

\subsection{Potential Benefits of CHP}
{\bf Purpose}:
The experiments in this subsection aim to answer two
questions. First, what is the potential savings with microgrids? Note
that electricity, heat demand, wind station output as well as energy
price all exhibit seasonal patterns.
As we can see from Figs. \ref{fig:dtrace_summer} and
\ref{fig:dtrace_winter}, during summer (similarly autumn) the electricity price
is high,  while during winter (similarly spring) the heat demand is high. It is
then interesting to evaluate under what settings and inputs the savings
will be higher.
Second, what is the difference in cost-savings with and without
the co-generation capability?
In particular, we conduct two sets of experiments to evaluate the cost
reductions of various algorithms. Both experiments have the same default
settings, except that the first set of experiments (referred to as
CHP)
assumes the CHP technology in the generators is enabled, and the second
set of experiments (referred to as NOCHP) assumes the CHP technology is
not available, in which case the heat demand must be satisfied solely by
the heating system. In all experiments, the look-ahead window size is
set to be $w=3$ hours according to
{power system operation and wind generation forecast practice} \cite{met}.  The cost reductions of different algorithms are shown in
Fig.  \ref{fig:with-chp} and \ref{fig:without-chp}. The vertical
axis is the cost reduction as compared to the cost benchmark presented in
Sec. \ref{sec:setting}. 

{\bf Observations}:
First, the whole-year cost reductions obtained by {\sf OFFLINE} are 21.8\% and
11.3\% for CHP and NOCHP scenarios, respectively. This justifies the
economic potential of using local generation, especially when CHP
technology is enabled. Then, looking at the seasonal performance of
{\sf OFFLINE}, we observe that {\sf OFFLINE} achieves much more cost savings
during summer and autumn than during spring and winter. This is because
the electricity price during summer and autumn is very high, thus we can
benefit much more from using the relatively-cheaper local generation as
compared to using grid energy only. Moreover, {\sf OFFLINE} achieves much more
cost savings when CHP is enabled than when it is not during spring and
winter. This is because, during spring and winter, the electricity
price is relatively low and the heat demand is high. Hence, just
using local generation to supply electricity is not economical. Rather, local
generation becomes more economical only if it can be used to supply both
electricity and heat together ({\em i.e.,} with CHP technology).

Second, {\sf CHASE} performs consistently close to {\sf OFFLINE}
across inputs from different seasons, even though the different settings
have very different characteristics of demand and supply. In contrast,
the performance of {\sf RHC} depends heavily on the input characteristics. For
example, {\sf RHC} achieves some cost reduction during summer and autumn when
CHP is enabled, but achieves 0 cost reduction in all the other cases.

{\bf Ramifications}:
In summary, our experiments suggest that exploiting local generation
can save more cost when the electricity price is high, and CHP
technology is more critical for cost reduction when heat demand is high.
Regardless of the problem setting, it is important to adopt an
intelligent online algorithm (like {\sf CHASE}) to schedule energy generation,
in order to realize the full benefit of microgrids.

\begin{figure}
\subfloat[ \label{fig:with-chp}{ Local generators with CHP}]{\includegraphics[width={0.5\columnwidth}]{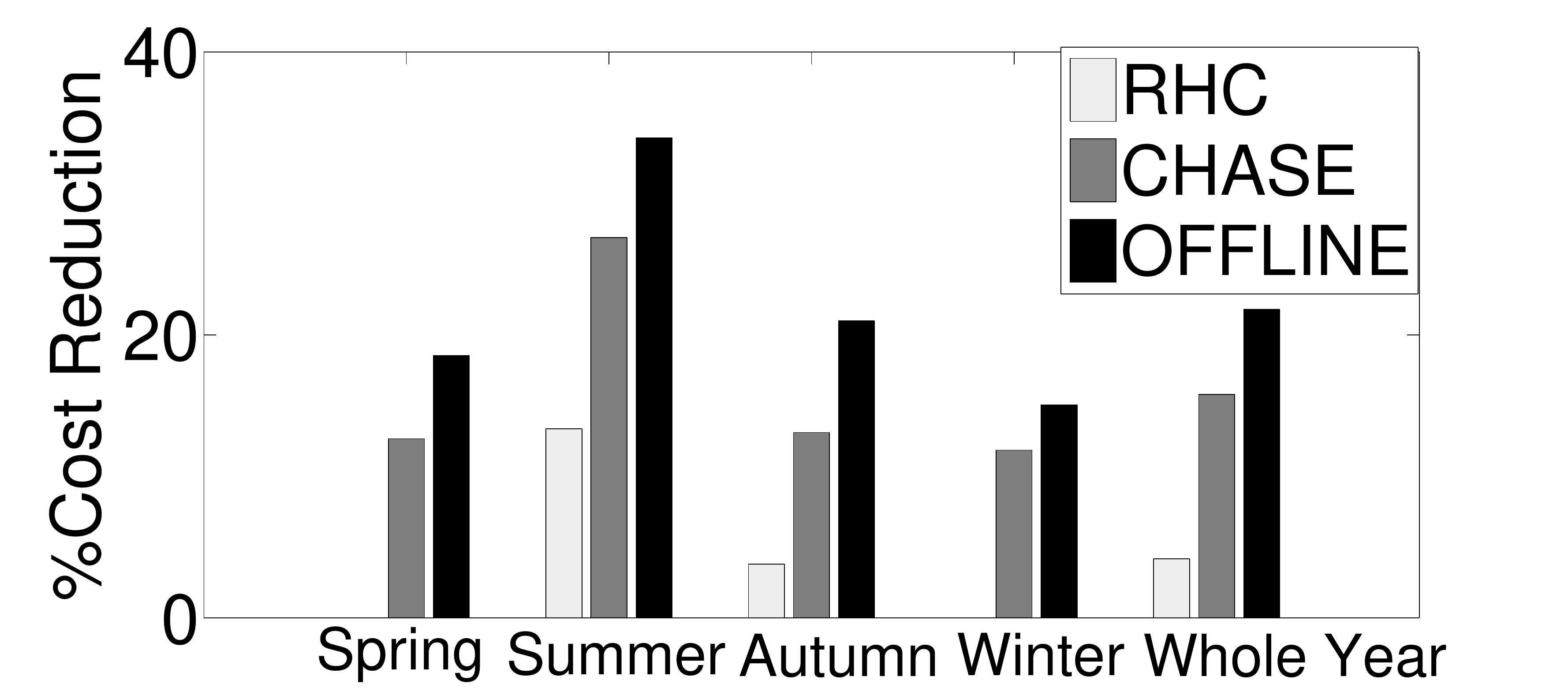}}
\subfloat[ \label{fig:without-chp}{ Local generators without CHP}]{\includegraphics[width={0.5\columnwidth}]{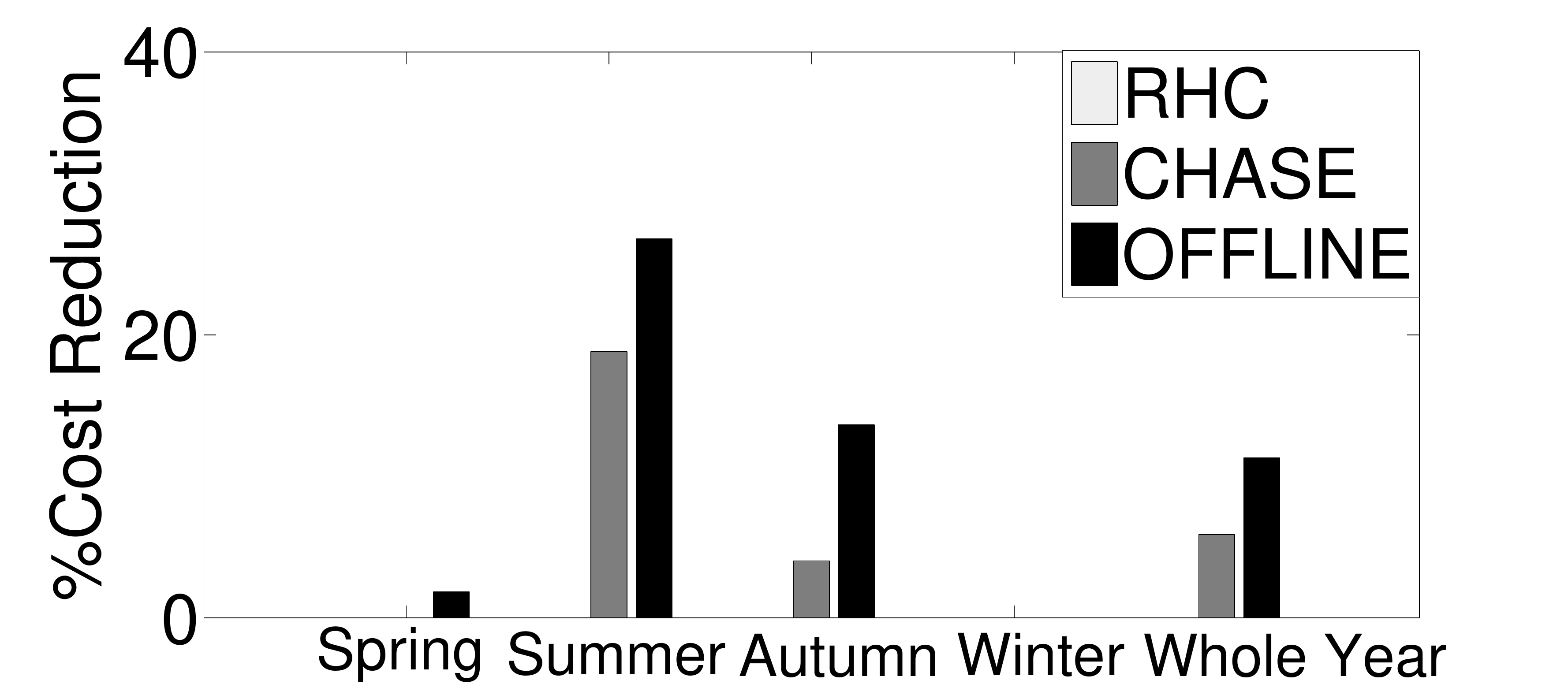}}
\caption{ Cost reductions for different seasons and the whole year.}
\end{figure}

\subsection{Benefits of Looking-Ahead}

{\bf Purpose}:
We compare the performances of {\sf CHASE} to {\sf RHC} and
{\sf OFFLINE} for different sizes of the look-ahead
window and show the results in Fig. \ref{fig:cr_w}. The vertical
axis is the cost reduction as compared to the cost benchmark in Sec. \ref{sec:setting}
and the horizontal axis is the size of lookahead window, which varies
from 0 to 20 hours.

{\bf Observations}:
We observe that the performance of our online algorithm {\sf CHASE} is already
close to {\sf OFFLINE} even when no or little look-ahead information
is available ({\em e.g.,} $w= 0$, $1$, and $2$). In contrast, {\sf RHC} performs
poorly when the look-ahead window is small. When $w$ is large, both {\sf
CHASE} and {\sf RHC} perform very well and their performance are close to {\sf OFFLINE}
 when the look-ahead window $w$ is larger than 15 hours.

An interesting observation is that it is
more important to perform intelligent energy generation scheduling when
there is no or little look-ahead information available. When there are abundant
look-ahead information available, both {\sf CHASE} and {\sf RHC} achieve good performance
and it is less critical to carry out sophisticated algorithm design.

In Fig. \ref{fig:ratio1} and \ref{fig:ratio2},
we separately evaluate the benefit of looking-ahead under the
fast-responding and slow-responding scenarios. We
evaluate the empirical competitive ratio between the cost of {\sf CHASE} and
{\sf OFFLINE}, and compare it with the theoretical competitive
ratio according to our analytical results.
In the fast-responding scenario (Fig.~\ref{fig:ratio1}), for each generator
there are no minimum on/off period and ramping-up/down constraints. Namely,
${\sf T}_{{\rm on}}=0$, ${\sf T}_{{\rm off}}=0$, ${\sf R}_{{\rm up}}=\infty$,
${\sf R}_{{\rm dw}}=\infty$. In the slow-responding scenario
(Fig.~\ref{fig:ratio2}), we set
${\sf T}_{{\rm on}}={\sf T}_{{\rm off}}=3h$ and ${\sf R}_{{\rm up}}={\sf
R}_{{\rm dw}}=1MW/h$. In both experiments, we observe that
%
the theoretical ratio decreases rapidly as look-ahead window size increases.
Further, the empirical ratio is already close to one even when there is no look-ahead information.

\begin{figure}[t!]
\begin{minipage}[t]{0.48\linewidth}
\centering
\includegraphics[width=\columnwidth]{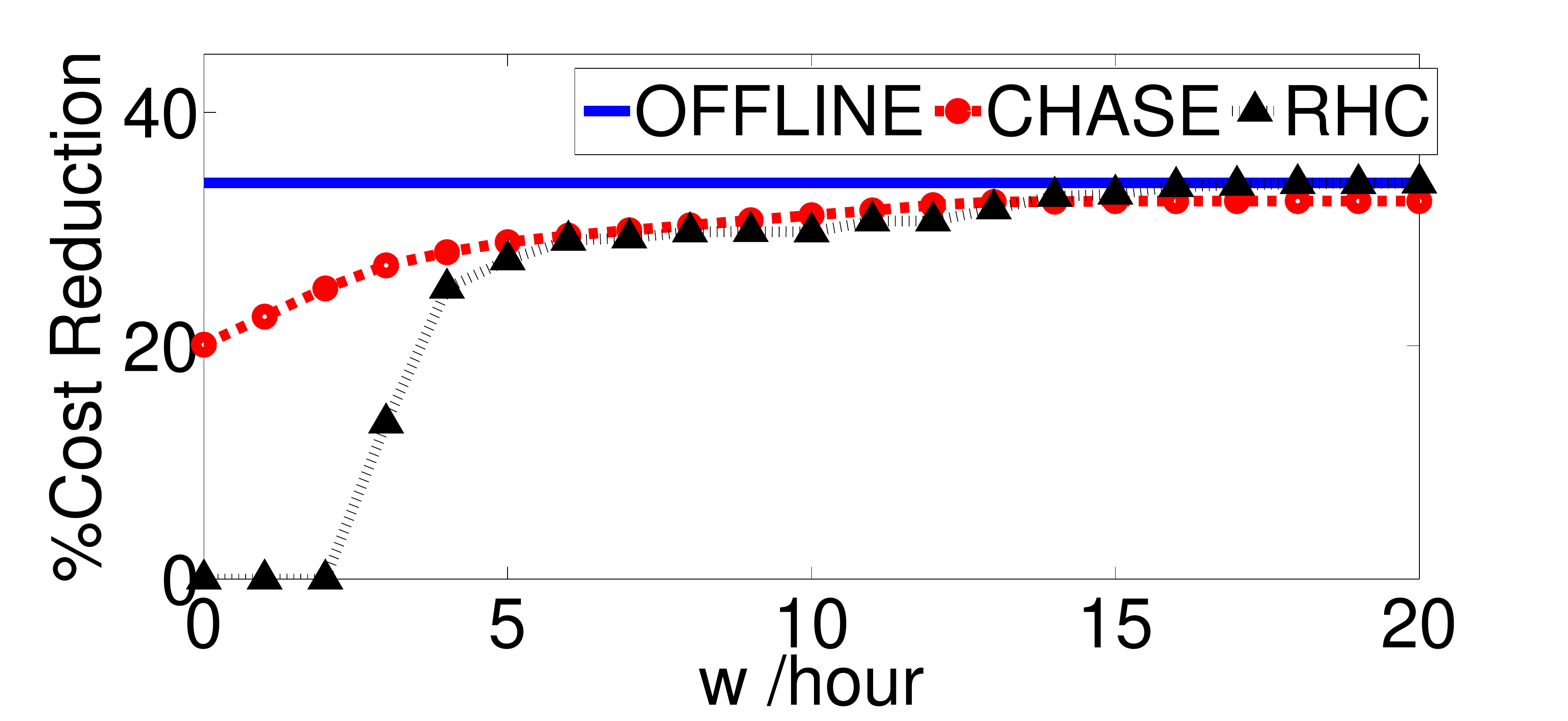}
\caption{\label{fig:cr_w} Cost reduction as a function of look ahead window size $\omega$.}
\end{minipage}
\hfill
\begin{minipage}[t]{0.48\linewidth}
\centering
\includegraphics[width=\columnwidth]{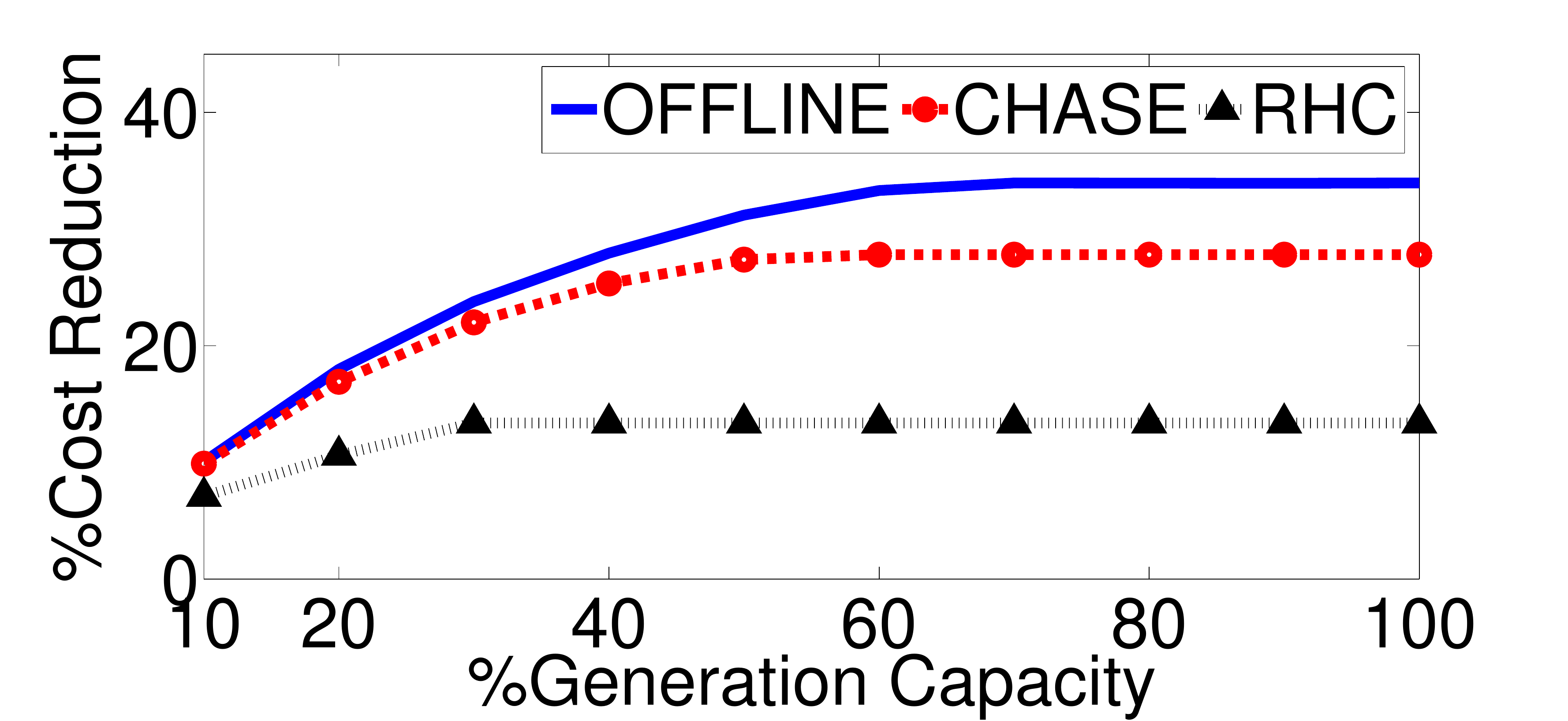}
\caption{\label{fig:cr_yub} Cost reduction as a function of local generation capacity.}
\end{minipage}
\end{figure}

%
%

\begin{figure}[t!]

\subfloat[ \label{fig:ratio1} Fast-responding scenario]{\includegraphics[width={0.5\columnwidth}]{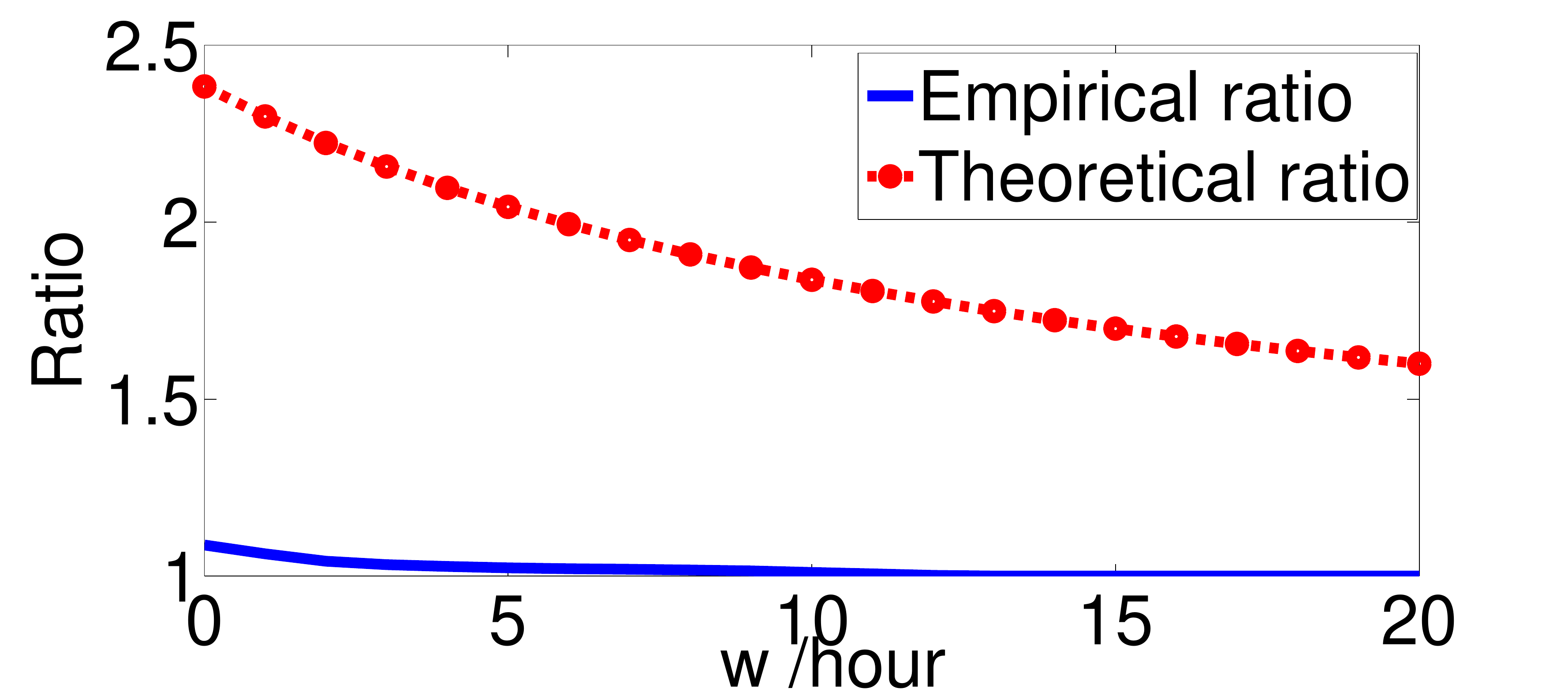}}\subfloat[ \label{fig:ratio2} Slow-responding scenario]{\includegraphics[width={0.5\columnwidth}]{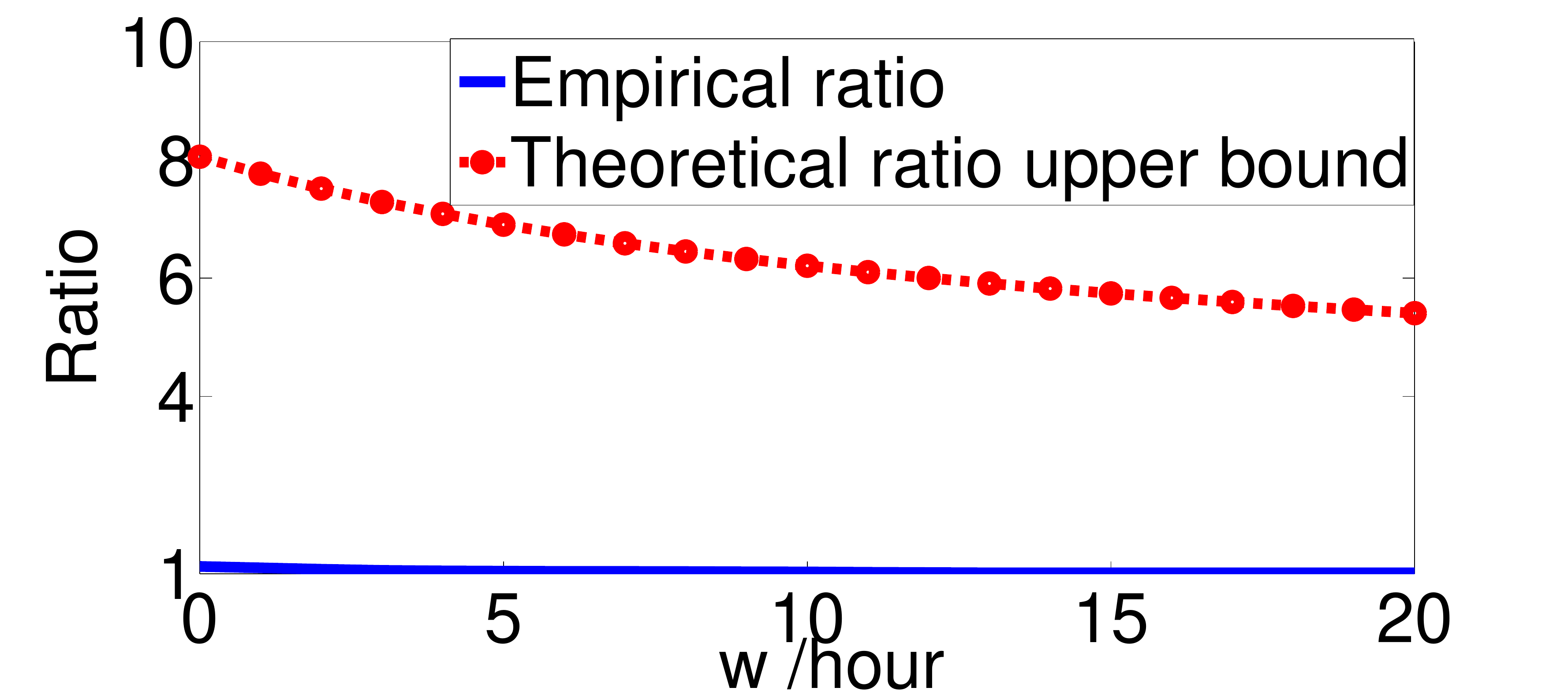}}

\caption{ Theoretical and empirical ratios
for {\sf CHASE}, as functions of look-ahead window size $\omega$. Note that the
theoretical competitive ratios (or their bounds) measure the
worst-case performance and are often much larger than the empirical
ratios observed in practice.}
\end{figure}

\subsection{Impacts of Look-ahead Error}

{\bf Purpose}:
Previous experiments show that our algorithms have better performance
if a larger time-window of accurate look-ahead input information is
available. The input information in the look-ahead window includes the
wind station power output, the electricity
and heat demand, and the central grid electricity price. In practice, these look-ahead
information can be obtained by applying sophisticated prediction techniques based on the historical data.
However, there are always prediction errors. For example, while the day-ahead electricity
demand can be predicted within 2-3\% range, the wind power prediction in the next hours  usually comes
with an error range of 20-50\% \cite{windpredictionerror}.  Therefore, it is important to evaluate the performance of the
algorithms in the presence of prediction error.

{\bf Observations}:
To achieve this goal, we evaluate {\sf CHASE} with look-ahead window
size of 1 and 3 hours. According to \cite{windpredictionerror}, the hour-level
wind-power prediction-error in terms of the percentage of the total
installed capacity usually follows Gaussian distribution. Thus, in
the look-ahead window, a zero-mean Gaussian prediction error is added to
the amount of wind power in each time-slot. We vary the standard
deviation of the Gaussian prediction error from 0 to 120\% of the total
installed capacity. Similarly, a zero-mean Gaussian prediction error is
added to the heat demand, and its standard deviation also varies from 0 to 120\%
of the peak demand.
We note that in practice, prediction errors are often in the range of 20-50\%
for 3-hour prediction \cite{windpredictionerror}. Thus, by using a
standard deviation up to 120\%, we are essentially stress-testing our
proposed algorithms.
We average 20 runs for each algorithm and show the results
in Figs. \ref{fig:cr_aerror} and \ref{fig:cr_berror}.
As we can see, both {\sf CHASE} and {\sf RHC} are fairly robust to the prediction error and both are more
sensitive to the wind-power prediction error than to the heat-demand prediction
error. Besides, the impact of the prediction error is relatively small when
the look-ahead window size is small, which matches with our intuition.

\begin{figure}
\subfloat[ \label{fig:cr_aerror} Wind power forecast error]{\includegraphics[width={0.5\columnwidth}]{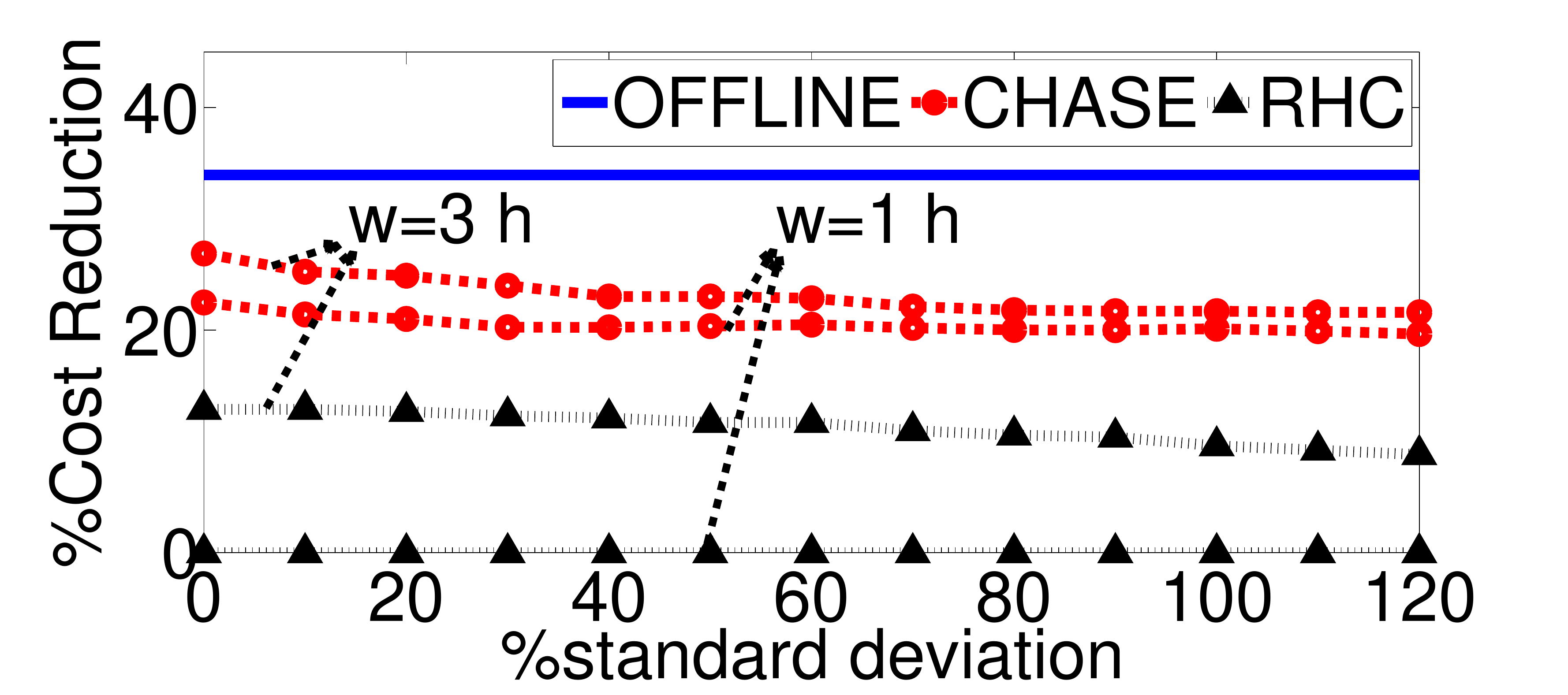}
}
\subfloat[ \label{fig:cr_berror} Heat demand forecast error]{\includegraphics[width={0.5\columnwidth}]{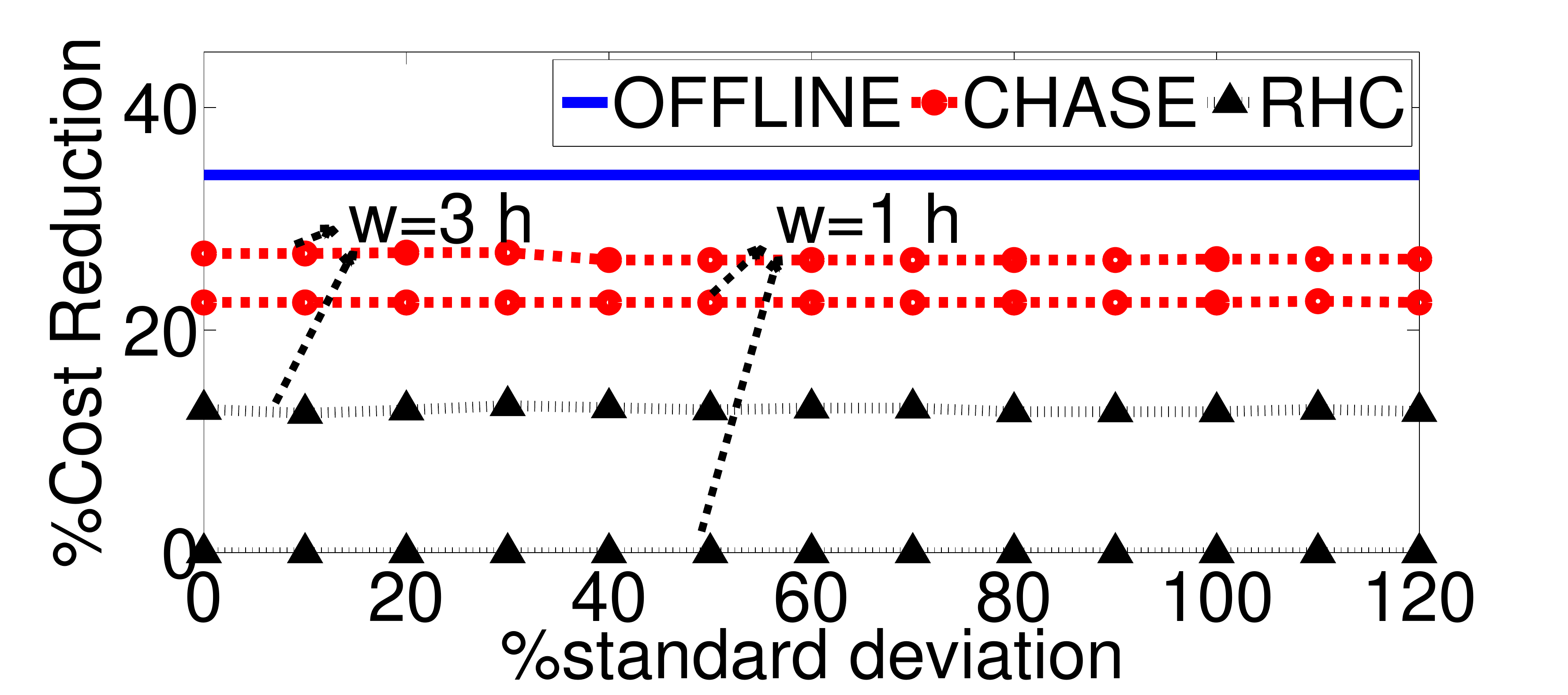}

}

\caption{ Cost reduction as a function of the size prediction
error (measured by the standard deviation of the prediction error as a
percentage of (a)installed capacity and (b)peak heat demand).}
\end{figure}

\subsection{Impacts of System Parameters}

{\bf Purpose}:
Microgrids may employ different types of local generators
with diverse operational constraints (such ramping up/down limits and
minimum on/off times) and heat recovery efficiencies.  It is then
important to understand the impact on cost reduction due to these
parameters.
In this experiment, we study the cost
reduction provided by our offline and online algorithms under different
settings of
${\sf R}_{{\rm up}}$, ${\sf R}_{{\rm dw}}$ , ${\sf T}_{{\rm on}}$,
${\sf T}_{{\rm off}}$ and $\eta$.

{\bf Observations}:
Fig. \ref{fig:cr_Rud} and \ref{fig:cr_Tud}
show the impact of ramp limit and minimum on/off time, respectively,
on the performance of the algorithms. Note that for simplicity we
always set ${\sf R}_{{\rm up}}={\sf R}_{{\rm dw}}$ and ${\sf T}_{{\rm on}}
={\sf T}_{{\rm off}}$.
As we can see in Fig. \ref{fig:cr_Rud}, with ${\sf R}_{{\rm up}}$ and ${\sf
R}_{{\rm dw}}$ of about 40\% of the maximum capacity, {\sf
CHASE} obtains nearly all of the cost reduction benefits, compared with ${\sf RHC}$ which needs
70\% of the maximum capacity. Meanwhile, it can be seen from Fig. \ref{fig:cr_Tud}
that ${\sf T}_{{\rm on}}$ and ${\sf T}_{{\rm
off}}$ do not have much impact on the performance. This suggests that
it is more valuable to invest in generators with fast ramping up/down
capability than those with small minimum on/off periods.  From Fig.
\ref{fig:cr_eta_summer} and \ref{fig:cr_eta_winter}, we
observe that generators with large $\eta$ save much more cost during
the winter because of the high heat demand.  This suggests that in areas
with large heat demand, such as Alaska and Washington, the heat recovery efficiency ratio is a critical parameter when investing CHP generators.

\begin{figure}
\subfloat[ \label{fig:cr_Rud} cost redu. vs. ${\sf R}_{{\rm up}}$ and ${\sf R}_{{\rm dw}}$]{\includegraphics[width={0.5\columnwidth}]{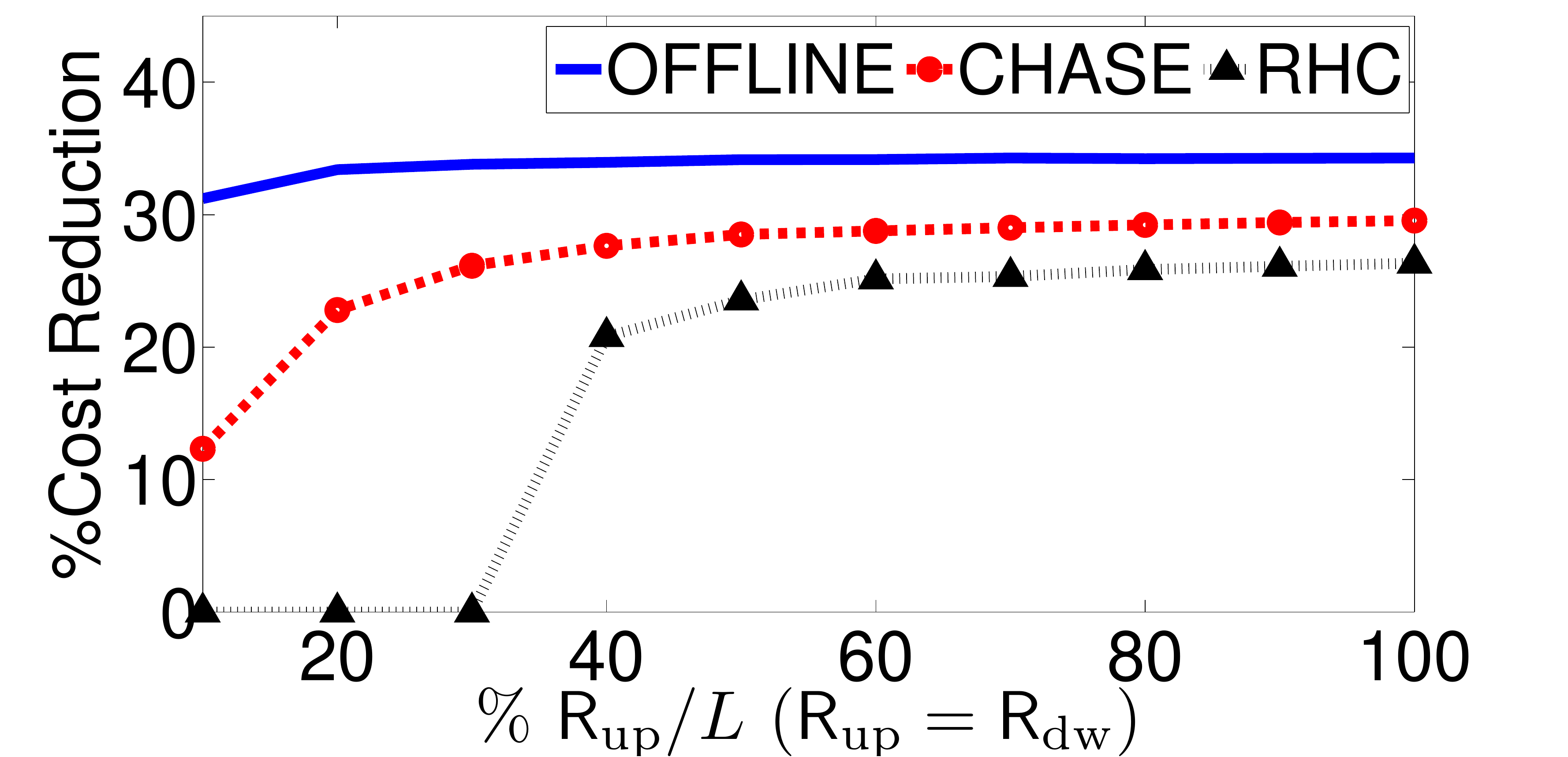}}
\subfloat[ \label{fig:cr_Tud} cost redu. vs. ${\sf T}_{{\rm up}}$ and ${\sf T}_{{\rm dw}}$]{\includegraphics[width={0.5\columnwidth}]{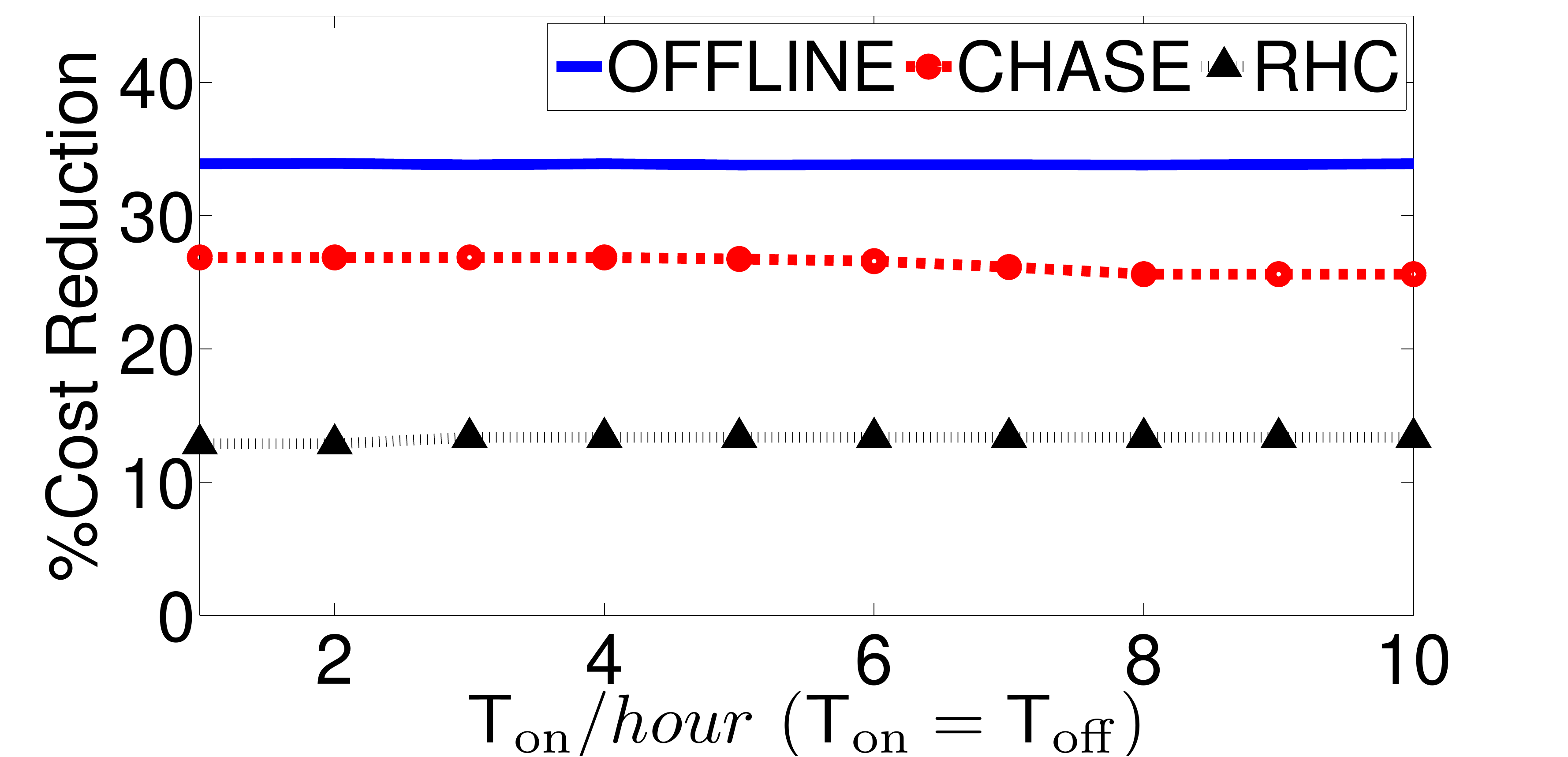}}\\
\subfloat[ \label{fig:cr_eta_summer} cost reduction vs. $\eta$]{\includegraphics[width={0.5\columnwidth}]{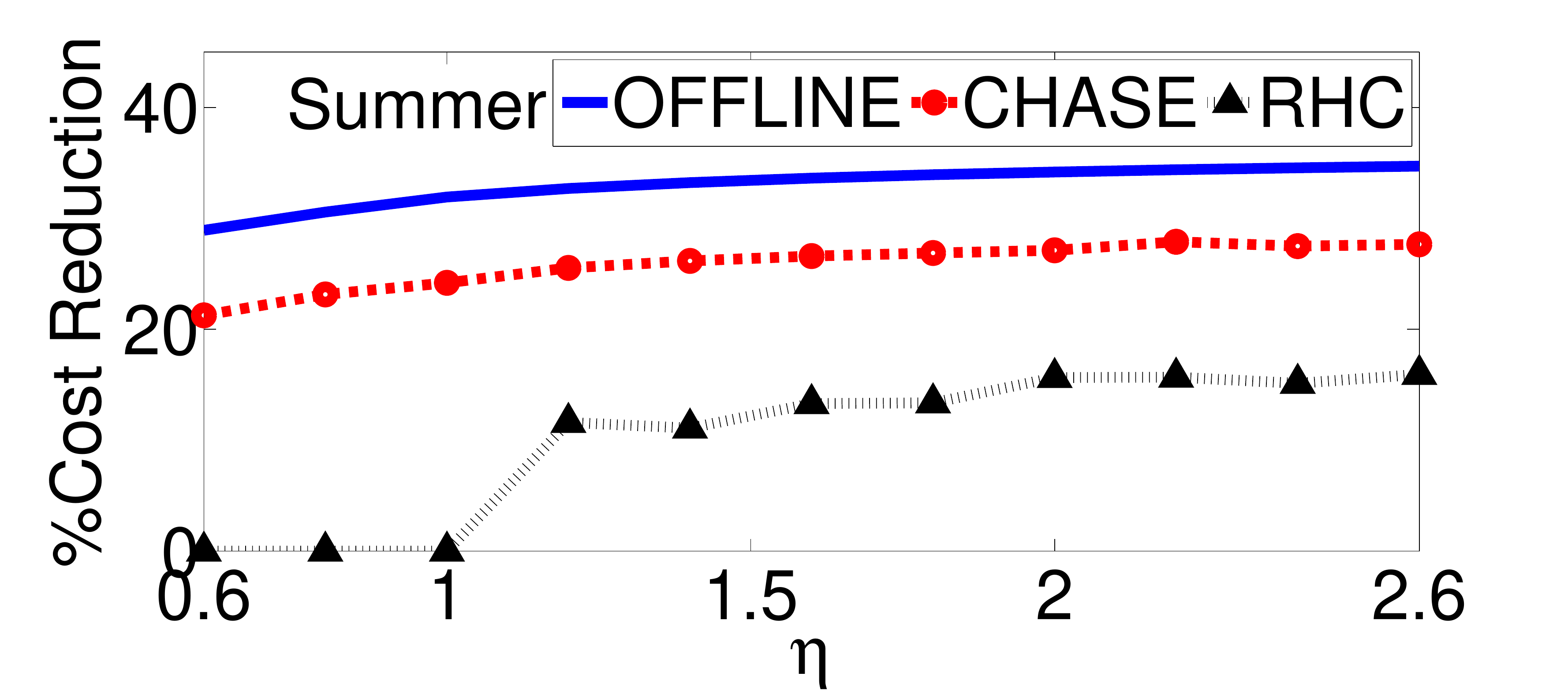}}
\subfloat[ \label{fig:cr_eta_winter} cost reduction vs. $\eta$]{\includegraphics[width={0.5\columnwidth}]{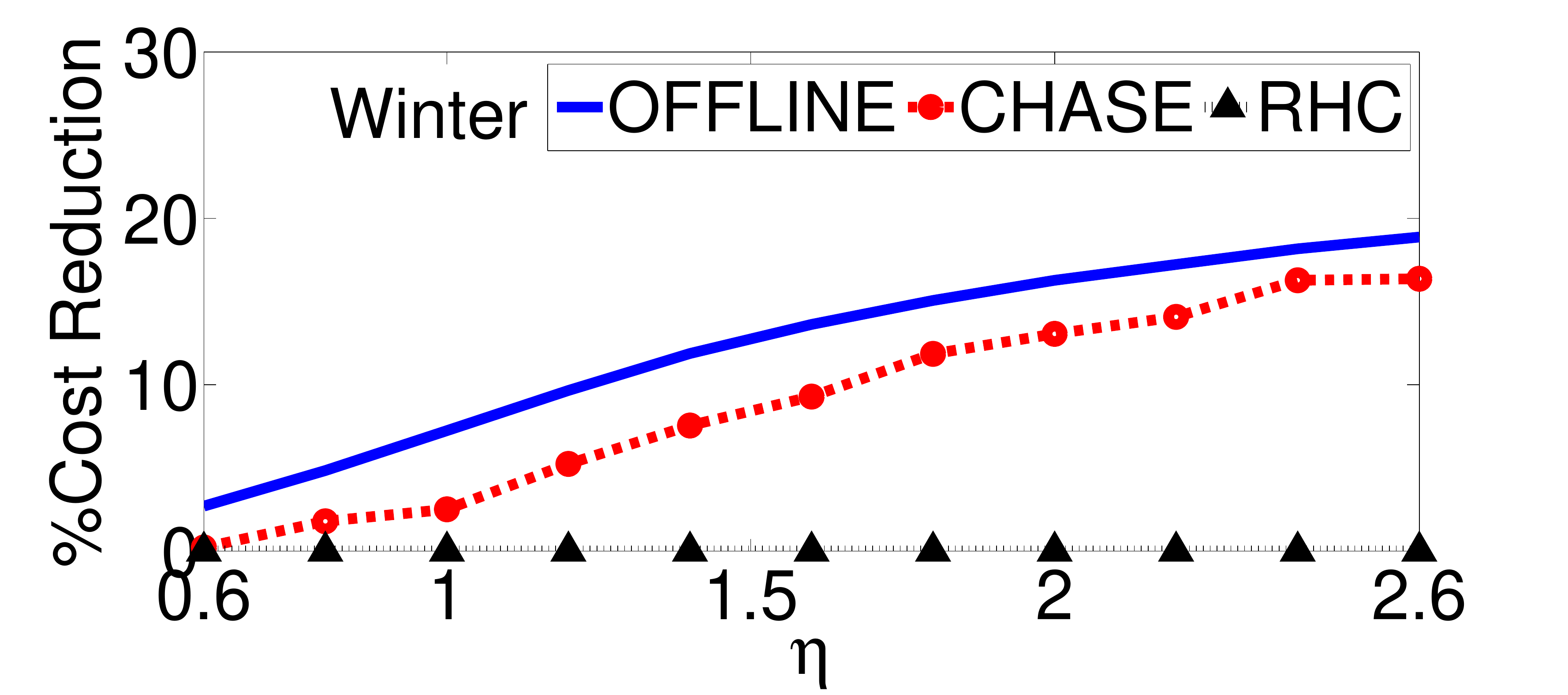}}\caption{  Cost reduction as functions of generator parameters.}
\end{figure}


%
%
%

\subsection{How Much Local Generation is Enough}

Thus far, we assumed that the microgrid had the ability to supply all
energy demand from local power generation in every time-slot.
In practice, local generators can be quite expensive. Hence,
an important question is how much investment should a microgrid operator
makes (in terms of the installed local generator capacity) in order to obtain
the maximum cost benefit. More specifically, we vary the number of CHP generators from
1 to 10 and plot the corresponding cost reductions of algorithms in Fig. \ref{fig:cr_yub}.
Interestingly, our results show that provisioning local generation to produce 60\% of the
peak demand is sufficient to obtain nearly all of the cost reduction
benefits.
Further, with just 50\% local generation capacity we can achieve
about 90\% of the maximum cost reduction. The intuitive reason is
that most of the time demands are significantly lower than their
peaks.
\section{Related Work}\label{sec:related}

Energy generation scheduling is a classical problem in power systems
and involves two aspects, namely Unit Commitment (UC) and Economic
Dispatching (ED). UC optimizes the startup and shutdown schedule of power generations to meet
the forecasted demand over a short period, whereas ED allocates the system demand and spinning
reserve capacity among operating units at each specific hour of operation without considering startup and shutdown of power generators.

For large power systems, UC involves scheduling of a large number
gigantic power plants of several hundred if not thousands of megawatts
with heterogeneous operating constraints and logistics behind each action
\cite{padhy2004unit}. The problem is very challenging to solve and
has been shown to be NP-Complete in general\footnote{We note that $\mathbf{fMCMP}$ in (3a)-(3d) is an instance of UC,
and that UC is NP-hard in general does not imply that the instance
$\mathbf{fMCMP}$ is also NP-hard.} \cite{guan2003optimization}.
Sophisticated
approaches proposed in the literature for solving UC include mixed integer programming
\cite{carrion2006computationally}, dynamic programming \cite{snyder1987dynamic},
and stochastic programming \cite{takriti1996stochastic}. There have also
been investigations on UC with high renewable energy penetration \cite{tuohy2009unit},  based on  over-provisioning approach.
After UC determines the on/off status of generators, ED computes their
output levels by solving a nonlinear optimization problem using various
heuristics without altering the on/off status of generators \cite{gaing2003particle}. There
is also recent interest in involving CHP generators in ED to satisfy
both electricity and heat demand simultaneously \cite{guo1996algorithm}.
See comprehensive surveys on UC in \cite{padhy2004unit}
and on ED in \cite{gaing2003particle}.

However, these studies assume the demand and energy supply (or their distributions) in the entire time horizon are
known \emph{a prior}. As such, the schemes are not readily applicable
to microgrid scenarios where accurate prediction of small-scale demand
and wind power generation is difficult to obtain due to limited management
resources and their unpredictable nature \cite{wang2008security}.

Several recent works have started to study energy generation strategies for microgrids. For
example, the authors in \cite{hawkes2009modelling} develop a linear
programming based cost minimization approach for UC in microgrids.
\cite{hernandez2005fuel} considers the fuel consumption rate minimization in microgrids and advocates to build ICT infrastructure in microgrids. \rev{\cite{respnose1,respnose2} discuss the energy scheduling problems in data centers, whose models are similar with ours.}
The difference between these works and ours is that they assume
the demand and energy supply are given
beforehand, and ours does not rely on input prediction.

Online optimization and algorithm design is an established approach in
optimizing the performance of various computer systems with minimum
knowledge of inputs \cite{borodin1998online,neely2010stochastic}. Recently, it has found
new applications in data centers \cite{gandhi2010optimality,buchbinder2011online,urgaonkar2011optimal,lin2011dynamic,lu2012simple,lu2012journal}. To the best of our knowledge, our work is the first to study the competitive online algorithms for energy generation in
microgrids with intermittent energy sources and co-generation. The
authors in \cite{narayanaswamy2012online} apply online convex
optimization framework \cite{zinkevich2003online} to design ED
algorithms for microgrids. The authors in \cite{huang2013adaptive} adopt Lyapunov optimization framework \cite{neely2010stochastic} to design electricity scheduling for microgrids, with consideration of energy storage. However, neither of the above considers the startup cost of the local generations.  In contrast, our work
jointly consider UC and ED in microgrids with
co-generation.
Furthermore, the above three works adopt different frameworks and provide online algorithms with different types of
performance guarantee.

\section{Conclusion}
In this paper, we study online
algorithms for the micro-grid generation scheduling problem with
intermittent renewable energy sources and co-generation, with the goal
of maximizing the cost-savings with local generation.
Based on insights from the structure of the offline optimal solution, we
propose a class of competitive online algorithms, called {\sf CHASE}
that track the offline optimal in an online fashion. Under typical settings, we show
that {\sf CHASE} achieves the best competitive ratio of all deterministic
online algorithms, and the ratio is no larger than a small constant 3.
We also extend our algorithms to intelligently leverage on {\em limited
prediction} of the future, such as near-term demand or wind forecast. By
extensive empirical evaluations using real-world traces, we show that our
proposed algorithms can achieve near offline-optimal performance.

There are a number of interesting directions for future work. First,
energy storage systems ({\em e.g.,} large-capacity battery) have been proposed
as an alternate approach to reduce energy generation cost (during peak hours) and
to integrate renewable energy sources. It would be interesting to study
whether our proposed microgrid control strategies can be combined with
energy storage systems to further reduce generation cost. However,
current energy storage systems can be very expensive. Hence,
it is critical to study whether the combined control strategy can
reduce sufficient cost with limited amount of energy storage.
Second, it remains an open issue whether {\sf CHASE} can achieve the best
competitive ratios in general cases ({\em e.g.,} in the slow-responding case).

\rev{\section{Acknowledgments}
The work described in this paper was partially supported by China National 973 projects (No.\ 2012CB315904 and 2013CB336700), several grants from the University Grants Committee of the Hong Kong Special Administrative Region, China (Area of Excellence Project No.\ AoE/E-02/08 and General Research Fund Project No.\ 411010 and 411011), two gift grants from Microsoft and Cisco, and Masdar Institute-MIT Collaborative Research Project No.\ 11CAMA1.
Xiaojun Lin would like to thank the Institute of Network Coding at The Chinese University of Hong Kong for the support of his sabbatical visit, during which some parts of the work were done.
}

\bibliographystyle{plain}
{
\bibliography{related}

\begin{thebibliography}{10}

\bibitem{CEUS}
California commercial end-use survey.
\newblock Internet:http://capabilities.itron.com/CeusWeb/.

\bibitem{greenenergy}
Green energy.
\newblock Internet:http://www.green-energy-uk.com/whatischp.html.

\bibitem{met}
The irish meteorological service online.
\newblock Internet:http://www.met.ie/forecasts/.

\bibitem{NREL}
National renewable energy laboratory.
\newblock Internet:http://wind.nrel.gov.

\bibitem{PG_E}
Pacific gas and electric company.
\newblock Internet:http://www.pge.com/nots/rates/tariffs/rateinfo.shtml.

\bibitem{tecogen}
Tecogen.
\newblock Internet:http://www.tecogen.com.

\bibitem{barnes2007real}
M.~Barnes, J.~Kondoh, H.~Asano, J.~Oyarzabal, G.~Ventakaramanan, R.~Lasseter,
  N.~Hatziargyriou, and T.~Green.
\newblock Real-world microgrids-an overview.
\newblock In {\em Proc. IEEE SoSE}, 2007.

\bibitem{borodin1998online}
A.~Borodin and R.~El-Yaniv.
\newblock {\em Online computation and competitive analysis}.
\newblock Cambridge University Press Cambridge, 1998.

\bibitem{buchbinder2011online}
N.~Buchbinder, N.~Jain, and I.~Menache.
\newblock Online job-migration for reducing the electricity bill in the cloud.
\newblock In {\em Proc. IFIP}, 2011.

\bibitem{carrion2006computationally}
M.~Carri{\'o}n and J.~Arroyo.
\newblock A computationally efficient mixed-integer linear formulation for the
  thermal unit commitment problem.
\newblock {\em IEEE Trans. Power Systems}, 21(3):1371--1378, 2006.

\bibitem{chiuelectric}
D.~Chiu, C.~Stewart, and B.~McManus.
\newblock Electric grid balancing through low-cost workload migration.
\newblock In {\em Proc. ACM Greenmetrics}, 2012.

\bibitem{chowdhury2003reliability}
A.~Chowdhury, S.~Agarwal, and D.~Koval.
\newblock Reliability modeling of distributed generation in conventional
  distribution systems planning and analysis.
\newblock {\em IEEE Trans. Industry Applications}, 39(5):1493--1498, 2003.

\bibitem{cullen2011dynamic}
Joseph~A Cullen.
\newblock Dynamic response to environmental regulation in the electricity
  industry.
\newblock {\em University of Arizona.(February 1, 2011)}, 2011.

\bibitem{gaing2003particle}
Z.~Gaing.
\newblock Particle swarm optimization to solving the economic dispatch
  considering the generator constraints.
\newblock {\em IEEE Trans. Power Systems}, 18(3):1187--1195, 2003.

\bibitem{gandhi2010optimality}
A.~Gandhi, V.~Gupta, M.~Harchol-Balter, and M.~Kozuch.
\newblock Optimality analysis of energy-performance trade-off for server farm
  management.
\newblock {\em Performance Evaluation}, 67(11):1155--1171, 2010.

\bibitem{guan2003optimization}
X.~Guan, Q.~Zhai, and A.~Papalexopoulos.
\newblock Optimization based methods for unit commitment: Lagrangian relaxation
  versus general mixed integer programming.
\newblock In {\em Proc. IEEE PES General Meeting}, 2003.

\bibitem{guo1996algorithm}
T.~Guo, M.~Henwood, and M.~van Ooijen.
\newblock An algorithm for combined heat and power economic dispatch.
\newblock {\em IEEE Trans. Power Systems}, 11(4):1778--1784, 1996.

\bibitem{hawkes2009modelling}
A.~Hawkes and M.~Leach.
\newblock Modelling high level system design and unit commitment for a
  microgrid.
\newblock {\em Applied energy}, 86(7):1253--1265, 2009.

\bibitem{hernandez2005fuel}
C.~Hernandez-Aramburo, T.~Green, and N.~Mugniot.
\newblock Fuel consumption minimization of a microgrid.
\newblock {\em IEEE Trans. Industry Applications}, 41(3):673--681, 2005.

\bibitem{windpredictionerror}
B.~Hodge and M.~Milligan.
\newblock Wind power forecasting error distributions over multiple timescales.
\newblock In {\em Proc. IEEE PES General Meeting}, 2011.

\bibitem{huang2013adaptive}
Y.~Huang, S.~Mao, and R.~Nelms.
\newblock Adaptive electricity scheduling in microgrids.
\newblock In {\em Proc. IEEE INFOCOM}, 2013.

\bibitem{kazarlis1996genetic}
S.~Kazarlis, A.~Bakirtzis, and V.~Petridis.
\newblock A genetic algorithm solution to the unit commitment problem.
\newblock {\em IEEE Trans. Power Systems}, 11(1):83--92, 1996.

\bibitem{RHC3}
W.~Kwon and A.~Pearson.
\newblock A modified quadratic cost problem and feedback stabilization of a
  linear system.
\newblock {\em IEEE Trans. Automatic Control}, 22(5):838 -- 842, 1977.

\bibitem{lasseter2004microgrid}
R.~Lasseter and P.~Paigi.
\newblock Microgrid: A conceptual solution.
\newblock In {\em Proc. IEEE Power Electronics Specialists Conference}, 2004.

\bibitem{respnose1}
K.~Le, R.~Bianchini, T.D. Nguyen, O.~Bilgir, and M.~Martonosi.
\newblock Capping the brown energy consumption of internet services at low
  cost.
\newblock In {\em Proc. IEEE IGCC}, 2010.

\bibitem{lin2011dynamic}
M.~Lin, A.~Wierman, L.~Andrew, and E.~Thereska.
\newblock Dynamic right-sizing for power-proportional data centers.
\newblock In {\em Proc. IEEE INFOCOM}, 2011.

\bibitem{respnose2}
Z.~Liu, Y.~Chen, C.~Bash, A.~Wierman, D.~Gmach, Z.~Wang, M.~Marwah, and
  C.~Hyser.
\newblock Renewable and cooling aware workload management for sustainable data
  centers.
\newblock In {\em Proc. ACM SIGMETRICS}, 2012.

\bibitem{lu2012simple}
T.~Lu and M.~Chen.
\newblock Simple and effective dynamic provisioning for power-proportional data
  centers.
\newblock In {\em Proc. CISS}, 2012.

\bibitem{lu2012journal}
T.~Lu, M.~Chen, and L.~Andrew.
\newblock Simple and effective dynamic provisioning for power-proportional data
  centers.
\newblock {\em IEEE Trans. Parallel Distrib. Systems}, 2012.

\bibitem{marnay2007microgrids}
C.~Marnay and R.~Firestone.
\newblock Microgrids: An emerging paradigm for meeting building electricity and
  heat requirements efficiently and with appropriate energy quality.
\newblock {\em European Council for an Energy Efficient Economy Summer Study},
  2007.

\bibitem{narayanaswamy2012online}
B.~Narayanaswamy, V.~Garg, and T.~Jayram.
\newblock Online optimization for the smart (micro) grid.
\newblock In {\em Proc. ACM International Conference on Future Energy Systems},
  2012.

\bibitem{neely2010stochastic}
M.~Neely.
\newblock Stochastic network optimization with application to communication and
  queueing systems.
\newblock {\em Synthesis Lectures on Communication Networks}, 3(1):1--211,
  2010.

\bibitem{DoEReport}
Department of~Energy.
\newblock The smart grid: An introduction.
\newblock Internet:http://www.oe.energy.gov/SmartGridIntroduction.htm.

\bibitem{padhy2004unit}
N.~Padhy.
\newblock Unit commitment-a bibliographical survey.
\newblock {\em IEEE Trans. Power Systems}, 19(2):1196--1205, 2004.

\bibitem{snyder1987dynamic}
W.~Snyder, H.~Powell, and J.~Rayburn.
\newblock Dynamic programming approach to unit commitment.
\newblock {\em IEEE Trans. Power Systems}, 2(2):339--348, 1987.

\bibitem{stadlerdistributed}
M.~Stadler, H.~Aki, R.~Lai, C.~Marnay, and A.~Siddiqui.
\newblock Distributed energy resources on-site optimization for commercial
  buildings with electric and thermal storage technologies.
\newblock {\em Lawrence Berkeley National Laboratory, LBNL-293E}, 2008.

\bibitem{takriti1996stochastic}
S.~Takriti, J.~Birge, and E.~Long.
\newblock A stochastic model for the unit commitment problem.
\newblock {\em IEEE Trans. Power Systems}, 11(3):1497--1508, 1996.

\bibitem{tuohy2009unit}
A.~Tuohy, P.~Meibom, E.~Denny, and M.~O'Malley.
\newblock Unit commitment for systems with significant wind penetration.
\newblock {\em IEEE Trans. Power Systems}, 24(2):592--601, 2009.

\bibitem{urgaonkar2011optimal}
R.~Urgaonkar, B.~Urgaonkar, M.~Neely, and A.~Sivasubramaniam.
\newblock Optimal power cost management using stored energy in data centers.
\newblock In {\em Proc. ACM SIGMETRICS}, 2011.

\bibitem{varaiya2011smart}
P.~Varaiya, F.~Wu, and J.~Bialek.
\newblock Smart operation of smart grid: Risk-limiting dispatch.
\newblock {\em Proc. the IEEE}, 99(1):40--57, 2011.

\bibitem{vuorinen2007planning}
A.~Vuorinen.
\newblock {\em Planning of optimal power systems}.
\newblock Ekoenergo Oy, 2007.

\bibitem{wang2008security}
J.~Wang, M.~Shahidehpour, and Z.~Li.
\newblock Security-constrained unit commitment with volatile wind power
  generation.
\newblock {\em IEEE Trans. Power Systems}, 23(3):1319--1327, 2008.

\bibitem{zinkevich2003online}
M.~Zinkevich.
\newblock Online convex programming and generalized infinitesimal gradient
  ascent.
\newblock In {\em Proc. Int. Conf. Mach. Learn.}, 2003.

\end{thebibliography}
}

\appendix
\section{Proof of Theorem 1} \label{subsec:OFA-optimal}

\setcounter{thm}{0}

\begin{thm}
$(y_{{\rm OFA}}(t))_{t=1}^{T}$ is an optimal solution for \textbf{SP}.
\end{thm}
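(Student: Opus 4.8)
The plan is to establish global optimality by a segment-by-segment exchange (hybrid) argument that reduces the claim to the per-segment optimality already granted by Lemma~\ref{lem:-is-theindep.with.bound.con}. Let $y^\ast$ be any optimal solution of $\mathbf{SP}$, and recall that $[1,T]$ is partitioned into critical segments $[T_i^c+1,T_{i+1}^c]$ (with the conventions $T_0^c=0$ and $T_{k+1}^c=T$). I would build a finite sequence of feasible hybrid solutions starting at $y^\ast$ and ending at $y_{{\rm OFA}}$, where at step $i$ I overwrite only the interior variables $(y(t))_{t=T_i^c+1}^{T_{i+1}^c}$ of the current hybrid by the corresponding values of $y_{{\rm OFA}}$, leaving every other coordinate fixed. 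Since $y_{{\rm OFA}}$ is $\{0,1\}$-valued, each hybrid is feasible for $\mathbf{SP}$, so it suffices to show that no step raises ${\rm Cost}$.

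The crux is to verify that one such overwrite changes ${\rm Cost}$ by exactly the segment-cost difference to which Lemma~\ref{lem:-is-theindep.with.bound.con} applies. I would do this by isolating which terms of ${\rm Cost}(y)=\sum_t \psi(\sigma(t),y(t))+\sum_t \beta[y(t)-y(t-1)]^+$ depend on the interior variables of segment $i$: the objective terms are precisely $\psi(\sigma(t),y(t))$ for $t\in[T_i^c+1,T_{i+1}^c]$, while a switching term $\beta[y(t)-y(t-1)]^+$ is affected exactly when $t\in[T_i^c+1,T_{i+1}^c+1]$ (this range covers both the left-boundary startup at $T_i^c+1$ and the right-boundary startup at $T_{i+1}^c+1$). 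This is exactly the collection of terms summed in ${\rm Cost}^{{\rm sg\mbox{-}}i}$, with the two unchanged boundary values $y(T_i^c)$ and $y(T_{i+1}^c+1)$ of the current hybrid playing the roles of ${\rm y}_i^l$ and ${\rm y}_i^r$. Hence the change in the total cost equals ${\rm Cost}^{{\rm sg\mbox{-}}i}(y_{{\rm OFA}})-{\rm Cost}^{{\rm sg\mbox{-}}i}(y^\ast)$ evaluated under those boundary conditions, and by Lemma~\ref{lem:-is-theindep.with.bound.con}, which asserts optimality of $y_{{\rm OFA}}$ on the segment for \emph{arbitrary} $({\rm y}_i^l,{\rm y}_i^r)$, this difference is $\le 0$.

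Telescoping over $i=0,1,\ldots,k$ then gives ${\rm Cost}(y_{{\rm OFA}})\le{\rm Cost}(y^\ast)$, and since $y^\ast$ is optimal, $y_{{\rm OFA}}$ is optimal as well. I expect the main obstacle to be precisely the coupling across segment boundaries through the startup cost, which is the reason $\sum_i {\rm Cost}^{{\rm sg\mbox{-}}i}\ne{\rm Cost}$ in general: a boundary startup such as the one at $T_{i+1}^c+1$ can be modified both while processing segment $i$ and again while processing segment $i+1$. The hybrid argument avoids this double-counting because at each step I track the true total cost and invoke the segment result only as a one-sided inequality on the incremental change, never as an additive decomposition of the whole objective. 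The remaining work is routine bookkeeping for the two extreme segments: the left boundary of the first (type-start) segment is pinned by the initial condition $y(0)=0$, and the last (type-end) segment carries no right-boundary startup since ${\rm Cost}$ sums switching terms only up to $t=T$; both fit the same template under the obvious conventions.
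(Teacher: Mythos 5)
Your proposal is correct and follows essentially the same route as the paper: the paper's proof also builds a chain of hybrid solutions $y_0=y^\ast, y_1,\ldots,y_{k+1}=y_{\rm OFA}$ (replacing segment interiors left to right), identifies the cost change at each step with ${\rm Cost}^{{\rm sg\mbox{-}}i}$ under boundary conditions ${\rm y}_i^l=y_{\rm OFA}(T_i^c)$, ${\rm y}_i^r=y^\ast(T_{i+1}^c+1)$, and invokes Lemma~\ref{lem:-is-theindep.with.bound.con} exactly as you do to conclude by telescoping. Your explicit bookkeeping of which $\psi$ and startup terms depend on a segment's interior variables, and your remark on why this avoids double-counting the boundary startups, make precise a step the paper leaves implicit, but the argument is the same.
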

\begin{proof}
Suppose $(y^{\ast}(t))_{t=1}^{T}$ is an optimal solution for \textbf{SP}.
For completeness, we let $y^{\ast}(0)=0$ and $y^{\ast}(T+1)=0$.
We define a sequence $(y_{0}(t))_{t=1}^{T},(y_{1}(t))_{t=1}^{T}$,
$...,(y_{k+1}(t))_{t=1}^{T}$ as follows:
\begin{enumerate}
\item $y_{0}(t)=y^{\ast}(t)$ for all $t\in[1,T]$.
\item For all $t\in[1,T]$ and $i=1,...,k$
\begin{equation}
y_{i}(t)=\begin{cases}
y_{{\rm OFA}}(t), & \mbox{if\ }t\in[1,T_{i}^{c}]\\
y^{\ast}(t), & \mbox{otherwise\ }
\end{cases}
\end{equation}

\item $y_{k+1}(t)=y_{{\rm OFA}}(t)$ for all $t\in[1,T]$.
\end{enumerate}
We next set the boundary conditions for each ${\bf SP}^{{\rm sg\mbox{-}}i}$
by
\begin{equation}
{\rm y}_{i}^{l}=y_{{\rm OFA}}(T_{i}^{c})\mbox{\ and\ }{\rm y}_{i}^{r}=y^{\ast}(T_{i+1}^{c}+1)
\end{equation}
It follows that
\begin{equation}
{\rm Cost}(y_{i})-{\rm Cost}(y_{i+1})={\rm Cost}^{{\rm sg\mbox{-}}i}(y^{\ast})-{\rm Cost}^{{\rm sg\mbox{-}}i}(y_{{\rm OFA}})
\end{equation}
By Lemma~\ref{lem:-is-theindep.with.bound.con}, we obtain ${\rm Cost}^{{\rm sg\mbox{-}}i}(y^{\ast})\ge{\rm Cost}^{{\rm sg\mbox{-}}i}(y_{{\rm OFA}})$
for all $i$. Hence,
\begin{equation}
\hspace{-20pt}{\rm Cost}(y^{\ast})={\rm Cost}(y_{0})\ge...\ge{\rm Cost}(y_{k+1})={\rm Cost}(y_{{\rm OFA}})
\end{equation}
\end{proof}

\setcounter{lem}{1}
\begin{lem}
\label{lem:-is-theindep.with.bound.con} $(y_{{\rm OFA}}(t))_{t=T_{i}^{c}+1}^{T_{i+1}^{c}}$
is an optimal solution for

${\bf SP}^{{\rm sg\mbox{-}}i}({\rm y}_{i}^{l},{\rm y}_{i}^{r})$,
despite any boundary conditions $({\rm y}_{i}^{l},{\rm y}_{i}^{r})$.
\end{lem}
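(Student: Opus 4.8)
The plan is to reduce $\mathrm{Cost}^{\mathrm{sg}\mbox{-}i}$ to a form in which the role of $\delta(t)$ is explicit, and then to prove optimality of the constant solution $y_{\rm OFA}$ by an exchange (``run-flipping'') argument. First, since $y(t)\in\{0,1\}$ and, by \eqref{eqn:delta-definition}, $\psi(\sigma(t),1)=\psi(\sigma(t),0)-\delta(t)$, we have $\psi(\sigma(t),y(t))=\psi(\sigma(t),0)-\delta(t)\,y(t)$. Substituting into the segment cost gives
\[
\mathrm{Cost}^{\mathrm{sg}\mbox{-}i}(y)=\textstyle\sum_{t}\psi(\sigma(t),0)\;-\;\sum_{t}\delta(t)\,y(t)\;+\;\sum_{t}\beta\,[y(t)-y(t-1)]^{+},
\]
where the first sum is a constant independent of the decision. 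Hence minimizing $\mathrm{Cost}^{\mathrm{sg}\mbox{-}i}$ over the interior $\{y(t)\}_{t=T_i^c+1}^{T_{i+1}^c}$, with fixed boundaries $y(T_i^c)=\mathrm{y}_i^l$ and $y(T_{i+1}^c+1)=\mathrm{y}_i^r$, is equivalent to maximizing the reward $\sum_t\delta(t)y(t)$ minus the startup cost.

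Next I would record two facts about $\Delta$ on, say, a type-1 segment (so $\Delta(T_i^c)=-\beta$, $\Delta(T_{i+1}^c)=0$, and $-\beta<\Delta(\tau)<0$ for all interior $\tau$). (i) Since neither clamp in \eqref{eqn:Delta-definition} binds at strictly interior points, the recursion there is simply $\Delta(t)=\Delta(t-1)+\delta(t)$; therefore for any $[a,b]$ strictly inside the segment $\sum_{t=a}^{b}\delta(t)=\Delta(b)-\Delta(a-1)\in(-\beta,\beta)$. (ii) Summing $\delta$ over the whole segment, and noting the upper cap can bind only at the terminal slot $T_{i+1}^c$, yields $\sum_{t=T_i^c+1}^{T_{i+1}^c}\delta(t)\ge\Delta(T_{i+1}^c)-\Delta(T_i^c)=\beta$. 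Fact (ii) is the precise statement of the paper's intuition that a type-1 segment carries a cumulative benefit of at least $\beta$.

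The core step is the exchange argument. Given any feasible $y$, I repeatedly select a maximal run $[a,b]$ of $0$'s inside the segment and flip it to $1$, each time verifying that the cost does not increase; since each flip eliminates at least one $0$ and leaves the other maximal $0$-runs (and their surrounding $1$'s or boundaries) intact, the process is well defined and terminates at $y_{\rm OFA}\equiv 1$, establishing its optimality. For a single flip, the reward term contributes $-\sum_{t=a}^{b}\delta(t)$ while the startup term changes depending on the run's position. If $[a,b]$ is internal (both neighbors are $1$), the startup term falls by exactly $\beta$, and by fact (i) $-\sum_{t=a}^{b}\delta(t)=\Delta(a-1)-\Delta(b)<\beta$, so the net change is negative. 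If $[a,b]$ abuts the left or right boundary I additionally use $[\cdot]^{+}\le\beta$ together with the clamped endpoint values of $\Delta$; if $[a,b]$ is the entire segment, the decisive bound is fact (ii): the reward gain $\sum\delta\ge\beta$ offsets the single startup $\beta$ that flipping may incur when $\mathrm{y}_i^l=\mathrm{y}_i^r=0$. The type-2 case is completely symmetric ($\sum\delta\le-\beta$, flip maximal $1$-runs to $0$, conclude $y_{\rm OFA}\equiv0$), and the type-start and type-end segments are boundary segments whose partial swing of $\Delta$ has magnitude strictly below $\beta$, for which the same accounting gives $y_{\rm OFA}\equiv0$ regardless of the boundary values.

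I expect the main obstacle to be the bookkeeping of the startup cost across the two fixed boundaries $\mathrm{y}_i^l,\mathrm{y}_i^r$. The reward term splits cleanly over runs, but $\sum_t\beta[y(t)-y(t-1)]^{+}$ does not, and the effect of flipping a boundary-touching or whole-segment run genuinely depends on the boundary values; this is exactly where the global estimate $\sum_{\mathrm{seg}}\delta(t)\ge\beta$ of fact (ii) must absorb a possibly ``wasted'' startup cost $\beta$. Checking that the flips may be performed in any order (so the classification of each remaining run is unaffected by earlier flips) and handling the boundary-touching cases are the delicate points; once facts (i) and (ii) are in place the remaining estimates are routine.
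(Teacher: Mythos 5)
Your overall architecture---linearizing the stage cost via $\psi(\sigma(t),y(t))=\psi(\sigma(t),0)-\delta(t)\,y(t)$ and then flipping maximal runs---is essentially the paper's own proof in a different dress: the paper likewise decomposes the set where a candidate $\bar y$ disagrees with $y_{\rm OFA}$ into maximal runs $[\tau_j^b,\tau_j^e]$, and bounds each run's contribution (a $\delta$-sum plus startup bookkeeping with indicators for boundary-touching runs) below by zero. Your per-run case analysis (internal, boundary-touching, whole-segment) matches that bookkeeping, and your fact (ii) plays the role of the paper's whole-segment estimate in its Case 1.

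There is, however, a genuine flaw in your supporting facts. You assert that ``neither clamp binds at strictly interior points,'' so that $\Delta(t)=\Delta(t-1)+\delta(t)$ holds with equality inside the segment. This misreads Definition~\ref{def:critical-seg}: the Interior condition applies only to the transition interval $(T_i^c,\tilde{T}_i^c)$, whereas the critical segment $[T_i^c+1,T_{i+1}^c]$ continues past $\tilde{T}_i^c$ through a plateau on which $\Delta$ may hit $0$ repeatedly; there the upper clamp in \eqref{eqn:Delta-definition} does bind, and your equality $\sum_{t=a}^{b}\delta(t)=\Delta(b)-\Delta(a-1)$ fails (the left side can exceed the right by an arbitrary amount). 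Likewise, in fact (ii) the cap does not bind ``only at the terminal slot''; it already binds at $\tilde{T}_i^c<T_{i+1}^c$. What is true---and is exactly the paper's Lemma~\ref{lem:Delta-function}---is the one-sided inequality: on a type-1 segment the $-\beta$ clamp never binds (if $\Delta$ touched $-\beta$ after $\tilde{T}_i^c$, an additional critical point would exist before $T_{i+1}^c$), hence $\Delta(t)\le\Delta(t-1)+\delta(t)$ and therefore $\sum_{t=a}^{b}\delta(t)\ge\Delta(b)-\Delta(a-1)$; the reverse inequality holds on type-2 and type-start segments. Fortunately, every estimate in your flip argument uses only this direction ($\sum\delta\ge-\beta$ over any interior run, $\sum\delta\ge\beta$ over the whole type-1 segment), so your proof survives once facts (i)/(ii) are weakened to these inequalities and justified by the never-binding lower clamp rather than by a no-clamping claim. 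One further caveat, which you share with the paper's terse ``type-2 applies similarly'': asserting that the type-end segment carries a swing of magnitude below $\beta$ silently assumes the horizon does not end with a completed upward transition followed by a plateau.
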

\begin{proof}
Consider given any boundary condition $({\rm y}_{i}^{l},{\rm y}_{i}^{r})$
for ${\bf SP}^{{\rm sg\mbox{-}}i}$. Suppose $(\bar{y}(t))_{t=T_{i}^{c}+1}^{T_{i+1}^{c}}$
is an optimal solution for ${\bf SP}^{{\rm sg\mbox{-}}i}$ w.r.t.
$({\rm y}_{i}^{l},{\rm y}_{i}^{r})$, and $\bar{y}\ne y_{{\rm OFA}}$.
We aim to show ${\rm Cost}^{{\rm sg\mbox{-}}i}(\bar{y})\ge{\rm Cost}^{{\rm sg\mbox{-}}i}(y_{{\rm OFA}})$,
by considering the types of critical segment.

(\textbf{type-1}): First, suppose that critical segment $[T_{i}^{c}+1,T_{i+1}^{c}]$
is type-1. Hence, $y_{{\rm OFA}}(t)=1$ for all $t\in[T_{i}^{c}+1,T_{i+1}^{c}]$.
Hence,
\begin{equation}
{\rm Cost}^{{\rm sg\mbox{-}}i}(y_{{\rm OFA}})=\beta\cdot(1-{\rm y}_{i}^{l})+\sum_{t=T_{i}^{c}+1}^{T_{i+1}^{c}}\psi\big(\sigma(t),1\big)
\end{equation}

\textbf{Case 1}:
Suppose $\bar{y}(t)=0$ for all $t\in[T_{i}^{c}+1,T_{i+1}^{c}]$.
Hence,
\begin{equation}
{\rm Cost}^{{\rm sg\mbox{-}}i}(\bar{y})=\beta\cdot{\rm y}_{i}^{r}+\sum_{t=T_{i}^{c}+1}^{T_{i+1}^{c}}\psi\big(\sigma(t),0\big)
\end{equation}
We obtain:
\begin{eqnarray}
 &  & {\rm Cost}^{{\rm sg\mbox{-}}i}(\bar{y})-{\rm Cost}^{{\rm sg\mbox{-}}i}(y_{{\rm OFA}})\notag\\
 & = & \beta\cdot{\rm y}_{i}^{r}+\sum_{t=T_{i}^{c}+1}^{T_{i+1}^{c}}\delta(t)-\beta(1-{\rm y}_{i}^{l})\label{eqn:delta-def1}\\
 & \geq & \beta\cdot{\rm y}_{i}^{r}+\Delta(T_{i+1}^{c})-\Delta(T_{i}^{c})-\beta(1-{\rm y}_{i}^{l})\label{eqn:Delta-func1}\\
 & = & \beta\cdot{\rm y}_{i}^{r}+\beta-\beta+\beta{\rm y}_{i}^{l}\geq0
\end{eqnarray}
where Eqn.~(\ref{eqn:delta-def1}) follows from the definition of
$\delta(t)$ (see Eqn.~(\ref{eqn:delta-definition})) and Eqn.~(\ref{eqn:Delta-func1})
follows from Lemma~\ref{lem:Delta-function}. This completes the
proof for Case 1.

\textbf{Case 2}: Suppose $\bar{y}(t)=1$ for some $t\in[T_{i}^{c}+1,T_{i+1}^{c}]$.
This implies that ${\rm Cost}^{{\rm sg\mbox{-}}i}(\bar{y})$ has to
involve the startup cost $\beta$.

Next, we denote the minimal set of segments within $[T_{i}^{c}+1,T_{i+1}^{c}]$
by
\[
[\tau_{1}^{b},\tau_{1}^{e}],[\tau_{2}^{b},\tau_{2}^{e}],[\tau_{3}^{b},\tau_{3}^{e}],...,[\tau_{p}^{b},\tau_{p}^{e}]
\]
such that $\bar{y}[t]\ne y_{{\rm OFA}}(t)$ for all $t\in[\tau_{j}^{b},\tau_{j}^{e}]$,
$j\in\{1,...,p\}$, where $\tau_{j}^{e}<\tau_{j+1}^{b}$.

Since $\bar{y}\ne y_{{\rm OFA}}$, then there exists at least one
$t\in[T_{i}^{c}+1,T_{i+1}^{c}]$ such that $\bar{y}(t)=0$. Hence,
$\tau_{1}^{b}$ is well-defined.

Note that upon exiting each segment $[\tau_{j}^{b},\tau_{j}^{e}]$,
$\bar{y}$ switches from 0 to 1. Hence, it incurs the startup cost
$\beta$. However, when $\tau_{p}^{e}=T_{i+1}^{c}$ and ${\rm y}_{i}^{r}=0$,
the startup cost is not for critical segment $[T_{i}^{c}+1,T_{i+1}^{c}]$.

\rev{Therefore, we obtain:
\begin{eqnarray}
\hspace{-20pt} &  & {\rm Cost}^{{\rm sg\mbox{-}}i}(\bar{y})-{\rm Cost}^{{\rm sg\mbox{-}}i}(y_{{\rm OFA}})\label{eq:delta-ALL}\\
 & = & \sum_{t=\tau_{1}^{b}}^{\tau_{1}^{e}}\delta(t)+\beta\cdot\boldsymbol{1}[\tau_{1}^{b}\ne T_{i}^{c}+1]\label{eqn:delta-def21}\\
 &  & +\sum_{j=2}^{p-1}\Big(\sum_{t=\tau_{j}^{b}}^{\tau_{j}^{e}}\delta(t)+\beta\Big)\label{eq:delta-def22}\\
 &  & +\sum_{t=\tau_{p}^{b}}^{\tau_{p}^{e}}\delta(t)+\beta{\rm y}_{i}^{r}\cdot\boldsymbol{1}[\tau_{p}^{e}=T_{i+1}^{c}]+\beta\cdot\boldsymbol{1}[\tau_{p}^{e}\ne T_{i+1}^{c}].\label{eq:delta-def23}
\end{eqnarray}
Now we prove the terms (\ref{eqn:delta-def21})(\ref{eq:delta-def22})(\ref{eq:delta-def23})
are all larger than $0.$

\textbf{First}, we prove (\ref{eqn:delta-def21})$\geq0.$ If $\tau_{1}^{b}=T_{i}^{c}+1,$
then
\begin{eqnarray*}
\sum_{t=\tau_{1}^{b}}^{\tau_{1}^{e}}\delta(t)+\beta\cdot\boldsymbol{1}[\tau_{1}^{b}\ne T_{i}^{c}+1] & = & \sum_{t=T_{i}^{c}+1}^{\tau_{1}^{e}}\delta(t)\\
 & \geq & \Delta(\tau_{1}^{e})-\Delta(T_{i}^{c})\\
 & \geq & \Delta(\tau_{1}^{e})+\beta\\
 & \geq & 0.
\end{eqnarray*}

Else if $\tau_{1}^{b}\ne T_{i}^{c}+1,$

\begin{eqnarray*}
\sum_{t=\tau_{1}^{b}}^{\tau_{1}^{e}}\delta(t)+\beta\cdot\boldsymbol{1}[\tau_{1}^{b}\ne T_{i}^{c}+1] & = & \sum_{t=\tau_{1}^{b}}^{\tau_{1}^{e}}\delta(t)+\beta\\
 & \geq & \Delta(\tau_{1}^{e})-\Delta(\tau_{1}^{b}-1)+\beta\\
 & \geq & \Delta(\tau_{1}^{e})+\beta\\
 & \geq & 0.
\end{eqnarray*}

Thus we proved (\ref{eqn:delta-def21})$\geq0.$

\textbf{Second}, we prove (\ref{eq:delta-def22})$\geq0.$
\begin{eqnarray*}
\sum_{t=\tau_{j}^{b}}^{\tau_{j}^{e}}\delta(t)+\beta & \ge & \Delta(\tau_{j}^{e})-\Delta(\tau_{j}^{b}-1)+\beta\\
 & \ge & \Delta(\tau_{j}^{e})+\beta\\
 & \ge & 0.
\end{eqnarray*}

Thus we proved (\ref{eq:delta-def22})$\geq0.$

\textbf{Last}, we prove (\ref{eq:delta-def23})$\geq0.$ If $\tau_{p}^{e}=T_{i+1}^{c},$
then
\begin{eqnarray*}
 &  & \sum_{t=\tau_{p}^{b}}^{\tau_{p}^{e}}\delta(t)+\beta{\rm y}_{i}^{r}\cdot\boldsymbol{1}[\tau_{p}^{e}=T_{i+1}^{c}]+\beta\cdot\boldsymbol{1}[\tau_{p}^{e}\ne T_{i+1}^{c}]\\
 & \ge & \sum_{t=\tau_{p}^{b}}^{T_{i+1}^{c}}\delta(t)\\
 & \geq & \Delta(T_{i+1}^{c})-\Delta(\tau_{p}^{b}-1)\\
 & = & -\Delta(\tau_{p}^{b}-1)\\
 & \ge & 0.
\end{eqnarray*}

Else if $\tau_{p}^{e}\ne T_{i+1}^{c}$, then
\begin{eqnarray*}
 &  & \sum_{t=\tau_{p}^{b}}^{\tau_{p}^{e}}\delta(t)+\beta{\rm y}_{i}^{r}\cdot\boldsymbol{1}[\tau_{p}^{e}=T_{i+1}^{c}]+\beta\cdot\boldsymbol{1}[\tau_{p}^{e}\ne T_{i+1}^{c}]\\
 & = & \sum_{t=\tau_{p}^{b}}^{\tau_{p}^{e}}\delta(t)+\beta\\
 & \ge & \Delta(\tau_{p}^{e})-\Delta(\tau_{p}^{b}-1)+\beta\\
 & \ge & 0.
\end{eqnarray*}
Thus we proved (\ref{eq:delta-def23})$\geq0.$

Overall, we prove (\ref{eq:delta-ALL})$\ge0.$}

(\textbf{type-2}): Next, suppose that critical segment $[T_{i}^{c}+1,T_{i+1}^{c}]$
is type-2. Hence, $y_{{\rm OFA}}(t)=0$ for all $t\in[T_{i}^{c}+1,T_{i+1}^{c}]$.
Note that the above argument applies similarly to type-2 setting,
when we consider (Case 1): $\bar{y}(t)=1$ for all $t\in[T_{i}^{c}+1,T_{i+1}^{c}]$
and (Case 2): $\bar{y}(t)=0$ for some $t\in[T_{i}^{c}+1,T_{i+1}^{c}]$.

(\textbf{type-start} and \textbf{type-end}): We note that the argument
of type-2 applies similarly to type-start and type-end settings.

Therefore, we complete the proof by showing ${\rm Cost}^{{\rm sg\mbox{-}}i}(\bar{y})\ge{\rm Cost}^{{\rm sg\mbox{-}}i}(y_{{\rm OFA}})$
for all $i\in[0,k]$. \end{proof}
\begin{lem}
\label{lem:Delta-function} Suppose $\tau_{1},\tau_{2}\in[T_{i}^{c}+1,T_{i+1}^{c}]$
and $\tau_{1}<\tau_{2}$. Then,
\begin{equation}
\Delta(\tau_{2})-\Delta(\tau_{1})\begin{cases}
\le\sum_{t=\tau_{1}+1}^{\tau_{2}}\delta(t), & \mbox{if\ }[T_{i}^{c}+1,T_{i+1}^{c}]\mbox{\ is type-1}\\
\ge\sum_{t=\tau_{1}+1}^{\tau_{2}}\delta(t), & \mbox{if\ }[T_{i}^{c}+1,T_{i+1}^{c}]\mbox{\ is type-2}
\end{cases}
\end{equation}

\end{lem}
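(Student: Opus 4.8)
The plan is to derive a one-step version of the inequality directly from the recursion and then sum it telescopically. The starting observation is that the update rule \eqref{eqn:Delta-definition}, $\Delta(t)=\min\{0,\max\{-\beta,\Delta(t-1)+\delta(t)\}\}$, is exactly the clamping of $\Delta(t-1)+\delta(t)$ into the interval $[-\beta,0]$, and each of the two one-sided caps yields a one-sided per-slot inequality. Concretely, if the lower cap is inactive at slot $t$ (i.e.\ $\Delta(t)\neq-\beta$), then necessarily $\max\{-\beta,\Delta(t-1)+\delta(t)\}=\Delta(t-1)+\delta(t)$, so $\Delta(t)=\min\{0,\Delta(t-1)+\delta(t)\}\le\Delta(t-1)+\delta(t)$, giving $\Delta(t)-\Delta(t-1)\le\delta(t)$. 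Symmetrically, if the upper cap is inactive (i.e.\ $\Delta(t)\neq 0$), then $\Delta(t)=\max\{-\beta,\Delta(t-1)+\delta(t)\}\ge\Delta(t-1)+\delta(t)$, giving $\Delta(t)-\Delta(t-1)\ge\delta(t)$.

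Next I would use the defining properties of critical segments (Definition~\ref{def:critical-seg}) to certify that, within the segment in question, the relevant cap is never active. For a type-1 segment $[T_i^c+1,T_{i+1}^c]$ we have $\Delta(T_i^c)=-\beta$, $\Delta(T_{i+1}^c)=0$, and the Interior condition gives $-\beta<\Delta(\tau)<0$ strictly for $T_i^c<\tau<T_{i+1}^c$. Hence $\Delta(t)\neq-\beta$ for every $t\in[T_i^c+1,T_{i+1}^c]$, since the only slot at which $\Delta$ touches $-\beta$ is $T_i^c$, which lies outside the segment. The lower cap is therefore inactive at each such slot, so the per-slot upper bound $\Delta(t)-\Delta(t-1)\le\delta(t)$ holds. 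For a type-2 segment the roles are reversed: $\Delta(t)\neq 0$ throughout $[T_i^c+1,T_{i+1}^c]$ (the only slot where $\Delta$ touches $0$ is $T_i^c$), so the upper cap is inactive and the per-slot lower bound $\Delta(t)-\Delta(t-1)\ge\delta(t)$ holds.

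Finally I would sum the per-slot inequality over $t=\tau_1+1,\dots,\tau_2$. Since $\tau_1,\tau_2\in[T_i^c+1,T_{i+1}^c]$ with $\tau_1<\tau_2$, every index in this range satisfies $t\ge T_i^c+2$ and hence lies in $[T_i^c+1,T_{i+1}^c]$, so the slot-$t$ inequality from the previous paragraph applies. The left-hand side telescopes to $\Delta(\tau_2)-\Delta(\tau_1)$, yielding $\Delta(\tau_2)-\Delta(\tau_1)\le\sum_{t=\tau_1+1}^{\tau_2}\delta(t)$ in the type-1 case and the reverse inequality in the type-2 case, exactly as claimed.

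The calculation itself is routine once the per-slot inequalities are established, so I do not expect a genuine obstacle there. The only point requiring care is the bookkeeping at the segment boundaries: I must invoke the Interior condition precisely to rule out the relevant extreme value ($-\beta$ for type-1, $0$ for type-2) being attained at any interior slot, and separately note that it is attained at $T_i^c$, which is excluded from the summation range. This is what guarantees that the appropriate cap is genuinely inactive at every slot entering the telescoping sum.
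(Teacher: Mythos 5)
Your proof is correct and takes essentially the same route as the paper's: both arguments observe that inside a type-1 (resp.\ type-2) critical segment the lower (resp.\ upper) clamp in \eqref{eqn:Delta-definition} is never active, deduce the per-slot inequality $\Delta(t)-\Delta(t-1)\le\delta(t)$ (resp.\ $\ge\delta(t)$), and then iterate/telescope over $t=\tau_1+1,\dots,\tau_2$. One small imprecision to fix: the Interior condition of Definition~\ref{def:critical-seg} only guarantees $-\beta<\Delta(\tau)<0$ for $T_i^c<\tau<\tilde{T}_i^c$, not all the way to $T_{i+1}^c$ (in a type-1 segment $\Delta$ may touch $0$ at and after $\tilde{T}_i^c$), but the one-sided fact your argument actually needs --- $\Delta(t)\neq-\beta$ throughout a type-1 segment, $\Delta(t)\neq 0$ throughout a type-2 segment --- does hold on all of $[T_i^c+1,T_{i+1}^c]$ by the structure of critical segments, and is precisely what the paper's proof asserts as well.
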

\begin{proof} We recall that
\begin{equation}
\Delta(t)\triangleq\min\Big\{0,\max\{-\beta,\Delta(t-1)+\delta(t)\}\Big\}
\end{equation}
First, we consider $[T_{i}^{c}+1,T_{i+1}^{c}]$ as type-1. This implies
that only $\Delta(T_{i}^{c})=-\beta$, whereas $\Delta(t)>-\beta$
for $t\in[T_{i}^{c}+1,T_{i+1}^{c}]$. Hence,
\begin{equation}
\Delta(t)=\min\{0,\Delta(t-1)+\delta(t)\}\le\Delta(t-1)+\delta(t)
\end{equation}
Iteratively, we obtain
\begin{equation}
\Delta(\tau_{2})\le\Delta(\tau_{1})+\sum_{t=\tau_{1}+1}^{\tau_{2}}\delta(t)
\end{equation}

When $[T_{i}^{c}+1,T_{i+1}^{c}]$ is type-2, we proceed with a similar
proof, except
\begin{equation}
\Delta(t)=\max\{-\beta,\Delta(t-1)+\delta(t)\}\ge\Delta(t-1)+\delta(t)
\end{equation}
Therefore,
\begin{equation}
\Delta(\tau_{2})\ge\Delta(\tau_{1})+\sum_{t=\tau_{1}+1}^{\tau_{2}}\delta(t)
\end{equation}
\end{proof}

\section{Proof of Theorem 2} \label{subsec:CHASE-competitive-ratio}
\setcounter{thm}{1}
\begin{thm}
The competitive ratio of ${\sf CHASE_{s}}$
\begin{equation}
{\sf CR}({\sf CHASE_{s}})\le3-\frac{2(L\cdot c_{o}+c_{m})}{L\cdot (P_{\max}+c_{g}\cdot\eta)}
\end{equation}

\end{thm}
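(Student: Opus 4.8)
The plan is to compare $\textsf{CHASE}_s$ against $y_{\mathrm{OFA}}$ slot-by-slot over the critical segments of Definition~\ref{def:critical-seg}, exploiting that $\textsf{CHASE}_s$ is a \emph{lagged} copy of $y_{\mathrm{OFA}}$. Recall from Theorem~\ref{thm:OFA-optimal} that $y_{\mathrm{OFA}}$ toggles the generator at the \emph{start} of each type-1/type-2 segment, whereas $\textsf{CHASE}_s$ toggles only at the \emph{end}, i.e.\ when $\Delta(t)$ actually reaches $0$ or $-\beta$. First I would make this lag precise: on a type-1 segment $[T_i^c+1,T_{i+1}^c]$ we have $y_{\mathrm{OFA}}=1$ throughout, while $y_{\textsf{CHASE}_s}=0$ on the interior $[T_i^c+1,T_{i+1}^c-1]$ and equals $1$ only at the last slot; symmetrically on type-2 segments. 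A key consequence is that both solutions perform exactly one turn-on per type-1 segment, so their total startup costs coincide and cancel upon subtraction.

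The crux is a one-slot lemma: for every input $\sigma(t)$, the optimal per-slot costs of Lemma~\ref{lem:fMCMP} satisfy $\psi(\sigma(t),1)\ge\alpha\,\psi(\sigma(t),0)$, with $\alpha$ as in \eqref{eq:alpha_def}. I would prove this by the three-case dispatch of Lemma~\ref{lem:fMCMP}. The extremal case is when the generator exactly meets both demands ($u=a=L$, $h=\eta L$, $s=0$, $p=P_{\max}$), where $\psi(\sigma(t),1)=Lc_o+c_m$ and $\psi(\sigma(t),0)=(P_{\max}+\eta c_g)L$, so the ratio equals $\alpha$ exactly; the residual-heat case ($h>\eta a$) and under-loaded case ($a<L$) only increase the ratio, using the identity $\alpha(P_{\max}+\eta c_g)=c_o+c_m/L$, the bound $a\le L$, and the model assumption $c_o\ge\eta c_g$. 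This step I expect to be the main obstacle, since the electricity/heat coupling makes the inequality non-obvious and the case analysis must cover all regimes of Lemma~\ref{lem:fMCMP}.

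Next I would derive two bounds. For the offline cost, since $\sum_{t\in\text{type-1}}\delta(t)=\Delta(T_{i+1}^c)-\Delta(T_i^c)\ge\beta$ and $\sum_{\text{type-1}}\psi(\cdot,0)=\sum_{\text{type-1}}\psi(\cdot,1)+\sum_{\text{type-1}}\delta$, combining with the one-slot lemma gives $\sum_{\text{type-1}}\psi(\cdot,1)\ge\frac{\alpha}{1-\alpha}\beta$; adding the startup $\beta$ that $y_{\mathrm{OFA}}$ pays on entering each type-1 segment yields a cost of at least $\frac{\beta}{1-\alpha}$ per type-1 segment, hence $\mathrm{Cost}(y_{\mathrm{OFA}})\ge\frac{\beta}{1-\alpha}\,m$, where $m$ is the number of type-1 segments. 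For the online excess, the cancellation of startup totals gives $\mathrm{Cost}(y_{\textsf{CHASE}_s})-\mathrm{Cost}(y_{\mathrm{OFA}})=\sum_t[\psi(\sigma(t),y_{\textsf{CHASE}_s}(t))-\psi(\sigma(t),y_{\mathrm{OFA}}(t))]$; on a type-1 segment this is the interior sum $\sum_{t=T_i^c+1}^{T_{i+1}^c-1}\delta(t)=\Delta(T_{i+1}^c-1)+\beta<\beta$ (as $\Delta<0$ in the interior, where the caps in \eqref{eqn:Delta-definition} are inactive), and symmetrically the type-2 contribution is $-\Delta(T_{i+2}^c-1)<\beta$, while type-start and type-end segments contribute $0$ because both solutions keep the generator off there.

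Finally I would combine the two bounds by a pairing that avoids any parity mismatch between the counts of type-1 and type-2 segments: charge to each type-1 segment both its own excess ($<\beta$) and the excess of the type-2 segment immediately following it ($<\beta$), for a total of $<2\beta$ per type-1 segment. Since the horizon begins with type-start (so $\Delta$ first rises from $-\beta$ to $0$, making the first bounded segment type-1), every type-2 segment is preceded by a type-1 segment and is thus paired, while an unpaired trailing type-1 contributes only its own $<\beta$. Hence the total excess is $<2\beta m\le 2(1-\alpha)\,\mathrm{Cost}(y_{\mathrm{OFA}})$, and therefore $\mathrm{Cost}(y_{\textsf{CHASE}_s})\le(3-2\alpha)\,\mathrm{Cost}(y_{\mathrm{OFA}})$, which is exactly the claimed $3-\frac{2(Lc_o+c_m)}{L(P_{\max}+c_g\eta)}$.
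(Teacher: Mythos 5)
Your overall skeleton is the same as the paper's own proof: compare costs over the critical segments of Definition~\ref{def:critical-seg}, observe that the startup costs cancel (one turn-on each per type-1 segment), bound the online excess by $\beta$ per transition segment, pair each type-2 segment with the type-1 segment preceding it (the paper encodes this as $m_1=m_2+m_3$), and lower-bound the offline cost by $\beta/(1-\alpha)$ per type-1 segment. Your per-type-1 lower bound is numerically identical to the paper's Lemma~\ref{lem:type-1.cost.bound}, and your one-slot inequality $\psi(\sigma(t),1)\ge\alpha\,\psi(\sigma(t),0)$ is a clean substitute for the paper's Lemma~\ref{lem:segment-1 minimum cost} combined with its steepest-descent bound on the segment length; it is true, although the deferred case analysis (including $a>L$ and the regime $p(t)<c_o<p(t)+\eta c_g$) is the bulk of that work. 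However, there are two genuine errors in your excess accounting.

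First, your structural claim about the lag is wrong: ${\sf CHASE}_s$ turns on at the \emph{auxiliary point} $\tilde{T}_i^c$ of Definition~\ref{def:critical-seg}, i.e.\ the first time $\Delta(t)$ reaches $0$, not at the last slot $T_{i+1}^c$ of the type-1 segment. These generally differ: after $\tilde{T}_i^c$, the function $\Delta$ may wander in $(-\beta,0]$, touching $0$ repeatedly, before the final descent whose last touch of $0$ defines $T_{i+1}^c$. Consequently your key display $\sum_{t=T_i^c+1}^{T_{i+1}^c-1}\delta(t)=\Delta(T_{i+1}^c-1)+\beta<\beta$ fails: the upper cap in \eqref{eqn:Delta-definition} can bind strictly inside a type-1 segment, so the sum does not telescope, and $\Delta<0$ is false on part of the interior. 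Worse, if ${\sf CHASE}_s$ really stayed off until $T_{i+1}^c$, its excess would include the entire plateau, where $\delta(t)$ can be as large as $L(P_{\max}+\eta c_g)-Lc_o-c_m$ in every slot for arbitrarily many slots, so no bound proportional to $\beta$ could hold; this is precisely why the algorithm must commit at the first touch of $0$. The repair is to sum only over $[T_i^c+1,\tilde{T}_i^c-1]$, where the interior condition of Definition~\ref{def:critical-seg} does guarantee the caps are inactive and $\Delta<0$; that is exactly the paper's computation. Second, your claim that the type-end segment contributes zero excess is false: when the last critical point satisfies $\Delta(T_k^c)=0$, ${\sf CHASE}_s$ enters type-end with the generator on and keeps it on, while $y_{\rm OFA}=0$ there, contributing up to $\beta$. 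Your pairing still closes if this excess is charged to the preceding type-1 segment (which in that case has no following type-2), which is in effect what the paper's identity $m_1=m_2+m_3$ does, but as written your accounting simply omits it.
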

\begin{proof} We denote the outcome of ${\sf CHASE_{s}}$ by $(y_{{\rm CHASE_{s}}}(t))_{t=1}^{T}$.
We aim to show that
\begin{equation}
\max_{a,p,h}\frac{{\rm Cost}(y_{{\rm CHASE_{s}}})}{{\rm Cost}(y_{{\rm OFA}})}\le3-\frac{2(L\cdot c_{o}+c_{m})}{L\cdot (P_{\max}+c_{g}\cdot\eta)}
\end{equation}
First, we denote the set of indexes of critical segments for type-$j$
by ${\cal T}_{j}\subseteq\{0,..,k\}$. Note that we also refer to
type-start and type-end by type-0 and type-3 respectively.

Define the sub-cost for type-$j$ by
\begin{eqnarray}
{\rm Cost}^{{\rm ty\mbox{-}}j}(y) & \triangleq & \sum_{i\in{\cal T}_{j}}\sum_{t=T_{i}^{c}+1}^{T_{i+1}^{c}}\psi\big(\sigma(t),y(t)\big)\notag\\
 &  & +\beta\cdot[y(t)-y(t-1)]^{+}
\end{eqnarray}
Hence, ${\rm Cost}(y)=\sum_{j=0}^{3}{\rm Cost}^{{\rm ty\mbox{-}}j}(y)$.
We prove by comparing the sub-cost for each type-$j$.

(\textbf{Type-0}): Note that both $y_{{\rm OFA}}(t)=y_{{\rm CHASE_{s}}}(t)=0$
for all $t\in[1,T]$. Hence,
\begin{equation}
{\rm Cost}^{{\rm ty\mbox{-}}0}(y_{{\rm OFA}})={\rm Cost}^{{\rm ty\mbox{-}}0}(y_{{\rm CHASE_{s}}})
\end{equation}

(\textbf{Type-1}): Based on the definition of critical segment (Definition~\ref{def:critical-seg}),
we recall that there is an auxiliary point $\tilde{T_{i}^{c}}$, such
that either \big($\Delta(T_{i}^{c})=0$ and $\Delta(\tilde{T_{i}^{c}})=-\beta$\big)
or \big($\Delta(T_{i}^{c})=-\beta$ and $\Delta(\tilde{T_{i}^{c}})=0$\big).

We also recall that $y_{{\rm OFA}}(t)=1$, whereas
\begin{equation}
y_{{\rm CHASE_{s}}}(t)=\begin{cases}
0, & \mbox{if\ }t\in[T_{i}^{c}+1,\tilde{T_{i}^{c}}]\\
1, & \mbox{if\ }t\in[\tilde{T_{i}^{c}},T_{i+1}^{c}]
\end{cases}
\end{equation}
We consider a particular type-1 critical segment $[T_{i}^{c}+1,T_{i+1}^{c}]$.
We note that by the definition of type-1, $y_{{\rm OFA}}(T_{i}^{c})=y_{{\rm CHASE_{s}}}(T_{i}^{c})=0$.
$y_{{\rm OFA}}(t)$ and $y_{{\rm CHASE_{s}}}(t)$ switch from 0 to 1 within
$[T_{i}^{c}+1,T_{i+1}^{c}]$, both incurring startup cost $\beta$.
The cost difference between $y_{{\rm CHASE_{s}}}$
and $y_{{\rm OFA}}$ within $[T_{i}^{c}+1,T_{i+1}^{c}]$ is
\begin{eqnarray}
 &  & \sum_{t=T_{i}^{c}+1}^{\tilde{T}_{i}^{c}-1}\big(\psi\big(\sigma(t),0\big)-\psi\big(\sigma(t),1\big)\big) +\beta-\beta\\
 & = & \sum_{t=T_{i}^{c}+1}^{\tilde{T}_{i}^{c}-1}\big(\psi\big(\sigma(t),0\big)-\psi\big(\sigma(t),1\big)\big) \notag\\
 & = & \sum_{t=T_{i}^{c}+1}^{\tilde{T}_{i}^{c}-1}\delta(t)=\Delta(\tilde{T}_{i}^{c}-1)-\Delta(T_{i}^{c})\\
 & \le & \Delta(\tilde{T}_{i}^{c})-\Delta(T_{i}^{c})=\beta
\end{eqnarray}
Let the number of type-$j$ critical segments be $m_{j}\triangleq|{\cal T}_{j}|$.

Then, we obtain
\begin{equation}
{\rm Cost}^{{\rm ty\mbox{-}}1}(y_{{\rm CHASE_{s}}})\le{\rm Cost}^{{\rm ty\mbox{-}}1}(y_{{\rm OFA}})+m_{1}\cdot\beta
\end{equation}

(\textbf{Type-2}) and (\textbf{Type-3}): We consider a particular
type-2 (or type-3) critical segment $[T_{i}^{c}+1,T_{i+1}^{c}]$,
we derive similarly for $j=2$ or 3 as
\begin{equation}
{\rm Cost}^{{\rm ty\mbox{-}}j}(y_{{\rm CHASE_{s}}})\le{\rm Cost}^{{\rm ty\mbox{-}}j}(y_{{\rm OFA}})+m_{j}\cdot\beta
\end{equation}

Furthermore, we note $m_{1}=m_{2}+m_{3}$, because it takes equal
numbers of critical segments for increasing $\Delta(\cdot)$ from
$-\beta$ to 0 and for decreasing from 0 to $-\beta$.

Overall, we obtain
\begin{equation}
\hspace{-25pt}\begin{array}{@{}r@{}l@{\ }l}
{\displaystyle \frac{{\rm Cost}(y_{{\rm CHASE_{s}}})}{{\rm Cost}(y_{{\rm OFA}})}} & = & {\displaystyle \frac{\sum_{j=0}^{3}{\rm Cost}^{{\rm ty\mbox{-}}j}(y_{{\rm CHASE_{s}}})}{\sum_{j=0}^{3}{\rm Cost}^{{\rm ty\mbox{-}}j}(y_{{\rm OFA}})}}\\
 & \leq & {\displaystyle \frac{(m_{1}+m_{2}+m_{3})\beta+\sum_{j=0}^{3}{\rm Cost}^{{\rm ty\mbox{-}}j}(y_{{\rm OFA}})}{\sum_{j=0}^{3}{\rm Cost}^{{\rm ty\mbox{-}}j}(y_{{\rm OFA}})}}\\
 & \leq & {\displaystyle 1+\frac{2m_{1}\beta}{\sum_{j=0}^{3}{\rm Cost}^{{\rm ty\mbox{-}}j}(y_{{\rm OFA}})}}\\
 & \leq & 1+\begin{cases}
0 & \mbox{if\ }m_{1}=0,\\
{\displaystyle \frac{2m_{1}\beta}{{\rm Cost}^{{\rm ty\mbox{-}}1}(y_{{\rm OFA}})}} & \mbox{otherwise}
\end{cases}\label{eq:CHASE_{s}ratioeq1}
\end{array}
\end{equation}
By Lemma~\ref{lem:type-1.cost.bound} and simplications, we obtain
\begin{equation}
\frac{{\rm Cost}(y_{{\rm CHASE_{s}}})}{{\rm Cost}(y_{{\rm OFA}})}\le3-\frac{2(L\cdot c_{o}+c_{m})}{L\cdot\big(P_{\max}+\eta \cdot c_{g}\big)}
\end{equation}
\end{proof}

\begin{lem}
\label{lem:type-1.cost.bound}
\begin{equation}
{\rm Cost}^{{\rm ty\mbox{-}}1}(y_{{\rm OFA}})\ge m_{1}\Big(\beta+\beta\frac{L \cdot c_{o}+c_{m}}{L\big(P_{\max}+\eta \cdot c_{g}-c_{o}\big)-c_{m}}\Big)
\end{equation}
\end{lem}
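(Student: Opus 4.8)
The plan is to reduce the lemma to a single per-slot inequality relating $\psi\big(\sigma(t),1\big)$ and $\delta(t)$, and then aggregate it over each type-1 segment using the fact that $\Delta(\cdot)$ rises by exactly $\beta$ across such a segment. First I would record the exact form of the left-hand side. On a type-1 critical segment $[T_i^c+1,T_{i+1}^c]$ the offline solution keeps $y_{{\rm OFA}}(t)=1$ throughout and switches on exactly once, at the left boundary, because the preceding segment is of type-start or type-2 and hence satisfies $y_{{\rm OFA}}(T_i^c)=0$. Therefore
\[
{\rm Cost}^{{\rm ty\mbox{-}}1}(y_{{\rm OFA}})=m_1\beta+\sum_{i\in{\cal T}_1}\sum_{t=T_i^c+1}^{T_{i+1}^c}\psi\big(\sigma(t),1\big),
\]
so it suffices to show, for each type-1 segment, that $\sum_{t}\psi\big(\sigma(t),1\big)\ge\beta(Lc_o+c_m)/\big(L(P_{\max}+\eta c_g-c_o)-c_m\big)$.

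Next I would quantify the cumulative gain of a type-1 segment. Telescoping the defining recursion \eqref{eqn:Delta-definition} of $\Delta(\cdot)$ across the segment (exactly as in the proof of Lemma~\ref{lem:Delta-function}) and using the boundary values $\Delta(T_i^c)=-\beta$, $\Delta(T_{i+1}^c)=0$ yields $\sum_{t=T_i^c+1}^{T_{i+1}^c}\delta(t)\ge\Delta(T_{i+1}^c)-\Delta(T_i^c)=\beta$. The heart of the argument is then the per-slot bound
\[
\delta(t)\le C'\cdot\psi\big(\sigma(t),1\big),\qquad C'\triangleq\frac{L(P_{\max}+\eta c_g-c_o)-c_m}{Lc_o+c_m},
\]
where $C'>0$ by the standing assumption $c_o+c_m/L<P_{\max}+\eta c_g$. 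Granting this, summing over the segment gives $\beta\le\sum_t\delta(t)\le C'\sum_t\psi\big(\sigma(t),1\big)$, hence $\sum_t\psi\big(\sigma(t),1\big)\ge\beta/C'$; summing over the $m_1$ type-1 segments and adding back the $m_1\beta$ of startup costs yields precisely the claimed bound.

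The main work, and the step I expect to be the crux, is the per-slot inequality. Writing $u=u(t)$ for the optimal dispatch of Lemma~\ref{lem:fMCMP} and noting $\psi\big(\sigma(t),0\big)=p(t)a(t)+c_g h(t)$ and $\psi\big(\sigma(t),1\big)=c_o u+p(t)[a(t)-u]^++c_g[h(t)-\eta u]^++c_m$, I would apply $[\,\cdot\,]^+\ge(\cdot)$ to drop the positive parts, obtaining the clean bound $\delta(t)\le u\big(p(t)+\eta c_g-c_o\big)-c_m$, while simultaneously $\psi\big(\sigma(t),1\big)\ge c_o u+c_m$. Since $C'>0$, it then remains to verify the purely algebraic inequality $u\big(p(t)+\eta c_g-c_o\big)-c_m\le C'(c_o u+c_m)$ for all $u\in[0,L]$ and $p(t)\le P_{\max}$; as the difference of the two sides is affine in $u$, its minimum over $[0,L]$ occurs at an endpoint, so it suffices to check $u=0$ (where it reduces to $c_m(1+C')\ge0$) and $u=L$ (where, substituting the definition of $C'$, the difference collapses to $L(P_{\max}-p(t))\ge0$). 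Chaining these gives $\delta(t)\le u\big(p(t)+\eta c_g-c_o\big)-c_m\le C'(c_o u+c_m)\le C'\psi\big(\sigma(t),1\big)$. The only subtlety is the degenerate regime $p(t)+\eta c_g\le c_o$, in which Lemma~\ref{lem:fMCMP} forces $u=0$ and the bound trivially becomes $\delta(t)\le-c_m<0\le C'\psi\big(\sigma(t),1\big)$; I would dispose of this case first. Note that this approach sidesteps the full three-way case analysis of Lemma~\ref{lem:fMCMP}, since the two estimates above hold for the optimal $u$ regardless of which case is active.
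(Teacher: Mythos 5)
Your proof is correct and reaches exactly the stated constant, but it does so by a genuinely different core estimate than the paper's. Both arguments share the same outer skeleton: write ${\rm Cost}^{{\rm ty\mbox{-}}1}(y_{{\rm OFA}})=m_1\beta+\sum_{i\in{\cal T}_1}\sum_{t}\psi\big(\sigma(t),1\big)$ and use Lemma~\ref{lem:Delta-function} to get $\sum_{t}\delta(t)\ge\beta$ over each type-1 segment. From there, however, the paper combines \emph{two} separate ingredients: (i) Lemma~\ref{lem:segment-1 minimum cost}, the per-slot ratio bound $\big(\psi(\sigma(\tau),1)-c_m\big)\big/\big(\psi(\sigma(\tau),0)-\psi(\sigma(\tau),1)+c_m\big)\ge c_o/(P_{\max}+\eta c_g-c_o)$, proved by walking through the three dispatch cases of Lemma~\ref{lem:fMCMP}; and (ii) a lower bound on the segment length, $T_{i+1}^c-T_i^c\ge\beta/\big(L(P_{\max}+\eta c_g-c_o)-c_m\big)$, from the steepest possible ascent of $\Delta$. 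Neither ingredient alone would suffice: the ratio constant $c_o/(P_{\max}+\eta c_g-c_o)$ is strictly weaker than the target, and it is the extra $(T_{i+1}^c-T_i^c)\,c_m$ term recovered through the length bound that closes the gap. You instead prove a single per-slot inequality $\delta(t)\le C'\,\psi\big(\sigma(t),1\big)$ with the final constant $C'=\big(L(P_{\max}+\eta c_g-c_o)-c_m\big)/(Lc_o+c_m)$ already built in, and your proof of it is uniform across the dispatch cases: dropping the positive parts gives $\delta(t)\le u\big(p(t)+\eta c_g-c_o\big)-c_m$ together with $\psi\big(\sigma(t),1\big)\ge c_o u+c_m$, and the remaining inequality is affine in $u\in[0,L]$, so it is enough to check $u=0$ (trivial) and $u=L$ (where it collapses to $L(P_{\max}-p(t))\ge0$); summing over the segment and invoking $\sum_t\delta(t)\ge\beta$ then finishes in one step. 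Your route is shorter and more elementary---no case analysis, no steepest-ascent argument---and it exposes where the constant is tight (namely $u=L$, $p(t)=P_{\max}$). What the paper's two-ingredient decomposition buys in exchange is reusability: the look-ahead analysis (Lemma~\ref{lem:w lower bound type-1} behind the proof of Theorem~4) reuses precisely Lemma~\ref{lem:segment-1 minimum cost} and the segment-length argument with the window $w$ inserted, refinements that your one-shot inequality does not directly provide.
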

\begin{proof}
Considering Type-1 critical segment, we have
\begin{equation} \label{eq:type-1costeq1}
{\rm Cost}^{{\rm ty\mbox{-}}1}(y_{{\rm OFA}})\\
 = {\displaystyle \sum_{i\in{\cal T}_{1}}\Big(\beta+\sum_{t=T_{i}^{c}+1}^{T_{i+1}^{c}}\psi\big(\sigma(t),1\big)\Big)}
\end{equation}
On the other hand, we obtain
\begin{equation}
\hspace{-20pt}\begin{array}{@{}r@{}l@{\ }l}
 &  & {\displaystyle \sum_{t=T_{i}^{c}+1}^{T_{i+1}^{c}}\Big(\psi\big(\sigma(t),1\big)-c_{m}\Big)}\\
 & = & \displaystyle\frac{\sum_{t=T_{i}^{c}+1}^{T_{i+1}^{c}}\Big(\psi\big(\sigma(t),1\big)-c_{m}\Big)}{\sum_{t=T_{i}^{c}+1}^{T_{i+1}^{c}}\Big(\psi\big(\sigma(t),0\big)-\psi\big(\sigma(t),1\big)+c_{m}\Big)}\\
 &  & \times{\displaystyle \sum_{t=T_{i}^{c}+1}^{T_{i+1}^{c}}\Big(\psi\big(\sigma(t),0\big)-\psi\big(\sigma(t),1\big)+c_{m}\Big)}\\
 & \geq & \displaystyle\min_{\tau\in[T_{i}^{c}+1,T_{i+1}^{c}]}\frac{\psi\big(\sigma(\tau),1\big)-c_{m}}{\psi\big(\sigma(\tau),0\big)-\psi\big(\sigma(\tau),1\big)+c_{m}}\\
 &  & {\displaystyle \times\sum_{t=T_{i}^{c}+1}^{T_{i+1}^{c}}\Big(\psi\big(\sigma(t),0\big)-\psi\big(\sigma(t),1\big)+c_{m}\Big)\label{eq:type-1costeq2}}
\end{array}
\end{equation}
whereas Lemma~\ref{lem:segment-1 minimum cost} shows
\begin{equation}
\min_{\tau\in[T_{i}^{c}+1,T_{i+1}^{c}]}\frac{\psi\big(\sigma(\tau),1\big)-c_{m}}{\psi\big(\sigma(\tau),0\big)-\psi\big(\sigma(\tau),1\big)+c_{m}}\geq\frac{c_{o}}{P_{\max}+\eta \cdot c_{g}-c_{o}}
\end{equation}
and
\begin{equation}
\begin{array}{@{}r@{}l@{\ }l}
 &  & {\displaystyle \sum_{t=T_{i}^{c}+1}^{T_{i+1}^{c}}\Big(\psi\big(\sigma(t),0\big)-\psi\big(\sigma(t),1\big)+c_{m}\Big)}\\
 & = & {\displaystyle \sum_{t=T_{i}^{c}+1}^{T_{i+1}^{c}}\Big(\psi\big(\sigma(t),0\big)-\psi\big(\sigma(t),1\big)\Big)+(T_{i+1}^{c}-T_{i}^{c})c_{m}}\\
 & \geq & {\displaystyle \beta+(T_{i+1}^{c}-T_{i}^{c})c_{m}\label{eq:type-1costeq3}}
\end{array}
\end{equation}
Furthermore, we note that $T_{i+1}^{c}-T_{i}^{c}$ is lower bounded
by the steepest descend when $p(t)=P_{\max}$, $a(t)=L$ and $h(t)=\eta L$
for all $t\in[T_{i}^{c}+1,T_{i+1}^{c}]$:
\begin{equation}
T_{i+1}^{c}-T_{i}^{c}\geq\frac{\beta}{L\cdot\big(P_{\max}+\eta \cdot c_{g}-c_{o}\big)-c_{m}}
\end{equation}

Together, we obtain:
\begin{eqnarray*}
 &  & \sum_{t=T_{i}^{c}+1}^{T_{i+1}^{c}}\psi\big(\sigma(t),1\big)\\
 & = & \sum_{t=T_{i}^{c}+1}^{T_{i+1}^{c}}\big(\psi\big(\sigma(t),1\big)-c_{m}\big)+\sum_{t=T_{i}^{c}+1}^{T_{i+1}^{c}}c_{m}\\
 & \geq & \frac{c_{o}\big(\beta+(T_{i+1}^{c}-T_{i}^{c})c_{m}\big)}{P_{\max}+\eta \cdot c_{g}-c_{o}}+(T_{i+1}^{c}-T_{i}^{c})c_{m}\\
 & \geq & \beta\frac{L \cdot c_{o}+c_{m}}{L\cdot\big(P_{\max}+\eta \cdot c_{g}-c_{o}\big)-c_{m}}
\end{eqnarray*}

Therefore,
\begin{equation}
\begin{array}{@{}r@{}l@{\ }l}
 &  & {\rm Cost}^{{\rm ty\mbox{-}}1}(y_{{\rm OFA}})\\
 & = & \sum_{i\in{\cal T}_{1}}\Big(\beta+\sum_{t=T_{i}^{c}+1}^{T_{i+1}^{c}}\big(\psi\big(\sigma(t),1\big)\big)\Big)\\
 & \geq & {\displaystyle m_{1}\Big(\beta+\beta\frac{L \cdot c_{o}+c_{m}}{L\cdot\big(P_{\max}+\eta \cdot c_{g}-c_{o}\big)-c_{m}}\Big)}
\end{array}
\end{equation}
\end{proof}

\begin{lem} \label{lem:segment-1 minimum cost}
\begin{equation} \hspace{-20pt}
\min_{\tau\in[T_{i}^{c}+1,T_{i+1}^{c}]}\frac{\psi\big(\sigma(\tau),1\big)-c_{m}}{\psi\big(\sigma(\tau),0\big)-\psi\big(\sigma(\tau),1\big)+c_{m}}\geq\frac{c_{o}}{P_{\max}+\eta \cdot c_{g}-c_{o}}
\end{equation}
\end{lem}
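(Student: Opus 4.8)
The plan is to first rewrite the quantity being minimized in a form that exposes its meaning, and then recognize the desired bound as a weak-duality statement for a small linear program. Write $\gamma := P_{\max}+\eta c_g$ and fix a slot $\tau$ with input $\sigma=(a,h,p)$. By Lemma~\ref{lem:fMCMP}, setting $y=0$ forces $u=0$, $v=a$, $s=h$, so $\psi(\sigma,0)=pa+c_g h$; and for $y=1$ the quantity $N:=\psi(\sigma,1)-c_m=c_o u+p v+c_g s$ is exactly the optimal value of the per-slot operational program
\begin{align*}
N=\min\ & c_o u + p v + c_g s\\
\mbox{s.t.}\ & u+v\ge a,\ \ \eta u+s\ge h,\ \ u\le L,\ \ u,v,s\ge 0.
\end{align*}
Since $(u,v,s)=(0,a,h)$ is feasible with cost $pa+c_g h$, we always have $0\le N\le\psi(\sigma,0)$, so the denominator $\psi(\sigma,0)-\psi(\sigma,1)+c_m=\psi(\sigma,0)-N$ is nonnegative. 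A direct cross-multiplication (valid whenever this denominator is strictly positive) shows that the per-slot ratio bound $\frac{N}{\psi(\sigma,0)-N}\ge\frac{c_o}{\gamma-c_o}$ is equivalent to the cleaner inequality
\[
N\cdot\gamma \ \ge\ c_o\cdot\psi(\sigma,0).
\]

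The core step is then to establish $N\gamma\ge c_o\,\psi(\sigma,0)$ for every admissible $\sigma$ by LP weak duality. The dual of the program above, with multipliers $\lambda_a,\lambda_h,\mu\ge0$ attached to the three inequality constraints, maximizes $\lambda_a a+\lambda_h h-\mu L$ subject to $\lambda_a+\eta\lambda_h-\mu\le c_o$, $\lambda_a\le p$, and $\lambda_h\le c_g$. I would propose the explicit dual point $\lambda_a=\frac{c_o}{\gamma}p$, $\lambda_h=\frac{c_o}{\gamma}c_g$, $\mu=0$. Feasibility reduces to three scalar checks: $\lambda_a\le p$ and $\lambda_h\le c_g$ both follow from $c_o\le\gamma$, which is the standing assumption $c_o+c_m/L<P_{\max}+\eta c_g$; and $\lambda_a+\eta\lambda_h=\frac{c_o}{\gamma}(p+\eta c_g)\le c_o$ is equivalent to $p\le P_{\max}$, which holds by hypothesis on the price. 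Weak duality then yields $N\ge\lambda_a a+\lambda_h h=\frac{c_o}{\gamma}(pa+c_g h)=\frac{c_o}{\gamma}\psi(\sigma,0)$, which is exactly $N\gamma\ge c_o\,\psi(\sigma,0)$.

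Finally, I would tie up two small loose ends. The cross-multiplication needs $\psi(\sigma,0)-N>0$; in the degenerate case $\psi(\sigma,0)=N$ (local generation yields no operational saving) the per-slot ratio is $+\infty$ and imposes no constraint on the minimum over $\tau$, so the stated bound holds vacuously for that slot. Also, the capacity constraint $u\le L$ is harmless: it is kept in the primal and absorbed by the multiplier $\mu$, which I set to zero, so the bound is insensitive to whether demand exceeds $L$. Taking the minimum over $\tau$ in the critical segment then gives the lemma. I expect the only genuinely delicate point to be the reduction to $N\gamma\ge c_o\,\psi(\sigma,0)$ together with the recognition that this is precisely dual feasibility of $\frac{c_o}{\gamma}(p,c_g)$; once that is seen, both nontrivial dual constraints collapse to the two hypotheses $c_o\le P_{\max}+\eta c_g$ and $p\le P_{\max}$, and everything else is routine. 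A purely algebraic case analysis over the three price regimes of Lemma~\ref{lem:fMCMP} is an alternative, but it is longer and less transparent.
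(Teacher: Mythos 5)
Your proof is correct, and it takes a genuinely different route from the paper's. The paper establishes the per-slot bound by direct case analysis over the three price regimes of Lemma~\ref{lem:fMCMP}: it substitutes the closed-form optimizers $(u,v,s)$ into the ratio and bounds each regime separately --- the regime $c_o\ge p(\tau)+\eta c_g$ yields a ratio of $+\infty$, and the other two regimes each reduce, after algebra, to $c_o/(p(\tau)+\eta c_g-c_o)$, which is then bounded using $p(\tau)\le P_{\max}$. You instead recognize $\psi(\sigma,1)-c_m$ as the optimal value of the per-slot dispatch LP, reduce the claim to the linear inequality $\big(\psi(\sigma,1)-c_m\big)\,\gamma\ge c_o\,\psi(\sigma,0)$ with $\gamma:=P_{\max}+\eta c_g$, and certify it by weak duality with the single dual point $(\lambda_a,\lambda_h,\mu)=\big(\tfrac{c_o}{\gamma}p,\,\tfrac{c_o}{\gamma}c_g,\,0\big)$, whose feasibility collapses to exactly $p\le P_{\max}$ and $c_o\le\gamma$. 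Your route is shorter, uniform across the price regimes, and handles the capacity constraint $u\le L$ rigorously: the paper's Cases 2 and 3 as written take $u=a(t)$ and $u=\min\{h(t)/\eta,a(t)\}$, silently dropping the $\min\{\cdot,L\cdot y(t)\}$ cap from Lemma~\ref{lem:fMCMP}, which is only innocuous because (as your choice $\mu=0$ makes precise) the cap can never hurt the bound. What the paper's elementary computation buys in exchange is independence from LP duality and visibility of the tight case (price at $P_{\max}$ with heat demand matching the co-generated heat). Both arguments rely on the same convention at degenerate slots where the denominator vanishes (the ratio is read as $+\infty$), a point you flag explicitly.
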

\begin{proof}
We expand $\psi(\sigma(t),y(t))$ for each case:

{\em Case 1}: $c_{o}\geq p(t)+\eta \cdot c_{g}$.
When $y(t)=1$, by Lemma \ref{lem:fMCMP}
\begin{equation}
u(t)=0, v(t)=a(t), s(t)=h(t)
\end{equation}
Thus,
\begin{eqnarray}
\psi\big(\sigma(\tau),1\big)&=&p(t)a(t)+c_{g}s(t)+c_{m} \\
\psi\big(\sigma(\tau),0\big)&=&p(t)a(t)+c_{g}s(t)
\end{eqnarray}
Therefore,
\begin{equation}
\frac{\psi\big(\sigma(\tau),1\big)-c_{m}}{\psi\big(\sigma(\tau),0\big)-\psi\big(\sigma(\tau),1\big)+c_{m}} = \infty
\end{equation}

{\em Case 2}: $c_{o}\leq p(t)$.
When $y(t)=1$, by Lemma \ref{lem:fMCMP}
\begin{equation}
u(t)=a(t), v(t)=0, s(t)=\big[h(t)-\eta\cdot a(t)\big]^{+}
\end{equation}
Thus,
\begin{eqnarray}
\psi\big(\sigma(\tau),1\big)&=&c_{o}a(t)+c_{g}\big[h(t)-\eta\cdot a(t)\big]^{+}+c_{m} \\
\psi\big(\sigma(\tau),0\big)&=&p(t)a(t)+c_{g}h(t)
\end{eqnarray}
Therefore,
\begin{eqnarray*}
 &  & \frac{\psi\big(\sigma(\tau),1\big)-c_{m}}{\psi\big(\sigma(\tau),0\big)-\psi\big(\sigma(\tau),1\big)+c_{m}}\\
 & = & \frac{c_{o}a(t)+c_{g}\big[h(t)-\eta\cdot a(t)\big]^{+}}{\big(p(t)-c_{o}\big)a(t)+c_{g}\big(h(t)-\big[h(t)-\eta\cdot a(t)\big]^{+}\big)}\\
 & \geq & \frac{c_{o}a(t)}{\big(p(t)-c_{o}\big)a(t)+c_{g}\min\{h(t),\eta \cdot a(t)\}}\\
 & \geq & \frac{c_{o}a(t)}{\big(p(t)-c_{o}\big)a(t)+\eta \cdot c_{g}a(t)}\\
 & = & \frac{c_{o}}{p(t)-c_{o}+\eta \cdot c_{g}}
\end{eqnarray*}

{\em Case 3}: $p(t)+\eta \cdot c_{g}>c_{o}>p(t)$.
When $y(t)=1$, by Lemma \ref{lem:fMCMP}
\begin{eqnarray}
u(t)&=&\min\big\{\frac{h(t)}{\eta},a(t)\big\}, \\
v(t)&=&\max\big\{0,a(t)-\frac{h(t)}{\eta}\big\}, \\
s(t)&=&\max\big\{0,h(t)-\eta \cdot a(t)\big\}
\end{eqnarray}
Thus,
\begin{eqnarray}
\psi\big(\sigma(\tau),1\big)&=&c_{o}\min\big\{\frac{h(t)}{\eta},a(t)\big\}+p(t)\big[a(t)-\frac{h(t)}{\eta}\big]^{+} \notag\\
&&+c_{g}\big[h(t)-\eta \cdot a(t)\big]^{+}+c_{m} \\
\psi\big(\sigma(\tau),0\big)&=&p(t)a(t)+c_{g}h(t)
\end{eqnarray}
Therefore,
\begin{equation} \hspace{-10pt}
\begin{array}{@{}r@{\ }r@{\ }l@{}}
& &\psi\big(\sigma(\tau),0\big)-\psi\big(\sigma(\tau),1\big)+c_{m}\\
&=&p(t)\min\big\{a(t),\frac{h(t)}{\eta}\big\}+c_{g}\min\big\{h(t),\eta \cdot a(t)\big\} \notag -c_{o}\min\big\{\frac{h(t)}{\eta},a(t)\big\}
\end{array}
\end{equation}
and
\begin{eqnarray}
 &  & \frac{\psi\big(\sigma(\tau),1\big)-c_{m}}{\psi\big(\sigma(\tau),0\big)-\psi\big(\sigma(\tau),1\big)+c_{m}} \notag \\
 & = & \begin{cases} \displaystyle
\frac{c_{o}a(t)+c_{g}\big(h(t)-\eta \cdot a(t)\big)}{p(t)a(t)+c_{g}\eta \cdot a(t)-c_{o}a(t)} & \mbox{if }h(t)-\eta \cdot a(t)>0 \\
\displaystyle
\frac{c_{o}\frac{h(t)}{\eta}+p(t)\big(a(t)-\frac{h(t)}{\eta}\big)}{p(t)\frac{h(t)}{\eta}+c_{g}h(t)-c_{o}\frac{h(t)}{\eta}} & \mbox{otherwise}
\end{cases} \notag\\
 & \geq & \frac{c_{o}}{p(t)+\eta \cdot c_{g}-c_{o}}
\end{eqnarray}

Combing all the cases, we obtain
\[
\min_{\tau\in[T_{i}^{c}+1,T_{i+1}^{c}]}\frac{\psi\big(\sigma(\tau),1\big)-c_{m}}{\psi\big(\sigma(\tau),0\big)-\psi\big(\sigma(\tau),1\big)+c_{m}}\geq\frac{c_{o}}{P_{\max}+\eta \cdot c_{g}-c_{o}}
\]
\end{proof}

\section{Proof of Theorem 3} \label{subsec:lower-bound-online}

\begin{lem} Denote an online algorithm by ${\cal A}$, and an input
sequence by $\sigma=(a(t),h(t),p(t))_{t=1}^{T}$. More specifically,
we write ${\rm Cost}(y_{{\cal A}},u_{{\cal A}},v_{{\cal A}};\sigma)$
and ${\rm Cost}(y_{{\cal A}};\sigma)$, when it explicitly refers
to input sequence by $\sigma$. Define
\begin{equation}
\underline{c}({\bf fMCMP})\triangleq\min_{{\cal A}}\max_{\sigma}\frac{{\rm Cost}(y_{{\cal A}},u_{{\cal A}},v_{{\cal A}},s_{{\cal A}};\sigma)}{{\rm Cost}(y_{{\rm OFA}},u_{{\rm OFA}},v_{{\rm OFA}},s_{{\rm OFA}};\sigma)}
\end{equation}
\begin{equation}
\underline{c}({\bf SP})\triangleq\min_{{\cal A}}\max_{\sigma}\frac{{\rm Cost}(y_{{\cal A}};\sigma)}{{\rm Cost}(y_{{\rm OFA}};\sigma)}
\end{equation}
We have
\begin{equation}
\underline{c}({\bf SP})\le\underline{c}({\bf fMCMP})
\end{equation}
\end{lem}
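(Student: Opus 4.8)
The plan is to exhibit a cost-preserving reduction from the richer problem $\mathbf{fMCMP}$ to $\mathbf{SP}$: I will show that any online algorithm for $\mathbf{fMCMP}$ can be matched, input by input, by an online algorithm for $\mathbf{SP}$ whose cost is never larger, and then push this pointwise domination through the $\min$--$\max$ definitions of the two optimal competitive ratios. No adversarial construction is needed; the argument is purely a reduction resting on Lemma~\ref{lem:fMCMP}.

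First I would record that the two problems share the \emph{same} offline optimum on every input $\sigma$. The offline optimal solution of $\mathbf{fMCMP}$ selects the best on/off trajectory and, for that trajectory, the per-slot cost-minimizing continuous variables; by Lemma~\ref{lem:fMCMP} these are exactly the $(u,v,s)$ that define $\mathrm{Cost}(y)$ in $\mathbf{SP}$. Hence $\mathrm{Cost}(y_{\mathrm{OFA}},u_{\mathrm{OFA}},v_{\mathrm{OFA}},s_{\mathrm{OFA}};\sigma)=\mathrm{Cost}(y_{\mathrm{OFA}};\sigma)$ for all $\sigma$, so the denominators of the two ratios coincide and I may work entirely with numerators.

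Next, fix any online algorithm $\mathcal{A}$ for $\mathbf{fMCMP}$, producing $(y_{\mathcal{A}},u_{\mathcal{A}},v_{\mathcal{A}},s_{\mathcal{A}})$, and define an online algorithm $\mathcal{A}'$ for $\mathbf{SP}$ that outputs the \emph{same} on/off decisions $y_{\mathcal{A}'}(t)=y_{\mathcal{A}}(t)$. Since $\mathcal{A}'$ uses only the information already available to $\mathcal{A}$ together with the current $\sigma(t)$ — which suffices to evaluate the closed-form $(u,v,s)$ of Lemma~\ref{lem:fMCMP} — it is a legitimate online algorithm. Because the startup term $\beta[y(t)-y(t-1)]^{+}$ depends only on the $y$-trajectory, which is identical under $\mathcal{A}$ and $\mathcal{A}'$, and because for each fixed $y_{\mathcal{A}}(t)$ the values prescribed by Lemma~\ref{lem:fMCMP} minimize the per-slot cost $c_{o}u(t)+p(t)v(t)+c_{g}s(t)+c_{m}y(t)$ over all feasible $(u(t),v(t),s(t))$, I obtain the pointwise bound $\mathrm{Cost}(y_{\mathcal{A}'};\sigma)=\mathrm{Cost}(y_{\mathcal{A}};\sigma)\le \mathrm{Cost}(y_{\mathcal{A}},u_{\mathcal{A}},v_{\mathcal{A}},s_{\mathcal{A}};\sigma)$ for every $\sigma$.

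Combining the two facts, the $\mathbf{SP}$-ratio of $\mathcal{A}'$ is at most the $\mathbf{fMCMP}$-ratio of $\mathcal{A}$ on every $\sigma$, and taking $\max_{\sigma}$ preserves this. By definition of the outer $\min$, $\underline{c}(\mathbf{SP})\le \max_{\sigma}(\text{ratio of }\mathcal{A}')$, so $\underline{c}(\mathbf{SP})\le \max_{\sigma}(\text{ratio of }\mathcal{A})$ for \emph{every} $\mathbf{fMCMP}$ algorithm $\mathcal{A}$; taking the infimum over $\mathcal{A}$ on the right gives $\underline{c}(\mathbf{SP})\le \underline{c}(\mathbf{fMCMP})$. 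The one point demanding care — and the main potential obstacle — is checking that $\mathcal{A}'$ is genuinely causal and that the denominators truly match; both reduce to invoking the per-slot optimality in Lemma~\ref{lem:fMCMP}, which is the crux of the reduction.
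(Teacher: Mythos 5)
Your proposal is correct and follows essentially the same route as the paper's proof: both reduce $\mathbf{fMCMP}$ to $\mathbf{SP}$ by constructing an online algorithm $\mathcal{A}'$ for $\mathbf{SP}$ that copies the on/off trajectory $y_{\mathcal{A}}$ of any given $\mathbf{fMCMP}$ algorithm, and both invoke Lemma~\ref{lem:fMCMP} twice — once to show the $\mathbf{SP}$ cost of $y_{\mathcal{A}}$ is dominated by the $\mathbf{fMCMP}$ cost (the startup term depends only on $y$, and the per-slot continuous variables are cost-minimizing), and once to show the offline denominators coincide. The only difference is presentational: the paper phrases the min--max bookkeeping as a proof by contradiction, while you push the pointwise domination through the definitions directly, which is arguably cleaner but mathematically identical.
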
 \begin{proof} We prove this lemma by contradiction. Suppose
that there exists a deterministic online algorithm ${\cal A}$ for
\textbf{fMCMP} with output $(y_{{\cal A}},u_{{\cal A}},v_{{\cal A}},s_{{\cal A}})$,
such that
\begin{equation}
\max_{\sigma}\frac{{\rm Cost}(y_{{\cal A}},u_{{\cal A}},v_{{\cal A}},s_{{\cal A}};\sigma)}{{\rm Cost}(y_{{\rm OFA}},u_{{\rm OFA}},v_{{\rm OFA}},s_{{\rm OFA}};\sigma)}<\underline{c}({\bf SP})
\end{equation}

Also, it follows that for any an input sequence $\sigma'$,
\begin{eqnarray}
 &  & \frac{{\rm Cost}(y_{{\cal A}},u_{{\cal A}},v_{{\cal A}},s_{{\cal A}};\sigma')}{{\rm Cost}(y_{{\rm OFA}},u_{{\rm OFA}},v_{{\rm OFA}},s_{{\rm OFA}};\sigma')}\\
 & \le & \max_{\sigma}\frac{{\rm Cost}(y_{{\cal A}},u_{{\cal A}},v_{{\cal A}},s_{{\cal A}};\sigma)}{{\rm Cost}(y_{{\rm OFA}},u_{{\rm OFA}},v_{{\rm OFA}},s_{{\rm OFA}};\sigma)}
\end{eqnarray}

It follows that (by Lemma~\ref{lem:fMCMP})
\begin{equation}
{{\rm Cost}(y_{{\cal A}};\sigma)}\le{{\rm Cost}(y_{{\cal A}},u_{{\cal A}},v_{{\cal A}};\sigma)}
\end{equation}

Based on ${\cal A}$, we can construct an online algorithm ${\cal A}'$
for \textbf{SP}, such that $y_{{\cal A}'}=y_{{\cal A}}$. By Lemma~\ref{lem:fMCMP},
\begin{equation}
{{\rm Cost}(y_{{\rm OFA}};\sigma)}={{\rm Cost}(y_{{\rm OFA}},u_{{\rm OFA}},v_{{\rm OFA}};\sigma)}
\end{equation}

Therefore, we obtain
\begin{equation}
\frac{{\rm Cost}(y_{{\cal A}};\sigma')}{{\rm Cost}(y_{{\rm OFA}};\sigma')}<\underline{c}({\bf SP})
\end{equation}
However, as $\underline{c}({\bf SP})$ is a lower bound of competitive
ratio for any deterministic online algorithm for \textbf{SP}. This
is contradiction, and it completes our proof. \end{proof}

\begin{thm} The competitive ratio for any deterministic online algorithm
${\cal A}$ for ${\bf SP}$ is lower bounded by a function:
\begin{equation}
{\sf CR}({\cal A})\ge{\sf cr}(\beta)
\end{equation}
When $\frac{\beta}{L(P_{\max}+\eta\cdot c_{g}-c_{o})}\to\infty$,
we have
\begin{equation}
{\sf cr}(\beta)\to\min\Big\{\frac{L\cdot P_{\max}}{L\cdot c_{o}+c_{m}},3-\frac{2(L\cdot c_{o}+c_{m})}{L\cdot\big(P_{\max}+c_{g}\cdot\eta\big)}\Big\}
\end{equation}
\end{thm}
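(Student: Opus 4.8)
The plan is to prove the bound for $\mathbf{SP}$ by an adaptive-adversary argument; by the reduction lemma established at the start of this appendix (which gives $\underline{c}(\mathbf{SP})\le\underline{c}(\mathbf{fMCMP})$), the same bound then carries over to $\mathbf{fMCMP}$. Since ${\cal A}$ is deterministic, the adversary may reveal the input slot by slot while simulating ${\cal A}$, so that ``worst input for ${\cal A}$'' is well defined and may react to every past on/off decision of ${\cal A}$. I would drive the whole construction through the quantities already introduced for the offline analysis, the one-slot gap $\delta(t)=\psi(\sigma(t),0)-\psi(\sigma(t),1)$ and its clipped running sum $\Delta(t)$, using only two slot gadgets. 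A \emph{beneficial} slot takes $p(t)=P_{\max}$ and $a(t)=L$, for which Lemma~\ref{lem:fMCMP} gives $\psi(\sigma,1)=Lc_{o}+c_{m}$ and the largest attainable gap $\delta=L(P_{\max}+\eta c_{g}-c_{o})-c_{m}$ (with $h=\eta L$) or $\delta=L(P_{\max}-c_{o})-c_{m}$ (with $h=0$); a \emph{valley} slot (e.g.\ $a=h=0$) has $\delta<0$ and drives $\Delta$ back down. Choosing these gadgets at their extremes is what makes each unit of $\Delta$-change cheapest for the offline optimum and hence the ratio largest, and it is exactly why the relevant scale is $L(P_{\max}+\eta c_{g}-c_{o})$.

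The core is that the two terms in the bound correspond to two distinct ways of trapping ${\cal A}$, and the adversary, having simulated ${\cal A}$, plays whichever applies. First I would feed a run of beneficial slots and watch whether ${\cal A}$ ever commits the generator. If ${\cal A}$ commits, the adversary \emph{reverses} into a valley right after the commit, producing an oscillating $\Delta$ exactly of the shape in Fig.~\ref{fig:worstcase}: ${\cal A}$ has paid a startup and is now caught on the wrong side of $y_{{\rm OFA}}$ over the down-phase, and by iterating up-down periods the surplus accumulates to about $2\beta$ per period against an offline cost of about $\beta$ plus operating, which is precisely the accounting behind Theorem~\ref{thm:CHASE-competitive-ratio} and yields the ratio $3-2\alpha$. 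If instead ${\cal A}$ stays external (never commits, or commits only very late), the adversary \emph{extends} the beneficial run for a long horizon; now $y_{{\rm OFA}}$ commits once and pays $Lc_{o}+c_{m}$ per slot while ${\cal A}$ keeps paying the external price $P_{\max}L$, and as the horizon grows the single startup is amortized and the ratio tends to $L P_{\max}/(L c_{o}+c_{m})$. Because the adversary takes the worse of the two continuations for ${\cal A}$, and because committing is preferable to renting exactly when $3-2\alpha\le L P_{\max}/(Lc_{o}+c_{m})$ and vice versa, any deterministic ${\cal A}$ is forced to a ratio at least the minimum of the two quantities.

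To obtain the stated ${\sf cr}(\beta)$ I would carry the exact finite bookkeeping --- the startup term $\beta$, the identity relating $\sum\delta$ over a completed segment to the rail-to-rail change of $\Delta$, and the per-slot $\psi$ values above --- which produces a closed-form ${\sf cr}(\beta)$ that is the exact worst-case ratio of the finite construction. The number of beneficial slots needed to move $\Delta$ by $\beta$ is about $\beta/\bigl(L(P_{\max}+\eta c_{g}-c_{o})\bigr)$, so the regime $\beta/\bigl(L(P_{\max}+\eta c_{g}-c_{o})\bigr)\to\infty$ is precisely the fine-granularity (equivalently continuous-time, $\epsilon\to0$) limit in which each segment consists of many slots. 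In that limit the single overshoot slot at a rail and the single startup on the long run both become negligible, killing the discretization residue, so ${\sf cr}(\beta)\to\min\bigl\{L P_{\max}/(Lc_{o}+c_{m}),\,3-2(Lc_{o}+c_{m})/(L(P_{\max}+\eta c_{g}))\bigr\}=\min\{LP_{\max}/(Lc_{o}+c_{m}),\,3-2\alpha\}$.

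The hard part will not be the per-slot arithmetic but two structural points. An arbitrary deterministic ${\cal A}$ need not be a clean threshold rule: it may toggle the generator several times inside a single run, so ``the commit point'' does not by itself determine its cost, and I would have to argue that any such toggling only adds startup costs and therefore cannot let ${\cal A}$ escape both traps --- in effect that the adversary's reversal/extension point can always be synchronized with ${\cal A}$'s own behavior so that neither the oscillation nor the long-run escape is available. The second delicate point is tightness of the constants: getting the full factor $3-2\alpha$ (rather than a weaker $2-\alpha$) requires the oscillating construction to penalize \emph{both} the on- and the off-transitions of ${\cal A}$ within each period, and getting $LP_{\max}/(Lc_{o}+c_{m})$ requires the beneficial slots and the amortization horizon to be taken at their extremes. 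Verifying that no algorithm can simultaneously evade both penalties, and that the two extremal inputs indeed realize the two constants in the limit, is where the real work lies.
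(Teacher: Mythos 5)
Your proposal matches the paper's own proof in all essentials: the paper's adversary is exactly your reactive construction, setting $p(t)=P_{\max}$, $a(t)=L\cdot(1-y_{{\cal A}}(t-1))$ and $h(t)=\eta\cdot a(t)$, so that demand vanishes the instant ${\cal A}$ commits and returns the instant it de-commits, which simultaneously realizes your ``oscillation'' trap (giving $3-2\alpha$) and your ``long beneficial run'' trap (giving the price-ratio term), with the $\min$ extracted by splitting the horizon into the increasing parts and plateau parts of the critical segments. The arbitrary-toggling difficulty you flag as the hard part is resolved in the paper precisely by this per-slot reactivity --- any toggling pattern just produces alternating full-demand/zero-demand segments whose costs are accounted uniformly via $\Sigma_f$, $\Sigma_z$, $n_f$, $n_z$ --- and the regime $\beta/\bigl(L(P_{\max}+\eta\cdot c_g-c_o)\bigr)\to\infty$ kills the per-segment startup and overshoot residue, exactly as you predicted.
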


\begin{proof} The basic idea is as follows. Given any deterministic
online algorithm ${\cal A}$, we construct a special input sequence
$\sigma$, such that
\begin{equation}
\frac{{\rm Cost}(y_{{\cal A}};\sigma)}{{\rm Cost}(y_{{\rm OFA}};\sigma)}\ge{\sf cr}(\beta)
\end{equation}
for a function ${\sf cr}(\beta)$.

First, we note that at time $t$, ${\cal A}$ determines $y_{{\cal A}}(t)$
only based on the past input in $[1,t-1]$. Thus, we construct an
input sequence $\sigma=(a(t),h(t),p(t))_{t=1}^{T}$ progressively
as follows:
\begin{itemize}
\item $p(t)=P_{\max}$.
\item $a(t)=L\cdot(1-y_{{\cal A}}(t-1))$. Namely,
\begin{equation}
a(t)=\begin{cases}
\label{eq:a(t).construction}L, & \mbox{if\ }y_{{\cal A}}(t-1)=0\\
0, & \mbox{if\ }y_{{\cal A}}(t-1)=1
\end{cases}
\end{equation}

\item $h(t)=\eta\cdot a(t)$.
\end{itemize}
For completeness, we set the boundary conditions: $p(0)=p(T+1)=0$
and $a(0)=a(T+1)=0$.

\textbf{Step 1}: Computing ${\rm Cost}(y_{{\cal A}};\sigma)$:

Based on our construction of $(\sigma(t))_{t=1}^{T}$, we can partition
$[1,T]$ into disjoint segments of consecutive intervals of full demand
or zero demand:
\begin{itemize}
\item Full-demand segment: $[t_{1},t_{2}]$, if $a(t)=L$ for all $t\in[t_{1},t_{2}]$,
and $a(t_{1}-1)=a(t_{2}+1)=0$.
\item Zero-demand segment: $[t_{1},t_{2}]$, if $a(t)=0$ for all $t\in[t_{1},t_{2}]$,
and $a(t_{1}-1)=a(t_{2}+1)=L$.
\end{itemize}

Note that according to Eqn.~(\ref{eq:a(t).construction}), $a(1)=L$.
Thus, the time $t=1$ must belong to a full-demand segment. Also,
full-demand and zero-demand segments appear alternating.

Let $n_{f}$ and $n_{z}$ be the number of full-demand and zero-demand
segments in $[1,T]$ respectively. Thus, we have
\begin{equation}
n_{z}\le n_{f}\le n_{z}+1\label{eq:the.difference.between.tr.ph}
\end{equation}

In a full-demand segment $[t_{1},t_{2}]$, since $a(t)=L$, for all
$t\in[t_{1},t_{2}]$ and according to the construction of $a(t)$
in Eqn.~(\ref{eq:a(t).construction}), we conclude that $y(t)$ generated
by ${\cal A}$ must be
\begin{equation}
y_{{\cal A}}(t)=\begin{cases}
0, & t_{1}\le t\le t_{2}-1\\
1, & t=t_{2}
\end{cases}
\end{equation}
As a result, in a full-demand segment $[t_{1},t_{2}]$ with length
$(t_{2}-t_{1}+1)$, ${\cal A}$ incurs a cost
\begin{equation}
L\cdot\big(P_{\max}+\eta\cdot c_{g}\big)\cdot\big(t_{2}-t_{1}\big)+\beta+(L\cdot c_{o}+c_{m})\cdot1.
\end{equation}

Similarly, in a zero-demand segment $[t_{1},t_{2}]$, ${\cal A}$
incurs a cost
\begin{equation}
\big(t_{2}-t_{1}\big)c_{m}.
\end{equation}


Let $\Sigma_{f}$ and $\Sigma_{z}$ be the total lengths of full-demand
and zero-demand segments in $[1,T]$ respectively. By summing the
costs over all full-demand and zero-demand segments and simplifying
terms, we obtain a compact expression of the cost of ${\cal A}$ w.r.t.
$\sigma$ as follows:
\begin{eqnarray}
 &  & {\rm Cost}(y_{{\cal A}};\sigma)\notag\label{eq:costofA}\\
 & = & L\cdot\big(P_{\max}+\eta\cdot c_{g}\big)\cdot(\Sigma_{f}-n_{f})+n_{f}\big(\beta+L\cdot c_{o}+c_{m}\big)\notag\\
 &  & +c_{m}\cdot(\Sigma_{z}-n_{z})\notag\\
 & = & L\cdot\big(P_{\max}+\eta\cdot c_{g}\big)\cdot\Sigma_{f}+n_{f}\Big(\beta+L\cdot c_{o}+c_{m}\notag\\
 &  & -L\cdot\big(P_{\max}+\eta\cdot c_{g}\big)\Big)+c_{m}\Sigma_{z}-c_{m}n_{z}
\end{eqnarray}

\textbf{Step 2}: (Bounding ${\rm Cost}(y_{{\cal A}};\sigma)/{\rm Cost}(y_{{\rm OFA}};\sigma)$):

We divide the input $\sigma$ into critical segments. We then define
${\cal S}_{up}$ be the set of all type-0, type-2, type-3, and the
``increasing'' parts of type-1 critical segments, and ${\cal S}_{pt}$
be set of the ``plateau'' parts of type-1 critical segments.

Here, for a type-1 critical segment $[T_{i}^{c}+1,T_{i+1}^{c}]$,
the ``increasing'' part is defined as $[T_{i}^{c}+1,\tilde{T}_{i}^{c}]$
and the ``plateau'' part is defined as $[\tilde{T}_{i}^{c}+1,T_{i+1}^{c}]$.
We define ${\rm Cost}_{up}(\cdot)$ and ${\rm Cost}_{pt}(\cdot)$
be the costs for ${\cal S}_{up}$ and ${\cal S}_{pt}$ respectively.

On the increasing part $[T_{i}^{c}+1,\tilde{T}_{i}^{c}]$, the deficit
function wriggles up from $-\beta$ to $0$, and it cost the same
to served the part by either buying power from the grid or using on-site
generator (which incurs a turning-on cost). Hence, we can simplify
the offline cost on the increasing part as
\[
{\rm Cost}_{up}(y_{{\rm OFA}};\sigma)=\beta+\sum_{t=T_{i}^{c}+1}^{\tilde{T_{i}^{c}}}\psi\big(\sigma(t),1\big)=\sum_{t=T_{i}^{c}+1}^{\tilde{T_{i}^{c}}}\psi\big(\sigma(t),0\big).
\]

With this simplification, we proceed with the ratio analysis as follows:
\begin{eqnarray*}
\frac{{\rm Cost}(y_{{\cal A}};\sigma)}{{\rm Cost}(y_{{\rm OFA}};\sigma)} & = & \frac{{\rm Cost}_{up}(y_{{\cal A}};\sigma)+{\rm Cost}_{pt}(y_{{\cal A}};\sigma)}{{\rm Cost}_{up}(y_{{\rm OFA}};\sigma)+{\rm Cost}_{pt}(y_{{\rm OFA}};\sigma)}\\
 & \ge & \min\big\{\begin{array}{cc}
\frac{{\rm Cost}_{up}(y_{{\cal A}};\sigma)}{{\rm Cost}_{up}(y_{{\rm OFA}};\sigma)}, & \frac{{\rm Cost}_{pt}(y_{{\cal A}};\sigma)}{{\rm Cost}_{pt}(y_{{\rm OFA}};\sigma)}\end{array}\big\}
\end{eqnarray*}
As $T$ goes to infinity, it is clear that to lower-bound the above
ratio, it suffices to consider only those ${\cal S}_{i}$ ($i\in\{up,pt\}$)
with unbounded length in time. Next, we study each term in the lower
bound of the competitive ratio. We define $\Sigma_{f}^{up}\;(\Sigma_{z}^{up})$
and $\Sigma_{f}^{pt}\;(\Sigma_{z}^{pt})$ as the total length of full-demand
(zero-demand) intervals in the increasing parts and plateau, respectively.
Similarly, we define $n_{f}^{up}\;(n_{z}^{up})$ and $n_{f}^{pt}\;(n_{z}^{pt})$
as the number of full-demand (zero-demand) intervals in the increasing
parts and plateau, respectively.

\textbf{Step 2-1}: (Bounding ${\rm Cost}_{up}(y_{{\cal A}};\sigma)/{\rm Cost}_{up}(y_{{\rm OFA}};\sigma)$)

First, we seek to lower-bound the term $\frac{{\rm Cost}_{up}(y_{{\cal A}};\sigma)}{{\rm Cost}_{up}(y_{{\rm OFA}};\sigma)}$
under the assumption that $|{\cal S}_{up}|$ is unbounded. From the
offline solution structure, we know that on type-0, type-2, type-3,
and the ``increasing'' parts of type-1 critical segments, the offline
optimal cost is given by
\begin{equation}
{\rm Cost}_{up}(y_{{\rm OFA}};\sigma)=L\cdot\big(P_{\max}+\eta\cdot c_{g}\big)\cdot\Sigma_{f}^{up}
\end{equation}
Noticing that we also have $n_{f}^{up}\ge n_{z}^{up}$, we obtain
\begin{eqnarray*}
 &  & \frac{{\rm Cost}_{up}(y_{{\cal A}};\sigma)}{{\rm Cost}_{up}(y_{{\rm OFA}};\sigma)}\\
 & = & \Big(L\cdot\big(P_{\max}+\eta\cdot c_{g}\big)\cdot\Sigma_{f}^{up}+\big(\beta+L\cdot c_{o}\\
 &  & \quad-L\big(P_{\max}+\eta\cdot c_{g}\big)+c_{m}\big)n_{f}^{up}+c_{m}\cdot\Sigma_{z}^{up}-c_{m}n_{z}^{up}\Big)\Big/\\
 &  & \Big(L\cdot\big(P_{\max}+\eta\cdot c_{g}\big)\cdot\Sigma_{f}^{up}\Big)\\
 & \ge & \Big(L\cdot\big(P_{\max}+\eta\cdot c_{g}\big)\cdot\Sigma_{f}^{up}+\big(\beta+L\cdot c_{o}\\
 &  & -L\big(P_{\max}+\eta\cdot c_{g}\big)\big)n_{f}^{up}+c_{m}\cdot\Sigma_{z}^{up}\Big)\Big/\\
 &  & \Big(L\cdot\big(P_{\max}+\eta\cdot c_{g}\big)\cdot\Sigma_{f}^{up}\Big)
\end{eqnarray*}

In ${\cal S}_{up}$, either there is only one type-0 segment, or there
are equal number of type-2/3 critical segments and type-1 critical
segment ``increasing'' parts. Hence, the total deficit function
increment, i.e., $(LP_{{\max}}-L\cdot c_{o}-c_{m})\Sigma_{f}^{up}$,
must be no more than the total deficit function decrement, which is
upper bounded by $c_{m}\Sigma_{z}^{up}+\beta$ where the term $\beta$
accounts for that the deficit function does not end naturally at but
get dragged down to$-\beta$ at the end of $T$. That is,
\begin{equation}
c_{m}\Sigma_{z}^{up}+\beta-(L\big(P_{\max}+\eta\cdot c_{g}-c_{o}\big)-c_{m})\Sigma_{f}^{up}\ge0.\label{eq:type-start-relation-increment-decrement}
\end{equation}
Moreover, since ${\cal S}_{up}$ contains only type-0, type-2, type-3,
and the ``increasing'' parts of type-1 critical segments, the deficit
function increment introduced by every full-demand segment must be
no more than $\beta$. That is,

\begin{equation}
n_{f}^{up}\beta\ge(L\big(P_{\max}+\eta\cdot c_{g}-c_{o}\big)-c_{m})\Sigma_{f}^{up}.\label{eq:type-start-relation-n_z*beta}
\end{equation}

By the above inequalities, we continue the derivation as follows:
\begin{eqnarray}
\frac{{\rm Cost}_{up}(y_{{\cal A}};\sigma)}{{\rm Cost}_{up}(y_{{\rm OFA}};\sigma)} & \ge & 1+\frac{\beta+L\cdot\left(c_{o}-P_{\max}-\eta\cdot c_{g}\right)}{L\cdot\big(P_{\max}+\eta\cdot c_{g}\big)\Sigma_{f}^{up}/n_{f}^{up}}\notag\\
 &  & \quad+\frac{c_{m}\Sigma_{z}^{up}}{L\cdot\big(P_{\max}+\eta\cdot c_{g}\big)\cdot\Sigma_{f}^{up}}\label{eq:type-0.interval.ratio.lower.bound}
\end{eqnarray}

Now we discuss the second term in Eqn. (\ref{eq:type-0.interval.ratio.lower.bound}),
denoted as $(I)$. Recall in the problem setting, $\beta+L\cdot c_{o}-L\cdot\big(P_{\max}+\eta\cdot c_{g}\big)>0$.
Hence, term $(I)$ is monotonically decreasing in $\Sigma_{f}^{up}/n_{f}^{up}$,
and its minimum value is taken when $\Sigma_{f}^{up}/n_{f}^{up}$
is replaced with the upper-boundary value $\beta(LP_{{\max}}-L\cdot c_{o}-c_{m})^{-1}$:
\begin{eqnarray}
 &  & (I)\nonumber \\
 & \ge & \left(\beta+L\cdot\left(c_{o}-P_{\max}-\eta\cdot c_{g}\right)\right)/\left(L\cdot\big(P_{\max}+\eta\cdot c_{g}\big)\right.\nonumber \\
 &  & \left.\cdot\beta\cdot(L\cdot\big(P_{\max}+\eta\cdot c_{g}-c_{o}\big)-c_{m})^{-1}\right)\nonumber \\
 & = & \frac{L\cdot\big(P_{\max}+\eta\cdot c_{g}-c_{o}\big)-c_{m}}{L\cdot\big(P_{\max}+\eta\cdot c_{g}\big)}\label{eq:type-0.interval.term(I).lower.bound}
\end{eqnarray}
The above inequality holds for arbitrary $|{\cal S}_{up}|$.

Now we discuss the third term in Eqn. (\ref{eq:type-0.interval.ratio.lower.bound}),
denoted as $(II)$. When $|{\cal S}_{up}|$ goes to infinity, $(II)$
can be discussed by two cases. In the first case, $\Sigma_{f}^{up}$
remains bounded when $|{\cal S}_{up}|$ goes to infinity. Since $\Sigma_{f}^{up}+\Sigma_{z}^{up}=|{\cal S}_{up}|$
we must have unbounded $\Sigma_{z}^{up}$ as $|{\cal S}_{up}|$ goes
to infinity. As a result, the term $(II)$ is unbounded, and so is
$\frac{{\rm Cost}_{up}(y_{{\cal A}};\sigma)}{{\rm Cost}_{up}(y_{{\rm OFA}};\sigma)}$.
In the second case, $\Sigma_{f}^{up}$ goes unbounded when $|{\cal S}_{up}|$
goes to infinity. Then by Eqn.~(\ref{eq:type-start-relation-increment-decrement}),
we know that $\Sigma_{z}^{up}$ also goes unbounded and,
\begin{eqnarray}
(II) & \ge & \frac{c_{m}\cdot\Sigma_{z}^{up}}{L\cdot\big(P_{\max}+\eta\cdot c_{g}\big)}\cdot\frac{L\cdot\big(P_{\max}+\eta\cdot c_{g}-c_{o}\big)-c_{m}}{c_{m}\cdot\Sigma_{z}^{up}+\beta}\notag\\
 & = & \frac{L\cdot\big(P_{\max}+\eta\cdot c_{g}-c_{o}\big)-c_{m}}{L\cdot\big(P_{\max}+\eta\cdot c_{g}\big)}\cdot\frac{c_{m}\cdot\Sigma_{z}^{up}}{c_{m}\cdot\Sigma_{z}^{up}+\beta}\notag\\
 & = & \frac{L\cdot\big(P_{\max}+\eta\cdot c_{g}-c_{o}\big)-c_{m}}{L\cdot\big(P_{\max}+\eta\cdot c_{g}\big)}(\mbox{as }|{\cal S}_{up}|\rightarrow\infty)\label{eq:type-0.interval.term(II).lower.bound}
\end{eqnarray}
Overall, by substituting Eqn.~(\ref{eq:type-0.interval.term(I).lower.bound})
and Eqn.~(\ref{eq:type-0.interval.term(II).lower.bound}) into Eqn.~(\ref{eq:type-0.interval.ratio.lower.bound}),
we obtain
\begin{eqnarray}
\frac{{\rm Cost}_{up}(y_{{\cal A}};\sigma)}{{\rm Cost}_{up}(y_{{\rm OFA}};\sigma)} & \ge & 3-2\frac{L\cdot c_{o}+c_{m}}{L\cdot\big(P_{\max}+\eta\cdot c_{g}\big)}.
\end{eqnarray}

\textbf{Step 2-2}: (Bounding ${\rm Cost}_{pt}(y_{{\cal A}};\sigma)/{\rm Cost}_{pt}(y_{{\rm OFA}};\sigma)$)

We now lower-bound the term $\frac{{\rm Cost}_{pt}(y_{{\cal A}};\sigma)}{{\rm Cost}_{pt}(y_{{\rm OFA}};\sigma)}$
under the assumption that $|{\cal S}_{pt}|$ is unbounded. Since ${\cal S}_{pt}$
only contains the ``plateau'' parts of type-1 critical segments,
we have

\begin{equation}
{\rm Cost}_{pt}(y_{{\rm OFA}};\sigma)=(L\cdot c_{o}+c_{m})\Sigma_{f}^{pt}+c_{m}\Sigma_{z}^{pt}
\end{equation}
Therefore,
\begin{eqnarray*}
 &  & \frac{{\rm Cost}_{pt}(y_{{\cal A}};\sigma)}{{\rm Cost}_{pt}(y_{{\rm OFA}};\sigma)}\\
 & = & \Big(L\cdot\big(P_{\max}+\eta\cdot c_{g}\big)\cdot\Sigma_{f}^{pt}+\big(\beta+L\cdot c_{o}\\
 &  & -L\big(P_{\max}+\eta\cdot c_{g}\big)+c_{m}\big)n_{f}^{pt}+c_{m}\cdot\Sigma_{z}^{pt}-c_{m}n_{z}^{pt}\Big)\Big/\\
 &  & \Big((L\cdot c_{o}+c_{m})\Sigma_{f}^{pt}+c_{m}\Sigma_{z}^{pt}\Big)
\end{eqnarray*}

On ${\cal S}_{pt}$, the total deficit function increment,
i.e., $(L\big(P_{\max}+\eta \cdot c_{g}\big)-L \cdot c_{o}-c_{m})\Sigma_{f}^{pt}$,
must be no less than the total deficit function decrement, which is
$c_{m}\Sigma_{z}^{pt}$. That is,

\begin{equation}
\big(L\cdot\big(P_{\max}+\eta\cdot c_{g}-c_{o}\big)-c_{m}\big)\Sigma_{f}^{pt}\ge c_{m}\Sigma_{z}^{pt}.\label{eq:type-1-relation-increment-decrement}
\end{equation}

Moreover, on ${\cal S}_{pt}$, the deficit function decrement caused
by every zero-demand segment must be less than $\beta$; otherwise,
the deficit function will reach value $-\beta$ and it cannot be a
``plateau'' part of a type-1 critical segment. That is,
\begin{equation}
n_{z}^{pt}\beta\ge c_{m}\Sigma_{z}^{pt}\label{eq:type-1-relation n star}
\end{equation}
Since a plateau part of a type-1 critical segment must end with a
full-demand interval, we must have $n_{f}^{pt}\ge n_{z}^{pt}$.

We continue the lower bound analysis as follows:

\begin{eqnarray*}
 &  & \frac{{\rm Cost}_{pt}(y_{{\cal A}};\sigma)}{{\rm Cost}_{pt}(y_{{\rm OFA}};\sigma)}\\
 & \ge & \frac{L\cdot\big(P_{\max}+\eta\cdot c_{g}\big)\cdot\Sigma_{f}^{pt}+c_{m}\cdot\Sigma_{z}^{pt}}{(L\cdot c_{o}+c_{m})\Sigma_{f}^{pt}+c_{m}\Sigma_{z}^{pt}}\\
 &  & +\frac{\big(\beta+L\cdot c_{o}-L\big(P_{\max}+\eta\cdot c_{g}\big)\big)n_{f}^{pt}}{(L\cdot c_{o}+c_{m})\Sigma_{f}^{pt}+c_{m}\Sigma_{z}^{pt}}\\
 & \ge & \frac{L\cdot\big(P_{\max}+\eta\cdot c_{g}\big)+\big(1+\frac{\beta+L\cdot\left(c_{o}-P_{\max}-\eta\cdot c_{g}\right)}{\beta}\big)c_{m}\cdot\Sigma_{z}^{pt}/\Sigma_{f}^{pt}}{L\cdot c_{o}+c_{m}+c_{m}\Sigma_{z}^{pt}/\Sigma_{f}^{pt}}.
\end{eqnarray*}

By checking the derivative, we know that the last term is monotonically
increasing/decreasing in the ratio $c_{m}\cdot\Sigma_{z}^{pt}/\Sigma_{f}^{pt}$.
Hence, its minimum value is taken when the ratio is replaced with
the lower-boundary value $0$ or the upper-boundary value $L\big(P_{\max}+\eta\cdot c_{g}-c_{o}\big)-c_{m}$.
Carrying out the derivation and taking into account the problem setting
$\frac{\beta}{L(P_{\max}+\eta\cdot c_{g}-c_{o})}\to\infty$, we obtain

\begin{eqnarray*}
 &  & \frac{{\rm Cost}_{pt}(y_{{\cal A}};\sigma)}{{\rm Cost}_{pt}(y_{{\rm OFA}};\sigma)}\\
 & \geq & \min\Big\{\frac{L\big(P_{\max}+\eta\cdot c_{g}\big)}{L\cdot c_{o}+c_{m}},3-2\frac{L\cdot c_{o}+c_{m}}{L\big(P_{\max}+\eta\cdot c_{g}\big)}\Big\}.
\end{eqnarray*}

At the end, we obtain the desired result of
\begin{equation}
\frac{{\rm Cost}(y_{{\cal A}};\sigma)}{{\rm Cost}(y_{{\rm OFA}};\sigma)}\ge\min\big\{\begin{array}{cc}
\frac{L\big(P_{\max}+\eta\cdot c_{g}\big)}{L\cdot c_{o}+c_{m}}, & 3-2\frac{L\cdot c_{o}+c_{m}}{L\big(P_{\max}+\eta\cdot c_{g}\big)}\end{array}\big\}.
\end{equation}

\end{proof}

\section{Proof of Theorem 4} \label{subsec:CHASElk-competitive-ratio}

\setcounter{thm}{3}
\begin{thm}
The competitive ratio of ${\sf CHASE}_s^{lk(\omega)}$
\begin{eqnarray}
 &  & {\sf CR}({\sf CHASE}_s^{{\rm lk}(w)})\\
 & \leq & 1+\frac{2\beta L\big(P_{\max}+\eta\cdot c_{g}-c_{o}-\frac{c_{m}}{L}\big)}{\big(P_{\max}+\eta\cdot c_{g}\big)\big(\beta L+w \cdot c_{m}\big(L-\frac{c_{m}}{P_{\max}+\eta\cdot c_{g}-c_{o}}\big)\big)} \notag
\end{eqnarray}
\end{thm}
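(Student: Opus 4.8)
The plan is to run the critical-segment framework of the offline analysis (Definitions~\ref{def:critical-seg}--\ref{def:critical-type}) together with the type-by-type comparison from the proof of Theorem~2, upgrading each step to exploit look-ahead. Write $B\triangleq P_{\max}+\eta\cdot c_{g}$ and let $\delta_{\max}\triangleq L(B-c_{o})-c_{m}$ be the maximum one-slot increase of $\Delta(\cdot)$, attained at $p(t)=P_{\max}$, $a(t)=L$, $h(t)=\eta L$ (this is exactly the quantity in the denominator of Lemma~\ref{lem:type-1.cost.bound}). The one genuinely new behavioral fact is this: since ${\sf CHASE}_s^{{\rm lk}(\omega)}$ inspects $(\Delta(\tau))_{\tau=t}^{t+\omega}$ and switches the moment the window reveals the next boundary of $\Delta$, on every critical segment it agrees with $y_{{\rm OFA}}$ on the final $\omega$ slots (the \emph{overlap}) and can differ only on the initial \emph{uncaught} part; if a segment has length at most $\omega$, the two coincide entirely. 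I would also reuse the count $m_{1}=m_{2}+m_{3}$ established in the proof of Theorem~2.

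The crux is a \emph{coupled} per-segment estimate. Consider a type-$1$ segment $[T_{i}^{c}+1,T_{i+1}^{c}]$ of length $\ell_{i}$. As in Theorem~2 both solutions incur exactly one startup $\beta$ on the segment (these cancel), so the extra cost of ${\sf CHASE}_s^{{\rm lk}(\omega)}$ over $y_{{\rm OFA}}$ equals $x_{i}\triangleq\sum_{t\ \text{uncaught}}\delta(t)=\Delta(T_{i+1}^{c}-\omega)+\beta$, which is at most $\beta$ minus the increase of $\Delta$ over the last $\omega$ slots. At the same time I lower-bound the segment's offline cost through Lemma~\ref{lem:segment-1 minimum cost}, obtaining ${\rm Cost}^{{\rm sg\mbox{-}}i}(y_{{\rm OFA}})\ge(\beta+\ell_{i}c_{m})\tfrac{B}{B-c_{o}}$. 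The two bounds are linked by $\ell_{i}\ge x_{i}/\delta_{\max}+\omega$: the uncaught part needs at least $x_{i}/\delta_{\max}$ slots because $\delta(t)\le\delta_{\max}$, while the overlap contributes its $\omega$ slots of running cost. Substituting and checking monotonicity (the ratio increases in $x_{i}$, so the worst case is $x_{i}\to\beta$ with steepest ascent), I get, for every type-$1$ segment,
\[
\frac{x_{i}}{{\rm Cost}^{{\rm sg\mbox{-}}i}(y_{{\rm OFA}})}\ \le\ \rho\ \triangleq\ \frac{\beta(B-c_{o})\,\delta_{\max}}{B\big(\beta L(B-c_{o})+\omega\,c_{m}\,\delta_{\max}\big)}.
\]

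Finally I would combine segments. Bounding the down-crossings (type-$2$/type-$3$) analogously, using that the descent rate is at most $c_{m}$ and that the $\omega$ transition slots where $\Delta$ is nearly flat force a matching $\Theta(\omega c_{m})$ \emph{offline} cost, I pair each down-crossing with an up-crossing via $m_{1}=m_{2}+m_{3}$ and apply the mediant inequality $\big(\sum a_{j}\big)/\big(\sum b_{j}\big)\le\max_{j}a_{j}/b_{j}$ against ${\rm Cost}^{{\rm ty\mbox{-}}1}(y_{{\rm OFA}})$, yielding ${\sf CR}\le 1+2\rho$. A short algebraic simplification, using $\delta_{\max}+c_{m}=L(B-c_{o})$, shows $1+2\rho$ coincides with the stated bound, and setting $\omega=0$ recovers $3-2\alpha$ from Theorem~2 as a consistency check. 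The step I expect to be the main obstacle is precisely this last combination: naively charging the full $\beta$ of each down-crossing to the type-$1$ offline cost loses the look-ahead gain and only reproduces $3-2\alpha$, so one must keep the extra cost and the offline cost coupled \emph{through the segment length}, and must correctly credit the offline cost incurred on the overlap and transition slots (where demand is nonzero and $\delta\approx0$, forcing $\gtrsim\omega c_{m}$ of offline spending). Handling the asymmetry between the ascent rate $\delta_{\max}$ and the descent rate $c_{m}$ in this accounting is the delicate bookkeeping that makes the factor of two come out to exactly $2\rho$.
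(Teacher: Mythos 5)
Your plan follows essentially the same route as the paper's own proof. That proof runs the critical-segment, type-by-type comparison of Theorem~\ref{thm:CHASE-competitive-ratio}, notes that on a type-1 segment the look-ahead algorithm turns on at $\tilde{T}_{i}^{c}-\omega$ so that its extra cost is $q_{i}^{1}+\beta$ with $q_{i}^{1}\triangleq\Delta(\tilde{T}_{i}^{c}-\omega-1)$, lower-bounds the offline type-1 cost per segment (Lemma~\ref{lem:w lower bound type-1}) using exactly your two ingredients --- the steepest-ascent bound $\delta(t)\le L(P_{\max}+\eta c_{g}-c_{o})-c_{m}=\delta_{\max}$ together with the markup of Lemma~\ref{lem:segment-1 minimum cost} on the uncaught part, plus the marked-up running cost of the $\omega$ overlap slots --- then charges $\beta$ per type-2/3 down-crossing, pairs via $m_{1}=m_{2}+m_{3}$, and maximizes the per-pair ratio, with the worst case at $q_{i}^{1}=0$, i.e., your $x_{i}=\beta$. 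Your coupling through $\ell_{i}\ge x_{i}/\delta_{\max}+\omega$ reproduces, at that worst case, exactly the paper's per-segment denominator, and your $1+2\rho$ is algebraically identical to the stated bound.

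One step as written would fail, though the fix is local. The algorithm switches $\omega$ slots before the \emph{auxiliary point} $\tilde{T}_{i}^{c}$ (the first time $\Delta$ touches the relevant boundary), not $\omega$ slots before the segment ends: a type-1 segment generally has a plateau $[\tilde{T}_{i}^{c}+1,T_{i+1}^{c}]$ on which $\Delta$ may dip arbitrarily close to $-\beta$ before returning to $0$ at $T_{i+1}^{c}$. Hence the overlap is $[\tilde{T}_{i}^{c}-\omega,T_{i+1}^{c}]$, typically much longer than $\omega$, and your formula $x_{i}=\Delta(T_{i+1}^{c}-\omega)+\beta$ is not the online--offline gap: it can be near $0$ (deep plateau dip) while the true gap $\Delta(\tilde{T}_{i}^{c}-\omega-1)+\beta$ is near $\beta$, so your per-segment charge undercounts the extra cost. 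Replacing $x_{i}$ by $\Delta(\tilde{T}_{i}^{c}-\omega-1)+\beta$ (the paper's $q_{i}^{1}+\beta$) repairs this, and your inequality $\ell_{i}\ge x_{i}/\delta_{\max}+\omega$ still holds. On your flagged obstacle, the combination step: the paper does exactly the ``naive'' thing you worried about --- it charges the full $\beta$ per down-crossing and credits \emph{no} offline cost on those transition slots --- yet this does not collapse the bound to $3-2\alpha$, because the look-ahead gain sits entirely in the denominator: Lemma~\ref{lem:w lower bound type-1} forces every type-1 segment's offline cost to include the marked-up $\omega c_{m}$ running cost of its overlap. What the paper hides in its word ``simplifications'' is precisely the check you anticipate, namely that the per-pair ratio (extra cost plus $\beta$, over the per-segment lower bound) is maximized at $x_{i}=\beta$ rather than at the other endpoint; that monotonicity verification, which involves the asymmetry between the ascent rate $\delta_{\max}$ and the descent rate $c_{m}$, is genuinely required and is left implicit in both your sketch and the paper's write-up.
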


\begin{proof}
We note that the proof is similar that of Theorem~\ref{thm:CHASE-competitive-ratio}, except with modifications considering time window $w$. We denote the outcome of ${\sf CHASE}_s^{{\rm lk}(w)}$ by $\big(y_{{\sf CHASE}(w)}(t)\big)_{t=1}^{T}.$

(\textbf{type-0}): Similar to the proof of Theorem~\ref{thm:CHASE-competitive-ratio}

(\textbf{type-1}): Based on the definition of critical segment (Definition~\ref{def:critical-seg}),
we recall that there is an auxiliary point $\tilde{T_{i}^{c}}$, such
that either \big($\Delta(T_{i}^{c})=0$ and $\Delta(\tilde{T_{i}^{c}})=-\beta$\big)
or \big($\Delta(T_{i}^{c})=-\beta$ and $\Delta(\tilde{T_{i}^{c}})=0$\big).
We focus on the segment $T_{i}^{c}+1+w<\tilde{T}_{i}^{c}$. We observe
\begin{equation}
y_{{\sf CHASE}(w)}(t)=\begin{cases}
0, & \mbox{for all\ } t\in[T_{i}^{c}+1,\tilde{T}_{i}^{c}-w),\\
1, & \mbox{for all\ } t\in[\tilde{T}_{i}^{c}-w,T_{i+1}^{c}],
\end{cases}
\end{equation}
We consider a particular type-1 critical segment, i.e., $k$th type-1
critical segment: $[T_{i}^{c}+1,T_{i+1}^{c}]$. Note that by the definition
of type-1, $y_{{\rm OFA}}(T_{i}^{c})=y_{{\sf CHASE}(w)}(T_{i}^{c})=0$.
$y_{{\rm OFA}}(t)$ switches from $0$ to $1$ at time $T_{i}^{c}+1$,
while $y_{{\sf CHASE}(w)}$ switches at time $\tilde{T}_{i}^{c}-w-1$,
both incurring startup cost $\beta$. The cost
difference between $y_{{\rm CHASE}}$ and $y_{{\rm OFA}}$ within $[T_{i}^{c}+1,T_{i+1}^{c}]$
is
\begin{eqnarray}
 &  & \sum_{t=T_{i}^{c}+1}^{\tilde{T}_{i}^{c}-1}\Big(\psi\Big(\sigma(t),0\Big)-\psi\Big(\sigma(t),1\Big)\Big)+\beta-\beta\\
 & = & \sum_{t=T_{i}^{c}+1}^{\tilde{T}_{i}^{c}-w-1}\delta(t)=\Delta(\tilde{T}_{i}^{c}-w-1)-\Delta(T_{i}^{c})= q_{i}^{1}+\beta \notag
\end{eqnarray}
where $q_{i}^{1} \triangleq \Delta(\tilde{T}_{i}^{c}-w-1)$.

Recall the number of type-$j$ critical segments $m_{j}\triangleq|{\cal T}_{j}|$.
\begin{equation}
{\rm Cost}^{{\rm ty\mbox{-}}1}(y_{{\sf CHASE}(w)})\le{\rm Cost}^{{\rm ty\mbox{-}}1}(y_{{\rm OFA}})+m_{1}\cdot\beta+\sum_{i=1}^{m_{1}}q_{i}^{1}
\end{equation}

(\textbf{type-2}) and (\textbf{type-3}):
We derive similarly for $j=2$ or 3 as
\begin{equation}
{\rm Cost}^{{\rm ty\mbox{-}}j}(y_{{\sf CHASE}(w)})\le{\rm Cost}^{{\rm ty\mbox{-}}j}(y_{{\rm OFA}})-\sum_{i=1}^{m_{j}}q_{i}^{j}.
\end{equation}
Note that $|q_{i}^{j}|\leq\beta$ for all $i,j$.

Furthermore, we note $m_{1}=m_{2}+m_{3}$. Overall, we obtain
\[
\begin{array}{@{}r@{}l@{\ }l}
 &  & {\displaystyle \frac{{\rm Cost}(y_{{\sf CHASE}(w)})}{{\rm Cost}(y_{{\rm OFA}})}}= {\displaystyle \frac{\sum_{j=0}^{3}{\rm Cost}^{{\rm ty\mbox{-}}j}(y_{{\sf CHASE}(w)})}{\sum_{j=0}^{3}{\rm Cost}^{{\rm ty\mbox{-}}j}(y_{{\rm OFA}})}}\\
 & \leq & {\displaystyle \frac{m_{1}\beta+\sum_{k=1}^{m_{1}}q_{i}^{1}+(m_{2}+m_{3})\beta+\sum_{j=0}^{3}{\rm Cost}^{{\rm ty\mbox{-}}j}(y_{{\rm OFA}})}{\sum_{j=0}^{3}{\rm Cost}^{{\rm ty\mbox{-}}j}(y_{{\rm OFA}})}}\\
 & = & {\displaystyle 1+\frac{2m_{1}\beta+\sum_{k=1}^{m_{1}}q_{i}^{1}}{\sum_{j=0}^{3}{\rm Cost}^{{\rm ty\mbox{-}}j}(y_{{\rm OFA}})}}\\
 & \leq & 1+\begin{cases}
0 & \mbox{if\ }m_{1}=0,\\
{\displaystyle \frac{2m_{1}\beta+\sum_{k=1}^{m_{1}}q_{i}^{1}}{{\rm Cost}^{{\rm ty\mbox{-}}1}(y_{{\rm OFA}})}} & \mbox{otherwise}
\end{cases}
\end{array}
\]

By Lemma \ref{lem:w lower bound type-1} and simplifications, we obtain
\begin{eqnarray}
 &  & \frac{\mbox{Cost}(y_{{\sf CHASE}(w)})}{\mbox{Cost}(y_{\mathrm{OFA}})}\\
 & \leq & 1+\frac{2\beta L\big(P_{\max}+\eta\cdot c_{g}-c_{o}-\frac{c_{m}}{L}\big)}{\big(P_{\max}+\eta\cdot c_{g}\big)\big(\beta L+w \cdot c_{m}\big(L-\frac{c_{m}}{P_{\max}+\eta\cdot c_{g}-c_{o}}\big)\big)} \notag
\end{eqnarray}
\end{proof}

\begin{lem}
\label{lem:w lower bound type-1}
\begin{eqnarray}
{\rm Cost}^{{\rm ty\mbox{-}}1}(y_{{\rm OFA}}) & \geq & m_{1}\beta+\sum_{k=1}^{m_{1}}\Big(\frac{(q_{i}^{1}+\beta)(Lc_{o}+c_{m})}{L\big(P_{\max}+\eta\cdot c_{g}-c_{o}\big)-c_{m}}\notag \\
 &  & \quad w \cdot c_{m}+\frac{c_{o}(-q_{i}^{1}+w \cdot c_{m})}{P_{\max}+\eta\cdot c_{g}-c_{o}}\Big)
\end{eqnarray}
\end{lem}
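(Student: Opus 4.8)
The plan is to mirror the structure of the no-look-ahead bound in Lemma~\ref{lem:type-1.cost.bound}, but to split each type-1 critical segment at the look-ahead boundary so that the window size $w$ enters explicitly through the quantity $q_i^1=\Delta(\tilde{T}_i^c-w-1)$. As in Lemma~\ref{lem:type-1.cost.bound}, I first write ${\rm Cost}^{{\rm ty\mbox{-}}1}(y_{{\rm OFA}})=\sum_{i\in{\cal T}_1}\big(\beta+\sum_{t=T_i^c+1}^{T_{i+1}^c}\psi(\sigma(t),1)\big)$, so that the startup terms already contribute $m_1\beta$ and it remains to lower-bound the running cost $\sum_t\psi(\sigma(t),1)$ on each type-1 segment. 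Throughout I abbreviate $D\triangleq P_{\max}+\eta\cdot c_g-c_o$.

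The key new step is to decompose the increasing part $[T_i^c+1,\tilde{T}_i^c]$ of each type-1 segment into a \emph{mismatch part} $[T_i^c+1,\tilde{T}_i^c-w-1]$, on which $y_{{\sf CHASE}(w)}$ and $y_{{\rm OFA}}$ disagree, and a \emph{window part} $[\tilde{T}_i^c-w,\tilde{T}_i^c]$, while the plateau $[\tilde{T}_i^c+1,T_{i+1}^c]$ is simply dropped (its running cost is nonnegative, so discarding it only weakens the lower bound). Because the whole increasing part lies strictly in the interior of the segment, Definition~\ref{def:critical-seg} ensures no capping occurs in the recursion for $\Delta$, so the telescoping of Lemma~\ref{lem:Delta-function} holds with equality, $\sum_t\delta(t)=\Delta(\cdot)-\Delta(\cdot)$. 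This yields a $\Delta$-increment of $q_i^1+\beta$ on the mismatch part and of $-q_i^1$ on the window part.

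On each part I then reuse the two ingredients of Lemma~\ref{lem:type-1.cost.bound}: (i) the mediant inequality together with Lemma~\ref{lem:segment-1 minimum cost}, giving $\sum_t\big(\psi(\sigma(t),1)-c_m\big)\ge\frac{c_o}{D}\big(\sum_t\delta(t)+(\mathrm{length})\cdot c_m\big)$ on that part, and (ii) adding back the $(\mathrm{length})\cdot c_m$ term. For the window part the length is fixed at $w$ and the increment is $-q_i^1$, so direct substitution produces the contribution $w\,c_m+\frac{c_o(-q_i^1+w\,c_m)}{D}$. For the mismatch part the length is not fixed, so I bound it from below by the steepest-ascent argument of Lemma~\ref{lem:type-1.cost.bound}: since each $\delta(t)\le L\cdot D-c_m$, the increment $q_i^1+\beta$ forces the length to be at least $(q_i^1+\beta)/(L\cdot D-c_m)$.

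The main obstacle is the final algebraic collapse of the mismatch-part estimate. Combining the gain-ratio bound, the added $(\mathrm{length})\cdot c_m$ term, and the length bound gives $(q_i^1+\beta)\big[\frac{c_o}{D}+\frac{c_m(D+c_o)}{D(L\cdot D-c_m)}\big]$, and I must verify, using $D+c_o=P_{\max}+\eta\cdot c_g$, that the cancellations collapse this to $\frac{(q_i^1+\beta)(L\,c_o+c_m)}{L\cdot D-c_m}$, matching the first term of the claim. Summing the mismatch, window, and startup contributions over all $m_1$ type-1 segments then gives the stated bound. The genuinely new feature, compared to Lemma~\ref{lem:type-1.cost.bound}, is the asymmetric treatment of the two parts — the window through its fixed length $w$, the mismatch part through the steepest-ascent length bound — which is precisely what makes $q_i^1$ and $w$ appear in the combination claimed; I expect the off-by-one bookkeeping at the segment endpoints to require care but to be absorbable into the stated form.
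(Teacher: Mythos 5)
Your proposal is correct and is essentially the paper's own proof: the paper likewise splits each type-1 segment at $\tilde{T}_i^c-w-1$, applies the mediant inequality with Lemma~\ref{lem:segment-1 minimum cost} plus the steepest-ascent length bound $(q_i^1+\beta)/\big(L(P_{\max}+\eta\cdot c_g-c_o)-c_m\big)$ on the mismatch part, and bounds the remaining piece by $w\cdot c_m+\frac{c_o(w\cdot c_m-q_i^1)}{P_{\max}+\eta\cdot c_g-c_o}$; your only deviation---discarding the plateau instead of folding it into the window estimate---is immaterial, since $\Delta(\tilde{T}_i^c)=\Delta(T_{i+1}^c)=0$ makes both choices yield the same increment $-q_i^1$ and the same length lower bound $w$, and your algebraic collapse to $(Lc_o+c_m)/\big(L(P_{\max}+\eta\cdot c_g-c_o)-c_m\big)$ does check out. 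One harmless imprecision: telescoping need not hold with equality on the increasing part (the cap at $0$ can bind at the single step $t=\tilde{T}_i^c$), but only the ``$\ge$'' direction of Lemma~\ref{lem:Delta-function} is needed, which is exactly what the paper uses.
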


\begin{proof}
Consider a particular type-1 segment $[T_{i}^{c}+1,T_{i+1}^{c}]$.
Denote the costs of $\mathrm{y_{OFA}}$ during $[T_{i}^{c}+1,\tilde{T}_{i}^{c}-w-1]$
and $[\tilde{T}_{i}^{c}-w,T_{i+1}^{c}]$ by $\mathrm{Cost^{\rm up}}$
and $\mathrm{Cost^{\rm pt}}$ respectively.

{\bf Step 1:} We bound $\mathrm{Cost^{\rm up}}$ as follows:
\begin{eqnarray}
\mathrm{Cost^{\rm up}} & = & \beta+\sum_{t=T_{i}^{c}+1}^{\tilde{T}_{i}^{c}-w-1}\psi\big(\sigma(t),1\big)\\
 & = & \beta+(\tilde{T}_{i}^{c}-w-1-T_{i}^{c})c_{m}\label{eq:type-1 cost eq1-1} +\sum_{t=T_{i}^{c}+1}^{\tilde{T}_{i}^{c}-w-1}\big(\psi\big(\sigma(t),1\big)-c_{m}\big).\notag
\end{eqnarray}

On the other hand, we obtain
\begin{eqnarray}
& & \sum_{t=T_{i}^{c}+1}^{\tilde{T}_{i}^{c}-w-1}\big(\psi\big(\sigma(t),1\big)-c_{m}\big)\notag \\
 & = & \frac{\sum_{t=T_{i}^{c}+1}^{\tilde{T}_{i}^{c}-w-1}\big(\psi\big(\sigma(t),1\big)-c_{m}\big)}{\sum_{t=T_{i}^{c}+1}^{\tilde{T}_{i}^{c}-w-1}\big(\psi\big(\sigma(t),0\big)-\psi\big(\sigma(t),1\big)+c_{m}\big)}\notag \\
 &  & \times\sum_{t=T_{i}^{c}+1}^{\tilde{T}_{i}^{c}-w-1}\big(\psi\big(\sigma(t),0\big)-\psi\big(\sigma(t),1\big)+c_{m}\big)\\
 & \geq & \frac{c_{o}}{P_{\max}+\eta\cdot c_{g}-c_{o}}\label{eq:w type-1 cost eq2}\\
 &  & \times\sum_{t=T_{i}^{c}+1}^{\tilde{T}_{i}^{c}-w-1}\big(\psi\big(\sigma(t),0\big)-\psi\big(\sigma(t),1\big)+c_{m}\big)
\end{eqnarray}
The last inequality follows from Lemma~\ref{lem:segment-1 minimum cost}.

Next, we bound the second term by
\begin{eqnarray}
 &  & \sum_{t=T_{i}^{c}+1}^{\tilde{T}_{i}^{c}-w-1}\big(\psi\big(\sigma(t),0\big)-\psi\big(\sigma(t),1\big)+c_{m}\big) \\
 & \geq &  \sum_{t=T_{i}^{c}+1}^{\tilde{T}_{i}^{c}-w-1}\big(\delta(t)+c_{m}\big)  \\
 & \geq & \triangle\big(\tilde{T}_{i}^{c}-w-1\big)-\triangle\big(T_{i}^{c}\big)+(\tilde{T}_{i}^{c}-w-1-T_{i}^{c})c_{m} \notag \\
 & = & q_{i}^{1}+\beta+(\tilde{T}_{i}^{c}-w-1-T_{i}^{c})c_{m}
\end{eqnarray}

Together, we obtain
\begin{eqnarray}
\mathrm{} &  & \mathrm{Cost^{\rm up}}\notag \\
 & \geq & \beta+(\tilde{T}_{i}^{c}-w-1-T_{i}^{c})c_{m}+\\
 &  & \frac{c_{o}}{P_{\max}+\eta\cdot c_{g}-c_{o}}\Big(q_{i}^{1}+\beta+(\tilde{T}_{i}^{c}-w-1-T_{i}^{c})c_{m}\Big) \\
 & = & \beta+\frac{(q_{i}^{1}+\beta)c_{o}+(\tilde{T}_{i}^{c}-w-1-T_{i}^{c})\big(P_{\max}+\eta\cdot c_{g}\big)c_{m}}{P_{\max}+\eta\cdot c_{g}-c_{o}} \notag \label{eq:w type-1 cost eq6}
\end{eqnarray}

Furthermore, we note that $\big(\tilde{T}_{i}^{c}-w-1-T_{i}^{c}\big)$
is lower bounded by the steepest descend when $p(t)=P_{\max}$, $a(t)=L$
and $h(t)=\eta L$,
\begin{equation}
\tilde{T}_{i}^{c}-w-1-T_{i}^{c}\geq\frac{q_{i}^{1}+\beta}{L\big(P_{\max}+\eta\cdot c_{g}-c_{o}\big)-c_{m}}
\label{eq:w type-1 cost eq7}
\end{equation}

By Eqns.~(\ref{eq:w type-1 cost eq6})-(\ref{eq:w type-1 cost eq7}), we obtain
\begin{eqnarray}
 &  & \mathrm{Cost^{\rm up}}\notag \\
 & \geq & \beta+\frac{(q_{i}^{1}+\beta)c_{o}+(\tilde{T}_{i}^{c}-w-1-T_{i}^{c})\big(P_{\max}+\eta\cdot c_{g}\big)c_{m}}{P_{\max}+\eta\cdot c_{g}-c_{o}} \notag \\
 & \geq & \beta+\frac{(q_{i}^{1}+\beta)(Lc_{o}+c_{m})}{L\big(P_{\max}+\eta\cdot c_{g}-c_{o}\big)-c_{m}} \label{eq:w type-1 cost eq8}
\end{eqnarray}

{\bf Step 2}: We bound $\mathrm{Cost^{\rm pt}}$ as follows.
\begin{eqnarray}
 &  & \mathrm{Cost^{\rm pt}} =  \sum_{t=\tilde{T}_{i}^{c}-w}^{T_{i+1}^{c}}\psi\big(\sigma(t),1\big)\\
 & = & (T_{i+1}^{c}-\tilde{T}_{i}^{c}+w+1)c_{m}+\sum_{t=\tilde{T}_{i}^{c}-w}^{T_{i+1}^{c}}\big(\psi\big(\sigma(t),1\big)-c_{m}\big)\\
 & \geq & w \cdot c_{m}+\\
 &  & \frac{c_{o}}{P_{\max}+\eta\cdot c_{g}-c_{o}}\sum_{t=\tilde{T}_{i}^{c}-w}^{T_{i+1}^{c}}\big(\psi\big(\sigma(t),0\big)-\psi\big(\sigma(t),1\big)+c_{m}\big) \notag
\end{eqnarray}
On the other hand, we obtain
\begin{eqnarray}
 &  & \sum_{t=\tilde{T}_{i}^{c}-w}^{T_{i+1}^{c}}\big(\psi\big(\sigma(t),0\big)-\psi\big(\sigma(t),1\big)+c_{m}\big)\\
 & = & \sum_{t=\tilde{T}_{i}^{c}-w}^{T_{i+1}^{c}}\delta(t)+(T_{i+1}^{c}-\tilde{T}_{i}^{c}+w+1)c_{m}\\
 & \geq & \triangle(T_{i+1}^{c})-\triangle(\tilde{T}_{i}^{c}-w-1)+w \cdot c_{m}= w \cdot c_{m}-q_{i}^{1} \notag
\end{eqnarray}

Therefore,
\begin{eqnarray}
\mathrm{Cost^{\rm pt}} & \geq & w \cdot c_{m}+\frac{c_{o}(w \cdot c_{m}-q_{i}^{1})}{P_{\max}+\eta\cdot c_{g}-c_{o}}\label{eq:w type-1 cost roof-1}
\end{eqnarray}

Since there are $m_{1}$ type-1 critical segments, according to Eqna.~(\ref{eq:w type-1 cost eq8})-(\ref{eq:w type-1 cost roof-1}), we obtain
\begin{eqnarray}
 &  & {\rm Cost}^{{\rm ty\mbox{-}}1}(y_{{\rm OFA}})\\
 & \geq & m_{1}\beta+\sum_{k=1}^{m_{1}}\Big(\frac{(q_{i}^{1}+\beta)(Lc_{o}+c_{m})}{L\big(P_{\max}+\eta\cdot c_{g}-c_{o}\big)-c_{m}} \notag\\
 &  & \qquad w \cdot c_{m}+\frac{c_{o}(-q_{i}^{1}+w \cdot c_{m})}{P_{\max}+\eta\cdot c_{g}-c_{o}}\Big).
\end{eqnarray}
\end{proof}

\section{Proof of Theorem 5}\label{subsec:nOFA-optimal}
\setcounter{thm}{4}
\begin{thm} \label{thm:nOFA-optimal}
Suppose $(y_n, u_n, v_n, s_n)$ is an optimal offline solution for each
$\textbf{fMCMP}_{\rm s}^{{\rm ly\mbox{-}}n}$ ($1\leq n\leq N$). Then\\
$((y^\ast_{n}, u^\ast_{n})_{n=1}^{N}, v^\ast, s^\ast)$ defined as
follows is an optimal offline solution for $\textbf{fMCMP}$:
\begin{equation}
\hspace{-10pt}
\begin{array}{@{}r@{\ }r@{\ }l@{}}
y^\ast_{n}(t) & = & y_n(t), \ \ v^\ast(t) = a^{\rm top}(t) + \sum_{n=1}^{N} v_n(t)  \\
u^\ast_{n}(t) & = & u_n(t), \ \ s^\ast(t) =  h^{\rm top}(t) + \sum_{n=1}^{N} s_n(t)
\end{array}
\end{equation}
\end{thm}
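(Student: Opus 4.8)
The plan is to establish the two inequalities ${\rm OPT}(\mathbf{fMCMP})\le C_{\mathrm{top}}+\sum_{n=1}^{N}{\rm OPT}_n$ and ${\rm OPT}(\mathbf{fMCMP})\ge C_{\mathrm{top}}+\sum_{n=1}^{N}{\rm OPT}_n$, where ${\rm OPT}_n$ is the optimal value of $\mathbf{fMCMP}_{\rm s}^{{\rm ly\mbox{-}}n}$ and $C_{\mathrm{top}}\triangleq\sum_{t}\big(p(t)a^{\rm top}(t)+c_g h^{\rm top}(t)\big)$ is the unavoidable cost of serving the top layer externally. The achievability inequality is the easy direction. First I would verify that the stated solution $((y^\ast_n,u^\ast_n)_{n=1}^N,v^\ast,s^\ast)$ is feasible for $\mathbf{fMCMP}$: constraint \eqref{C_max_output} holds because it holds for each sub-problem, while \eqref{C_e-demand} and \eqref{C_h-demand} follow by summing the per-layer constraints $u_n+v_n\ge a^{{\rm ly\mbox{-}}n}$ and $\eta u_n+s_n\ge h^{{\rm ly\mbox{-}}n}$ and using the partition identities $\sum_n a^{{\rm ly\mbox{-}}n}+a^{\rm top}=a$ and $\sum_n h^{{\rm ly\mbox{-}}n}+h^{\rm top}=h$. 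Substituting $v^\ast,s^\ast$ into \eqref{eq:microgrid.cost} then yields, by direct algebra, ${\rm Cost}(y^\ast,u^\ast,v^\ast,s^\ast)=C_{\mathrm{top}}+\sum_n {\rm Cost}_n(y_n,u_n,v_n,s_n)=C_{\mathrm{top}}+\sum_n{\rm OPT}_n$, since each sub-solution is optimal for its layer.

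The hard direction is the matching lower bound: every feasible $(\hat y,\hat u,\hat v,\hat s)$ of $\mathbf{fMCMP}$ must cost at least $C_{\mathrm{top}}+\sum_n{\rm OPT}_n$. My approach is to aggregate and then re-decompose along a stacking discipline. Let $K(t)\triangleq\sum_n \hat y_n(t)$ be the number of active generators, and let $\Phi(k;a,h,p)$ denote the minimum per-slot dispatch cost (operational plus external, excluding maintenance and startup) achievable with $kL$ units of capacity, which is the $N$-generator analogue of Lemma~\ref{lem:fMCMP}. Since the generators are homogeneous, $(\hat u,\hat v,\hat s)$ incurs per-slot dispatch cost at least $\Phi(K(t);a(t),h(t),p(t))$, the maintenance cost is exactly $c_m\sum_t K(t)$, and using $\sum_n[x_n]^+\ge[\sum_n x_n]^+$ the startup cost is at least $\beta\sum_t[K(t)-K(t-1)]^+$. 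Hence ${\rm Cost}(\hat y,\hat u,\hat v,\hat s)$ is bounded below by the cost of the \emph{stacked} solution in which generator $n$ is on precisely when $K(t)\ge n$, i.e.\ $y^{\rm stk}_n(t)=\mathbf{1}[K(t)\ge n]$, whose startup cost equals $\beta\sum_t[K(t)-K(t-1)]^+$ by the telescoping identity $\sum_n\#\{\mbox{turn-ons of }y^{\rm stk}_n\}=\sum_t[K(t)-K(t-1)]^+$, and whose maintenance cost is again $c_m\sum_t K(t)$.

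It then remains to decompose the stacked solution layer by layer. The maintenance and startup costs split additively across $n$ by construction, so the crux is a \emph{separability} lemma for the dispatch cost, $\Phi(K(t);a(t),h(t),p(t))=\sum_{n=1}^N\Phi\big(y^{\rm stk}_n(t);a^{{\rm ly\mbox{-}}n}(t),h^{{\rm ly\mbox{-}}n}(t),p(t)\big)+p(t)a^{\rm top}(t)+c_g h^{\rm top}(t)$, and I expect this to be the main obstacle. The subtlety is that the free co-generated heat is fungible in the aggregate problem but is confined to a single layer after slicing, so a priori the aggregate could be strictly cheaper by shifting surplus free heat from an electricity-heavy layer to a heat-heavy one. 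The observation that rules this out is that the bottom-up slicing \eqref{eq:demand_slicing-begin}--\eqref{eq:demand_slicing-end} makes both $(a^{{\rm ly\mbox{-}}n}(t))_n$ and $(h^{{\rm ly\mbox{-}}n}(t))_n$ non-increasing step sequences in $n$ (full layers, then one partial layer, then zeros); consequently a layer with surplus free heat ($a^{{\rm ly\mbox{-}}n}>h^{{\rm ly\mbox{-}}n}/\eta$) and a layer with a heat deficit ($a^{{\rm ly\mbox{-}}m}<h^{{\rm ly\mbox{-}}m}/\eta$) cannot coexist at the same $t$, which is exactly the condition under which the $[\,\cdot\,]^+$ gas terms add without cancellation. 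Checking this across the three price regimes of Lemma~\ref{lem:fMCMP} would establish separability. Granting the lemma, the stacked cost equals $C_{\mathrm{top}}+\sum_n(\mbox{cost of }y^{\rm stk}_n\mbox{ on sub-problem }n)\ge C_{\mathrm{top}}+\sum_n{\rm OPT}_n$, since each $y^{\rm stk}_n(\cdot)$ is a feasible on/off schedule for $\mathbf{fMCMP}_{\rm s}^{{\rm ly\mbox{-}}n}$. Combining the two inequalities and recalling that the stated solution attains the upper bound then proves its optimality.
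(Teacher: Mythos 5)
Your proposal is correct, and its mathematical core coincides with the paper's: your stacked schedule $y^{\rm stk}_n(t)=\mathbf{1}[K(t)\ge n]$ is exactly the paper's construction $\widehat{y}_n$, your startup-cost bound via $\sum_n[x_n]^+\ge\big[\sum_n x_n\big]^+$ is the paper's Lemma~\ref{lem:ngen-beta}, and the layer-alignment observation you flag is precisely what makes the paper's Lemma~\ref{lem:ngen-v} true. The architecture differs, though. The paper argues by solution transformation: it takes an optimal solution $(\tilde y,\tilde u,\tilde v,\tilde s)$ of $\textbf{fMCMP}$, slices its aggregate output $\sum_r\tilde u_r(t)$ bottom-up into per-layer outputs $\widehat u_n(t)$, defines the per-layer residuals $\widehat v_n,\widehat s_n$, and shows the cost can only decrease before invoking per-layer optimality. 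You instead work at the level of optimal values, lower-bounding an arbitrary feasible solution through the aggregate dispatch function $\Phi(K(t);\cdot)$ and a separability identity for $\Phi$, which yields the clean identity ${\rm OPT}(\textbf{fMCMP})=C_{\rm top}+\sum_n{\rm OPT}_n$. The two routes are mathematically equivalent --- your separability lemma is Lemma~\ref{lem:fMCMP} (closed-form optimal dispatch) combined with the content of Lemma~\ref{lem:ngen-v} --- but your packaging buys two things. First, it makes fully explicit the alignment property: at each $t$ the sign of $a^{{\rm ly\mbox{-}}n}(t)-h^{{\rm ly\mbox{-}}n}(t)/\eta$ is the same for every $n$, because both demands are sliced by the same monotone bottom-up map; this is the actual crux of the decomposition, and the paper compresses it into the one-line remark ``Similarly, we can prove Eqn.~(\ref{eqn:ngen-s})''. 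Second, since $\Phi$ is defined by a minimization over total output, you never need the paper's claim that an optimal solution of $\textbf{fMCMP}$ satisfies $\sum_r\tilde u_r(t)\le a(t)$, a claim that is only correct ``without loss of generality'' and itself leans on the assumption $c_o\ge\eta\cdot c_g$. The only piece left unfinished in your outline is the case-by-case verification of separability over the three price regimes of Lemma~\ref{lem:fMCMP}; it goes through exactly as you predict, with alignment ruling out cross-layer cancellation of the $[\cdot]^+$ gas terms in the two regimes where local generation is used.
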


\begin{proof}
First, since that $(a^{\rm top}, h^{\rm top})$ can only be satisfied from external supplies, we can assume $a^{\rm top}(t) = 0, h^{\rm top}(t) = 0$ for all $t$ without loss of generality.

We then present the basic idea of proof. Suppose that $((\tilde{y}_{n}, \tilde{u}_{n})_{n=1}^{N}, \tilde{v}, \tilde{s})$ is an optimal solution for $\textbf{fMCMP}$. We will show that we can construct a new feasible solution $((\widehat{y}_{n}, \widehat{u}_{n})_{n=1}^{N}, \widehat{v}, \widehat{s})$  for $\textbf{fMCMP}$, and a new feasible solution $(\widehat{y}_{n}, \widehat{u}_{n}, \widehat{v}_n, \widehat{s}_n)$ for each $\textbf{fMCMP}_{\rm s}^{{\rm ly\mbox{-}}n}$, such that
\begin{equation} \label{eqn:nOFA-cost}
{\rm Cost}(\tilde{y},\tilde{u},\tilde{v},\tilde{s}) \ge
{\rm Cost}(\widehat{y},\widehat{u},\widehat{v},\widehat{s}) \ge \sum_{n = 1}^{N}{\rm Cost}(\widehat{y}_n)
\end{equation}
where the second inequality follows from Lemma~\ref{lem:fMCMP}.

 $(y_n, u_n, v_n, s_n)$ is an optimal solution for each $\textbf{fMCMP}_{\rm s}^{{\rm ly\mbox{-}}n}$. Hence, ${\rm Cost}(\widehat{y}_n) \ge {\rm Cost}(y_n)$ for each $n$. Thus,
\begin{eqnarray}
{\rm Cost}(\tilde{y},\tilde{u},\tilde{v},\tilde{s}) \ge  \sum_{n = 1}^{N}{\rm Cost}(\widehat{y}_n)
& \ge &  \sum_{n = 1}^{N}{\rm Cost}(y^\ast_n)  \\
& = & {\rm Cost}(y^\ast,u^\ast,v^\ast,s^\ast) \notag
\end{eqnarray}
Hence, $((y^\ast_{n}, u^\ast_{n})_{n=1}^{N}, v^\ast, s^\ast)$ is an optimal solution for $\textbf{fMCMP}$.

It only remains to prove Eqn~(\ref{eqn:nOFA-cost}). Define $(\widehat{y}_{n}, \widehat{u}_{n})$ based on $(\tilde{y}_{n}, \tilde{u}_{n})$ by:
\begin{equation}
\widehat{y}_{n}(t) =
\begin{cases}
1,& \mbox{if\ } n \le \sum_{r=1}^{N} \tilde{y}_{r}(t) \\
0, & \mbox{otherwise}
\end{cases}
\end{equation}
and
\begin{equation}
\begin{array}{@{}r@{\ }r@{\ }l@{}}
\widehat{u}_{1}(t) & = & \min\{L, \sum_{r=1}^{N} \tilde{u}_{r}(t)\} \\
\widehat{u}_{n}(t) & = & \min\{L, \sum_{r=1}^{N} \tilde{u}_{r}(t) - \sum_{r=1}^{n-1} \tilde{u}_{r}(t)\}
\end{array}
\end{equation}
Note that we have sliced $\sum_{r=1}^{N} \tilde{u}_{r}(t)$ into $N$ layers $(\widehat{u}_n(t))_{n=1}^N$ in the same manner as $(a^{{\rm ly\mbox{-}}n}(t))_{n=1}^N$.

It is straightforward to see that
\begin{equation}
\sum_{r=1}^{N} \tilde{y}_{r}(t) = \sum_{r=1}^{N} \hat{y}_{r}(t) \mbox{\ and\ }
\sum_{r=1}^{N} \tilde{u}_{r}(t) = \sum_{r=1}^{N} \hat{u}_{r}(t)
\end{equation}
Furthermore, we define
\[
\widehat{v}_{n}(t) = \big[a^{{\rm ly\mbox{-}}n}(t)-\widehat{u}_n(t)\big]^{+}, \
\widehat{s}_{n}(t) = \big[h^{{\rm ly\mbox{-}}n}(t)-\eta\cdot\widehat{u}_n(t)\big]^{+}
\]
It follows that $(\widehat{y}_n, \widehat{u}_n, \widehat{v}_n, \widehat{s}_n)$ is feasible for $\textbf{fMCMP}_{\rm s}^{{\rm ly\mbox{-}}n}$.

Finally, by Lemmas~\ref{lem:ngen-v}-\ref{lem:ngen-beta}, we can show that
\begin{equation}
{\rm Cost}(\tilde{y},\tilde{u},\tilde{v},\tilde{s}) \ge
{\rm Cost}(\widehat{y},\widehat{u},\widehat{v},\widehat{s})
\end{equation}
Therefore, it completes the proof.
\end{proof}
\begin{lem} \label{lem:ngen-v}
\begin{equation}
\sum_{r=1}^{N} \widehat{v}_{r}(t) \le \sum_{r=1}^{N} \tilde{v}_{r}(t) = \tilde{v}(t)
\end{equation}
\begin{equation}  \label{eqn:ngen-s}
\sum_{r=1}^{N} \widehat{s}_{r}(t) \le \sum_{r=1}^{N} \tilde{s}_{r}(t) = \tilde{s}(t)
\end{equation}
\end{lem}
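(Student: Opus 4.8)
The plan is to prove only the two inequalities $\sum_{r} \widehat v_r(t) \le \tilde v(t)$ and $\sum_r \widehat s_r(t) \le \tilde s(t)$; the stated equalities $\sum_r \tilde v_r(t)=\tilde v(t)$ and $\sum_r \tilde s_r(t)=\tilde s(t)$ are merely bookkeeping of the single external supply of the optimal solution $(\tilde y,\tilde u,\tilde v,\tilde s)$. I would fix one slot $t$ and abbreviate $U \triangleq \sum_{r=1}^N \tilde u_r(t) = \sum_{r=1}^N \widehat u_r(t)$, the total generator output, where the equality is the one already established in the proof of Theorem~\ref{thm:nOFA-optimal}. Since we assume $a^{\rm top}(t)=0$ there, we also have $\sum_{n=1}^N a^{{\rm ly\mbox{-}}n}(t)=a(t)$ and $\sum_{n=1}^N h^{{\rm ly\mbox{-}}n}(t)=h(t)$.

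The one structural fact I would isolate is that both the demand layers and the dispatch layers are bottom-up ``staircase'' slices of equal layer height. Concretely, the slicing \eqref{eq:demand_slicing-begin}--\eqref{eq:demand_slicing-end} gives $a^{{\rm ly\mbox{-}}n}(t)=\min\{L,\max\{0,a(t)-(n-1)L\}\}$, and the dispatch $U$ is sliced bottom-up into layers in exactly the same manner, so $\widehat u_n(t)=\min\{L,\max\{0,U-(n-1)L\}\}$; likewise $h^{{\rm ly\mbox{-}}n}(t)$ and $\eta\,\widehat u_n(t)$ are staircase slices of $h(t)$ and $\eta U$ with common layer height $\eta L$. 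The key monotonicity I would then extract is that for each layer $n$, the larger of the two totals produces the larger layer: $a(t)\ge U$ forces $a^{{\rm ly\mbox{-}}n}(t)\ge \widehat u_n(t)$ for every $n$, while $a(t)< U$ forces the reverse for every $n$. This sign-alignment across layers is what lets me collapse a sum of positive parts into the positive part of a sum.

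With that in hand the argument splits into two cases. If $a(t)\ge U$, then $\widehat v_n(t)=[a^{{\rm ly\mbox{-}}n}(t)-\widehat u_n(t)]^+=a^{{\rm ly\mbox{-}}n}(t)-\widehat u_n(t)$ for every $n$, so $\sum_n \widehat v_n(t)=a(t)-U$; feasibility constraint \eqref{C_e-demand} for the optimal solution gives $\tilde v(t)\ge a(t)-\sum_n \tilde u_n(t)=a(t)-U$, which is exactly the bound. If instead $a(t)<U$, then $a^{{\rm ly\mbox{-}}n}(t)\le \widehat u_n(t)$ for all $n$, hence $\widehat v_n(t)=0$ and $\sum_n \widehat v_n(t)=0\le \tilde v(t)$. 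The heat inequality \eqref{eqn:ngen-s} is identical after replacing $a(t)$ by $h(t)$, $U$ by $\eta U$, layer height $L$ by $\eta L$, and constraint \eqref{C_e-demand} by \eqref{C_h-demand}.

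I expect the main obstacle to be precisely the staircase monotonicity step, not the case analysis. The naive bound $\sum_n [x_n]^+ \ge [\sum_n x_n]^+$ runs the wrong way, so trying to bound $\sum_n \widehat v_n(t)$ directly by $[a(t)-U]^+$ without exploiting the layer structure fails. The crucial point is that the bottom-up slicing forces all layer-wise surpluses $a^{{\rm ly\mbox{-}}n}(t)-\widehat u_n(t)$ to share one sign, so that $\sum_n [a^{{\rm ly\mbox{-}}n}(t)-\widehat u_n(t)]^+ = \big[\sum_n (a^{{\rm ly\mbox{-}}n}(t)-\widehat u_n(t))\big]^+ = [a(t)-U]^+$; verifying this alignment carefully, including the single partially filled boundary layer of each staircase, is the only delicate part of the proof.
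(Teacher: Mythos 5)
Your proposal is correct, and its core is the same as the paper's: the bottom-up slicing of $\sum_r\tilde u_r(t)$ "in the same manner" as the demand slicing \eqref{eq:demand_slicing-begin}--\eqref{eq:demand_slicing-end} forces a layer-wise ordering between $a^{{\rm ly\mbox{-}}n}(t)$ and $\widehat u_n(t)$, which lets the sum of positive parts telescope, after which feasibility (\eqref{C_e-demand}, \eqref{C_h-demand}) finishes the job. The genuine difference is how the two proofs dispose of the sign issue. The paper skips your case analysis entirely: it invokes \emph{optimality} of $(\tilde y,\tilde u,\tilde v,\tilde s)$ to assert $a(t)\ge\sum_r\tilde u_r(t)$, so only your first case ever arises, and it then claims the heat inequality follows "similarly." Your proof instead keeps both cases and uses only \emph{feasibility}, exploiting that the staircase maps $z\mapsto\min\{L,\max\{0,z-(n-1)L\}\}$ are monotone in the total, so all layer-wise surpluses share one sign. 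This buys you two real advantages. First, the paper's optimality claim is itself delicate: it rests on the model assumption $c_o\ge\eta c_g$, and when $c_o=\eta c_g$ it only holds for \emph{some} optimal solution (a WLOG the paper does not state). Second, and more importantly, the "similarly" for the heat inequality is not actually symmetric: $h(t)\ge\eta\sum_r\tilde u_r(t)$ can fail at optimality, since excess co-generated heat is simply wasted ($\tilde s(t)=0$), so the heat part genuinely requires your second case ($h^{{\rm ly\mbox{-}}n}(t)\le\eta\widehat u_n(t)$ for all $n$, hence $\sum_n\widehat s_n(t)=0\le\tilde s(t)$). In short, your argument is slightly longer but closes a gap the paper papers over, and it holds for any feasible comparison solution, not just an optimal one.
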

\begin{proof}
First, we note that $((\tilde{y}_{n}, \tilde{u}_{n})_{n=1}^{N}, \tilde{v}, \tilde{s})$ is an optimal solution for $\textbf{fMCMP}$. Hence, $a(t) \ge \sum_{r=1}^{N} \tilde{u}_{r}(t)$ for all $t$. Then,
\begin{equation}
a^{{\rm ly\mbox{-}}n}(t) \ge \widehat{u}_n(t)
\end{equation}
Because we slice $\sum_{r=1}^{N} \tilde{u}_{r}(t)$ into $N$ layers $(\widehat{u}_n(t))_{n=1}^N$ in a same manner as $a(t)$ into $(a^{{\rm ly\mbox{-}}n}(t))_{n=1}^N$.
Hence,
\begin{eqnarray}
& & \sum_{r=1}^{N} \widehat{v}_{r}(t) = \sum_{r=1}^{N} \big[a^{{\rm ly\mbox{-}}r}(t)-\widehat{u}_r(t)\big]^{+} \\
& = & \sum_{r=1}^{N} \Big(a^{{\rm ly\mbox{-}}r}(t)-\widehat{u}_r(t)\Big) = a(t) - \sum_{r=1}^{N} \widehat{u}_r(t) \\
& = & a(t) - \sum_{r=1}^{N} \tilde{u}_r(t) \le  \sum_{r=1}^{N}  \tilde{v}_{r}(t) = \tilde{v}(t)
\end{eqnarray}
where the last inequality follows constraint (C'4).

Similarly, we can prove Eqn.~(\ref{eqn:ngen-s}) considering (C'5).
\end{proof}

\begin{lem} \label{lem:ngen-beta}
\begin{equation}
\sum_{r=1}^{N} [\widehat{y}_r(t) - \widehat{y}_r(t\mbox{-}1)]^{+}  \le
\sum_{r=1}^{N} [\tilde{y}_r(t) - \tilde{y}_r(t\mbox{-}1)]^{+}
\end{equation}
\end{lem}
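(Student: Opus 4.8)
The plan is to reduce both sides of the inequality to a statement about the \emph{aggregate} number of running generators, and then invoke the subadditivity of the positive-part operator. First I would introduce the shorthand $k(t)\triangleq\sum_{r=1}^{N}\tilde{y}_r(t)$ for the number of generators that the optimal solution $\tilde{y}$ keeps on at slot $t$; the boundary convention $\tilde{y}_r(0)=0$ gives $k(0)=0$. By the construction of $\widehat{y}$ in the proof of Theorem~\ref{thm:nOFA-optimal} we already have $\sum_{r=1}^{N}\widehat{y}_r(t)=k(t)$, so the packed schedule runs exactly generators $1,\dots,k(t)$ in every slot, i.e.\ $\widehat{y}_r(t)=\mathbf{1}_{\{r\le k(t)\}}$.

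Next I would evaluate the left-hand side exactly. The difference $\widehat{y}_r(t)-\widehat{y}_r(t-1)$ is positive precisely for the indices $r$ with $k(t-1)<r\le k(t)$, where it equals $1$; for all other $r$ it is $0$ or $-1$. Hence
\[
\sum_{r=1}^{N}[\widehat{y}_r(t)-\widehat{y}_r(t-1)]^{+}=[k(t)-k(t-1)]^{+}.
\]
This is the crux of why packing helps: because the set of on-generators in $\widehat{y}$ is always a prefix $\{1,\dots,k(t)\}$, a startup is charged only when the \emph{total count} strictly increases, never because of a reshuffle among generators.

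For the right-hand side I would use the elementary subadditivity $\sum_r[x_r]^{+}\ge[\sum_r x_r]^{+}$ (immediate from $[x_r]^{+}\ge x_r$ and $[x_r]^{+}\ge 0$). Applying it with $x_r=\tilde{y}_r(t)-\tilde{y}_r(t-1)$ gives
\[
\sum_{r=1}^{N}[\tilde{y}_r(t)-\tilde{y}_r(t-1)]^{+}\ge\Big[\sum_{r=1}^{N}\big(\tilde{y}_r(t)-\tilde{y}_r(t-1)\big)\Big]^{+}=[k(t)-k(t-1)]^{+}.
\]
Chaining the two displays yields the claimed per-slot inequality for every $t$.

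I do not anticipate a serious obstacle here: the only step requiring care is the exact evaluation of the left-hand side, where I must argue that the packed configuration incurs a startup only when $k(t)>k(t-1)$. This rests entirely on the fact that the running set in $\widehat{y}$ is a nested prefix, which is immediate from the indicator form $\widehat{y}_r(t)=\mathbf{1}_{\{r\le k(t)\}}$; everything else is the one-line subadditivity bound.
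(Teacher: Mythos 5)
Your proof is correct and follows essentially the same route as the paper's: both arguments use the prefix (packed, nested) structure of $\widehat{y}$ to evaluate the left-hand side exactly as $\bigl[\sum_{r}\widehat{y}_r(t)-\sum_{r}\widehat{y}_r(t-1)\bigr]^{+}$, identify this aggregate with $\sum_r \tilde{y}_r$, and then bound the right-hand side from below via the subadditivity $\sum_r [x_r]^{+}\ge\bigl[\sum_r x_r\bigr]^{+}$. The only cosmetic difference is that the paper phrases the left-hand-side evaluation as a two-case analysis using the monotonicity $\widehat{y}_1(t)\ge\cdots\ge\widehat{y}_N(t)$, whereas you count the indices $k(t-1)<r\le k(t)$ directly; these are the same argument.
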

\begin{proof}
First, note that $\widehat{y}_1(t) \ge ... \ge \widehat{y}_N(t)$ is a decreasing sequence. Because $\widehat{y}_n(t) \in \{0, 1\}$ for all $n$, we obtain
\begin{eqnarray}
 & &  \sum_{r=1}^{N} [\widehat{y}_r(t) - \widehat{y}_r(t\mbox{-}1)]^{+}  \\
& = & \begin{cases}
  0,  \mbox{\qquad \quad if\ } \sum_{r=1}^{N} \widehat{y}_r(t) \le \sum_{r=1}^{N} \widehat{y}_r(t\mbox{-}1) \\
   \sum_{r=1}^{N} \widehat{y}_r(t) - \sum_{r=1}^{N} \widehat{y}_r(t\mbox{-}1), \mbox{\ \ \ otherwise }
 \end{cases} \\
 & = &  \Big[ \sum_{r=1}^{N} \widehat{y}_r(t) - \sum_{r=1}^{N} \widehat{y}_r(t\mbox{-}1) \Big]^{+} \\
 & = & \Big[ \sum_{r=1}^{N} \tilde{y}_r(t) - \sum_{r=1}^{N} \tilde{y}_r(t\mbox{-}1) \Big]^{+}  \\
 & \le & \Big[  \sum_{r=1}^{N} [\tilde{y}_r(t) - \tilde{y}_r(t\mbox{-}1)]^{+}  \Big]^{+}
= \sum_{r=1}^{N} [\tilde{y}_r(t) - \tilde{y}_r(t\mbox{-}1)]^{+}  \notag
\end{eqnarray}
\end{proof}

{\rev {\section{Proof of Theorem 6}\label{subsec:MultipleRatio}
\setcounter{thm}{5}

\begin{thm} \label{thm:CHASE-NG}
The competitive ratio of ${\sf CHASE}^{\rm lk(\omega)}$ satisfies
\begin{equation}
{\sf CR}({\sf CHASE}^{\rm lk(\omega)})\le \min (3-2\cdot g(\alpha,\omega), 1/\alpha), \label{eq:CHASE-N_Ratio}
\end{equation}
where {$\alpha\in(0,1]$ is defined in \eqref{eq:alpha_def}} and $g(\alpha,\omega)\in [\alpha, 1]$ is defined in \eqref{eq:g_alpha_omega}.
\end{thm}

\begin{proof}
We denote the outcome of $\mathbf{CHASE^{lk(\omega)}}$ be
\[
\left(y^{on},u^{on},v^{on},s^{on}\right)=\left(y_{n}^{on},u_{n}^{on},v_{n}^{on},s_{n}^{on}\right)_{n=1}^{N}.
\]
 The cost of $\mathbf{CHASE^{lk(\omega)}}$ can be expressed in the
following way:
\begin{eqnarray}
 &  & \mathrm{Cost}\left(y^{on},u^{on},v^{on},s^{on}\right)\nonumber \\
 & = & \sum_{t=1}^{T}\left\{ p(t)\left(\sum_{n=1}^{N}v_{n}^{on}(t)\right)+c_{g}\left(\sum_{n=1}^{N}s_{n}^{on}(t)\right)+\right.\nonumber \\
 &  & \left.\sum_{n=1}^{N}\left(c_{o}\cdot u_{n}^{on}(t)+y_{n}^{on}(t)\cdot c_{m}+\beta\cdot\left[y_{n}^{on}(t)-y_{n}^{on}(t-1)\right]^{+}\right)\right\} \nonumber \\
 & = & \sum_{n=1}^{N}\sum_{t=1}^{T}\left\{ p(t)v_{n}^{on}(t)+c_{g}\cdot s_{n}^{on}(t)+\right.\nonumber \\
 &  & \left.c_{o}\cdot u_{n}^{on}(t)+y_{n}^{on}(t)\cdot c_{m}+\beta\cdot\left[y_{n}^{on}(t)-y_{n}^{on}(t-1)\right]^{+}\right\} \nonumber \\
 & = & \sum_{n=1}^{N}\mathrm{Cost}\left(y_{n}^{on},u_{n}^{on},v_{n}^{on},s_{n}^{on}\right),\label{eq:exchangeSum}
\end{eqnarray}

where $\left(y_{n}^{on},u_{n}^{on},v_{n}^{on},s_{n}^{on}\right)$
is the online solution by $\mathbf{CHASE_{s}^{lk(\omega)}}$for sub-problem
$\mathbf{fMCMP_{s}^{ly-n}}.$

Based on Theorem 5, we know the optimal offline solution, denoted
as $(y^{*},u^{*},v^{*},s^{*})$ can be expressed as:

\[
(y^{*},u^{*},v^{*},s^{*})=(\sum_{n=1}^{N}y_{n}^{*},\sum_{n=1}^{N}u_{n}^{*},\sum_{n=1}^{N}v_{n}^{*},\sum_{n=1}^{N}s_{n}^{*}),
\]

where $(y_{n}^{*},u_{n}^{*},v_{n}^{*},s_{n}^{*})$ is the optimal
offline solution for sub-problem $\mathbf{fMCMP_{s}^{ly-n}}.$ Similar
as (\ref{eq:exchangeSum}), we have
\begin{eqnarray*}
\mathrm{Cost}(y_{n}^{*},u_{n}^{*},v_{n}^{*},s_{n}^{*}) & = & \sum_{n=1}^{N}\mathrm{Cost}(y_{n}^{*},u_{n}^{*},v_{n}^{*},s_{n}^{*}).
\end{eqnarray*}

Next, from Corollary 1, we know
\begin{equation}
\mathrm{Cost}\left(y_{n}^{on},u_{n}^{on},v_{n}^{on},s_{n}^{on}\right)\leq c\cdot\mathrm{Cost}(y_{n}^{*},u_{n}^{*},v_{n}^{*},s_{n}^{*}),\ \forall n,\label{eq:singleraito->wholeratio}
\end{equation}

where $c$ is the competitive ratio we achieved in Corollary 1.

Thus, by summing up $N$ inequalities (\ref{eq:singleraito->wholeratio}),
we get
\[
\mathrm{Cost}\left(y^{on},u^{on},v^{on},s^{on}\right)\leq c\cdot\mathrm{Cost}(y^{*},u^{*},v^{*},s^{*}).
\]

It completes the proof.
\end{proof}

}}

\section{Proof of Theorem 7}\label{subsec:slowratio}
\setcounter{thm}{6}
\begin{thm}
The competitive ratio of $\mathrm{{\sf CHASE}_{gen}^{lk(w)}}$ is upper bounded
by $(3-2g(\alpha,\omega))\cdot\max\big(r_{1},r_{2}\big)$, where
\begin{eqnarray*}
r_{1} & = & 1+\max\left\{ \frac{\big(P_{\max}+c_{g}\cdot\eta-c_{0}\big)}{Lc_{0}+c_{m}}\max\left\{ 0,\big(L-{\sf R}_{{\rm up}}\big)\right\} \right.\\
 &  & \left.\frac{c_{o}}{c_{m}}\max\left\{ 0,\big(L-{\sf R}_{{\rm dw}}\big)\right\} \right\} \\
r_{2} & = & 1+\frac{L\big(P_{\max}+c_{g}\cdot\eta\big)+c_{m}}{\beta}{\sf T}_{{\rm on}} +\frac{L\big(P_{\max}+c_{g}\cdot\eta\big)}{\beta}{\sf T}_{{\rm off}}
\end{eqnarray*}
\end{thm}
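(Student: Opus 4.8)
The plan is to avoid analyzing ${\sf CHASE}_{\rm gen}^{{\rm lk}(\omega)}$ against the slow-responding optimum directly, and instead compare it against its \emph{own} reference trajectory, the fast-responding online solution ${\sf CHASE}_s^{{\rm lk}(\omega)}$ produced in Line~1 of Algorithm~\ref{alg:CHASE-general}. Write $C_{\rm gen}$, $C_s$, $C_{\rm OFA}^{\rm slow}$, $C_{\rm OFA}^{\rm fast}$ for the costs of ${\sf CHASE}_{\rm gen}^{{\rm lk}(\omega)}$, of the reference ${\sf CHASE}_s^{{\rm lk}(\omega)}$, of the slow offline optimum, and of the ${\bf fMCMP}_s$ offline optimum. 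Since the slow problem just adds the cross-slot constraints \eqref{C_ramp_up}--\eqref{C_min_off} to ${\bf fMCMP}_s$, every slow-feasible solution is fast-feasible under the same objective, so $C_{\rm OFA}^{\rm fast}\le C_{\rm OFA}^{\rm slow}$. Theorem~\ref{thm:CHASElk-competitive-ratio} already gives $C_s\le(3-2g(\alpha,\omega))\,C_{\rm OFA}^{\rm fast}$. Hence it suffices to prove the single multiplicative bound
\begin{equation}
C_{\rm gen}\le\max(r_1,r_2)\cdot C_s, \label{eq:slow-reduction}
\end{equation}
after which these three facts chain to the claimed ratio $(3-2g(\alpha,\omega))\max(r_1,r_2)$ relative to $C_{\rm OFA}^{\rm slow}$.

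To prove \eqref{eq:slow-reduction} I would split the reference cost as $C_s=R+S$, where $R=\sum_t\bigl(c_o u_s(t)+c_m y_s(t)+p(t)v_s(t)+c_g s_s(t)\bigr)$ is the operational part and $S=\sum_t\beta[y_s(t)-y_s(t-1)]^+$ is the startup part, and then show that the two modification blocks of Algorithm~\ref{alg:CHASE-general} inflate these two parts \emph{separately}. First consider the ramping block (Lines~7--11), active only in slots where $y(t)=y_s(t)$ but $u(t)$ cannot reach $u_s(t)$ within one step. A single-slot shortfall satisfies $u_s(t)-u(t)\le L-{\sf R}_{\rm up}$ and must be covered by external electricity and lost co-generated heat, costing at most $(P_{\max}+\eta c_g-c_o)(L-{\sf R}_{\rm up})$, which I charge against the reference operational cost $c_o u_s(t)+c_m\ge Lc_o+c_m$ at full load; symmetrically, a ramp-down surplus $u(t)-u_s(t)\le L-{\sf R}_{\rm dw}$ is wasted generation costing at most $c_o(L-{\sf R}_{\rm dw})$ against a reference slot cost $\ge c_m$. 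Taking the worse ratio shows the ramping block maps the operational cost of the non-frozen slots to at most $r_1\cdot R$.

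Next I would analyze the minimum on/off block (Lines~2--6), whose penalty is charged \emph{per on-interval} of the reference rather than per slot. Around each reference on-interval the actual solution can be frozen in the wrong state at most twice: frozen-off for up to ${\sf T}_{\rm off}$ slots at its start (reference on, we still off) and frozen-on for up to ${\sf T}_{\rm on}$ slots at its end (reference off, we still on). Bounding any single slot crudely by its worst feasible cost---$L(P_{\max}+\eta c_g)$ when the generator is off and $L(P_{\max}+\eta c_g)+c_m$ when it is on---the total cost of these frozen slots is at most $L(P_{\max}+\eta c_g){\sf T}_{\rm off}+(L(P_{\max}+\eta c_g)+c_m){\sf T}_{\rm on}=(r_2-1)\beta$, charged against the one startup cost $\beta$ the reference pays to enter that on-interval. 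Summing over on-intervals bounds the total frozen-slot cost by $(r_2-1)S$, and since the actual solution never performs more startups than the reference, its own switching cost is at most $S$.

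Assembling the pieces, $C_{\rm gen}\le r_1 R+S+(r_2-1)S=r_1 R+r_2 S\le\max(r_1,r_2)(R+S)=\max(r_1,r_2)C_s$, which is \eqref{eq:slow-reduction}; the single-generator result extends to $N$ generators via the divide-and-conquer decomposition of Sec.~\ref{sec:ngens} exactly as in Theorem~\ref{thm:CHASE-NG}. A useful consistency check is that setting ${\sf T}_{\rm on}={\sf T}_{\rm off}=0$ and ${\sf R}_{\rm up}={\sf R}_{\rm dw}=\infty$ forces $r_1=r_2=1$, recovering the fast-responding bound. The step I expect to be the main obstacle is precisely the disjoint charging that yields $\max(r_1,r_2)$ rather than a product: one must verify that frozen slots (fully accounted in the $(r_2-1)S$ term) are never also charged by the ramping argument, and that the per-slot ramping ratio is genuinely dominated by its full-load value $r_1-1$ even at partial load---handling the light-demand slots may require aggregating the shortfall over a ramp episode instead of bounding it slot-by-slot.
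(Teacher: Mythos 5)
Your proposal is correct and takes essentially the same route as the paper's proof: the same chain through the fast-responding offline optimum ($C_{\rm OFA}^{\rm fast}\le C_{\rm OFA}^{\rm slow}$, then Theorem~\ref{thm:CHASElk-competitive-ratio}), and the same key bound $C_{\rm gen}\le\max(r_1,r_2)\,C_s$ obtained by splitting time into slots where the actual on/off state matches the reference (per-slot ramping analysis giving $r_1$) versus frozen slots caused by the minimum on/off periods (charged per reference startup against $\beta$, giving $r_2$), with only trivially different bookkeeping (your $r_1R+r_2S$ combination versus the paper's mediant inequality on the matching/mismatching cost pairs). Both of your flagged concerns resolve exactly as in the paper: the two charging sets are disjoint by construction, and the per-slot ramp ratio $\big(u_s(t)-{\sf R}_{{\rm up}}\big)/\big(c_o u_s(t)+c_m\big)$ is increasing in $u_s(t)$, hence maximized at full load $u_s(t)=L$, so no episode-level aggregation is needed.
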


\begin{proof}
The competitive ratio can be expressed as follows:

\begin{eqnarray}
{\sf CR}({\sf CHASE}_{gen}) & = & \max_{\sigma}\frac{{\rm Cost}_{{\sf CHASE}_{gen}}(\sigma)}{{\rm Cost}_{\sf Opt}(\sigma)}\nonumber \\
 & \leq & \max_{\sigma}\frac{{\rm Cost}_{{\sf CHASE}_{gen}}(\sigma)}{{\rm Cost}_{\sf iOpt}(\sigma)}\nonumber \\
 & = & \max_{\sigma}\frac{{\rm Cost}_{{\sf CHASE}_{gen}}(\sigma)}{{\rm Cost}_{{\sf CHASE}_{s}}(\sigma)}\cdot\frac{{\rm Cost}_{{\sf CHASE}_{s}}(\sigma)}{{\rm Cost}_{\sf iOpt}(\sigma)}.\label{eq:non-ideal-ratio}
\end{eqnarray}

In Eqn.~(\ref{eq:non-ideal-ratio}), we have the following notations:
\begin{itemize}
\item ${\rm Cost}_{{\sf CHASE}_{gen}}(\sigma)$: The cost of online algorithm $\mathrm{{\sf CHASE}_{gen}}=\big(y_{2},u_{2},v_{2},s_{2}\big)$
for \textbf{MCMP} with input $\sigma$

\item ${\rm Cost}_{\sf Opt}(\sigma)$: The cost of offline optimal algorithm
for \textbf{MCMP} with input $\sigma$

\item ${\rm Cost}_{\sf iOpt}(\sigma)$: The cost of offline optimal algorithm
for \textbf{fMCMP} with input $\sigma$

\item ${\rm Cost}_{{\sf CHASE}_{s}}(\sigma)$: The cost of online algorithm $\mbox{CHASEs}=\big(y_{s},u_{s},v_{s},s_{s}\big)$
for\textbf{ fMCMP} with input $\sigma$
\end{itemize}
Now, we analyze each term in Eqn.~(\ref{eq:non-ideal-ratio}).

First, as we prove in Theorem \ref{thm:CHASE-competitive-ratio},
\begin{eqnarray*}
 &  & \frac{{\rm Cost}_{{\sf CHASE}_{s}}(\sigma)}{{\rm Cost}_{\sf iOpt}(\sigma)}\\
 & \leq & 1+\frac{2\beta L\big(P_{\max}+\eta\cdot c_{g}-c_{o}-\frac{c_{m}}{L}\big)}{\big(P_{\max}+\eta\cdot c_{g}\big)\big(\beta L+w\cdot c_{m}\big(L-\frac{c_{m}}{P_{\max}+\eta\cdot c_{g}-c_{o}}\big)\big)}\\
 &  & =r_{1}(w)
\end{eqnarray*}

Second, upper-bounding the term $\frac{{\rm Cost}_{{\sf CHASE}_{gen}}(\sigma)}{{\rm Cost}_{{\sf CHASE}_{s}}(\sigma)}$
We divide the cost of ${\sf CHASE}_{s}$ into two parts:

\[
{\rm Cost}_{{\sf CHASE}_{s}}(\sigma)={\rm Cost}_{{\sf CHASE}_{s}}^{e}(\sigma)+{\rm Cost}_{{\sf CHASE}_{s}}^{n}(\sigma)
\]

where:
\[
{\rm Cost}_{{\sf CHASE}_{s}}^{e}(\sigma)=\sum_{t\in\mathbf{T_{e}}}c_{o}u_{s}(t)+p(t)v_{s}(t)+c_{g}s_{s}(t)+c_{m}y_{s}(t)
\]

\begin{eqnarray*}
{\rm Cost}_{{\sf CHASE}_{s}}^{n}(\sigma) & = & \sum_{t\in\mathbf{T_{n}}}c_{o}u_{s}(t)+p(t)v_{s}(t)+c_{g}s_{s}(t)+c_{m}y_{s}(t)\\
 &  & +\sum_{t=1}^{T}\beta\big[y_{s}(t)-y_{s}(t-1)\big]^{+}.
\end{eqnarray*}
and $\mathbf{T_{e}}=\left\{ t|y_{s}(t)=y_{2}(t)\right\} ,\ \mathbf{T_{n}}=\left\{ t|y_{s}(t)\neq y_{2}(t)\right\} $

Similarly, we divide the cost of $\mathrm{{\sf CHASE}_{gen}}$ into two parts:
\[
{\rm Cost}_{{\sf CHASE}_{gen}}(\sigma)={\rm Cost}_{{\sf CHASE}_{gen}}^{e}(\sigma)+{\rm Cost}_{{\sf CHASE}_{gen}}^{n}(\sigma)
\]

Therefore,
\begin{eqnarray*}
\frac{{\rm Cost}_{{\sf CHASE}_{gen}}(\sigma)}{{\rm Cost}_{{\sf CHASE}_{s}}(\sigma)} & \leq & \max\big(\frac{{\rm Cost}_{{\sf CHASE}_{gen}}^{e}(\sigma)}{{\rm Cost}_{{\sf CHASE}_{s}}^{e}(\sigma)},\frac{{\rm Cost}_{{\sf CHASE}_{gen}}^{n}(\sigma)}{{\rm Cost}_{{\sf CHASE}_{s}}^{n}(\sigma)}\big)
\end{eqnarray*}

Next we will prove
\[
\frac{{\rm Cost}_{{\sf CHASE}_{gen}}^{e}(\sigma)}{{\rm Cost}_{{\sf CHASE}_{s}}^{e}(\sigma)}\leq r_{1}
\]

Note that
\begin{eqnarray}
 &  & \frac{{\rm Cost}_{{\sf CHASE}_{gen}}^{e}(\sigma)}{{\rm Cost}_{{\sf CHASE}_{s}}^{e}(\sigma)}\nonumber \\
 & = & \frac{\sum_{t\in\mathbf{T_{e}}}\big(c_{o}u_{2}(t)+p(t)v_{2}(t)+c_{m}y_{2}(t)+c_{g}s_{2}(t)\big)}{\sum_{t\in\mathbf{T_{e}}}\big(c_{o}u_{s}(t)+p(t)v_{s}(t)+c_{m}y_{s}(t)+c_{g}s_{s}(t)\big)}\\
 & \leq & \max_{\forall t\in\mathbf{T_{e}}}\frac{c_{o}u_{2}(t)+p(t)v_{2}(t)+c_{m}y_{2}(t)+c_{g}s_{2}(t)}{c_{o}u_{s}(t)+p(t)v_{s}(t)+c_{m}y_{s}(t)+c_{g}s_{s}(t)}\label{eq:one-slot ratio}
\end{eqnarray}

(\ref{eq:one-slot ratio}) says to build a upper bound of ${\rm Cost}_{{\sf CHASE}_{gen}}^{e}(\sigma)/{\rm Cost}_{{\sf CHASE}_{s}}^{e}(\sigma)$
over time slots, we only need to consider the maximum ratio on a single
time slot. When $y_{2}(t)=y_{s}(t)=0$, it is easy to see $u_{2}(t)=u_{s}(t)=0,\ v_{2}(t)=v_{s}(t),\ s_{2}(t)=s_{s}(t)\rightarrow {\rm Cost}_{{\sf CHASE}_{gen}}^{e}(\sigma)/{\rm Cost}_{{\sf CHASE}_{s}}^{e}(\sigma)=1$
Thus, we only consider the situation when $y_{2}(t)=y_{s}(t)=1:$

\textbf{Case 1}: $u_{s}(t)<u_{2}(t-1)$ For $\mathrm{{\sf CHASE}_{gen}}$:
\begin{eqnarray*}
u_{2}(t) & = & u_{2}(t-1)-\min\big({\sf R}_{{\rm dw}},u_{2}(t-1)-u_{s}(t)\big)\\
 & = & \max\left\{ u_{2}(t-1)-R_{dw},u_{s}(t)\right\} \\
 & \leq & \max\left\{ L-R_{dw},u_{s}(t)\right\}
\end{eqnarray*}

Therefore,
\begin{eqnarray}
 &  & \frac{c_{o}u_{2}(t)+p(t)v_{2}(t)+c_{m}y_{2}(t)+c_{g}s_{2}(t)}{c_{o}u_{s}(t)+p(t)v_{s}(t)+c_{m}y_{s}(t)+c_{g}s_{s}(t)}\nonumber \\
 & = & \frac{c_{o}u_{2}(t)+p(t)v_{2}(t)+c_{g}s_{2}(t)+c_{m}}{c_{o}u_{s}(t)+p(t)v_{s}(t)+c_{g}s_{s}(t)+c_{m}}\\
 & \leq & 1+\frac{c_{o}\big(u_{2}(t)-u_{s}(t)\big)}{c_{m}+c_{o}u_{s}(t)}\nonumber \\
 &  & \frac{p(t)\big(v_{2}(t)-v_{s}(t)\big)+c_{g}\big(s_{2}(t)-s_{s}(t)\big)}{c_{m}+c_{o}u_{s}(t)}\\
 & \leq & 1+\frac{c_{o}\max\left\{ L-R_{dw}-u_{s}(t),0\right\} }{c_{m}+c_{o}u_{s}(t)}\label{eq:v2>vs, s2>ss}\\
 & = & 1+\frac{c_{o}}{c_{m}}\max\left\{ L-R_{dw},0\right\}
\end{eqnarray}

Eqn.~(\ref{eq:v2>vs, s2>ss}) is from $v_{2}(t)\leq v_{s}(t),\ s_{2}(t)\leq s_{s}(t)$,
this is because $u_{2}(t)\geq u_{s}(t)$

\textbf{Case 2}: $u_{s}(t)\geq u_{2}(t-1)$ For $\mathrm{{\sf CHASE}_{gen}}$:
\begin{eqnarray*}
u_{2}(t) & = & u_{2}(t-1)+\min\big({\sf R}_{{\rm up}},u_{s}(t)-u_{2}(t-1)\big)\\
 & = & \min\left\{ u_{2}(t-1)+{\sf R}_{{\rm up}},u_{s}(t)\right\}
\end{eqnarray*}

Therefore,
\begin{eqnarray}
 &  & \frac{c_{o}u_{2}(t)+p(t)v_{2}(t)+c_{m}+c_{g}s_{2}(t)}{c_{o}u_{s}(t)+p(t)v_{s}(t)+c_{m}+c_{g}s_{s}(t)}\nonumber \\
 & = & 1+\frac{c_{o}\big(u_{2}(t)-u_{s}(t)\big)}{c_{o}u_{s}(t)+c_{m}}\nonumber \\
 &  & \frac{p(t)\big(v_{2}(t)-v_{s}(t)\big)+c_{g}\big(s_{2}(t)-s_{s}(t)\big)}{c_{o}u_{s}(t)+c_{m}}\\
 & \leq & \frac{\big(p(t)+c_{g}\eta-c_{o}\big)\big(u_{s}(t)-u_{2}(t)\big)}{c_{o}u_{s}(t)+c_{m}}+1\label{eq:v-v<u-u}\\
 & \leq & 1+\big(P_{\mathrm{max}}+c_{g}\eta-c_{o}\big)\frac{\max\big(0,u_{s}(t)-u_{2}(t-1)-{\sf R}_{{\rm up}}\big)}{c_{o}u_{s}(t)+c_{m}}\nonumber \\
 & \leq & 1+\big(P_{\mathrm{max}}+c_{g}\eta-c_{o}\big)\frac{\max\big(0,u_{s}(t)-{\sf R}_{{\rm up}}\big)}{c_{o}u_{s}(t)+c_{m}}\nonumber \\
 & \leq & 1+\big(P_{\mathrm{max}}+c_{g}\eta-c_{o}\big)\frac{\max\big(0,L-{\sf R}_{{\rm up}}\big)}{Lc_{o}+c_{m}}\nonumber
\end{eqnarray}

Eqn.~(\ref{eq:v-v<u-u}) is from $v_{2}(t)-v_{s}(t)\leq u_{s}(t)-u_{2}(t),\ s_{2}(t)-s_{s}(t)\leq\eta\cdot\big(u_{s}(t)-u_{2}(t)\big)$
Note now we have $u_{s}\geq u_{2}:$
\begin{eqnarray*}
v_{2}-v_{s} & = & \big[a-u_{2}\big]^{+}-\big[a-u_{s}\big]^{+}\\
 & = & \max\big(a-u_{2},0\big)+\min\big(u_{s}-a,0\big)\\
 & = & \begin{cases}
a-u_{2} & if\ u_{s}>a>u_{2}\\
0 & if\ u_{s}>u_{2}>a\\
u_{s}-a & if\ a>u_{s}>u_{2}
\end{cases}\\
 & \leq & u_{s}-u_{2}
\end{eqnarray*}

Similarly we prove $s_{2}(t)-s_{s}(t)\leq\eta\cdot\big(u_{s}(t)-u_{2}(t)\big)$

Summarizing the above two cases, we conclude that
\begin{eqnarray*}
 &  & \frac{{\rm Cost}_{{\sf CHASE}_{gen}}^{e}(\sigma)}{{\rm Cost}_{{\sf CHASE}_{s}}^{e}(\sigma)}\\
 & \leq & 1+\max\left\{ \big(P_{\mathrm{max}}+c_{g}\eta-c_{o}\big)\frac{\max\big(0,L-{\sf R}_{{\rm up}}\big)}{Lc_{o}+c_{m}},\right.\\
 &  & \left.\frac{c_{o}}{c_{m}}\max\left\{ L-R_{dw},0\right\} \right\}
\end{eqnarray*}

Now, we still need to upper bound the term $\frac{{\rm Cost}_{{\sf CHASE}_{gen}}^{n}(\sigma)}{{\rm Cost}_{{\sf CHASE}_{s}}^{n}(\sigma)}$
Note the set $\mathbf{T_{n}}$ represents the time durations when
${\sf CHASE}_{gen}$ and ${\sf CHASE}_{s}$ have different on/off status,
and this can only occur on the minimum on/off periods: $\mathrm{T_{on}/T_{off}}$
That is: when ${\sf CHASE}_{s}$ has a startup, ${\sf CHASE}_{gen}$ can
not startup the generator because the generator is during its minimum
off periods ${\sf T}_{{\rm off}}$ Similarly, we know such mismatch
of on/off status also occurs during the minimum on periods $T_{on}$
of ${\sf CHASE}_{gen}$.

As ${\sf CHASE}_{gen}$ always follows ${\sf CHASE}_{s}$ to startup, thus
${\sf CHASE}_{gen}$ has at most the same number of startups as ${\sf CHASE}_{s}$.
If we denote the number of the startup of $\mathrm{{\sf CHASE}_{s}}$
and ${\sf CHASE}_{gen}$ as $k_{s}$ and $k_{2}$, we have%
\footnote{Without loss of generality, we assume $k_{s}\geq1$ Otherwise, $k_{s}=0$,
then $k_{2}=0$, thus ${\rm Cost}_{{\sf CHASE}_{gen}}^{n}(\sigma)/{\rm Cost}_{{\sf CHASE}_{s}}^{n}(\sigma)=1$%
}:
\[
k_{2}\leq k_{s}
\]

Therefore, we can find a lower bound of ${\rm Cost}_{{\sf CHASE}_{s}}^{n}(\sigma):$
\[
{\rm Cost}_{{\sf CHASE}_{s}}^{n}(\sigma)\geq k_{s}\cdot\beta.
\]

We cam also find an upper bound of $\mathrm{{\rm Cost}_{{\sf CHASE}_{2}}^{n}(\sigma)}$,
by assuming on the minimum on/off periods incurring the maximum cost:
\begin{eqnarray*}
 &  & {\rm Cost}_{{\sf CHASE}_{2}}^{n}(\sigma)\\
 & \leq & k_{2}\big(\beta+\mathrm{\textrm{\mbox{\ensuremath{T_{on}}}}}(LP_{\max}+\eta c_{g}L+c_{m})+\mbox{\ensuremath{{\sf T}_{{\rm off}}}}L\big(P_{\max}+\eta c_{g}\big)\big)
\end{eqnarray*}

Thus,
\begin{eqnarray*}
 &  & \frac{{\rm Cost}_{{\sf CHASE}_{2}}^{n}(\sigma)}{{\rm Cost}_{{\sf CHASE}_{s}}^{n}(\sigma)}\\
 & \leq & \frac{k_{2}\big(\beta+\mathrm{\textrm{\mbox{\ensuremath{T_{on}}}}}(LP_{\max}+\eta c_{g}L+c_{m})+\mbox{\ensuremath{{\sf T}_{{\rm off}}}}L\big(P_{\max}+\eta c_{g}\big)\big)}{k_{s}\cdot\beta}\\
 & \leq & 1+\frac{\mathrm{\textrm{\mbox{\ensuremath{T_{on}}}}}(LP_{\max}+\eta c_{g}L+c_{m})+\mbox{\ensuremath{{\sf T}_{{\rm off}}}}L\big(P_{\max}+\eta c_{g}\big)}{\beta}\\
 & = & r_{2}
\end{eqnarray*}

It completes the proof.
\end{proof}

\end{document}